\theoremstyle{plain}
\newtheorem{theorem}{Theorem}
\newtheorem{proposition}[theorem]{Proposition}
\newtheorem{lemma}[theorem]{Lemma}%
\newtheorem{corollary}[theorem]{Corollary}
\theoremstyle{definition}
\newtheorem{remark}[theorem]{Remark}
\newcommand{\R}{\mathbb{R}}
\newcommand{\C}{\mathbb{C}}
\newcommand{\Z}{\mathbb{Z}}
\newcommand{\N}{\mathbb{N}}
\newcommand{\bH}{\mathbb{H}}
\newcommand{\cE}{\mathcal{E}}
\newcommand{\E}{\mathcal{E}}
\newcommand{\cH}{\mathcal{H}}
\newcommand{\cK}{\mathcal{K}}
\newcommand{\cL}{\mathcal{L}}         %%%%%%%%%%%%%%%%%%%%%%%%
\newcommand{\cO}{\mathcal{O}}         %                      %
\newcommand{\cS}{\mathcal{S}}
\newcommand{\cX}{\mathcal{X}}
\newcommand{\cY}{\mathcal{Y}}
\newcommand{\cZ}{\mathcal{Z}}
\newcommand{\nc}{\newcommand}
\nc{\G}{\Gamma}
\nc{\g}{\gamma}
\nc{\al}{\alpha}
\nc{\be}{\beta}
\nc{\del}{\delta}
\nc{\io}{\iota}
\nc{\ka}{\kappa}
\nc{\lam}{\lambda}
\nc{\Lam}{\Lambda}
\nc{\w}{\omega}
\nc{\om}{\omega}
\nc{\Om}{\Omega}
\nc{\Oms}{\Omega^*}
\nc{\s}{\sigma}
\nc{\Si}{\Sigma}
\nc{\ta}{\tau}
\nc{\h}{\theta}
\nc{\z}{\zeta}
\newcommand{\vphi}{\varphi}
\newcommand{\fg}{\mathfrak{g}}
\nc{\ran}{\rangle}
\nc{\lan}{\langle}
\newcommand{\im}{\operatorname{Im}}
\renewcommand{\Re}{\operatorname{Re}}
\renewcommand{\Im}{\operatorname{Im}}
\newcommand{\ra}{\rightarrow}
\newcommand{\spec}{\operatorname{spec}}
\newcommand{\n}{\nabla}
\newcommand{\p}{\partial}
\newcommand{\Curl}{\operatorname{curl}}
\newcommand{\curl}{\operatorname{curl}}
\renewcommand{\div}{\operatorname{div}}
\newcommand{\divv}{\operatorname{div}}
\newcommand{\grad}{\operatorname{grad}}
\newcommand{\Null}{\operatorname{Null}}
\newcommand{\Ran}{\operatorname{Ran}}
\newcommand{\Tr}{\operatorname{Tr}}
\newcommand{\one}{\mathbf{1}}
\newcommand{\id}{{\bfone}}
\nc{\bfone}{{\bf 1}}
\nc{\Oml}{\Omega^\lat}
\newcommand{\LAT}{\mathcal{L}}
\newcommand{\lat}{\mathcal{L}}
\newcommand{\DETAILS}[1]{}
\newcommand{\LA}[2]{\vec{\mathscr{L}}_{}^{}(\tau)}
\newcommand{\HA}[2]{\vec{\mathscr{H}}_{}^{}(\tau)}
\nc{\hl}{\textcolor{red}}
\numberwithin{theorem}{section}
\numberwithin{equation}{section}
\begin{document}
\date{\DATUM}
%%%%%%%%%%%%%%%%%%%%%%%%%%%%%%%%%%%%%%%%%%%%%%%%%%%%%%%%%%%%%%%%%%%%%%%%%%%%%%%%%%%%%%%%%%%%%

%%%%%%%%%%%%%%%%%%%%%%%%%%%%%%%%%%%%%%%%%%%%%%%%%%%%%%%%%%%%%%%%%%%%%%%%%
\newcommand{\DATUM}{June, 2023} %{November, 2022} %{September 4, 2020} %{June 23, 2020} %{April 21, 2019} %{March 15, 2018}              %                        %
\pagestyle{myheadings}                         % Date and Page Headings %
\markboth{\hfill{Abrikosov lattices in the Weinberg-Salam model, June 7, 2013}}{{Symmetry breaking in the Weinberg-Salam model, June 30, 2023}\hfill}  %
%%%%%%%%%%%%%%%%%%%%%%%%%%%%%%%%%%%%%%%%%%%%%%%%%%%%%%%%%%%%%%%%%%%%%%%%%

\title%[Instability of electroweak homogeneous vacua, June 5, 2020]
{Instability of electroweak homogeneous vacua in strong magnetic fields} %and emergence of inhomogeneous ones %{On Abrikosov Lattice Solutions in the Weinberg-Salam model of Electro-weak Interactions} %\footnote{}}

\author{Adam Gardner\footnote{Artinus Consulting Inc., Ottawa, Canada, adampg@artinus.ai}\, and Israel Michael Sigal\footnote{Dept. of Math., U. of Toronto, Toronto, %ON, M5S 2E4, 
		Canada,  imsigal@gmail.com}}

\maketitle

\centerline{\it To Nicholas Ercolani and Gian Michele Graf, scientists and friends.}

\begin{abstract}
	We consider the  classical %(local)
	 vacua of the  Weinberg-Salam (WS) model of electroweak forces. %unifying electromagnetic and weak interactions, a central % electroweak theory, which is\ a part of the standard model of particle physics. 
These are no-particle, static solutions to the WS equations minimizing  the WS energy locally. 

\DETAILS{These equations have (gauge-) translation invariant (homogeneous) solutions  for all constant external magnetic fields.		 For zero external magnetic field, the corresponding solution is the vacuum % the vacua are translationally invariant and %given by translationally invariant solutions, which extend to translationally invariant solutions
 of the $U(2)$-YMH equations. %  for all constant external magnetic fields.	

	We prove that (i) the homogeneous vacua  are stable for weak magnetic fields and unstable for strong ones and (ii) inhomogeneous solutions with lower energy per unit volume emerge at the transition point. %We prove that 
	The latter solutions %inhomogeneous vacua 
	have the discrete translational symmetry of a 2D lattice in the plane transversal to the external magnetic field.} %, and that they have lower energy per unit volume than the homogeneous vacua. 

We study the WS %system in a constant external
vacuum solutions exhibiting a non-vanishing average magnetic field of strength $b$, and prove that (i) there is a magnetic field threshold $b_*$ such that for $b<b_*$, the vacua are translationally invariant (and the magnetic field is constant), % (and are stable), 
 while, for  $b>b_*$, they are not, (ii) %the strong magnetic fields vacua have the discrete translational symmetry of a 2D lattice in the plane transversal to the external magnetic field
for $b>b_*$, there are non-translationally invariant solutions with lower energy per unit volume %than the homogeneous ones and %that these solutions have 
 and with the discrete translational symmetry of a 2D lattice in the plane transversal to %the %external 
  %average magnetic field,
   $b$,   and (iii) %We prove that 
 the lattice %shape
  minimizing the energy per unit volume approaches the hexagonal one as the magnetic field strength approaches the threshold $b_*$.

	In the absence of particles, the  Weinberg-Salam model reduces to  the Yang-Mills-Higgs (YMH) equations for the gauge group $U(2)$. 
	\DETAILS{The no-particle vacua %of this theory %correspond to the particle, weak and electric fields being zero and the Higgs and magnetic fields, constant, %i.e. they have maximal symmetry and, 
	%in particular, they are homogeneous. are described by satisfy the static $U(2)$-Yang-Mills-Higgs equations and
	 minimize  the $U(2)$-YMH energy locally.} 
Thus our results can be rephrased as the corresponding statements about    the  $U(2)$-YMH equations.

MCS classes: 81T13 (primary), 35Q40, 70S15 (secondary)
%Keywords: Yang-Mills-Higgs equations; electroweak model; magnetic vortex lattices; magnetic vortices; standard model

 	%\noindent Keywords: magnetic vortices, superconductivity, Ginzburg-Landau equations, Abrikosov vortex lattices, bifurcations.
	
\end{abstract}

%\tableofcontents 
%%%%%%%%%%%%%%%%%%%%%%%%%%%%%%%%%%%%%%%%%%%%%%%%%%%%%%%%%%%%%%%%%%%%%%%%%%%%%%%%%%%%%%%%%%%%%

\section{Introduction}

The Weinberg-Salam (WS) model of electroweak interactions \cite{Gl}\cite{SW}\cite{Wb} was the first triumph of the program to unify the four fundamental forces of nature. It is a key part of the standard model of elementary particles. % (unifying the three fundamental forces). 
It unifies electromagnetic and weak interactions, two of the three forces dealt with in the standard model. % and describes particles (electrons, neutrinos, quarks, etc) %, all fermions of spin $1/2$)  subject to electromagnetic and weak interactions. 
It involves particle, gauge %fields %and the corresponding interaction fields
 and the Higgs fields. % (needed to give the mass to interaction quanta, save for photons, and to the particles). %The interaction fields are $^{U(2)}$ gauge (or Yang-Mills) fields, with the total gauge group $U(2)$ (or $SU(2)\times U(1)$). %$U(2)= SU(2)\times U(1)$ (written in physics literature as $SU(2)\times U(1)$).  
% If one considers the particle vacuum, which is a special set of solutions of the complete system of equations, then one is left with the gauge and Higgs fields, satisfying the Yang-Mills-Higgs (YMH) equations with the $U(2)$ (or $SU(2)\times U(1)$)\footnote{$U(2)$ can be thought of as $SU(2)\times U(1)$ but with a chosen splitting into $SU(2)$ and $U(1)$.} gauge group.

While the gauge %$U(2)$-Yang-Mills
 fields  describe the electroweak interactions, the role of the Higgs field 
is to convert the original massless fields (zero masses are required by the relativistic invariance) to massive ones. This phenomenon is called 
 the Higgs mechanism. This mechanism, together with the Goldstone theorem, leads to all gauge particles but one acquiring mass, resulting in two massive bosons -- denoted W and Z -- and a massless one -- the photon. The  W and Z %corresponding 
  particles where discovered experimentally 16 years after their theoretical prediction.

\DETAILS{It was shown in \cite{NO, AHN, Sk} that for large values of magnetic field, the Weinberg-Salam vacuum becomes unstable and suggested in \cite{AO1} that the true vacuum is inhomogeneous with a lattice symmetry,  
%In this note we discuss static solutions of the corresponding Euler-Lagrange (called Weinberg-Salam) equations, namely vortices and
 similar to that occurring in superconductivity (\cite{Abr}). %, first suggested in \cite{AO} and 
  This was investigated extensively in physics literature (see e.g. \cite{MT, CvDV, Mir, And} and references therein).} 

In this paper, we consider the vacuum solutions of the classical WS model with a non-vanishing {\it average magnetic field} $\langle\vec b\rangle$. These are  static, no-particle solutions minimizing  the WS energy locally for a fixed $\vec b$. 
They are also no-particle solutions of the entire standard model.\footnote{The no-particle sector  of the standard model splits into the $U(2)$-YMH (electroweak) and  $U(3)$-YM (strong, or QCD)  parts. Correspondingly, the vacuum of the standard model is the product of the electroweak and strong vacua and the vacuum energy is the sum of the corresponding energies.) }

 	%For zero or weak external constant magnetic fields, such a solution %minimizing the energy per unit volume
	% has maximal symmetry and, in particular, is  homogeneous, i.e. translationally invariant (up to gauge transformations). 
%For a constant external magnetic field, $b$, we 
We prove that (i) there is a magnetic field threshold $b_*$ such that for %the magnetic fields
 $|\vec b|<b_*$, the vacua are translationally invariant, % (and are stable), 
 while, for  $|\vec b|>b_*$, they are not, (ii) %the strong magnetic fields vacua have the discrete translational symmetry of a 2D lattice in the plane transversal to the external magnetic field
for $|\vec b|>b_*$, there are non-translationally invariant solutions with lower energy per unit volume %than the homogeneous ones and %that these solutions have 
 and with the discrete translational symmetry of a 2D lattice in the plane transversal to the  magnetic field,   and (iii) %We prove that 
 the lattice %shape
  minimizing the energy of the latter solutions per unit volume approaches the hexagonal one as the magnetic field strength approaches the threshold $b_*$.	We expect that these solutions are stable under field fluctuations and, in fact, minimize the energy locally.

\DETAILS{We prove that for weak external constant magnetic fields, the homogeneous  %(gauge-translationally invariant)
  vacuum solution is stable, but  for strong fields, it is unstable
i.e. %. In other words, for strong magnetic fields, 
 the translational symmetry is broken  spontaneously. We show that this leads to the emergence of % vortex lattice type vacuum
 inhomogeneous solutions with the symmetry of a 2D lattice (in the plane transversal to the external magnetic field) and with lower energy per unit volume. %These solutions are similar to those occurring in superconductivity (\cite{Abr}). 
We expect that these solutions are stable under field fluctuations and, in fact, minimize the energy locally.}

The phenomenon above %was discovered in \cite{NO, AHN, Sk} and 
was investigated extensively in the physics literature (see e.g. \cite{And, CvDV, MT, Mir} and the references therein). It is similar to the one occurring in superconductivity and the solutions whose existence we establish are analogous to the superconducting Abrikosov vortex lattices  (\cite{Abr}, see e.g. \cite{S}, for a review). %The only rigorous result so far, \cite{SY}, deals with the self-dual regime, where the splitting of the YMH equations into the first order equations uses the self-duality in an essential way.  %(See e.g. \cite{JT, ES} and  Appendix \ref{sec:YM} for some background on the YMH equations.)
It is estimated in \cite{MT} that the spontaneous symmetry breaking takes place at the critical %external 
 average magnetic field of approximately $10^{24}$  Gauss $=10^{20}$ Tesla. By comparison, the strongest magnetic field produced on Earth is $10^{14} $ Tesla. % (in particle accelerators,  the strongest magnetic field is about $10^{11} $ Tesla). % $B_c=M_W^2/e \approx 10^24$  Gauss, where $M_W$ is the mass of the charged vector boson W and e is the negative of the electron charge.

%involving $U(2)$-Yang-Mills and Higgs fields. %(the bosonic part, no fermions, i.e. particles). %in the external magnetic field.  Our goal is to The corresponding equations are the 
	Note that, in the absence of particles, the  WS system reduces to  the Yang-Mills-Higgs (YMH) one with the gauge group $U(2)$. So ultimately, these are the equations we deal with. %The no-particle vacua %of this theory %correspond to the particle, weak and electric fields being zero and the Higgs and magnetic fields, constant, %i.e. they have maximal symmetry and, 
	%in particular, they are homogeneous. are described by satisfy the static $U(2)$-Yang-Mills-Higgs equations and satisfy the static Yang-Mills-Higgs equations for the gauge group $U(2)$. 

 The only rigorous result (\cite{SY, SY2}) on the classical WS model  deals with the vortices in the self-dual regime, where the WS (or corresponding YMH) equations are equivalent to the first order equations, and it uses this equivalence %fact %the self-duality
 in an essential way.   (The self-dual regime in this context was discovered in \cite{AO2, AO3, AO4}, see also \cite{Sk1, Sk2}.)
  
  \medskip
  
\noindent {\bf Open problems and further directions:} %First of all, one would like to 

(a)  %Show the (energetic) 
Stability of the emerging solutions.

%\noindent {\bf Further steps:} %The next problem is to 
(b)  Existence of %vortices and 
vortex lattices at $|\vec b|\gg b_*$.

(c) %Compute 
Quantum corrections to  %compute quantum corrections to,  say, 
the values of the classical critical magnetic field $b_*$ and the optimal lattice shape parameter $\tau_*$. %(Starting %to begin with
% with a BCS-type, or quasi-free, approximation to the standard model.)

 For the stability and existence problems, (a) and (b), see e.g.  \cite{ST1, ST3} and  \cite{ST2}, respectively. The last problem brings up the regime of `sparse' vortex lattices as opposite to the case of $|\vec b|$ close to (and $> $) $b_*$  resulting in densely packed vortices: the lattice step $\ra 0$ as $|\vec b|\ra b_*$ and $\ra \infty$ as $|\vec b|\ra \infty$. Hence the existence of vortex lattices at $|\vec b|\gg b_*$ is closely related to the problem of {\it existence of vortices} (elementary excitations). 
 
 For the quantum corrections, problem (c), it would be natural to start %a natural approach would involve a `mean-field'/quasi-free approximation
with a BCS-type, or quasi-free, version of % approximation to
 the WS %standard
 model  and a Bogoliubov-type expansion of a regularized (say, lattice) WS model  around it, see e.g. \cite{BBCFS, BenPorSchl}.  %Finally, t

%To address the problem above,  we would have to investigate
%Such corrections should come from a semi-classical (or macroscopic) expansion in the ground state of the quantized model.  Since the latter is unavailable, one can look into %search for support of the classical conclusions by going to % various approximations, such as 
%\bigskip
%In another approach, one quantizes the WS model formally and then applies 
 %(a) a `mean-field'/quasi-free approximation (see e.g. \cite{BBCFS}) or more fundamentally %an effective model with cut-offs 

%(b) a regularized (say, lattice) WS model (using, say, a Bogoliubov-type expansion around the classical/mean-field solution (cf. \cite{BenPorSchl})). %(The quasi-free approximation? (See e.g. \cite{BBCFS}.))  
  
  \medskip

 The paper is organized as follows. In Section \ref{sec:probl-res}, we formulate the problem and describe results.  In Sections \ref{sec:WZ-bosons} - \ref{Sec:rescaling}, we fix the gauge and pass from the original Yang-Mills %and Higgs
  fields to the W and Z (massive boson) and A (photon) %and $\vphi$
   fields and rescale the resulting equations. The proofs of the main results are given in Section \ref{sec:linear}   (Theorem \ref{thm:normal-instab}),  Sections \ref{sbp} - \ref{sec:asymp} %{sec:main-thm-abc}, 
  (Theorem \ref{thm:AL-exist'}) and Section \ref{sec:shape}  (Theorem \ref{thm:lattice-shape}). In Appendix \ref{sec:Cov-Deriv-Curv}, we discuss %define %present the expressions for  the
  various covariant derivatives used in the main text,
    and in Appendix \ref{sec:YM}, we review the time-dependent YMH equations and derive the expression for the conserved energy as well as the YMH equations used in the main text. Furthermore,  %. In Appendix \ref{sec:YMH-coord}, 
  there we write the YMH equations in coordinate form and derive a convenient expression for the energy functional. %, used in the main text, 
  In  Appendices \ref{sec:en-expl} - \ref{sec:WSeqs2D}, we derive the WS equations in 3D and 2D, respectively, in terms of the fields $W$, $Z$, $A$ and $\vphi$.
In the remaining appendices, we carry out technical computations. % and give some comments.  
  
 Throughout the paper, we use the Einstein convention of {\it summing over repeated indices}.

%%%%%%%%%%%%%%%%%%%%%%

\bigskip

%\subsection*
\paragraph{\bf Acknowledgements}
The second author is grateful to Nicholas Ercolani, J\"urg Fr\"ohlich, Gian Michele Graf and Stephan Teufel for many instructive and stimulating discussions of the YMH equations. Both authors thank the anonymous referee for many constructive remarks. % which led to improvement of the presentation of the paper. 

%%%%%%%%
\section{No-particle and vacuum sectors of the Weinberg-Salam model} \label{sec:probl-res} % of  the Weinberg-Salam model of electroweak interactions no-particle 

%Consider solutions of the Weinberg-Salam (WS) model equations with vanishing particle fields. 

 The no-particle sector of the Weinberg-Salam (WS) model involves the interacting %, static %complex Higgs field $\Phi$ %, with values in $\C^2$, 
%interacting with the static
 Higgs and $SU(2)$ and $U(1)$ gauge fields,  $\Phi$ and $ V$ and $ X$, while the particle fields are set to zero. The field $\Phi$ is a vector-function defined on the %physical space $\R^3$ %
 Minkowski space-time $\R^{3+1}$
  with values in $\C^2$, and the fields $ V$ and $ X$ are one-forms on $\R^{3+1}$ with values in 
 the algebras $\mathfrak{su}(2)$ and $\mathfrak{u}(1)$, respectively. % the $U(1)$ gauge field, $i X$ with values in the algebra $u(1)$. 
We write \[Q=  g V+ g' X,\] where $g$ and $g'$ are coupling constants, which is a one-form with values in $\mathfrak{u}(2)$. % with $V \in su(1)$ and $X \in u(1)$. %$U(2)$ gauge field, $i Q$ with values in the algebra $u(2)$. Since $u(2)= su(2)\opus u(1)$ and $su(2$ = simple, we can write $U(2)= SU(2)\times U(1)$  gauge field, $i Q$ (= a connection one-form) on $\R^d$ with values in the algebra $su(2)\oplus u(1)$. %, with the components %. We write $Q= g V+ g' X$, where Let  $ i V$ and $ i X$ be the components of $i Q$, which are $SU(2)$ and $ U(1)$ gauge fields, which are one-forms on $\R^d$ with values in the algebras $su(2)$ and $u(1)$, respectively.  
% which consists of  the  $SU(2)$ and $ U(1)$ gauge fields, $ i V$ and $ i X$ (projections of $i Q$ onto the algebras $su(2)$ and $u(1)$). 
We consider $SU(2)$ as a matrix group %acting on $\C^2$, 
 and $U(1)$ as multiples of the identity matrix $\id$ acting on $\C^2$. %Geometrically, $V, X$ and $Q$ can be thought of as connection one-forms on the trivial bundles $\R^{3+1}\times SU(2), \R^{3+1}\times U(1)$ and $\R^3\times U(2)$.

%The vacuum solutions are static (time-independent), % We are interested in static solutions and  %In order not to introduce extra notions, we assume
%hence the fields $\Phi$, $V$ and $ X$ are defined on  $\R^3$. %$\Phi:\R^d\ra \C^2$, $V: \R^d\ra su(2)$ and $ X: \R^d\ra u(1)$,
These fields satisfy the WS equations, which are the  Euler-Lagrange equations for the action functional
\begin{equation} \label{WS-act}
\cS (Q, \Phi) =  \int_{M} \big( \lan \n_{Q}\Phi,  \n_{Q}\Phi \ran_{\Omega^1_V}^\eta - \frac{1}{2} \lam (\|\Phi\|_{\C^2}^2-\vphi_0^2)^2  %U(\Psi)
 + \lan F_Q,  F_Q \ran_{\Omega^2_\fg}^\eta\big), 
%\frac{1}{2} \Tr V_{i j} V^{i j} +\frac{1}{2} \Tr X_{i j} X^{i j},
\end{equation}
where  $M$ is a bounded domain in spacetime $\R^{3+1}$ equipped with the Minkowski metric $\eta$ of signature $(-,+,+,+)$, $\lambda$ and $\vphi_0$ are positive parameters, and %$\Omega^1_{\C^2} := \C^2 \otimes \Omega^1(\R^3)$, $\Omega^2_{\mathfrak{u}(2)} := \mathfrak{u}(2) \otimes \Omega^2(\R^3)$.
 the remaining symbols are defined as follows:

 $\n_{Q}$ is  the covariant derivative mapping $\C^2$-valued functions (sections) into $\C^2$-valued one-forms defined as 
 \begin{equation}\n_{Q} =d +Q, %g V+ g' X,
 \end{equation}  with %the coupling constants $g$ and $g'$ and
  $d$,  the exterior derivative; %on one-forms (connections) as     
 
  $F_{Q}$ is the curvature $2$-form of the connection one-form $Q$, %viewed as a  connection,
 %  i.e. the $\mathfrak{u}(2)$-valued two-form
  given by %The $\mathfrak{u}(2)$-valued two-form $F_{Q} :=  \nabla_Q \circ \nabla_Q$ is the curvature of the connection $Q$. We have the formula
\begin{equation}\label{FQ}
 F_Q = dQ + \frac{1}{2g} [Q,Q], 
\end{equation}
 where $[A, B]$ is defined in local coordinates $\{x^i\}$ as
\begin{align}\label{commut-def'} [A, B]:=  [A_i,  B_j] dx^i\wedge dx^j = [B,A],\end{align}
with $A= A_i dx^i$ and $B=B_i dx^i$; %Here and in what follows we use the convention of summing over repeated indices. %for $A= A_i dx^i$ and $B=B_j dx^j$.  % Note that $F_Q \neq d_Q Q$.

  %$A=\sum A_i dx^i$, etc., %(see Appendix \ref{sec:YM}, for more details), 
  % (coupling constant). 
%We take  $G= g W+ g' X$ and the second inner product as
%Finally, the norms $||\cdot ||$ are given as follows:
% $\|\cdot\|_{L^2}$ is the standard norm on $L^2(T, \C)$ and 
\quad $\Omega^p_U\equiv  U\otimes\Omega^p$ denotes the space of $U$-valued $p$-forms with the Minkowski, indefinite inner product, %$\lan \cdot,  \cdot \ran_{\Omega^p_U}\equiv \lan \cdot,  \cdot \ran_{\Omega^p_U}^\eta$ defined as %and $\lan \cdot,  \cdot \ran_{\fg\otimes\Omega^2}$, in the vector spaces, $V\otimes\Omega^1(M)$ and $\fg\otimes\Omega^2(M)$, of  $V-$ and  $\fg-$valued %functions and differential  forms are %$\lan \cdot,  \cdot \ran_{V\otimes\Omega^1}$ and $\lan \cdot,  \cdot \ran_{\fg\otimes\Omega^2}$ in \eqref{YMH-Lagr} are the inner products in the vector spaces $V\otimes\Omega^1(M)$ and $\fg\otimes\Omega^2(M)$ of the $V-$ and  $\fg-$valued  differential  forms, respectively, given by
\begin{equation} \label{inner-prod-ext'}
%\langle A,B \rangle_{\Omega^p_U} \equiv 
 \langle A,B \rangle_{\Omega^p_U}^\eta:=  \langle A_{\alpha}(x), B^{\alpha}(x) \rangle_U, %\quad U=V,\fg,
\end{equation}
where $A=A_{\alpha}(x) dx^{\alpha}$ and $B=B_{\alpha}(x) dx^{\alpha}$ are $U$%$V$- or $\fg$
-valued $p$-forms, $\alpha$ is a $p$-form index and $\langle \cdot, \cdot \rangle_U$ %, $\langle \cdot, \cdot \rangle_{\fg}$ are
is the standard, positive definite inner product on $U$ with the indices raised and lowered with help of the Minkowski metric $\eta$ on $M$. For instance, for $U = \mathfrak{s u}(2)$, the inner product is given by 
\begin{align}\label{u2-inner-prod}
		\langle A,B\rangle_{\Omega^p_{\mathfrak{s u}(2)}}^\eta := 2\Tr(A_{\alpha}(x)^*B^{\alpha}(x)) = -2\Tr(A_{\alpha}(x)B^{\alpha}(x)).
	\end{align}

Solutions of the no-particle WS equations solve also the full WS system as well as that for the standard model of the particle physics.

 The vacuum sector of the Weinberg-Salam (WS) model consists of static, no-particle solutions.
 %  involves the interacting, static %complex Higgs field $\Phi$ %, with values in $\C^2$, 
%interacting with the static
% Higgs and $SU(2)$ and $U(1)$ gauge fields,  $\Phi$ and $ V$ and $ X$, while the particle fields are set to zero. 
The static  Higgs and $SU(2)$ and $U(1)$ gauge fields $\Phi$, $ V$ and $ X$ are now defined on the physical space $\R^3$  with the same respective values as in the time-dependent case.  Geometrically, $V, X$ and $Q$ can be thought of as connection one-forms on the trivial bundles $\R^3\times SU(2), \R^3\times U(1)$ and $\R^3\times U(2)$.

%The vacuum solutions are static (time-independent), % We are interested in static solutions and  %In order not to introduce extra notions, we assume
%hence the fields $\Phi$, $V$ and $ X$ are defined on  $\R^3$. %$\Phi:\R^d\ra \C^2$, $V: \R^d\ra su(2)$ and $ X: \R^d\ra u(1)$,
The  fields $\Phi$, $ V$ and $ X$ satisfy the static no-particle WS equations, which are the  Euler-Lagrange equations for the static WS energy functional originating in \eqref{WS-act}\footnote{For a discussion of the the time-dependent theory and a derivation of the energy functional \eqref{WS-energy0} see \cite{JT}, \cite{Mir}, \cite{Rub}, \cite{Schw} and Appendix \ref{sec:YM}.}
%The energy is given by %coupling constants $g$ and $g'$. Its 
% dynamics is given by the Lagrangian %density
\begin{equation} \label{WS-energy0}
E_{N} (Q, \Phi) := \int_{N}\big(
 \| \n_{Q}\Phi \|_{\Omega^1_{\C^2}}^{2} + \frac{1}{2} \lam (\|\Phi\|_{\C^2}^2-\vphi_0^2)^2 + \frac{1}{2}\| F_Q\|_{\Omega^2_{\mathfrak{u}(2)}}^2\big), %\int_X  \lan d_{Q}\Phi,  d_{Q}\Phi \ran_{\C^2} - \frac{1}{2} \lam (|\Phi|^2-\vphi_0^2)^2 + \frac{1}{2}\lan F_{Q},  F_{Q} \ran, 
%\frac{1}{2} \Tr V_{i j} V^{\mu j} +\frac{1}{2} \Tr X_{i j} X^{i j},
\end{equation}
where $N$ is a bounded domain in $\R^3$ with appropriate boundary conditions (specified in \eqref{gauge-per} below) and  $\|\cdot\|_{\Omega^p_{U}}$ is the standard norm %extending the standard norm to $\C^2$-valued one-forms, and $\|\cdot\|_{\Omega^2_{\mathfrak{u}(2)}}$ is the norm extending the trace norm\footnote{Explicitly, the trace norm on $\mathfrak{u}(2)$ is the norm induced by the inner product $\langle g,h\rangle_{\mathfrak{u}(2)} := -2\Tr(gh)$.} to $\mathfrak{u}(2)$-valued two-forms.
on the space $\Omega^p_{U} := U \otimes \Omega^p$ %(T)$ %denotes the space
  of $U$-valued $p$-forms at  $x\in N$ (e.g. for $B=B_i(x)dx^i\in \Omega^1_{U}$, we have $\|B\|_{\Omega^1_{U}}:=(\sum_i\|B_i(x)\|^2_U)^{1/2}$ with the usual Euclidean metric and with the  indices running through $1, 2, 3$), while now, \eqref{inner-prod-ext'} (and \eqref{u2-inner-prod}) become the usual inner products. The symbols $\n_{Q}$ and $F_{Q}$ are as defined above but without the time component.

%$***$ 
Since $Q=gV+g'X$ and $X$ has the values in the centre, $u(1)$, of the algebra $u(2)$, %$F_Q$ can be written as
we have $F_{Q} = g F_{V} + g' F_{X}$, where
\begin{equation}F_V := %  \nabla_V \circ \nabla_V = 
dV + \frac {g}2 [V,V]\ \text{ and }\ F_X := %  \nabla_X \circ \nabla_X  = 
 dX\end{equation} are the curvatures of the connections $V$ and $X$\footnote{For more discussion of covariant derivatives and their curvatures, see Appendices \ref{sec:Cov-Deriv-Curv} for the general case, or Appendix \eqref{sec:WS-coord}, %\ref{sec:YM} and \eqref{sec:WS-coord}. %, the paragraph ``The YMH equations in coordinate form'', %{sec:YMH-coord} 
  for the case of the gauge group $G = U(2)$.} % = SU(2) \times U(1)$.
 %Note also that
 and $\| F_{Q}\|^2_{\Omega^2_{\mathfrak{u}(2)}}=\| F_{V}\|^2_{\Omega^2_{\mathfrak{u}(2)}} +\| F_{X}\|^2_{\Omega^2_{\mathfrak{u}(1)}}$. %$\lan F_{Q},  F_{Q} \ran= \lan F_{V},  F_{V} \ran + \lan F_{X},  F_{X} \ran$. The norms and inner products above are local ones. $***$

%\DETAILS{}
%Let  $d_{Q}$ define  the covariant derivative mapping $\mathfrak{u}(2)$-valued one-forms into $\mathfrak{u}(2)$-valued two-forms.  Given a $\mathfrak{u}(2)$-valued $1-$form $B=B_{i} dx^{i}$, where $B_{i}$ are  $\mathfrak{u}(2)$-valued functions and we use the convention of summing over repeated indices, 
 %{\bf(this paragraph is redone)} 
 We introduce  the covariant derivative  $d_{Q}$  mapping $\mathfrak{u}(2)$-valued $k$-forms into $\mathfrak{u}(2)$-valued $(k+1)$-forms,  $k\ge 1$, as
 \begin{equation}d_Q B  :=  d B +   [Q, B ]= d_V B :=  d B + g  [V, B ] .\end{equation} 
 % (d_V B_{i})\wedge dx^{i},\end{equation} % (there is a discrepancy here with the standard definition $(F_Q)_{ij} = [\nabla_i, \nabla_j]$). 
% where $B=B_{i} dx^{i}$ is a $\mathfrak{u}(2)$-valued $1-$form, with $B_{i}$ being  $\mathfrak{u}(2)$-valued functions, and  $d_V B_{i}:= d B_{i} + g  [V, B_{i}] $. %, for a $\mathfrak{u}(2)$-valued function $f$, we define $d_Q f = d_V f:= d f - \frac{ig}2 [V,f] $. 
%\footnote{To derive this formula, for each $i$, we consider $B_{i} dx^{i}$ as a (tensor) product of the $\mathfrak{u}(2)$-valued function $B_{i}$ and the one-form $dx^{i}$ and use the standard definition (see e.g. \cite{Jost}, equation 4.1.25) $d_V (f\otimes \om)= d_V f\wedge \om + f\otimes d\om$, where $f$ and $\om$ are  $\mathfrak{u}(2)$-valued function and one-form.}
%\footnote{If, for a $\mathfrak{u}(2)$-valued function $f$, we define $d_Q f:= d f +  [gV+g'X, f] $, then $d_Q f=d_V f$.}
 This formula originates in the equation $(\del_Q F_Q)( B)=d_Q B$, where $\del_Q$ is the G\^ateaux derivative with respect to $Q$. For $0$-forms, we set $d_Q=\n_Q$.

The Euler-Lagrange equations for energy functional \eqref{WS-energy0} %, for $V=\fg$, %(called the (static) Weinberg-Salam (WS) equations) 
are given by (see Appendix \ref{sec:YM}\footnote{These equations could be converted formally back into the time-dependent ones by taking the adjoints in %replacing the Euclidian metric by
 the Minkowski metric instead of the Euclidian one, see \eqref{YMH-eqs-psi}-\eqref{YMH-eqs-A},  Appendix \ref{sec:YM}.})
\begin{align} \label{WS-eq1}
  &  \n_{Q}^* \n_{Q}\Phi  = \lam (\vphi_0^2 - \|\Phi\|^2)\Phi,\\   
  \label{WS-eq2}   &d_{Q}^* F_Q = J(Q, \Phi), %g [d_{V}\Phi, \Phi ] + g' [d_{X}\Phi, \Phi ]. % g \im\lan \Phi,  d_{V}\Phi \ran_{\C^2}+ g' \im\lan \Phi,  d_{X}\Phi \ran_{\C^2}, 
    %\frac{1}{2} \Tr V_{i j} V^{i j} +\frac{1}{2} \Tr X_{i j} X^{i j},
\end{align}
%where f, and 
 where $\n_{Q}^*$ %in the first equation
  is the adjoint of $\n_{Q}$ %on functions defined above
   and maps $\C^2$-valued one-forms into $\C^2$-valued functions,  $d_{Q}^*$ %in the second equation
   is the adjoint of $d_{Q}$ %on one-forms  defined above
    and maps $\mathfrak{u}(2)$-valued two-forms into $\mathfrak{u}(2)$-valued one-forms, and  $J(Q, \Phi)$ is the electroweak current, which is the $\mathfrak{u}(2)$-valued one-form given by
\begin{align}\label{curr}
J(Q, \Phi) &:= %2g\Re\langle (-\frac i2\tau_a)\Phi, \n_Q\Phi\rangle_{\C^2} (-\frac i2\tau_a) \nonumber \\
%&+ 2g'\Re\langle (-\frac i2\tau_0)\Phi, \n_Q\Phi\rangle_{\C^2}(-\frac i2\tau_0) \\
-\frac{ig}2\tau_a \im \langle \tau_a\Phi,  \n_{Q}\Phi\rangle - \frac{ig'}{2}\tau_0\im\langle \tau_0\Phi, \n_{Q}\Phi\rangle,
\end{align}
where summing over repeated indices is understood, $\tau_0:= \one$ and $\tau_a, a=1, 2, 3,$ are the Pauli matrices,
 \begin{align}\label{pauli-m}  \tau_1:= \left(\begin{array}{cc}
  0 & 1 \\ 1 & 0 \end{array} \right),\
  \tau_2:= \left(\begin{array}{cc}
  0 & -i \\ i & 0 \end{array} \right),\
  \tau_3:= \left(\begin{array}{cc}
  1 & 0 \\ 0 & -1 \end{array} \right).
 \end{align}
 (The Pauli matrices, multiplied by $-i/2$, form an orthonormal basis in $\mathfrak{su}(2)$ with the inner product $\langle g,h\rangle_{\mathfrak{su}(2)} := 2\Tr(g^* h) = -2\Tr(gh)$.) We call system \eqref{WS-eq1}-\eqref{WS-eq2} the (static) WS equations.
 %\lan \tau_a \Phi,  d_{Q}\Phi \ran_{\C^2} \tau_a+ \frac12 g' \im\lan \tau_0 \Phi,  d_{Q}\Phi \ran_{\C^2} \tau_0$ where $\tau_a$, $a=1,2,3$ are the Pauli matrices and $\tau_0=\one$ (the generator of $u(1)$). %for $V=\C^2$ and $J(\Phi, Q):= g [d_{V}\Phi, \Phi ] + g' [d_{X}\Phi, \Phi ]$ for $V=\fg$. These are the static Weinberg-Salam equations. %(For discussion of the YMH equations see e.g. \cite{JT, ES}.)

% \paragraph{Gauge symmetry.} 
%We define the local action, $\tilde\rho_{g(x)} A(x)$,  of the group $G$ on $ A(x)$,  by the equation $d_{\tilde\rho_{g(x)} A(x)}=\rho_{g(x)}^{-1}  d_{A}\rho_{g(x)}$, for all $g(x) \in C^1(\R^d, G)$. The Lagrangian above is invariant  under the Poicar\'e group
%The energy functional  and the Euler-Lagrange equations are invariant under the group of rigid motions  and  the gauge transformations (see e.g. \cite{JT, ES}). 
%for all $g(x) \in C^1(\R^d, G)$.
The energy functional \eqref{WS-energy0} and  Euler-Lagrange equations \eqref{WS-eq1} - \eqref{WS-eq2} are invariant under the group of rigid motions and the gauge transformations (gauge symmetry)
\begin{align}\label{gauge-transf-gen} %\notag   
&(V(x),  X(x),  \Phi(x)) \mapsto (V_\g(x),  X_\g(x),  \Phi_\g(x)),
\end{align}
where $\g=\g(x)=h_1(x)h_2(x)$, with $ \,h_1(x)\in SU(2),\ h_2(x)\in U(1),$ and 
\begin{equation} \notag %\label{gauge-transf-gen}
	\left\{ 
	\begin{array}{ll}
    V(x) \mapsto h_1(x)  V(x) h_1^{-1}(x) - i\frac{2}{g}  h_1(x) d h_1^{-1}(x), \\
	 X(x) \ra X(x) - i\frac{2}{g'} h_2(x) d h_2^{-1}(x),\\ 
\Phi(x) \ra  h_1(x)  h_2(x)\Phi(x).&
	\end{array} \right.
\end{equation}
\DETAILS{
\begin{align}\label{gauge-transf-gen} \notag   
&(V(x),  X(x),  \Phi(x)) \\ \notag \mapsto &\big(h_1(x)  V(x) h_1^{-1}(x) - i\frac{2}{g}  h_1(x) d h_1^{-1}(x), \,
  X(x) - i\frac{2}{g'} h_2(x) d h_2^{-1}(x),\, h_1(x)  h_2(x)\Phi(x)\big), \\  
\forall &\,h_1(x)\in SU(2),\ h_2(x)\in U(1).
\end{align}
}
\DETAILS{\begin{align}\label{gauge-transf-V-gen} %\notag   
&V(x) \mapsto h_1(x)  V(x) h_1^{-1}(x) - i\frac{2}{g}  h_1(x) d h_1^{-1}(x),\\ 
& X(x) \ra X(x) - i\frac{2}{g'} h_2(x) d h_2^{-1}(x),\\
 &\Phi(x) \ra  h_1(x)  h_2(x)\Phi(x), %\\  &\qquad \forall \,h_1(x)\in SU(2),\ h_2(x)\in U(1).
\end{align}
for every $ \,h_1(x)\in SU(2),\ h_2(x)\in U(1).$}
\DETAILS{\noindent the translations
\begin{equation}\label{translations}
 T^{transl}_s :    (\Psi(x), A(x)) \mapsto (\Psi(x + s), A(x + s)),\qquad \forall t \in \R^2;
\end{equation}
\noindent the rotations and reflections,
\begin{align}\label{rotations-reflections}
 T^{rot}_R :    (\Psi(x), A(x)) \mapsto (\Psi(R^{-1}x),  RA(R^{-1}x)),\qquad \forall R \in  O(n) .
\end{align}}

The physical quantities here are (a) the Higgs field density $\|\Phi\|$, (b) the magnetic field $\Tr F_{Q}$ and (c) the YM current $J(Q, \Phi)$. It is easy to check that these quantities are gauge invariant. We say that a solution $(Q, \Phi)$ to \eqref{WS-eq1}-\eqref{WS-eq2} is {\it homogeneous} if $\|\Phi\|$,  $\Tr F_{Q}$ and $J(Q, \Phi)$ are independent of $x$. (We say that $\Tr F_{Q}$ is independent of $x$, if it is a multiple of a constant $2$-form, see \eqref{curv-const}.) Otherwise, we say that  $(Q, \Phi)$  is {\it inhomogeneous}.

Furthermore, we say  that a solution $(Q, \Phi)$  is {\it gauge-translation invariant} if it is invariant under translations up to gauge transformations.

Clearly, a solution $(Q, \Phi)$ which is gauge-translation invariant is also homogeneous. The converse in general might not be true.

%We consider the cylindrical geometry in $\R^3$, where the fields do not depend on the coordinate $x^3$, i.e. we  consider the fields on $\R^2$. For a fixed external or average internal magnetic field
We are interested in  the vacuum solutions of the  WS equations with a non-vanishing average magnetic field, %$\vec b= \Tr F_{Q}\neq 0$.
\[\lim_{R\ra \R^3}\frac1{|R|}\int_R\Tr F_Q=-i e\sum_{(ijk)}b_i dx^j\wedge dx^k, %\om_{\vec b}, %\sum_{j}b_j dx^j.
  \] 
 i.e. solutions minimizing  the WS energy locally under the constraint above. % with constant $\vec b=(b_1, b_2, b_3)$. %for a fixed average magnetic field  
  In physical field theories, one expects the vacua to have the maximal available symmetry. Consequently, we first 
 consider gauge-translation invariant solutions with a fixed (constant) magnetic field. %. For each  3-vector, $\vec b=(b_1, b_2, b_3)$. 

For $\vec b=(b_1, b_2, b_3)\neq 0$, %$b\in \R$,
 Eqs. \eqref{WS-eq1} - \eqref{WS-eq2} have the gauge-translation invariant %(i.e. invariant under translations up to gauge transformations)  %constant magnetic field 
 solution given (up to a gauge symmetry)  by
 \begin{equation}\label{vac-hom}
 U^{\vec b}_* := (Q^{\vec b}, %-\frac{i}{2}\tau_{3} A^b \sin \theta,  -\frac{i}{2}\tau_{0} A^b\cos \theta, 
 \Phi^{\vec b}),
 \end{equation}
  where $\Phi^{\vec b}$ is a constant field %$\Phi_0:= (0, \vphi_0)$ %$\tau_a$ are the Pauli matrices (we write $V= \tau_a V^a$), and  
 and $Q^{\vec b}$ is a connection with a constant magnetic field %so that%the constant curvature connection with 
 \begin{equation}\label{curv-const}\Tr F_{Q^{\vec b}}=-i e\om_{\vec b},\ \text{ where }\ \om_{\vec b}:=\sum_{(ijk)}b_i dx^j\wedge dx^k,\end{equation}
  with the sum taken over all cyclic permutations of $(1, 2, 3)$, and  $e:= %g \sin \theta\ = g' \cos \theta=
  \frac{g g'}{\sqrt{g^2+{g'}^2}}.$ 
 ($e$ turns out to be the electron charge.) %, satisfying the YM equation. 
 We specify this solution at the end of this section in equations \eqref{Qb} and \eqref{FQb}. (For it, $Q^{\vec b}$ solves the YM equation $d_{Q}^* F_Q =   0$.) %will construct such a solution later on.
  \DETAILS{ $A^b(x)$ is a magnetic potential of the constant magnetic field of strength $b$ %whose direction we identify with the $x^3$-axis, $d A^b(x)=b dx^1\wedge dx^2$,
   and $\theta$ is \emph{Weinberg's angle}, given by $\tan\theta=g'/g$.\footnote{Indeed, $d_{Q}\Phi_0 = (g V+ g' X)\Phi_0 = ( g A^b \sin \theta \tau_{3}+ g' A^b\cos \theta\tau_{0})\Phi_0 = g' A^b\cos \theta(\tau_{3}+ \tau_{0})\Phi_0$. Since $ (\tau_3 + \tau_0) \Phi_0 = 0$, this implies $d_{Q}\Phi_0=0$. From $d_Q\Phi_0 = 0$, it is easy to see that \eqref{vac-true} solves \eqref{WS-eq1} - \eqref{WS-eq2}.} This solution is  gauge-translationally invariant, i.e. invariant under translations up to a gauge symmetry. % (here the magnetic potential $A^b(x)$ is gauge-translationally invariant.) %, while the magnetic field, $d A^b(x)=b$, is translationally invariant. This solution
  It corresponds to  the `total vacuum' %in the constant external magnetic field $b dx^1\wedge dx^2$.
with the constant magnetic field $d A^b(x)=\sum_{(ijk)}b_i dx^j\wedge dx^k$.} %$\vec b=d A^b(x)$. For a low external magnetic field, solution \eqref{vac-true} has the lowest energy per unit area. (The energy of \eqref{vac-true} is $0$ for $b=0$ and increases as $b$ increases unless we replace $\| F_{Q}\|^2$ in \eqref{EW-energy} by $\| F_{Q} - *b\|^2$. In the latter case, \eqref{vac-true} has the energy  $0$ for all $b$. How can we get lower energy??) 

\DETAILS{In what follows, we %identify the magnetic field 2-form, $\sum_{(ijk)}b_i dx^j\wedge dx^k$, in $\R^3$ with the 3-vector, $\vec b=(b_1, b_2, b_3)$, and %{\it fix a constant magnetic field $\vec b$} %and restrict to the plane $\perp\vec b$ %to  $\vec b$ %of $U^{\rm hom}_b$ and
 look for %consider
  solutions with a fixed average magnetic field (trace of the YM magnetic field, or curvature) %\footnote{For convenience, we add $\Tr$ to the definition of the magnetic field.} % and drop the Hodge operator $*$.
   \[\lim_{R\ra \R^3}\frac1{|R|}\int_R\Tr F_Q=-i e\om_{\vec b}. %\sum_{j}b_j dx^j.
  \] %$\vec b$. % (we identify the magnetic field 2-form in $\R^3$ with the 3-vector). 
   This}
   Fixing the  average magnetic field breaks the full special Euclidean symmetry (i.e. translations and rotations but not reflections) but maintains the special Euclidean symmetry in the plane orthogonal to $\vec b$ and the translational symmetry along $\vec b$. %for the subgroup %,  $O_{\vec b}(2)$,   of rotations around the axis in the direction of the magnetic field $\vec b$. %Therefore it is natural to only consider
      Looking for the simplest non-trivial solutions, we consider
 solutions which {\it do not depend on the coordinate along $\vec b$} and look for solutions spontaneously breaking the transversal translational symmetry. % and gauge-rotation invariant under $O_{\vec b}(2)$. %, which we choose to be $x_3$. %Hence we can restrict our analysis to the plane $\perp\vec b$. We choose the $x^3$-axis along $\vec b$ and identify the plane $\perp\vec b$ with $\R^2$. %  consider solutions for which the fields do not depend on the coordinate $x^3$ and therefore the problem is reduced to a 2D one. 

  With the notation $b=|\vec b|$, 
  we show that for appropriate perturbations: %amongst 2D solutions, 
 \begin{enumerate}[(i)]
 \item \eqref{vac-hom} is linearly stable for $b<b_*$ and  unstable for $b>b_*$, where $b_*:=  g^2\varphi_0^2/2e$;
 \item At $b=b_*$, a new inhomogeneous solution (breaking the gauge-translational invariance) 
 bifurcates, and this solution has the discrete translational symmetry of a lattice in the plane orthogonal to $\vec b$ and has lower energy per unit area; % less then that of \eqref{vac-true}. 
 %Finally, we show also that
  \item The lattice shape minimizing the energy per unit area approaches the hexagonal lattice as $b$ approaches $b_*$.
 \end{enumerate}
  % This new solution is linearly stable for $b>b_*$ and, like the Abrikosov solution of superconductivity, has the discrete translational symmetry of a lattice.
%bifurcation of  gauge - periodic solutions from the `trivial' branch $(W(x),A(x),Z(x),\vphi(x)) = (0,A^b(x),0,\vphi_0)$. These solutions are required to satisfy
%We specify the space on which this operator is defined (which will be the space of the emerging solutions) as the space of $L^2_{\rm loc}(\R^2)$ functions satisfying the following gauge - periodicity conditions with respect to a lattice $\cL$: %We specify the new solutions by requiring that they satisfy the following gauge - periodicity conditions with respect to a lattice $\cL$: 
  
%Recall that we {\it fix a constant magnetic field $\vec b$}  and look for 
To formulate these results precisely, we introduce some definitions. Since we consider solutions which {\it do not depend on the coordinate along $\vec b$},  we can restrict our analysis to the plane $\perp\vec b$. We choose the $x^3$-axis along $\vec b$ and identify the plane $\perp\vec b$ with $\R^2$.

We fix a lattice $\cL$ in $\R^2$ %Let $\cL$ be a lattice in $\R^2$. 
and say a triple $( \Phi(x), V(x), X(x))$ is
 $\cL$-\emph{gauge-periodic}, or, $\cL$\emph{-equivariant}, %state
   if and only if it satisfies the equation
 \begin{equation}\label{gauge-per}
	(T^{gauge}_{\g_s})^{-1} T^{trans}_s (V, X, \Phi) = (V, X, \Phi),\ \quad \forall s\in \cL,
\end{equation}
for some $\g_s \in C^1(\R^2, SU(2)\times U(1))$. % for all $s\in\cL$, 
Here $T^{gauge}_{\g }$ is given by \eqref{gauge-transf-gen} % the right-hand side of \eqref{gauge-transf-gen}, with $h_1(x)  h_2(x)=\g (x)$, %$h_1(x)\in SU(2),\ h_2(x)\in U(1),$
  and $T^{trans}_s$ is the group of translations, $T^{trans}_s f(x)=f(x+s)$. 
(When $\cL$ is clear, we omit it from the definition above.) 

We denote by $\cH^s_{\cL},\ s\in \N,$ the  %local 
Sobolev space of $\cL$-equivariant triples $U\equiv (V, X, \Phi)$ on $\R^2$, with the norm %restricted to %of $\cH^s(\Omega)$, Explicitly:
  \begin{align}\label{Sob-norm}
  	\|U\|_{\mathcal{H}_{\mathcal{L}}^s} := \Big(\frac{1}{|\Omega|}\sum_{k=0}^s\int_{\Omega} \|d_Q^k U\|^2 %2\Tr\big( (U^*U) \big]\big) 
	\Big)^{\frac12},
  \end{align}
where $\Omega$ is an arbitrary fundamental domain of $\cL$, $d_Q^k$ is the $k$-th iterate of the covariant derivative $d_Q$ and $ \|\cdot\|$ is the (fiber) norm in the space $\Omega^{k+1}_{\mathfrak{s u}(2)}\times \Omega^{k+1}_{\mathfrak{u}(1)}\times \Omega^k_{\C^2}$, see \eqref{inner-prod-ext'}	
 (and \eqref{u2-inner-prod}), and  with corresponding the inner product. % given by the standard Sobolev inner product. 
 Note that $L^2_{\cL}=\cH^0_{\cL}$. 
 
 The resulting  Sobolev spaces $\mathcal{H}_{\mathcal{L}}^{s}$ are independent (up to isomorphism) of the choice of the fundamental domain, $\Omega$.  All  Sobolev embedding theorems are valid for $\mathcal{H}_{\mathcal{L}}^s$. They can be proven by passing to a vector bundle over the torus $\R^2/\cL$ and then to the local charts and then using standard Sobolev embedding theorems. By the Sobolev embedding $\mathcal{H}_{\mathcal{L}}^{1}\subset L_{\mathcal{L}}^{p},\ p<\infty$, and the definitions \ref{WS-energy0} and \ref{Sob-norm},  
 \begin{align}\label{en-finite} E_{\Omega} (Q, \Phi)<\infty\ \text{ on }\ \mathcal{H}_{\mathcal{L}}^1  \end{align} (recall that $\Omega\subset \R^2$) and is independent of a choice of $\Omega$.

We say a solution $U_*:=  (V_*, X_*, \Phi_*)$ of the WS system \eqref{WS-eq1} - \eqref{WS-eq2} is \emph{energetically stable} %(respectively, \emph{unstable})
if and only if it is a local minimum of the WS energy $E_{N} $, in the sense that the spectrum of the \emph{$L^2$-Hessian} of $E_{N} $ %\emph{linearized operator} for the WS system 
 at $U_*$ on  $L^2_{\cL}$ (which is real) is non-negative. %, (respectively, is a saddle point).
  $U_*$ is said to be \emph{unstable} if it is a saddle point of $E_{N} $ (so that the spectrum of its Hessian has a negative part).  % has a negative piece. (The spectrum of the $L^2$-Hessian operator is real.)  A linearly stable solution is a local minimizer of the energy functional \eqref{WS-energy0}, while linearly unstable one, a saddle point.

%Note that, 
For an $\cL$-equivariant triple $U$ and a fundamental domain $\Omega$ of $\cL$, we define the energy per fundamental cell by
\begin{equation}\label{En-per}E^{\cL}(U) :=  \frac{1}{|\Omega|} E_{\Omega} (U),\end{equation}
where $|\Omega|$ denotes the area of $\Omega$. This energy is independent of the choice of $\Omega$. % Note that  $|\Omega|$ is independent of the choice of the cell $\Omega$.

In what follows,  $\Omega$ denotes an arbitrary (but fixed throughout) fundamental domain of $\cL$, and $|\cL|$, the area of a fundamental cell of $\cL$, which is independent of the choice of the cell $\Omega$ %(and in particular,  $|\cL|= |\Omega|$)
  and is called the covolume of $\cL$.  %The area $|\Omega|$ %of a fundamental domain of $\cL$ (it is independent of the choice of fundamental domain).

 Let $M_W :=   \frac1{\sqrt{2}} g\varphi_0$, $M_Z :=   \frac1{\sqrt{2}\cos\theta} g\varphi_0$ and $M_H :=  \sqrt{2}\lambda\varphi_0$, where $\theta$ is the Weinberg angle defined by $ \cos \theta=
  \frac{g }{\sqrt{g^2+{g'}^2}}$. These are the masses of the W, Z and Higgs bosons, respectively %\footnote{These are the masses of the W, Z and Higgs bosons, respectively; see equation \eqref{WS-energy} and the discussion that follows.} $M_W :=  g\varphi_0/\sqrt{2}$ be the mass of the W boson
%See %equation \eqref{WS-energy} and
(this nomenclature will be explained in the discussion  following Eq. \eqref{WS-energy'}). %{T-def}.
  Finally, let
\begin{equation} \label{b-crit}
 b_*:=  \frac{g^2\varphi_0^2}{2e}=\frac{M_W^2}{e},\ \quad e:=g\sin \theta.
\end{equation}

With the above definitions, we will
prove the following:

\begin{theorem}\label{thm:normal-instab} The gauge-translational invariant % homogeneous  vacuum 
 solution \eqref{vac-hom} is energetically stable for $b<b_*$ and  unstable for $b>b_*$. \end{theorem}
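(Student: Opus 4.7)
The plan is to compute the $L^2$-Hessian of the energy $E_\Omega$ at the background $U_*^{\vec b}$ and analyze its spectrum sector by sector. The basic strategy mirrors the normal-state instability analysis in the Ginzburg-Landau theory of superconductors, with the charged $W$-boson fluctuation playing the role of the order parameter. First I would pass from $(V,X)$ to the physical fields $(W,Z,A)$ via the Weinberg rotation
\begin{equation*}
A = \cos\theta\, X + \sin\theta\, V^3, \qquad Z = -\sin\theta\, X + \cos\theta\, V^3, \qquad W = \tfrac{1}{\sqrt{2}}(V^1 - i V^2),
\end{equation*}
and fix the unitary gauge. In these variables the vacuum \eqref{vac-hom} becomes $\Phi^{\vec b}=\varphi_0(0,1)^T$, $W^{\vec b}=Z^{\vec b}=0$, and $A^{\vec b}$ a connection with constant curvature $dA^{\vec b}=b\,dx^1\wedge dx^2$.

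Next I would expand $E_\Omega$ to second order in the fluctuation $\delta U=(\delta W,\delta Z,\delta A,\delta\varphi)$. Since $W^{\vec b}=Z^{\vec b}=0$ and $\Phi^{\vec b}$ is constant, cross-terms linear in $\delta W$ drop out upon integration by parts and, modulo a standard background-covariant gauge-fixing term, the Hessian splits into four essentially decoupled sectors. The Higgs sector yields $(-\Delta+M_H^2)$ on the physical mode plus a Goldstone mode absorbed by the gauge; the $Z$ sector is bounded below by $M_Z^2>0$; and in the massless photon $A$ sector the gauge-fixed Maxwell operator is manifestly non-negative. None of these produces a negative eigenvalue for any $b$. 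The critical sector is $W$: in the gauge $(D_{A^{\vec b}})^\mu W_\mu=0$, the quadratic form on the charged components of $W$ reduces to
\begin{equation*}
L_W = -D_{A^{\vec b}}^2 \otimes \mathbf{1} + M_W^2\, \mathbf{1} + 2eb\, S,
\end{equation*}
where $D_{A^{\vec b}}=d-ieA^{\vec b}$ is the $U(1)_{\mathrm{em}}$ covariant derivative on the charge-$e$ vector field $W$ and $S$ is the spin generator in the plane transverse to $\vec b$. On $\cL$-equivariant sections the spectrum of $-D_{A^{\vec b}}^2$ consists of the Landau levels $eb(2n+1)$, $n\ge 0$, while the Pauli-type term $2ebS$ has eigenvalues $\pm 2eb$ on the transverse components, so the bottom of the spectrum of $L_W$ equals $eb+M_W^2-2eb=M_W^2-eb$. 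Hence $L_W$ is non-negative precisely when $b\le b_*=M_W^2/e$, and possesses a negative eigenvalue when $b>b_*$, realized by an $\cL$-equivariant lowest Landau level mode.

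Combining the four sectors, $U_*^{\vec b}$ is energetically stable for $b<b_*$ and unstable for $b>b_*$. The hard part is isolating the $W$-sector cleanly and verifying that the gyromagnetic coefficient in front of $eb$ is exactly $2$, so that the Pauli term overwhelms the zero-point Landau energy $eb$. This requires careful bookkeeping of two contributions at the quadratic level: the linearization of the non-abelian commutator $\tfrac{g}{2}[V,V]$ in \eqref{FQ} around $V^{\vec b}$, which generates the spin-magnetic coupling $2ebS$ in the charged directions, and the Higgs covariant derivative $\|\nabla_Q\Phi\|^2$ at $\Phi=\Phi^{\vec b}$, which generates the mass $M_W^2=\tfrac12 g^2\varphi_0^2$. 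The gauge fixing must be arranged so that the Faddeev-Popov and longitudinal $W$ modes contribute no spurious negative eigenvalues, and the unstable test section has to be constructed in the $\cH^1_{\cL}$ class using the magnetic translation group, with $\cL$ chosen compatibly with the Dirac flux quantization $eb|\cL|\in 2\pi\Z$.
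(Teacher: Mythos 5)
Your proposal is correct and follows essentially the same route as the paper: after passing to the $(W,Z,A,\varphi)$ variables the Hessian block-diagonalizes into the four operators $H_1,\dots,H_4$ of \eqref{Hw}--\eqref{Hvphi}, the $Z$, $A$ and Higgs blocks are non-negative (Proposition \ref{non-neg-prop}), and the $W$ block is reduced, via the Hodge split into gradient and covariant-divergence-free parts and the diagonalization \eqref{h_w-diag} of $-\Delta_{a^n}-2niJ$, to the Landau spectrum shifted by the spin term, giving the bottom eigenvalue $\mu-n$, i.e.\ $M_W^2-eb$ before rescaling. Your identification of the gyromagnetic factor $2$ as the decisive point, and of the longitudinal/gradient modes as contributing only the positive eigenvalue $\mu$, matches Proposition \ref{H_w-prop} exactly.
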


\begin{theorem}\label{thm:AL-exist'}
Let $\cL$ be a  lattice satisfying $0 < 1-\frac{M_W^2}{2\pi}|\cL| \ll 1$ and assume that %that $|1-\frac{M_W^2}{eb}| << 1$ and 
$M_Z < M_H$.\footnote{This assumption is justified experimentally since $M_Z = 91.1876 \pm 0.0021 GeV/c^2$ \cite{mz} and $M_H = 125.09 \pm 0.31 GeV/c^2$ \cite{mh}} Then there exist $\delta >0$ such that
 %\footnote{Here $\alpha(M_Z / M_H, \tau)$ is a number defined by Equation \eqref{alpha-def}, depending on the ratio $M_Z / M_H$ and shape parameter $\tau$ of $\cL$ (see Section \ref{sec:asymp} and Appendix \ref{ls}).}. Then 
 \DETAILS{for every lattice $\cL$ satisfying    
\begin{equation}\label{LAT-cond} |\Omega| = \frac{2\pi n}{eb} %\big| 1 -b/\kappa^2 \big| \ll 1$ and $(\kappa-\kappa_c(\cL))(\kappa^2-b)\ge 0$, where $b: = \frac{2\pi n}{|\Omega^\cL|},
\end{equation}
with $n=1$,}the following holds:
	\begin{enumerate}[(a)]
	%\item  %If $|\Omega^\cL| > \frac{2\pi}{\kappa^2}$,
\item Equations \eqref{WS-eq1} - \eqref{WS-eq2} have an inhomogeneous %non-trivial (i. e. not gauge equivalent to a constant)
 solution $U_{\cL}\in \cH^2_{\cL}$ %that is $\cL-$equivariant and locally Sobolev class of order 2 %If $|\Omega^\cL| \leq \frac{2\pi}{\kappa^2}$, then there is no $\cL$-lattice solution
in the $\del$-ball 	$B_{\mathcal{H}_{\mathcal{L}}^2}( U_*^{\vec{b}}; \delta)$ in $\mathcal{H}_{\mathcal{L}}^2$ around the homogeneous solution \eqref{vac-hom};  %if and only if $|\Omega^\cL| > \frac{2\pi}{\kappa^2}$.
	\item $U_{\cL}$ %is close to the branch of normal solutions and
is the unique, up to gauge symmetry transformation, inhomogeneous solution in the $\del$-ball $B_{\mathcal{H}_{\mathcal{L}}^2}( U_*^{\vec{b}}; \delta)$; % in a neighbourhood of the vacuum solution \eqref{vac-hom};
	%\item $U_{\cL}$ is continuously differentiable of all orders in $b$ for $b$ in an open right half-interval of $b_*$;
	\item $U_{\cL}$ has energy per unit area less than vacuum solution \eqref{vac-hom}: $E^{\cL}(U_{\cL}) < E^{\cL}(U^b_*)$.
	%\item For $(\kappa-\kappa_c(\cL))(\kappa^2-b)\ne 0$, the solution above is continuously differentiable of all orders in $b$ in a neighbourhood of $\kappa^2$.
	\end{enumerate}	
\end{theorem}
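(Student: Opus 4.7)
The plan is a Lyapunov--Schmidt reduction around the homogeneous vacuum $U_*^{\vec b}$ in the regime where $\varepsilon := 1-\frac{M_W^2}{2\pi}|\cL|$ is small and positive (which, since $|\cL|$ is essentially $2\pi/(eb)$ for a single flux quantum per cell, corresponds to $b$ slightly above $b_*$). After the gauge fixing and the change of variables to $(W,Z,A,\vphi)$ carried out in Sections \ref{sec:WZ-bosons}--\ref{Sec:rescaling}, the WS system becomes an analytic map $F:\cH_\cL^2 \times \R \to L_\cL^2$ with $F(0,\varepsilon)\equiv 0$ corresponding to the homogeneous vacuum. I would look for solutions of the form $U=U_*^{\vec b}+u$, with $u\in \cH_\cL^2$ small.

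\textbf{Linear analysis and kernel.} Next I would study the $L^2$-Hessian $L_\varepsilon$ of $E^\cL$ at $U_*^{\vec b}$, which by the analysis underlying Theorem \ref{thm:normal-instab} splits into a diagonal block in $(Z,A,\text{Re}\,\vphi)$ that is strictly positive (this is where the hypothesis $M_Z<M_H$ enters, to ensure the Higgs/Z block is invertible), plus a $W$-block whose lowest eigenvalue is a magnetic Schr\"odinger-type operator $\bigl(-\n_{Q^{\vec b}}^2 - eb + M_W^2\bigr)$, producing a lowest Landau level at energy $M_W^2-eb$. At $b=b_*$ the lowest Landau level has energy zero, and on $\cH_\cL^0$ with one flux quantum per cell this eigenspace is one complex-dimensional, spanned by an explicit theta-function-type section $\psi_0$. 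All other modes of $L_0$ are bounded away from zero, uniformly in small $\varepsilon$.

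\textbf{Reduction and bifurcation equation.} Decompose $u=s\psi_0 + w$ with $s\in\C$ and $w$ in the orthogonal complement of $\ker L_0$. The projection $P^\perp F(s\psi_0+w,\varepsilon)=0$ is solved for $w=w(s,\varepsilon)$ by the implicit function theorem, with $w=O(|s|^2)$ and $w$ analytic in $s,\bar s,\varepsilon$. Substituting back produces the reduced equation
\begin{equation*}
 \g(s,\bar s,\varepsilon) := \lan \psi_0, F(s\psi_0 + w(s,\varepsilon),\varepsilon)\ran = 0
\end{equation*}
on the one-complex-dimensional kernel. The residual $U(1)$ gauge symmetry acts as $s\mapsto e^{i\alpha}s$, so $\g(s,\bar s,\varepsilon) = s\,h(|s|^2,\varepsilon)$ with $h(t,\varepsilon)= -c_1\varepsilon + c_2(\cL)\,t + O(\varepsilon^2+|\varepsilon t|+t^2)$. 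A direct computation along the lines of the Abrikosov vortex-lattice analysis (e.g.\ as in \cite{S}) gives $c_1>0$, and $c_2(\cL)>0$ provided $M_Z<M_H$ (so that the integrating-out of the massive modes produces a positive quartic). Hence $|s|^2 = (c_1/c_2)\,\varepsilon + O(\varepsilon^2)$ yields a non-trivial solution, unique modulo the $U(1)$ phase, proving (a) and (b).

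\textbf{Energy comparison and main obstacle.} For (c), I would expand
\begin{equation*}
 E^\cL(U_\cL) - E^\cL(U_*^{\vec b}) = \tfrac12 \lan L_\varepsilon u,u\ran + O(\|u\|^4),
\end{equation*}
and use that on the bifurcation branch $\lan L_\varepsilon \psi_0,\psi_0\ran \sim -c_1\varepsilon\,|s|^2$ while the quartic correction gives $+\tfrac12 c_2 |s|^4$; minimizing in $|s|^2$ over the branch yields $E^\cL(U_\cL)-E^\cL(U_*^{\vec b}) \sim -\tfrac{c_1^2}{2c_2}\varepsilon^2 <0$. The principal obstacle is establishing the positivity of the quartic coefficient $c_2(\cL)$: this demands computing explicit $L^4$-type integrals of Landau-level sections against the inverse of the massive $(Z,A,\vphi)$-Hessian, and verifying that the $W$-self-interaction dominates the (potentially destabilizing) contribution from integrating out the Higgs and $Z$ modes; this is exactly where the mass hierarchy $M_Z<M_H$ is used. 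A secondary technical point is ensuring the gauge fixing is smooth enough on $\cH_\cL^s$ so that Lyapunov--Schmidt applies, which I would handle by a Coulomb-type slice transverse to the gauge orbits of the homogeneous vacuum.
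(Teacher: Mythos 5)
Your overall strategy --- Lyapunov--Schmidt reduction about the homogeneous vacuum, an implicit-function-theorem solution of the complementary equation, a reduced scalar bifurcation equation constrained by the residual $U(1)$ symmetry, and an order-$\varepsilon^2$ energy expansion along the branch --- is exactly the route the paper takes (Sections \ref{sbp}--\ref{sec:asymp}). However, there is one concrete gap in your linear analysis that would break the reduction as you have set it up: the kernel of the Hessian at the bifurcation point is \emph{not} one complex dimension. Your claim that the diagonal block in $(Z,A,\mathrm{Re}\,\vphi)$ is strictly positive is false for the $A$-block: on divergence-free $\cL$-periodic vector fields the operator $\curl^*\curl=-\Delta$ has the two-dimensional null space of constant fields $(0,c,0,0)$, $c\in\R^2$ (Proposition \ref{non-neg-prop} and equation \eqref{Gf0}; these are the surviving gauge zero modes). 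Consequently $P^\perp L P^\perp$ is not invertible on the orthogonal complement of $\C\psi_0$ alone, and the implicit function theorem cannot be applied to solve $P^\perp F=0$ for $w$. The paper must enlarge the kernel to $\cK=\C\chi\oplus\R^2$, write $v=(s\chi,c,0,0)$, and solve an additional two-dimensional bifurcation equation $\tilde\g_2(\mu,s,c)=0$ for $c=c(\mu,s^2)=\cO(|s|^2)$ (Lemma \ref{gam-a-lem}) before the scalar equation in $s$ can be treated; establishing the invertibility of $\p_c\overline{\g}_2$ there requires a nontrivial computation of $\p_c w'$ (Lemma \ref{lem:dw'dc}). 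A related point you gesture at but do not resolve is that the $A$-equation carries the compatibility constraint $\div J_2=0$; the paper handles this by first solving an auxiliary projected system and then recovering the constraint from gauge invariance (Proposition \ref{prop:reconstr}).

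Two smaller corrections. First, $M_Z<M_H$ is not needed for invertibility of the Higgs/$Z$ blocks --- $H_3$ and $H_4$ are strictly positive for any positive masses; the hypothesis enters only through the sign of the quartic coefficient (positivity of $G_{m_h}-G_{m_z}$ and hence of $\eta_{m_z,m_h}$), which fixes the direction of the bifurcation ($s^2>0$ precisely for $b>b_*$) and the sign of the $\cO(\varepsilon^2)$ energy correction --- a point you do identify correctly later in the proposal. Second, your energy expansion $E^\cL(U_\cL)-E^\cL(U_*^{\vec b})=\tfrac12\lan L_\varepsilon u,u\ran+\cO(\|u\|^4)$ only gives the right structure: the quartic remainder is of the same order $\varepsilon^2$ as the leading term, so it must be computed exactly rather than estimated (the paper does this in Theorem \ref{thm:Eb-asym} via the explicit order-$s^4$ expansion of the energy combined with the solved linear equations for $a'$, $z'$, $\psi'$); your final formula $-\tfrac{c_1^2}{2c_2}\varepsilon^2$ is consistent with the paper's $-\tfrac12 b^2\sin^2\theta\,\eta_{m_z,m_h}(\tau)\,\om^2$.
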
	% and $\tau$.
\DETAILS{The next result uses the fact that all the lattices in $\R^2$ are parameterized by the shape parameter $\tau$ in the fundamental domain, $\bH/SL(2, \Z)$, of  the modular group  $SL(2, \Z)$ acting on   the Poincar\'e half-plane  $\bH$. 
Explicitly  %(see Fig. \ref{fig:PoincareStrip}),   
\begin{align}\label{fund-domSL2Z} \{\tau\in \C: \Im\tau > 0,\ |\tau| \geq 1,\ -\frac{1}{2} < \Re\tau \leq \frac{1}{2} \}. %\ \mbox{and}\ \Re\tau \geq 0\ \mbox{if}\ |\tau| = 1,
\end{align} 
(See Appendix \ref{sec:ls} for more details.) This gives the space of lattices a topology.} % (see Appendix \ref{sec:ls} for details). %the characterization of lattices $\cL \subset \R^2$ by shape parameter $\tau\in\C$.

The solutions described in this theorem can be reinterpreted geometrically as representing  sections ($ \Phi(x)$) and  connections ($(V(x), X(x))$) on a $U(2)$ vector bundle over a torus (cf. \cite{CERS}). %In the present situation of the gauge group $U(2)$, it is natural to consider vector bundles over a torus. 
 However, %as is well known,  
 a vector bundles over a torus is   topologically %$GL(V)-$
equivalent %(for the group $G=GL(V)$) 
to a direct sum of line bundles. In our case, this equivalence follows from equations \eqref{Z,A-fields} - \eqref{gauge-transf'} below.

%Let $\cL\subset\R^2$ be a lattice. In order to define the shape of $\cL$, it is convenient to  
\DETAILS{ For the next result, we introduce the standard  parameterization of lattices in $\R^2$. Identifying $\R^2$ with $\C$ via $(x_1,x_2) \leftrightarrow x_1 + i x_2$, we can view a lattice $\cL\subset\R^2$ as a subset of $\C$. It is a well-known fact (see e.g. \cite{Ahlfors}) that any lattice $\cL\subset\C$ can be given a basis $r,r'$ such that the ratio $\tau = \frac{r'}{r}$ belongs to the set
\begin{align}\label{fund-domSL2Z} \{\tau\in \C: \Im\tau > 0,\ |\tau| \geq 1,\ -\frac{1}{2} < \Re\tau \leq \frac{1}{2} \}, %\ \mbox{and}\ \Re\tau \geq 0\ \mbox{if}\ |\tau| = 1,
\end{align} 
which  is  the fundamental domain, $\bH/SL(2, \Z)$, of  the modular group  $SL(2, \Z)$ acting on the Poincar\'e half-plane  $\bH$.   
 %Although the basis is not unique, the value of $\tau$ is,
 For a given $\cL$, the parameter $\tau$ is unique and %we will use it
 is used as a  parameterization (up to scaling) of the lattices.  This gives the space of (normalized) lattices a topology.}

 For the next result, we use the topology on the space of (normalized) lattices induced by the standard  parameterization of lattices defined as follows. %We will use the standard  parameterization of lattices in $\R^2$ by 
 Identifying $\R^2$ with $\C$ via $(x_1,x_2) \leftrightarrow x_1 + i x_2$ and viewing a lattice $\cL\subset\R^2$ as a subset of $\C$ and using %. Then, by
  a translation and a rotation, any lattice $\cL\subset\C$ can be reduced to the form $\cL=r\cL_\tau$, where $r>0$, $\cL_\tau:=\Z+\tau \Z$ and $\tau \in \bH:=\{\tau\in \C: \Im\tau > 0 \}$. %= \frac{r'}{r}$ belongs to the Poincar\'e half-plane  $ One can choose a basis $r,r'$ such that the ratio $\tau = \frac{r'}{r}$ belongs to the Poincar\'e half-plane  $\bH:=\{\tau\in \C: \Im\tau > 0 \}$. 
Furthermore, any two $\tau$'s produce the same lattice iff they are related by an element the modular group  $SL(2, \Z)$ acting on the Poincar\'e half-plane  $\bH$ (see e.g. \cite{Ahlfors}). Hence, it suffices to restrict $\tau$ to   the fundamental domain of $SL(2, \Z)$, %$\bH/SL(2, \Z)$, of $SL(2, \Z)$ given explicitly as
 \begin{align}\label{fund-domSL2Z} \bH/SL(2, \Z)=\{\tau\in \bH: %\C: \Im\tau > 0,\ 
 |\tau| \geq 1,\ -\frac{1}{2} < \Re\tau \leq \frac{1}{2} \}. %\ \mbox{and}\ \Re\tau \geq 0\ \mbox{if}\ |\tau| = 1,
\end{align} 
 %which  is  %that any lattice $\cL\subset\C$ can be given
 %Although the basis is not unique, the value of $\tau$ is,
 %For a given $\cL$, the parameter $\tau$ is unique and  is used as a  parameterization (up to scaling) of the lattices.  
 % This, in particular,  gives the space of (normalized) lattices a topology. 
%  In what follows, we assume that the lattices we are dealing with are parametrized as $\cL=r\cL_\tau$, with  $\tau \in \bH/SL(2, \Z)$.

%Using the rotational symmetry, we may assume that $\cL$ has as a basis $\{r, r\tau\}$, where $\rho$ is a positive real number.

	\begin{theorem}\label{thm:lattice-shape} For $M_Z < M_H$, the lattice %shape
$\cL_*$ minimizing the average energy, $E^{\cL}(U_{\cL})$, % is  the hexagonal lattice $\cL_{\rm hex}=r \cL_{\tau_{\rm hex}}$,  $\tau_{\rm hex} = e^{i\pi/3}$. %per lattice fundamental domain %is minimized
	 approaches the hexagonal lattice $\cL_{\rm hex}$ as $b \to b_*$ in the sense that the shape parameter $\tau_*$ of the lattice $\cL_*$ approaches $\tau_{\rm hex} = e^{i\pi/3}$ in $\mathbb{C}$.
\end{theorem}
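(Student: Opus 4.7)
\textbf{Proof plan for Theorem \ref{thm:lattice-shape}.}  The approach mirrors the analysis of the superconducting Abrikosov lattice problem: expand the bifurcating solution $U_\cL$ from Theorem \ref{thm:AL-exist'} asymptotically in the small parameter
\[
\varepsilon^2 \;:=\; 1 - \frac{M_W^2}{2\pi}|\cL|,
\]
plug the expansion into $E^\cL$, and identify the leading $\tau$-dependent coefficient as (essentially) an Abrikosov-type theta-function invariant of the lattice.  The theorem then follows from the classical uniqueness of the hexagonal minimizer for that invariant combined with an implicit-function/continuity argument.

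First, I would refine Theorem \ref{thm:AL-exist'} to an expansion
\[
U_\cL \;=\; U_*^{\vec b} \;+\; \varepsilon\, U_1 \;+\; \varepsilon^2 U_2 \;+\; O(\varepsilon^3)
\]
in $\cH^2_\cL$, where $U_1$ lies in the one-dimensional kernel of the linearization at $U_*^{\vec b}$ on $\cH^2_\cL$ identified in the proof of Theorem \ref{thm:normal-instab}.  Under the reduction to the fields $(W, Z, A, \vphi)$ carried out in Sections \ref{sec:WZ-bosons}--\ref{Sec:rescaling}, this kernel is spanned (after gauge fixing and normalization) by a lowest-Landau-level section $\psi_\tau$ of the $W$-equation on the magnetic torus $\R^2/\cL$; the remaining components $Z_1,A_1,\vphi_1$ are slaved to $\psi_\tau$ through the nonlinear coupling. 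The second-order piece $U_2$ is then obtained by solving a linear equation in the range of the projection off the kernel, precisely as in the Abrikosov-lattice analysis for Ginzburg--Landau.

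Next, substituting the expansion into the energy functional \eqref{WS-energy0} and using the Euler--Lagrange equations to cancel the linear variation, one obtains
\[
E^\cL(U_\cL) \;=\; E^\cL(U_*^{\vec b}) \;-\; \frac{c_0\,\varepsilon^4}{\,c_1 + c_2\,\beta(\tau)\,} \;+\; O(\varepsilon^5),
\]
where $c_0, c_1, c_2 > 0$ are constants depending only on $g, g', \lambda, \vphi_0$ (so in particular on $M_W, M_Z, M_H$), and
\[
\beta(\tau) \;:=\; \frac{\langle |\psi_\tau|^4\rangle_{\Omega}}{\langle |\psi_\tau|^2\rangle_{\Omega}^{\,2}}
\]
is the Abrikosov constant of the normalized lowest-Landau-level state on the lattice $\cL_\tau$.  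The assumption $M_Z < M_H$ enters precisely at this stage: it guarantees that $c_2 > 0$, so that minimizing $E^\cL(U_\cL)$ over $\tau$ is equivalent, at leading order in $\varepsilon$, to minimizing $\beta(\tau)$ over the fundamental domain \eqref{fund-domSL2Z}.

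Finally, I would invoke the theorem of Nonnenmacher--Voros / Aftalion--Blanc--Nier (see the discussion in \cite{S}) that $\beta(\tau)$ is real-analytic on $\bH/SL(2,\Z)$ and attains its unique minimum at $\tau_{\rm hex} = e^{i\pi/3}$, with a nondegenerate Hessian there.  A continuity/implicit-function argument then shows that for $\varepsilon$ small enough the full shape-dependent coefficient of $\varepsilon^4$ has a unique minimizer $\tau_*(\varepsilon)$ close to $\tau_{\rm hex}$ and $\tau_*(\varepsilon)\to\tau_{\rm hex}$ as $\varepsilon\to 0$, i.e.\ as $b\to b_*$.  The principal obstacle is the bookkeeping in the second step: extracting the coefficient of $\varepsilon^4$ in the multi-component WS setting and showing that, after the slaving of $(Z_1, A_1, \vphi_1)$ to $\psi_\tau$, the $\tau$-dependence collapses onto a single positive multiple of $\beta(\tau)$ rather than some linear combination of theta invariants whose minimizer might differ from $\tau_{\rm hex}$; it is exactly here that the condition $M_Z < M_H$ is needed to rule out sign changes in the reduction.
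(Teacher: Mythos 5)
Your overall architecture---bifurcation expansion in the small parameter, fourth-order expansion of the energy, identification of a shape-dependent coefficient, and a continuity argument transferring the extremizer of that coefficient to the minimizer $\tau_*$ of the full energy---is the same as the paper's (Theorems \ref{thm:Eb-asym} and \ref{thm-tau*}). The gap is in your identification of the shape functional. You claim that after slaving $(Z_1,A_1,\vphi_1)$ to the lowest-Landau-level section the $\tau$-dependence ``collapses onto a single positive multiple of $\beta(\tau)$'', the Abrikosov constant, and you then invoke the Nonnenmacher--Voros / Aftalion--Blanc--Nier theorem on the hexagonal extremizer of $\beta$. That is not what happens in the Weinberg--Salam model. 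The slaved corrections \eqref{s-soln} are \emph{resolvents} $G_m=(-\Delta+m^2)^{-1}$ applied to $|\chi|^2$, and when they are substituted into the order-$s^4$ energy the local quartic contribution $\langle|\chi|^4\rangle$ coming from $\frac{g^2}{2}|\overline w\times w|^2$ is cancelled against the $-|\chi|^2\curl\nu'$ term (see \eqref{phi0-eqn'}, \eqref{v3-simp'} and the derivation of \eqref{phi0-eqn-s} and \eqref{WS-energy-s-3}); what survives is the \emph{nonlocal} functional $\alpha_{m_z,m_h}(\tau)=\langle|\chi|^2 G_{m_z,m_h}(|\chi|^2)\rangle/\langle|\chi|^2\rangle^2$ built from the resolvent difference $G_{m_h}-G_{m_z}$, so that the energy coefficient is $-\frac12 b^2\sin^2\theta\,\eta_{m_z,m_h}(\tau)$ with $\eta=[m_w^2\alpha_{m_z,m_h}+\sin^2\theta]^{-1}$ (Theorem \ref{thm:Eb-asym}). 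The Abrikosov constant $\beta(\tau)$ does not appear at all at this order.

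Consequently the final step of your argument has no theorem to invoke: the results of \cite{NV,ABN} concern $\beta(\tau)$, not $\alpha_{m_z,m_h}(\tau)$, and there is no a priori reason the two share an extremizer---the resolvent difference smears $|\chi|^2$ over the length scales $m_z^{-1},m_h^{-1}$, which are of order one on the rescaled lattice near $b_*$, so $\alpha_{m_z,m_h}$ is genuinely nonlocal. The paper instead relies on Theorem \ref{thm-eta-max}, i.e.\ the result of \cite{MT} (obtained numerically, with partial analytic results in \cite{MacDow}) that $\eta_{m_z,m_h}$ is maximized at $\tau=e^{i\pi/3}$ when $M_Z<M_H$; moreover, the hypothesis $M_Z<M_H$ enters to make $G_{m_z,m_h}$ positivity preserving, hence $\alpha_{m_z,m_h}>0$ and $\eta>0$, not (as you suggest) to fix the sign of a coefficient multiplying $\beta(\tau)$. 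A secondary omission: your comparison of $\tau_b$ with $\tau_*$ requires $E^{\cL}(U_{\cL})$ to be differentiable in $\tau$, which is nontrivial because the function spaces themselves depend on $\tau$; the paper handles this by conjugating with the linear map $M_{\tau}$ onto a $\tau$-independent space (Lemmas \ref{G-lem} and \ref{v-tau-lem}).
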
 

\DETAILS
{\bf(The function $e(\tau)\equiv E^{\cL}(U_{\cL})$ is defined originally on  the fundamental domain, $\bH/SL(2, \Z)$, of  $SL(2, \Z)$. (It can be extended by periodicity to  the Poincar\'e half-plane  $\bH$. The extension is invariant under  the modular group  $SL(2, \Z)$ acting on $\bH$.) Hence, we expect that $e(\tau)$ has only two critical points: at $\tau = e^{i\pi/3}$ and  $\tau = e^{i\pi/2}$. This does not contradict Theorem \ref{thm:lattice-shape}, but makes it more precise. %Possible way out: In fact, $U_{\cL}$ depends on $\tau$)
}

 \DETAILS{
$\cL$-equivariant functions and one-forms are in one-to-one correspondence with sections and connections on a line bundle over the torus $\R^2/\cL$.  We will see that such a vector bundle factorizes into  a line bundle and a trivial bundle.
}

Now, we construct  explicitly the solution \eqref{vac-hom}. We define %$\Phi^{\vec b}\equiv\Phi_0:= (0, \vphi_0)$ (which fixes the gauge) and the connection $Q^{\vec b}$ as %entering \eqref{vac-true}:
 \begin{align}\label{Qb}
 Q^{\vec b}=(-\frac{i}{2}\tau_{3} A^{\vec b} \sin \theta,  -\frac{i}{2}\tau_{0} A^{\vec b}\cos \theta)\ \text{ and }\ \Phi^{\vec b}\equiv\Phi_0:= (0, \vphi_0),
 \end{align}
  where  $A^{\vec b}(x)$ be a ($U(1)$-) magnetic potential of the constant magnetic field $d A^{\vec b}=\om_{\vec b}$ and $\theta$ is \emph{Weinberg's angle}, given by $\tan\theta=g'/g$. %(The second relation imposes a gauge restriction.)
   We have
  \begin{lemma}\label{eqs-equiv} %Let 
The pair $(Q^{\vec b}, \Phi_0)$ satisfies \eqref{WS-eq1} - \eqref{WS-eq2}. Moreover, the connection $Q^{\vec b}$ %of strength $b$ %whose direction we identify with the $x^3$-axis, $d A^b(x)=b dx^1\wedge dx^2$,
has the constant curvature % We claim that
\begin{align}\label{FQb}F_{Q^{\vec b}}=-\frac{i}{2}e (\tau_{3} +\tau_{0})\om_{\vec b},\ \text{ (with the magnetic field }\ \Tr F_{Q^{\vec b}}=-ie\om_{\vec b}). \end{align}
%(which gives $\Tr F_{Q^{\vec b}}=-ie\om_{\vec b}$). % and  
 \end{lemma}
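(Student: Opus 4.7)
\medskip
\noindent\textbf{Proof proposal for Lemma \ref{eqs-equiv}.}

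The plan is a direct verification, organized around the algebraic identity $(\tau_3+\tau_0)\Phi_0=0$ and the observation that $Q^{\vec b}$ takes values in the \emph{abelian} subalgebra spanned by $\tau_3+\tau_0$. First, I would rewrite $Q^{\vec b}=gV^{\vec b}+g'X^{\vec b}$ in a more convenient form: using the identifications $\sin\theta=g'/\sqrt{g^2+g'^2}$, $\cos\theta=g/\sqrt{g^2+g'^2}$, so that $g\sin\theta=g'\cos\theta=e$, I collect the $SU(2)$ and $U(1)$ parts to get
\begin{equation}\label{Qb-compact}
Q^{\vec b}=-\tfrac{i}{2}\,e\,A^{\vec b}\,(\tau_3+\tau_0).
\end{equation}
Since $\tau_3+\tau_0=\mathrm{diag}(2,0)$ annihilates $\Phi_0=(0,\varphi_0)$, this immediately gives $\nabla_{Q^{\vec b}}\Phi_0=d\Phi_0+Q^{\vec b}\Phi_0=0$.

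Next I would verify \eqref{WS-eq1} and compute the curvature. From $\nabla_{Q^{\vec b}}\Phi_0=0$, the left-hand side of \eqref{WS-eq1} vanishes, while the right-hand side vanishes because $\|\Phi_0\|^2=\varphi_0^2$. For the curvature, the factor $\tau_3+\tau_0$ commutes with itself, so in local coordinates $[Q^{\vec b},Q^{\vec b}]=(-\tfrac{ie}{2})^2[\tau_3+\tau_0,\tau_3+\tau_0]\,A^{\vec b}_i A^{\vec b}_j\,dx^i\wedge dx^j=0$. Hence
\begin{equation}\label{FQb-comp}
F_{Q^{\vec b}}=dQ^{\vec b}=-\tfrac{i}{2}\,e\,(\tau_3+\tau_0)\,dA^{\vec b}=-\tfrac{i}{2}\,e\,(\tau_3+\tau_0)\,\omega_{\vec b},
\end{equation}
which is \eqref{FQb}. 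Taking the trace, $\mathrm{Tr}(\tau_3+\tau_0)=2$, yields $\mathrm{Tr}\,F_{Q^{\vec b}}=-ie\,\omega_{\vec b}$.

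Finally, I would verify \eqref{WS-eq2}. On the right, the current $J(Q^{\vec b},\Phi_0)$ is a sum of terms of the form $\mathrm{Im}\langle\tau_a\Phi_0,\nabla_{Q^{\vec b}}\Phi_0\rangle$, all of which vanish because $\nabla_{Q^{\vec b}}\Phi_0=0$. On the left, using $d_Q^*=d^*+[Q,\,\cdot\,]^*$ and the fact that both $Q^{\vec b}$ and $F_{Q^{\vec b}}$ are proportional to $\tau_3+\tau_0$ (hence commute pointwise), the nonabelian term drops and $d_{Q^{\vec b}}^* F_{Q^{\vec b}}=d^* F_{Q^{\vec b}}$. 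Since $\omega_{\vec b}$ has \emph{constant} coefficients, $d\omega_{\vec b}=0$ and, in Euclidean $\R^3$, $d^*\omega_{\vec b}=\pm\!*\!d\!*\omega_{\vec b}=0$ as well, because $*\omega_{\vec b}$ is the constant 1-form with components $(b_1,b_2,b_3)$. Thus $d_{Q^{\vec b}}^*F_{Q^{\vec b}}=0=J(Q^{\vec b},\Phi_0)$.

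There is no real obstacle here: the statement is essentially an algebraic check, and the only mildly subtle point is the bookkeeping to confirm that the coupling constants assemble into the single factor $e$ in \eqref{Qb-compact}, which is precisely the content of the Weinberg-angle relation $g\sin\theta=g'\cos\theta=e$.
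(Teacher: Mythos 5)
Your proof is correct and follows essentially the same route as the paper's: the key identity $(\tau_3+\tau_0)\Phi_0=0$ gives $\nabla_{Q^{\vec b}}\Phi_0=0$ (after assembling $g\sin\theta=g'\cos\theta=e$), which kills both \eqref{WS-eq1} and the current in \eqref{WS-eq2}, while the abelian nature of $Q^{\vec b}$ reduces the curvature to $dQ^{\vec b}=-\tfrac{i}{2}e(\tau_3+\tau_0)\,\om_{\vec b}$ and the Yang--Mills equation to $d^*\om_{\vec b}=0$. Your write-up is merely more explicit than the paper's about the vanishing of the commutator terms and of $d^*$ on the constant-coefficient form.
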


\begin{proof}\eqref{FQb} follows easily from $d A^{\vec b}=\om_{\vec b}$.     %Using that $d A^{\vec b}=\om_{\vec b}$, one shows readily \eqref{FQb} %One can easily check that %$Q^{\vec b}$ satisfies $F_{Q^{\vec b}}=\om_{\vec b}$. 
 To check that $(Q^{\vec b}, \Phi_0)$ satisfies \eqref{WS-eq1} - \eqref{WS-eq2}, we observe that $d_{Q^{\vec b}}\Phi_0 = (g V^{\vec b}+ g' X^{\vec b})\Phi_0 = ( g A^{\vec b} \sin \theta \tau_{3}+ g' A^{\vec b}\cos \theta\tau_{0})\Phi_0 = eA^{\vec b} (\tau_{3}+ \tau_{0})\Phi_0$. Since $ (\tau_3 + \tau_0) \Phi_0 = 0$, this implies $\n_{Q}\Phi_0=0$. This gives \eqref{WS-eq1} and reduces \eqref{WS-eq2} to   $d_{Q^{\vec b}}^* F_{Q^{\vec b}} = 0$, which follows easily from \eqref{FQb}.  \end{proof}
 
\DETAILS{Furthermore, using that $d A^{\vec b}=\om_{\vec b}$, one shows readily %a straightforward computation shows
    that $F_{Q^{\vec b}}=-\frac{i}{2}g'\cos \theta(\tau_{3} +\tau_{0})\om_{\vec b}$, which implies the latter equation giving \eqref{WS-eq2}.} %Thus, it remains to check
    % that $d_{Q^{\vec b}}^* F_{Q^{\vec b}} = 0$. %, which follows easily from . %From $d_Q\Phi_0 = 0$, it is easy to see that \eqref{vac-true} solves \eqref{WS-eq1} - \eqref{WS-eq2}.  

 Our approach is based on a careful examination of the linearization of the WS equations on the homogeneous vacuum. The spectrum of the linearized problem determines the domains of the linear, or energetic, stability and the transition threshold. In the instability domain, we apply an equivariant bifurcation theory. This gives  Theorem \ref{thm:AL-exist'}(a) and (b). For Theorems \ref{thm:AL-exist'}(c) and \ref{thm:lattice-shape}, we carefully study the asymptotic behaviour of the energy functions for small values of the bifurcation parameter.  %Though main steps of this approach are fairly standard, there are many subtle points to be dealt with.

%%%%%%%%
\section{Gauge fixing and $W$ and $Z$ bosons} \label{sec:WZ-bosons}

In this section, we choose a particular gauge and pass from the fields (one-forms) $V$ and $X$ to more suitable gauge fields. %\footnote{Recall that by fields we mean one-forms on $\R^3$.}
\DETAILS{  A choice of a vacuum state $\Phi=\Phi_{\rm vac},$ a constant vector with $ | \Phi_{\rm vac}|^2=\vphi_0^2,$ breaks the $U(2)\approx SU(2)\times U(1)$ symmetry to a $U(1)$ one. Take for simplicity the gauge $\Phi_{\rm vac}=(0, \vphi)$, with $\vphi= \vphi_0$ real.}
%\newpage
We eliminate a part of the gauge freedom by assuming that the Higgs field $\Phi$ is of the form \begin{equation}\Phi=(0, \vphi),\end{equation} with $\vphi$ real (this can be done using only the $SU(2)$ part of the gauge group). %(see 't Hooft ??)
Then
\begin{align} \label{sym-break}
\tau_a \Phi\ne 0,\ a=0, 1, 2, 3,
\end{align}
where, recall,  $\tau_a,\ a=1, 2, 3,$ are the Pauli matrices generating the Lie algebra $su(2)$, and $\tau_0=\one$. However, there is one linear combination of $\tau_a$'s (unique up to a scalar multiple) which annihilates $\Phi$:
\begin{align} \label{sym-surv}
(\tau_3 + \tau_0) \Phi = 0. 
\end{align}
Thus, for the gauge $\Phi=(0, \vphi)$ %, with $\vphi$ real, 
the symmetries generated by $\tau_1, \tau_2, \tau_3 - \tau_0$ are broken and the $U(1)$ symmetry generated by $\tau_3 + \tau_0$ remains unbroken. The unbroken gauge symmetry is given by transformations \eqref{gauge-transf-gen} with %following symmetry operations:
%\smallskip \noindent
%\smallskip
%\noindent 
\begin{align}\label{gauge-choice} %\begin{cases}
% T^{gauge}_\chi : 
%\notag   ( V(x), X(x), \Phi(x))& \mapsto ( h(x)  V(x) h^{-1}(x)  - \frac{1}{g}  h(x)\nabla h^{-1}(x),\\ & X(x)+\frac{1}{g'} \n \chi(x),  e^{ i \chi (x)( \tau_3+\tau_0)} \Phi(x)), \\     \forall 
h_1(x) := & e^{-\frac i2 \g (x)\tau_3}\in SU(2),\ h_2(x) :=  e^{-\frac i2 \g (x) \tau_0}\in U(1), %, \nonumber\\ &\chi\in C^1(\R^d, \R).
%T^{gauge}_g :   (\Psi(x), A(x)) \mapsto %( g(x) \Psi(x) g^{-1}(x),  g(x)  A(x)g^{-1}(x)  -( \nabla g(x))g^{-1}(x)) ,( \rho_{g(x)}  \Psi(x),   \tilde\rho_{g(x)}   A(x)) ,  \quad \forall g(x) \in C^1(\R^d, G);
\end{align}
where $\g\in C^1(\R^3, \R)$.

Continuing in the gauge  $\Phi=(0, \vphi)$ and writing $V= -\frac i2 \tau_a V^a$\footnote{Note that the lower indices $i, j, k$, as in  $A=A_{i} dx^{i}$, refer to vectorial components and  run through $1, 2$, while the upper indices $a, b, c$, as in  $V= -\frac i2 \tau_a V^a$, refer to $U(2)$-algebra components.} and $X = -\frac i2 \tau_0 X^0$, where $X^0$ and $V^a, a=1, 2, 3,$ are real fields (since $V$ takes values in $su(2)$ and therefore $V^*=-V$), % and $X_i= i\frac{1}{2} X_i$, where%, with $\vphi$ real, and using the coordinate form  $C = C_{i}^a dx^i\tau_a$ of the fields,
we pass to the new fields corresponding to the broken and unbroken generators, $%\tau_0, \tau_1, \tau_2,
\tau_3 - \tau_0$ and $\tau_3 + \tau_0$, respectively:
%  To find the unbroken (massless) symmetry and broken massive particles, we chose a gauge and expand  $\Phi$ around  a vacuum state $\Phi_{\rm vac},\ | \Phi_{\rm vac}|=1,$ of the Higgs field, and then diagonalize the quadratic part. If we take the gauge so that $\Phi_1=0$ and $\im \Phi_2=0$, and take $\Phi_{\rm vac}= (0, 1)$, then we arrive at the new fields %where $\eta$ is real, then %. This way 
\begin{align} \label{Z,A-fields}
% W_i =\frac{1}{\sqrt 2}  (V^1_i - i V^2_i),\ 
Z  = V^3 \cos \theta - X^0 \sin \theta {\hbox{\quad and \quad}}
A = V^3  \sin \theta + X^0 \cos \theta,
\end{align}
where, recall, $\theta$ is Weinberg's angle, defined by  $\tan\theta=g'/g$. %(do these relations depend on the gauge?) 
Note that $Z$ and $A$ are real fields %(real one-forms). %, since since $X$ takes values in $u(1)$, $iX$ is a real one-form and therefore so is $A$. 
Moreover, it is convenient to pass from the remaining two components, $V^1, V^2$, of $V$ to a single complex field %(complex one-form):
\begin{align} \label{W-field}
W  =\frac{1}{\sqrt 2}  (V^1 - i V^2).
\end{align}
%Note that the field $Z$ is also a complex one-form. %, $A$ and $W$ are complex one-forms.% with $Z$ and $A$ self-adjoint, and $W$ neither self-adjoint nor anti-self-adjoint. (what does this mean?  $Z$, $A$ and $W$ are one forms)

%We discuss key properties of  the static WS equations.
The gauge invariance %under \eqref{gauge-transf-gen}
 of the original field equations with the unbroken gauge symmetry  given by transformations \eqref{gauge-transf-gen} with \eqref{gauge-choice} leads to the invariance under following gauge transformations:
\begin{align}\label{gauge-transf'} %\begin{cases}
\tilde T^{gauge}_\g :   (W, A, Z, \vphi) \mapsto (e^{i \g } W,  A -\frac{1}{e} d \g, Z, \vphi),
\end{align}
for $ \g\in C^1(\R^3, \R)$, where   $e^{i \g } W=\sum e^{i \g } W_i d x^i$ for $W=\sum W_i d x^i$, $e$ %:= g \sin \theta\ (= g' \cos \theta=\frac{g g'}{\sqrt{g^2+{g'}^2}})$
 is the electron charge. %,  with $Z$ and $\vphi$ being fixed. %(change the notation in order not to confuse with that of \eqref{gauge-per'})
Here, we replaced $\Phi:= (0,  \vphi)$ by $\vphi$.

 The WS energy in terms of $W, Z, A$ and $\vphi$ fields in 3D is given in \eqref{WS-energy}, Appendix \ref{sec:en-expl}. The WS equations in terms of $W, Z, A$ and $\vphi$ in 3D can be found by taking variational derivatives of this energy w.r.to different fields.

%\section{The vacuum solutions} 
In terms of $W, A, Z$ and $\vphi$ fields, the vacua \eqref{vac-hom} of the Weinberg-Salam model become (up to a gauge symmetry): %are the simplest, homogeneous solutions of the Weinberg-Salam equations %are the  constant magnetic field ones 
\begin{equation}\label{vac} %(W(x), A(x), Z(x), \vphi(x))
%Vac^b := 
 (0, A^b(x), 0, \vphi_0),
\end{equation}
where, recall, $A^b(x)$ is a magnetic potential for the constant magnetic field of strength $b$ in the $x^3$-direction, $d A^b(x)=b dx_1\wedge dx_2$, and $\vphi_0$ is a positive constant from \eqref{WS-energy0}. % ($\Phi_0:= (0,  \vphi_0)$). %Note that
We choose the gauge so that $A^b(x)$ is of the form
\begin{equation}\label{vac'}
A^b(x) = \frac b2 %x^\perp ,\  x^\perp := 
(-x_2 dx_1 + x_1 dx_2). %Jx, \qquad J(W_1 dx^1+W_2 dx^2)=-W_2 dx^1+W_1 dx^2.
%x^\perp , % + df(x)
\end{equation} 
% where $x^\perp %$ is given in coordinates by $x^\perp(x) 
% = -x_2 dx_1 + x_1 dx_2$. %(to make our life easier it is worthwhile at this point to identify differential forms with vectors, see \eqref{A-gauge-fix} and similar equations thereafter) % and $f$ is a function on $\R^2$.
%(Of course, the magnetic potential $A^b(x)$ is not homogeneous, but the magnetic field, $\curl A^b(x)=b$, is.)
%These are vacuum solutions for constant magnetic fields. 
%We choose the gauge  $A^b(x):= \frac b2 x^\perp,$ with $x^\perp:= (-x^2, x^1) $.

We will show that for a large magnetic field $b$, these homogeneous vacua become unstable  and new, inhomogeneous vacua emerge from them. This is a bifurcation problem from the branch of gauge-translationally invariant (homogeneous) solutions, \eqref{vac}. 

%Furthermore, we choose the gauge  $A^b(x):= \frac b2 x^\perp,$ with $x^\perp:= (-x^2, x^1) $.

%%%%%%%%
%\section{The Weinberg-Salam %energy and the 
%equations for the new fields} 

%As was mentioned above, due to the Higgs mechanism, the boson $W$ and $Z$ acquire a mass, while $A$, the electromagnetic field, stays massless. One can 
%We show in Appendix \ref{sec:YMH-coord} %{sec:en-expl} %that energy \eqref{EW-energy}, written in terms of these fields and coordinates, is given by (see also \cite{SY}): %\footnote{See also \cite{SY}}%Appendix \ref{sec:YMH-coord}}
%%%%%%%%
%\section{The Weinberg-Salam equations in 2D} 
%From now on
Since we consider the WS system with the fields independent of the third dimension $x^3$, i.e.  in $\R^2$, we can choose the gauge with $V_3 = X_3=0$ (and hence $W_3 = A_3=Z_3=0$). 

%In future computations, 
Also, we will work in a fixed coordinate system, $\{x^i\}_{i=1}^2$ and write the fields as $W = W_{i} dx^i,$ $ Z =  Z_{i} dx^i$  and $A =  A_{i} dx^i$. % $ Z = -i/2 Z_{i} dx^i$  and $A = -i/2 A_{i} dx^i$. 
For ease of comparing our arguments with earlier results, and given that we use the standard Euclidean metric in $\R^2$, we identify (complex) one-forms  $W, Z$ and $A$ with the  (complex) vector fields $(W_1, W_2), (Z_1, Z_2)$ and $(A_1, A_2)$. With this,
we  show in Appendix \ref{sec:WSeqs2D} that in this case,  WS energy functional \eqref{WS-energy0} can be written as %in terms of the fields  $W,A,Z$ and $\vphi$. 
%In this case, % the summation in \eqref{WS-energy} contains only two terms, $(ij)=(12)$ and $(ij)=(21)$, and, as shown in Appendix \ref{sec:WSeqs2D},
% the Weinberg-Salam energy \eqref{WS-energy0} simplifies to the following:
\begin{align} \label{WS-energy'}
E_{\Omega}(W, A, Z, \vphi) = &\int_{\Omega }  \big[ |\curl_{gV^3} W|^2 +  \frac{1}{2}  |\curl Z|^2 +\frac{1}{2} |\curl A|^2  \notag \\ 
&+\frac{1}{2} g^2 \vphi^2|W|^2+\frac12\kappa g^2 \vphi^2|Z|^2 \notag +\frac{g^2}{2} |\overline W\times W|^2 \notag  \\ 
&+ ig(\curl V^3)\overline W\times W +  |\n \vphi|^2  + \frac{1}{2} \lam (\vphi^2-\vphi_0^2)^2\big], 
\end{align}
%where $\curl_{U} W := d_{U} W:= \n_1 W_2 - \n_2 W_1$, $\n_i :=  \partial_i - U_i $ (for a $\mathfrak{u}(1)-$valued one-form $U$),  $\curl V^3:= d V^3 := \p_1 V^3_2 - \p_2 V^3_1$ and, recall, $V^3=  Z \cos \theta + A \sin \theta$.
where $\kappa := \frac{g^2}{2\cos^2\theta}$, $\curl_{U} W:= (\n_U)_1 W_2 - (\n_U)_2 W_1$, $(\n_U)_i :=  \partial_i -  i U_i $, $\p_i \equiv \p_{x^i}$ %($\n_i :=  \partial_i - i/2 U_i$?) 
(for a $\mathfrak{u}(1)-$valued vector-field $U$), $\xi\times \eta:=\xi_1 \eta_2- \xi_2 \eta_1$ and % the wedge product of two one-forms? see below?)
$\curl V^3 := \p_1 V^3_2 - \p_2 V^3_1$. % and recall,
It follows from \eqref{Z,A-fields} that $V^3=  Z \cos \theta + A \sin \theta$.

Expanding \eqref{WS-energy'} in $\vphi$ around $\vphi_0$, we see that the $W$, $Z$ and $\phi$ (Higgs) fields have the masses $M_W := \frac{1}{\sqrt{2}} g \vphi_0$, $M_Z := \frac{1}{\sqrt 2\cos \theta}g \vphi_0$ and $M_H = \sqrt{2}\lambda\varphi_0$, respectively.

Using the relation $\xi\times \eta = J\xi \cdot \eta$, where $\cdot$ denotes the Euclidean scalar product in $\R^2$ and $J$ is the symplectic matrix, 
\begin{align}\label{J}
J := \begin{pmatrix}
0 & - 1 \\
1 & 0
\end{pmatrix},
\end{align}  
we find the Euler-Lagrange equations for \eqref{WS-energy'}, which give the WS system \eqref{WS-eq1} - \eqref{WS-eq2} in 2D in terms of the fields $W$, $A$, $Z$ and $\vphi$
%We show in Appendix \ref{sec:WSeqs2D} that, written in terms of the fields $W$, $A$, $Z$ and $\vphi$, the Euler-Lagrange (Weinberg-Salam) equations in 2D are (see also \cite{MT, SY}):
\begin{align} \label{WS-eq1'}
&[\curl_{gV^3}^* \curl_{gV^3} + \frac{g^2}{2}\varphi^2 - ig(\curl V^3)J + g^2(\overline{W} \times W)J]W = 0, \\
\label{WS-eq2'} &\curl^* \curl A + 2e\Im[(\curl_{gV^3}W)J\overline{W} - \curl^*(\overline{W}_1 W_2)] = 0, \\
\label{WS-eq3'}  &[\curl^* \curl\! + \kappa\varphi^2] Z  %\nonumber \\&\qquad \qquad 
+ 2g\cos\theta \Im[(\curl_{gV^3}W)J\overline{W} - \curl^*(\overline{W}_1 W_2)] = 0, \\
\label{WS-eq4'} &[-\Delta + \lambda (\varphi^2-\vphi_0^2)+\frac{g^2}{2}|W|^2 + \frac12\kappa|Z|^2]\varphi  = 0,
\end{align}
where, recall, $\kappa = \frac{g^2}{2\cos^2\theta}$, $V^3=  Z \cos \theta + A \sin \theta$ and $\Delta$ is the standard Laplacian. %$J:= \left(\begin{array}{cc} 0 & -1 \\ 1 & 0 \end{array} \right)$
%(if $\overline W\wedge W$ is  the wedge product of two one-forms, then should be $* \overline W\wedge W$ in the first equation? see above)}  and $J$ is the Hodge star operator \begin{align} \label{Hodge-star}J(W_1 dx^1+W_2 dx^2)=-W_2 dx^1+W_1 dx^2.\end{align}
 (For a derivation of \eqref{WS-eq1'} - \eqref{WS-eq4'} from \eqref{WS-energy'}, see Appendix \ref{sec:WSeqs2D} and also \cite{MT, SY}.) Of course, \eqref{WS-eq1'} - \eqref{WS-eq4'} can also be derived directly from WS system \eqref{WS-eq1} - \eqref{WS-eq2}.

%\section{Lattice Equivariant States}

%\subsection{Periodicity}

%Let $\cL$ be a lattice in $\R^2$. We are interested in  \emph{gauge - periodic} solutions, i.e.  solutions satisfying the conditions
%bifurcation of  gauge - periodic solutions from the `trivial' branch $(W(x),A(x),Z(x),\vphi(x)) = (0,A^b(x),0,\vphi_0)$. These solutions are required to satisfy
%We specify the space on which this operator is defined (which will be the space of the emerging solutions) as the space of $L^2_{\rm loc}(\R^2)$ functions satisfying the following gauge - periodicity conditions with respect to a lattice $\cL$: %We specify the new solutions by requiring that they satisfy the following gauge - periodicity conditions with respect to a lattice $\cL$:
In terms of the $(W, A, Z, \vphi)$ fields, the lattice gauge - periodicity \eqref{gauge-per} is expressed as
\begin{equation}\label{gauge-per'}
(\tilde T^{gauge}_{\g_s})^{-1} T^{trans}_s (W, A, Z, \vphi) = (W, A, Z, \vphi), %\  \quad  T^{trans}_s (Z, \vphi) = (Z, \vphi), %\ \quad \forall s\in \cL,
\end{equation}
for all $s\in \cL,$ where $\gamma_s \in C^1(\R^2,\R)$ for all $s\in\cL$, $ \tilde T^{gauge}_\g$ given in \eqref{gauge-transf'} and $T^{trans}_s$ is the group of translations, $T^{trans}_s f(x)=f(x+s)$. We say that $(W,A, Z, \vphi)$ satisfying \eqref{gauge-per'}  is an $\cL$\emph{-equivariant} state. By evaluating the effect of translation by $s+t$ in two different ways, we see that  the family of functions $\g_s$ has the co-cycle property\footnote{A function $\g_s: \cL \times \R^2\ra G$ satisfying the co-cycle property \eqref{cocycle-cond} is called the automorphy exponent and $e^{i \g_s}$, the automorphy factor.}
\begin{align}\label{cocycle-cond}
\g_{s+t}(x) - \g_s(x+t) - \g_t(x) \in 2\pi\Z,\ \quad \forall s, t\in \cL.
\end{align}
%\begin{equation}\label{cocycle-cond'}   \g_{s+t}(x) = \g_s(x+t)  \g_t(x),\ \forall s, t\in \cL. \end{equation}
Since $T^{trans}_s$ is an Abelian group,  the co-cycle condition \eqref{cocycle-cond} implies that,  for any basis $\{j_1, j_2\}$ in $\cL$,  the quantity %(a characteristic class)
\begin{equation}\label{cgams}
c(\g_s) =\frac{1}{2\pi} (\g_{j_2}(x+j_1) + \g_{j_1}(x) - \g_{j_1}(x+j_2)  - \g_{j_2}(x)) %\in \Z,
\end{equation}
% (That the right-hand side of \eqref{gs-equiv}
is independent of $x$ and of the choice of the basis $\{j_1, j_2\}$, and is an integer.  This topological invariant is equal to the degree of the corresponding line bundle.

% -- i.e., the magnetic flux $b$ is quantized.%\subsection{Flux quantization}

% and let $\tau$ denote the shape paramater of $\cL$ (see Section \ref{ls}). 
Using Stokes' Theorem, one can show, for any $A$ satisfying  \eqref{gauge-per'} - \eqref{cgams}, that the magnetic flux through any fundamental domain $\Omega$ of the lattice $\cL$ is quantized: %per
\begin{equation}\label{flux-quant}
\frac{e}{2\pi }\int_{\Omega}d A = n,
\end{equation}
%where $\Omega$ is any fundamental domain of the lattice $\cL$ and
where $e$ is defined after \eqref{gauge-transf'} and $n=c(\g_s)\in\Z$ defined in \eqref{cgams}. The left-hand side of \eqref{flux-quant} is called the \emph{Chern number} of the line bundle corresponding to $\g_s$. (We note that $n$ is independent of the choice of $\Omega$.)

%In particular, 
%Substituting the magnetic potential $A = A^b$ of the vacuum state \eqref{vac'} into \eqref{flux-quant}, we find that %the vacuum state \eqref{vac} is $\cL$-equivariant if and only if 
%$b$ is given by the relation
%Indeed, write $W = (e^{i\theta_1(x)}|W_1(x)|, e^{i\theta_2(x)}|W_2(x)|)$. Then the gauge-periodicity condition \eqref{gauge-per'} implies that $\theta_i(x+s)=\theta_i(x)+\g_s(x)$ for all $s\in\cL$ and so $A(x) - \frac 1e \nabla\theta_i(x)$ is periodic. Thus by Green's Theorem $\int_{\Omega}\curl A = \int_{\partial\Omega}A = \int_{\partial\Omega}\nabla\theta_i=2\pi n$. 
%Our main results are summarized in the following: %, whose precise formulation will be given below (Theorem \ref{thm:result}).
%Due to the magnetic flux quantization (see Equation \eqref{flux-quant}), the average magnetic flux $b :=  \frac{1}{|\Omega|} \int_{\Omega} \Tr d_A A$ per fundamental domain $\Omega$ is quantized as $b :=  \frac{2\pi }{e|\Omega|}n$, see \eqref{b-rel}. % is the %entering the theorem, is the %quantum of 
The vacuum state \eqref{vac} is $\cL$-equivariant if and only if the magnetic field $b$ is given by the relation %\eqref{b-rel}, 
\begin{equation} \label{b-rel}
b = \frac{2\pi }{ e|\cL|}n,
\end{equation} 
where, by definition, $|\cL|= |\Omega|$ for any fundamental cell $\Omega$. In particular, $b$ is quantized. For such $b$, the vector field $\frac1e A^b$ satisfies \eqref{flux-quant}.

% for $A$'s satisfying \eqref{gauge-per}. 
Furthermore, due to the reflection symmetry of the problem, we may assume that $b \geq 0$.
Clearly, we have:
\begin{lemma}\label{eqs-equiv} Equations \eqref{WS-eq1} - \eqref{WS-eq2} for $\cL$\emph{-equivariant} fields \eqref{gauge-per} in the gauge $\Phi=(0, \vphi)$ are equivalent to Equations \eqref{WS-eq1'} - \eqref{WS-eq4'} for $\cL$\emph{-equivariant} fields \eqref{gauge-per'}, with the equivalence realized by the transformation \eqref{Z,A-fields} - \eqref{W-field}. \end{lemma}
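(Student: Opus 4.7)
The plan is to reduce the lemma to three observations: (i) the change of variables \eqref{Z,A-fields}--\eqref{W-field} is a smooth bijection between $(V,X)$ (with values in $\mathfrak{su}(2)\times\mathfrak{u}(1)$) and quadruples $(W,A,Z)$; (ii) under this change of variables, the WS energy \eqref{WS-energy0} in the gauge $\Phi=(0,\vphi)$ reduces to \eqref{WS-energy'}, whence the two systems of Euler--Lagrange equations are in bijective correspondence; (iii) the residual $U(1)$-gauge symmetry \eqref{gauge-transf-gen} with \eqref{gauge-choice} is intertwined with the gauge symmetry \eqref{gauge-transf'}, so the periodicity conditions \eqref{gauge-per} and \eqref{gauge-per'} match.

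For (i), I write $V=-\tfrac{i}{2}\tau_a V^a$ with real $V^a$ (real because $V\in\mathfrak{su}(2)$) and $X=-\tfrac{i}{2}\tau_0 X^0$ with real $X^0$. The map $(V^1,V^2)\mapsto(\Re W,\Im W)$ is (up to a scalar) a complex recombination, while $(V^3,X^0)\mapsto(Z,A)$ is a rotation by Weinberg's angle $\theta$; both are linear and invertible. For (ii), I substitute these expressions together with $\Phi=(0,\vphi)$ into \eqref{WS-energy0}. The term $\|\n_Q\Phi\|^2$ yields $|\n\vphi|^2$ together with the mass-like terms $\tfrac{1}{2}g^2\vphi^2|W|^2$ (from $\tau_1,\tau_2$ acting on $\Phi$) and $\tfrac{1}{2}\kappa g^2\vphi^2|Z|^2$ (from the $\tau_3-\tau_0$ combination acting on $\Phi$), with $\tau_3+\tau_0$ annihilating $\Phi$ so that no $|A|^2$ term appears. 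The curvature $F_Q=gF_V+g'F_X$ expands into an abelian piece $F_X=dX$ and $F_V=dV+\tfrac{g}{2}[V,V]$, and unwinding $[\tau_a,\tau_b]=2i\epsilon_{abc}\tau_c$ inside the inner product \eqref{u2-inner-prod} produces exactly the gauge-covariant curl $\curl_{gV^3}W$, the quartic term $\tfrac{g^2}{2}|\overline W\times W|^2$, and the cross term $ig(\curl V^3)\overline W\times W$. This is precisely the computation performed in Appendix \ref{sec:en-expl} (and, after imposing $x^3$-independence with $V_3=X_3=0$, in Appendix \ref{sec:WSeqs2D}), yielding \eqref{WS-energy'}. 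Since the change of variables is a diffeomorphism on field configuration space, critical points of the transformed functional are images of critical points of the original, giving the equivalence \eqref{WS-eq1}--\eqref{WS-eq2}$\;\Leftrightarrow\;$\eqref{WS-eq1'}--\eqref{WS-eq4'}.

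For (iii), I verify by direct computation that with $h_1=e^{-\tfrac{i}{2}\g\tau_3}$ and $h_2=e^{-\tfrac{i}{2}\g\tau_0}$, the transformation \eqref{gauge-transf-gen} acts on the components $V^a,X^0$ in such a way that, after passing through \eqref{Z,A-fields}--\eqref{W-field} and using $e=g\sin\theta=g'\cos\theta$, it reproduces exactly $\tilde T^{gauge}_\g$ of \eqref{gauge-transf'}. In particular, the $\tau_3+\tau_0$ combination governs the $W$ and $A$ transformations (with the coupling renormalized to $e$), while the $Z$ and $\vphi$ fields remain invariant, as claimed. The co-cycle condition \eqref{cocycle-cond} for $\g_s$ is preserved under this identification, so $\cL$-equivariance in the sense of \eqref{gauge-per} (restricted to the residual gauge \eqref{gauge-choice} preserving $\Phi=(0,\vphi)$) is equivalent to $\cL$-equivariance in the sense of \eqref{gauge-per'}.

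The main obstacle is the bookkeeping in (ii): the cross term $ig(\curl V^3)\overline W\times W$ in \eqref{WS-energy'} arises from the mixing of $dV^3$ with the non-abelian commutator $[V^1\tau_1+V^2\tau_2,V^1\tau_1+V^2\tau_2]\sim V^1V^2\tau_3$ under the $\mathfrak{u}(2)$-inner product \eqref{u2-inner-prod}, and tracking the signs and factors of $i$ through the Pauli algebra requires patience. Once this bookkeeping is completed in Appendix \ref{sec:en-expl}, the lemma follows at once, since every remaining assertion—the invertibility of the field redefinition, the matching of gauge symmetries, and the diffeomorphism-invariance of Euler--Lagrange equations—is formal.
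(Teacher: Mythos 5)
Your proposal is correct and takes essentially the same route as the paper, which states this lemma with ``Clearly'' and relegates all of its substance to Appendices \ref{sec:en-expl}--\ref{sec:WSeqs2D} (the reduction of \eqref{WS-energy0} to \eqref{WS-energy'} under $\Phi=(0,\vphi)$ and \eqref{Z,A-fields}--\eqref{W-field}, and the derivation of \eqref{WS-eq1'}--\eqref{WS-eq4'} as its variational equations); your items (i)--(iii) reconstruct exactly that computation plus the formal observations about invertibility of the field redefinition and the matching of the residual $U(1)$ action. The only step where ``formal'' is not quite enough is the $\Phi$-sector: \eqref{WS-eq1} has four real components while \eqref{WS-eq4'} has one, and the passage from a diffeomorphism of configuration space to the restriction to the gauge slice $\Phi=(0,\vphi)$ can in principle create spurious critical points; the three components of \eqref{WS-eq1} along $\tau_1\Phi$, $\tau_2\Phi$, $(\tau_3-\tau_0)\Phi$ are recovered as Noether identities from the gauge-field equation \eqref{WS-eq2}, by the same gauge-invariance argument the paper uses in Proposition \ref{prop:reconstr} --- a point the paper itself also elides, so I record it here only as a caution rather than a defect of your argument.
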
 

Finally, we use the invariance of \eqref{WS-eq1'} - \eqref{WS-eq4'} under the gauge transformation \eqref{gauge-transf'}  
to choose a convenient gauge for the fields $W(x)$ and $A(x)$. We say that the fields $(W,A, Z, \vphi)$ and $(W',A', Z', \vphi')$ are \emph{gauge-equivalent} if there is $\gamma\in C^1(\R^2, \R)$ such that 
\[(W', A', Z', \vphi') = \tilde T^{gauge}_{\gamma}(W, A, Z, \vphi).\] Clearly, if $(W, A, Z, \vphi)$ and $(W',A', Z', \vphi')$ are gauge-equivalent, then $(W, A, Z, \vphi)$ solves \eqref{WS-eq1'} - \eqref{WS-eq4'} if and only if $(W',A',Z',\vphi')$ solves \eqref{WS-eq1'} - \eqref{WS-eq4'}.
The following proposition was first used in  
\cite{Odeh} and proven in \cite{Tak} (an alternate proof 
is given in Appendix A of \cite{TS1}):

\begin{proposition}\label{prop:gauge-fix} Let $(W',A', Z', \vphi')$ be an $\cL$-equivariant state and let $b$ be given by  \eqref{b-rel}. %the average magnetic flux per fundamental domain. 
	Then there is a $\cL$-equivariant state $(W,A, Z, \vphi)$, gauge-equivalent to $(W',A', Z', \vphi')$, which satisfies \eqref{gauge-per'}, with $\chi_s(x)=\frac{eb}2 s\wedge x + k_s$, i.e. such that, $\forall s\in\cL$,
	%\begin{enumerate}[$i.$]\item $W(x+s)=e^{i(\frac{eb}2 s\wedge x + c_s)} W(x)$  and $A(x+s)=A(x)+\frac b2 s^{\perp}$ for all $s\in\cL$;\item  $\div A=0$. %$\int_{\Omega} (A-A^n) = 0$,\end{enumerate}
	\begin{align} \label{W-gauge-fix}
	&W(x+s)=e^{i(\frac{eb}2 s\wedge x + k_s)} W(x), \\ 
	\label{A-gauge-fix}&A(x+s)=A(x)+\frac b2 Js,\\ %\forall s\in\cL,\\ 
	\label{divA}&\div A=0, \\ 
	\label{Zvphi-gauge-fix}&Z(x+s)=Z(x), \quad \vphi(x+s)=\vphi(x). %, \ \forall s\in\cL. 
	\end{align}	Here $k_s$ satisfies the condition $k_{s+t}-k_s-k_t-\frac {eb}2 s\wedge t\in 2\pi\mathbb{Z}$, for all $s,t\in\cL$, the matrix $J$ is given in \eqref{J}.%, recall,  $s^\perp :=  -s_2 dx_1 + s_1 dx_2$.
\end{proposition}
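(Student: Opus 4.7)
The plan is to construct the gauge parameter $\gamma \in C^2(\R^2,\R)$ in two stages: one to fix the automorphy type of $A$ and $W$ to the desired form, and one to impose the Coulomb condition \eqref{divA}. Before starting, note that $Z$ and $\vphi$ are gauge-invariant under \eqref{gauge-transf'}, so the $\cL$-equivariance \eqref{gauge-per'} of the input already forces exact periodicity \eqref{Zvphi-gauge-fix}, independently of any subsequent choice of $\gamma$. The remaining problem is therefore purely about the $W$ and $A$ sectors.

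\emph{Stage 1 (target automorphy).} Let $\gamma'_s$ be the original automorphy exponents of $(W',A',Z',\vphi')$. Applying $\tilde T^{gauge}_{\gamma_1}$ changes them to $\gamma'_s(x) + \gamma_1(x+s) - \gamma_1(x) \bmod 2\pi\Z$, so matching to the desired $\chi_s(x) := \frac{eb}{2}\, s\wedge x + k_s$ reduces to the finite-difference equation
\begin{equation*}
\gamma_1(x+s) - \gamma_1(x) \equiv \chi_s(x) - \gamma'_s(x) \pmod{2\pi\Z},\qquad \forall s\in\cL.
\end{equation*}
Computing the co-cycle defect of the right-hand side and using the co-cycle property \eqref{cocycle-cond} of $\gamma'_s$ shows that solvability on constants reduces precisely to the relation $k_{s+t} - k_s - k_t - \frac{eb}{2}\, s\wedge t \in 2\pi\Z$ stated in the proposition. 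This is a realizable $\R/2\pi\Z$-valued cocycle in the $k_s$ exactly because flux quantization \eqref{b-rel} yields $eb\,(j_1 \wedge j_2) = \pm eb|\cL| \in 2\pi\Z$ on any basis of $\cL$. With $k_s$ so chosen, $\gamma_1$ is constructed by prescribing it on a fundamental domain and extending via the difference equation, with a mollification near the boundary giving the required $C^2$-regularity.

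\emph{Stage 2 (Coulomb gauge).} Set $A^{(1)} := A' - \frac{1}{e} d\gamma_1$ and $a^{(1)} := A^{(1)} - A^b$; since $A^{(1)}$ and $A^b$ now share the automorphy type coming from $\chi_s$ (verifiable by computing $-\frac{1}{e} d\chi_s$ and comparing with \eqref{vac'}), the residue $a^{(1)}$ is exactly $\cL$-periodic. A further gauge transformation by a periodic $\gamma_2$ preserves both \eqref{W-gauge-fix} and \eqref{A-gauge-fix} and produces $\div A = \div A^{(1)} - \frac{1}{e} \Delta \gamma_2$, so \eqref{divA} reduces to the torus Poisson equation $\Delta \gamma_2 = e\, \div A^{(1)}$ on $\R^2/\cL$. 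This is solvable, uniquely up to an additive constant, because the source has zero mean on $\Omega$: indeed $\int_\Omega \div A^{(1)} = \int_\Omega \div a^{(1)} = 0$ by the divergence theorem applied to the periodic field $a^{(1)}$, using $\div A^b = 0$ which is immediate from \eqref{vac'}. Setting $\gamma := \gamma_1 + \gamma_2$ then delivers all four conditions simultaneously, with the required co-cycle identity for $k_s$ established during Stage 1. The main technical obstacle is Stage 1 — realizing the target automorphy $\chi_s$ is a cohomological statement on the lattice, and it is precisely the flux quantization \eqref{b-rel} that makes the obstruction class vanish; by contrast, Stage 2 is a routine elliptic step on the torus.
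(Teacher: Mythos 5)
Your two-stage argument is correct in substance and is essentially the standard proof of this statement; note that the paper does not reprove Proposition \ref{prop:gauge-fix} in the text but defers to \cite{Odeh}, \cite{Tak} and Appendix A of \cite{TS1}, and your Stage 1 (matching the automorphy exponents, with flux quantization \eqref{b-rel} killing the cohomological obstruction) followed by Stage 2 (a zero-mean Poisson problem on the torus $\R^2/\cL$ for the Coulomb condition, preserved under periodic gauge transformations) is precisely the route taken there. Your preliminary observation that $Z$ and $\vphi$ are inert under \eqref{gauge-transf'}, so that \eqref{Zvphi-gauge-fix} holds automatically for any $\cL$-equivariant state, is also correct, as is the mean-zero computation $\int_\Omega \div a^{(1)} = 0$ via periodicity of $a^{(1)}$ and $\div A^b=0$.

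The one step you should tighten is the construction of a globally smooth $\gamma_1$ in Stage 1. As written, ``prescribing $\gamma_1$ on a fundamental domain and extending via the difference equation, with a mollification near the boundary'' does not quite work: mollifying $\gamma_1$ near $\partial\Omega$ generically destroys the exact identity $\gamma_1(x+s)-\gamma_1(x)\equiv\chi_s(x)-\gamma'_s(x)\pmod{2\pi\Z}$, and this identity must hold exactly (with a locally constant integer defect) for the new automorphy exponents to equal $\chi_s$. Two standard repairs: (i) average over the lattice with a smooth partition of unity $\sum_{t\in\cL}\rho(x-t)=1$, setting $\gamma_1(x)=\sum_{t\in\cL}\rho(x-t)\Gamma_t(x)$ where the $\Gamma_t$ are local branches consistent with the cocycle, and note that the averaging respects the difference equation because any two branches differ by an element of $2\pi\Z$; or (ii) bypass the pointwise construction by working with the gauge-invariant curvature: $\curl A'$ is exactly $\cL$-periodic, and by \eqref{flux-quant} and \eqref{b-rel} the field $\curl A'-b$ has zero mean, hence equals $\curl\beta$ for a periodic $\beta$, so that $A'-A^b-\beta$ is closed on $\R^2$ and therefore an exact form $-\tfrac1e d\gamma_1$; the constants $k_s$ and their cocycle identity then come out of evaluating $\gamma_1(x+s)-\gamma_1(x)$. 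Either repair is routine; the conceptual content of your proof is right.
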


Note that with the gauge \eqref{divA}, the homogeneous vacua \eqref{vac} satisfy \eqref{W-gauge-fix} - \eqref{Zvphi-gauge-fix}.

Our goal is to prove the instability of the vacuum state \eqref{vac} and the existence of $\cL-$equivariant (in the sense of \eqref{gauge-per'})  solutions to transformed WS system \eqref{WS-eq1'} - \eqref{WS-eq4'} having the properties described in Theorems \ref{thm:AL-exist'} and \ref{thm:lattice-shape}.

%%%%%%%%
%\DETAILS{
\section{Rescaling}\label{Sec:rescaling}

%We see from \eqref{flux-quant} that the size of the lattice $\cL$ depends on the magnetic field strength $b$.
In this section, we rescale transformed WS system \eqref{WS-eq1'} - \eqref{WS-eq4'} to keep the lattice size  fixed. Specifically, we define the rescaled fields $(w,a,z,\phi)$ to be
\begin{align} \label{rescaling}
(w(x), a(x), z(x),\phi(x))&:=  (rW(rx),rA(rx),rZ(rx),r\vphi(rx)), \\ \label{r} r&:= \sqrt{\frac{n}{eb}} = \sqrt{\frac{|\Omega|}{2\pi}}.
\end{align}
where in the second equality \eqref{r}, we used  \eqref{b-rel}. Clearly, $(W(x), A(x), Z(x),\vphi(x))$ is $\cL$-equivariant if and only if $(w(x),a(x), z(x),\phi(x))$ is $\cL'$-equivariant, where \[\cL':= \frac{1}{r}\cL.\] Now, the rescaled lattice $\cL'$ is independent of $b$ and the size of a fundamental domain, $\Omega'$, of $\cL'$ is fixed as $|\Omega'|=2\pi$.

Plugging the rescaled fields into \eqref{WS-eq1'} - \eqref{WS-eq4'} gives the rescaled Weinberg-Salem equations: %(did we pass from the one-forms to vector fields? what is the meaning of $\curl a, \curl z, J, \overline{w} \wedge w, \curl^*(\overline{w}_1 w_2)$?)
\begin{align}
&[\curl_{\nu}^* \curl_{\nu} + \frac{g^2}{2}\phi^2 - i (\curl \nu)J + g^2(\overline{w} \times w)J]w = 0, \label{WS-eq1-resc} \\
&\curl^* \curl a + 2e\Im[(\curl_{\nu}w)J\overline{w} - \curl^*(\overline{w}_1 w_2)] = 0, \label{WS-eq2-resc} \\
&[\curl^* \curl + \kappa\phi^2] z + 2g\cos\theta \Im[(\curl_{\nu}w)J\overline{w} %\nonumber \\&\qquad \qquad \qquad \qquad \qquad \qquad  \qquad \qquad
- \curl^*(\overline{w}_1 w_2)] = 0, \label{WS-eq3-resc} \\
&[-\Delta + \lambda (\phi^2- \xi^2)
+ \frac{g^2}{2}|w|^2 + \frac12 \kappa|z|^2 ]\phi = 0, \label{WS-eq4-resc}
\end{align}
where $\xi := r\vphi_0$
(with $r$ given in \eqref{r}), %$\kappa:=\frac{g^2}{2\cos^2\theta}$, 
$\nu :=  g(a\sin\theta + z\cos\theta)$ and, recall, $\curl_{q} w = \n_1 w_2 - \n_2 w_1$, $\n_i  :=   \partial_i - iq_i $, $\p_i \equiv \p_{x^i}$ (for a $\mathfrak{u}(1)-$valued vector-field $iq$) and, recall, $\overline{w} \times w:=\overline{w}_1 w_2- \overline{w}_2 w_1$. We define the rescaled energy by
%The rescaled energy is related to the energy \eqref{WS-energy'} by
\begin{equation} \label{WS-energy-resc-def}
\E_{\Omega'}(w, a, z, \phi;r)  :=   r^2  E_{\Omega} (W, A, Z, \vphi).
\end{equation}
with  $(W, A, Z,\vphi)$ related to  $(w,a, z,\phi)$ by \eqref{rescaling} and $E_{\Omega} (W, A, Z, \vphi)$ given in  \eqref{WS-energy'}. Explicitly, we have
\begin{align} \label{WS-energy-resc}
\E_{\Omega'}(w, a, z, \phi; r) &= \int_{\Omega'}  
\big(|\curl_{\nu} w|^2  +\frac{1}{2} |\curl a|^2 +  \frac{1}{2}  |\curl z|^2 \notag \\ 
&+\frac{1}{2} g^2 \phi^2|w|^2+\frac12 \kappa \phi^2|z|^2 \notag +\frac{g^2}{2} |\overline w\times w|^2 \notag  \\ 
&+ i (\curl \nu)\overline w\times w +  |\n \phi|^2  + \frac{1}{2} \lam (\phi^2-\xi^2)^2\big).
\end{align}

We note that after rescaling,  the average magnetic flux per fundamental domain  becomes $n/e$ and the vacuum solution \eqref{vac},
\begin{align}\label{gs-resc}
%(w(x),a(x),z(x),\phi(x)) 
m^{n, r} :=  (0,\frac 1e a^n, 0, \xi),
\end{align}
where $a^n(x)\equiv  A^n(x) =\frac{n}{2} J x$, %x^\perp,\  x^\perp:=(-x_2 dx_1 + x_1 dx_2)$, %J x$, 
. Furthermore, \eqref{gauge-per'} and Proposition \ref{prop:gauge-fix} imply that $(w, a, z, \phi)$ satisfy
\begin{align}
&w(x+s)=e^{i(\frac{n}{2} s\times x + c_s)} w(x) \text{ for all } s\in\cL', \label{resc-gauge-1} \\
&a(x+s)=a(x)+\frac{n}{2e} J s \text{ for all } s\in\cL', \label{resc-gauge-2} \\
\label{resc-gauge-3}	&\div a=0,\\
\label{resc-gauge-4}&z(x+s)=z(x), \quad \phi(x+s)=\phi(x)	 \text{ for all } s\in\cL',
\end{align}
where $c_s$ satisfies the condition $c_{s+t}-c_s-c_t-\frac n2 s\times t\in 2\pi\mathbb{Z}$, for all $s,t\in\cL'$. %, and, recall,  $s^\perp :=  -s_2 dx_1 + s_1 dx_2$.
%}

 Finally, the Sobolev spaces here, denoted again by  $\mathcal{H}_{\mathcal{L}'}^{s}$, can be obtained by rescaling  the Sobolev spaces defined above or defined directly, again as above, see \eqref{Sob-norm} and the text around it. Similarly to \eqref{en-finite}, by a  Sobolev embedding theorem, the rescaled energy is finite,
 \begin{align}\label{en'-finite} E_{\Omega'} (w, a, z, \phi; r)<\infty\ \text{ on }\ \mathcal{H}_{\mathcal{L}'}^1  \end{align} 
  and is independent of a choice of $\Omega'$.

%In what follows we work only with rescaled objects and we omit the index `resc'.
%From now on, we will deal only with the rescaled lattice and omit the superindex resc and use the notation $\cL$ for this lattice only.}

%%%%%%%%%%%%%%%%%%%%%%%%%%
\section{The linearized problem}\label{sec:linear} 
In this section we prove Theorem \ref{thm:normal-instab}, describing the stability properties %energetic stability/instability
 of the vacuum \eqref{vac}. Equivalently, we will investigate the energetic stability of the rescaled vacuum  solution \eqref{gs-resc} of  the rescaled WS equations \eqref{WS-eq1-resc} - \eqref{WS-eq4-resc}.

\DETAILS{
	In what follows, we choose the gauge for the magnetic potential so that $A^b(x):= \frac b2 (-x^2 dx^1 - x^1 dx^2)$. Substituting $A = A^b$ into \eqref{flux-quant}, we see that
	\begin{equation} \label{flux'}
	b = \frac{2\pi e}{|\Omega|}n.
	\end{equation}
	Then the vacuum solution \eqref{vac} satisfies \eqref{gauge-per'} for $\g_s(x)$ given by
	\begin{equation}
	\g_s(x) = \frac b2 s\times x + c_s
	\end{equation}
	with $c(\g_s)=n$, where the map $s\in\cL \to c_s\in\R$ satisfies the condition
	\begin{equation}
	c_{s+t} - c_s - c_t - \frac b2 s\times t \in 2\pi\Z,\ \quad \forall s, t\in \cL.
	\end{equation}
}

 %Recall the notation  $ u\equiv (w,\alpha,z,\psi)$ (see \eqref{F-def-3})
  Let $ m:= (w, a, z,\phi)$ and denote by $G(b, m) \equiv  G(m)$ the map $G: \mathcal{H}_{\mathcal{L}'}^2 \to \mathbb{C}^7$ given by the left-hand side of \eqref{WS-eq1-resc} - \eqref{WS-eq4-resc}, written explicitly as
\begin{align}\label{G-expl}
& G(b, m) \equiv   G(m)  = %\oplus_{j=1}^4 
 (G_1(m), \dots, G_4(m)),\\
& G_1(m) :=  [\curl_{\nu}^* \curl_{\nu} + \frac{g^2}{2}\phi^2 - i (\curl \nu)J + g^2(\overline{w} \times w)J]w, \label{G1-expl} \\
& G_2(m) :=  \curl^* \curl a + 2e\Im[(\curl_{\nu}w)J\overline{w} - \curl^*(\overline{w}_1 w_2)], \label{G2-expl} \\
& G_3(m) :=  [\curl^* \curl + \kappa\phi^2] z + 
2g\cos\theta \Im[(\curl_{\nu}w)J\overline{w}  %\nonumber \\ &\qquad \qquad \qquad \qquad \qquad \qquad \qquad \qquad  \qquad 
 - \curl^*(\overline{w}_1 w_2)], \label{G3-expl}  \\  
& G_4(m) :=   [-\Delta  + \lambda (\phi^2-\xi^2)  + \frac{g^2}{2}|w|^2 + \frac{1}{2}\kappa |z|^2 ]\phi, \label{G4-expl}
\end{align}
where, recall,  $J$ is the symplectic matrix given in \eqref{J}, %Hodge star operator given in \eqref{Hodge-star} (we might be dealing with vector fields rather than one-forms, now)  and
 $\xi := r\vphi_0$ (with $r$ given in \eqref{r}), $\nu := g(a\sin\theta + z\cos\theta)$, $\Delta$ is the standard Laplacian and the parameter $b$ enters through periodicity conditions \eqref{resc-gauge-1} - \eqref{resc-gauge-4}. Now, the WS system can be written as 
 \begin{align}\label{m-eq}G(m)=0.\end{align}
 
%We study the G\^ateau derivative (linearization) $L_{n, \mu}:= \del G(m^{n, r})$ of $G(m)$ at the rescaled vacuum $m^{n, r}$. % computed  in \eqref{Hw} - \eqref{Hvphi}.
 
Recall the definition of stability given above Eq. \eqref{En-per}. To apply it to the rescaled WS equations \eqref{WS-eq1-resc} - \eqref{WS-eq4-resc}, we observe that the map $G$  is the $L^2$-gradient, $\grad_{L^2}\E_{\Omega'}$,  of the energy $\E_{\Omega'}$, see \eqref{WS-energy-resc}, considered as a functional of $u=(w,  a, z, \phi)$.  
Namely, $\lan G(m), \xi\ran_{L^2}=\del\E_{\Omega'}(m)  \xi$, where $\del\E_{\Omega'}(m)$ is the G\^ateau  derivative
\begin{align}\label{Gateau}
	\del\E_{\Omega'}(u)\xi \equiv \frac{d}{d\tau} \E_{\Omega'}(u + \tau\xi)|_{\tau=0},
\end{align}
%where $F: X\to Y$, $X$, $Y$ are Banach spaces and $u, \psi\in X$. With respect to this definition of the derivative,
%where  $\del\E_{\Omega'}$ is the G\^ateau derivative of $\E_{\Omega'}$, %, the G\^ateau derivative %(denoted by $\del$), 
%$\E_{\Omega'}''(m)=\del G(m)$, %\[L_{n, \mu}:= \del G(m^{n, r})\]
% of $G(v)$, % at the rescaled vacuum $m^{n, r}$, 
of $E_{\Omega'}$ at $m$, defined on the space of variations $\cY$ %(see Eq. \ref{Ydef} below).
 tangent to the space  %of $\cL$-equivariant, locally square integrable  functions - i.e.  
of $L^2_{\rm loc}$ functions of the form $(w,  a, z, \phi)$ %with and $(w, a, z, \phi)$ 
  satisfying the gauge - periodicity conditions \eqref{resc-gauge-1} - \eqref{resc-gauge-4}:  
\begin{align}\label{Ydef}
\cY:=  L_n^2 \times L^2_0 \times L^2_0 \times L^2.
\end{align}
Here $L_n^2, L^2_0$ and $L^2$ are given by
\begin{align}
\label{L2n-space}& L_n^2  :=   \{w\in L^2_{loc}(\R^2,\C^2) : w(x+s)=e^{i(\frac{n}{2} s\times x + c_s)} w(x)\ \forall s\in\cL'\}, \\
\label{L20-space}& L_0^2  :=   \{\alpha\in L^2_{loc}(\R^2,\R^2) : \alpha(x+s)=\alpha(x)\ \forall s\in\cL',\; \div\alpha = 0\}, \\
\label{L2-space}& L^2  :=   \{\psi\in L^2_{loc}(\R^2,\R) : \psi(x+s)=\psi(x)\ \forall s\in\cL'\}
\end{align}
%for $n=\pm 1$, 
(see \eqref{resc-gauge-1} - \eqref{resc-gauge-3}).

 Since $G(m)=\grad_{L^2}\E_{\Omega'}(m)$, the $L^2$-Hessian for $\E_{\Omega'}$ and $m$ is the  formally symmetric operator 
\[\E_{\Omega'}''(m):=\del\grad_{L^2}\E_{\Omega'}(m)=\del G(m),\] 
%Note that the $L^2$-Hessian $\E_{\Omega'}''(m)=\del G(m)$
 %It is formally symmetric on  $\cY$.
 
%According to the definition, the stability of the vacuum solution \eqref{gs-resc} ($m^{n, r} :=  (0,\frac1e a^n,0,\xi)$, with $\xi= \sqrt{2\mu}/g$) is determined by the spectrum of  
Denote the $L^2$-Hessian at the vacuum solution $m^{n, r}$ (see \eqref{gs-resc}) % ($m^{n, r} :=  (0,\frac1e a^n,0,\xi)$, with $\xi= \sqrt{2\mu}/g$) %$m^{n, r}$,  %$\del G(m)$ %linearization of $G(m)$
  by %i.e. the G\^ateau derivative (denoted by $\del$) 
\[L_{n, \mu}:= \del G(m^{n, r}).\] %of $G(v)$ at the rescaled vacuum $m^{n, r}$, 
%As was mentioned above, it is defined on the space $\cY$.
 As seen from  its explicit form given below, the operator $L_{n, \mu}$, acting on the space $\cY$,  is self-adjoint and therefore its spectrum is real. 

 %Recall that we say that the vacuum solution \eqref{vac} ($V^b:= (0, A^b(x), 0, \vphi_0)$)
Thus, applied to the rescaled WS equations \eqref{WS-eq1-resc} - \eqref{WS-eq4-resc}, the definition of stability %(given above Eq. \eqref{En-per}) states that
can be rephrased as: 

the vacuum solution $m^{n, r}$ % :=  (0,\frac1e a^n,0,\xi)$
 is {\it energetically stable} (respectively, {\it unstable}) if and only if $\inf\spec(L_{n, \mu})\ge 0$ (respectively, $\inf\spec(L_{n, \mu})< 0$).

\DETAILS{ the spectrum of  the $L^2$-Hessian $\del G(m)$ %linearization of $G(m)$
  at $m^{n, r}$, %i.e. the G\^ateau derivative (denoted by $\del$) 
\[L_{n, \mu}:= \del G(m^{n, r}),\] %of $G(v)$ at the rescaled vacuum $m^{n, r}$, 
defined on the space $\cY$ (see Eq. \ref{Ydef} below),  %the rescaled vacuum solution $v^{n, r} :=  (0,\frac 1e a^n,0,\phi_r)$ of \eqref{gs-resc} %the linearized and complexified map (or the Hessian) for the Weinberg-Salam equations at $V^b$
  is %in $\{\Re z\ge 0\}$ (respectively, overlaps $\{\Re z< 0\}$). 
  non-negative (respectively, has a negative part).
 Here, we used that since $L_{n, \mu}$ is %the $L^2$-Hessian, it is formally symmetric on $\cY$.
  Using its explicit form given below, one can easily show that it is self-adjoint. Hence its spectrum is real. % and therefore the criterion of stability can be further specified as  the vacuum solution \eqref{gs-resc}  is linearly stable (respectively unstable) if and only if the spectrum of the linearization of $L_{n, \mu}$  is non-negative (respectively, has a negative infimum).  
 }

%The instability of the vacua can be seen in the linearized operator for the Weinberg-Salam equations \eqref{WS-eq1'} - \eqref{WS-eq4'} (or the Hessian for the energy \eqref{WS-energy}) at $U^b:= (0, A^b(x), 0, \vphi_0)$, which we will show has a negative eigenvalue for large magnetic fields, b. 

% at the rescaled vacuum solution $m^{n, r} :=   (0, \frac 1e a^n, 0, \xi)$ of \eqref{gs-resc}.
%and define  $L_n:= d F(u^n)$,  %It is  the Hessian for the energy \eqref{WS-energy-resc} at $u^n$, where $u^n :=  (0,a^n,0,\xi(b))$ is the rescaled vacuum solution \eqref{gs-resc}.

We consider the operator $L_{n, \mu}$ on the space $\cY$, with the domain %. Explicitly,%$(z,\phi)$ periodic, with respect to a lattice $\cL$:
\begin{equation}\label{space}
\cX :=  \cH_n^2 \times \cH^2_0 \times \cH^2_0 \times \cH^2,
\end{equation}
where $\cH_n^s$, $\cH_0^s$ and $\cH^s$ are the respective Sobolev spaces for the $L^2$-spaces \eqref{L2n-space}-\eqref{L2-space}, % $L_n^2, L^2_0$ and $L^2$, %We specify the new solutions by requiring that they satisfy the following gauge - periodicity conditions with respect to a lattice $\cL$:
 with inner products given (for $s\in\Z_{\geq 0}$) by
\begin{align}
\label{Hns-ip}&\langle w, w' \rangle_{\cH_n^s}  :=  \frac{1}{|\Omega'|} \sum\limits_{i=1}^2\sum\limits_{|\gamma|\leq s}\int_{\Omega'} \overline{(\nabla_{a^n})^{\gamma} w_i} (\nabla_{a^n})^{\gamma}  w'_i, \\
&\langle a,a' \rangle_{\cH^s_0}  :=   \frac{1}{|\Omega'|} \sum\limits_{i=1}^2\sum\limits_{|\gamma|\leq s}\int_{\Omega'} \partial^{\gamma} a_i \partial^{\gamma} a'_i, \\ 
&\langle \psi,\psi' \rangle_{\cH^s}  :=  \frac{1}{|\Omega'|} \sum\limits_{|\gamma|\leq s}\int_{\Omega'} \partial^{\gamma} \psi \partial^{\gamma} \psi',
\end{align}
%(In what follows, we will supress $\overline{w}\in \overrightarrow{\cH}_{-b}^2$, simply writing $(w,\alpha,z,\psi)$.) 
 where $w^\#=(w_1^\#, w_2^\#), a^\#=(a_1^\#, a_2^\#)$, %$w^\#=w_1^\#dx^1+ w_2^\#dx^2, a^\#=a_1^\#dx^1+ a_2^\#dx^2$,
  $\Omega'$ is an arbitrary fundamental domain of the lattice $\cL'$ and $\gamma$ is a multi-index. The $\cL'$-equivariance of the above functions implies that these inner products do not depend on the choice of fundamental domain $\Omega'$.

%The operator $L_{n, \mu}$ is the $L^2$-Hessian for the energy \eqref{WS-energy-resc}, considered as a functional of $w,  a, z, \phi$. %or the G\^ateaux derivative (linearization) of the left-hand side of \eqref{WS-eq1-resc} - \eqref{WS-eq4-resc}, with $\xi= \sqrt{2\mu}/g$, %, augmented by adding the complex conjugate of \eqref{WS-eq1-resc}, 
%Since the operator $L_{n, \mu}$ is the Hessian, it
%Hence $L_{n, \mu}$  is formally symmetric on $\cY$. Using its explicit form given below, one can easily show that it is self-adjoint. Hence its spectrum is real and therefore the criterion of stability can be further specified as  the vacuum solution \eqref{gs-resc}  is linearly stable (respectively unstable) if and only if the spectrum of the linearization of $L_{n, \mu}$  is non-negative (respectively, has a negative infimum).  

%  Here and 
We compute  the linear operator  $L_{n, \mu}$ explicitly. In what follows we use the notation $\oplus_{j} A_j$ for diagonal operator-matrices with the operators $A_j$ on the diagonal. %Furthermore, we denote %the total G\^ateaux derivative by $\del$, and
% the partial (real) G\^ateaux derivatives with respect to $\#$ by $\del_\#$.
%Our goal is to analyze the linearization, i.e. the G\^ateaux derivative, $d G(v^{n, r})$, of $G(v)$ at the rescaled vacuum solution $v^{n, r} :=  (0,\frac 1e a^n,0,\phi_r)$ of \eqref{gs-resc}. 
 %Consider the G\^ateau derivative (linearization) $L_{n, \mu}:= \del G(m^{n, r})$ of $G(v)$ at the rescaled vacuum $m^{n, r}$. We %the. It %The G\^ateaux derivative $d G(v^{n, r})$ %  is a linear map given by
%We compute  the G\^ateau derivative  $L_{n, \mu}= \del G(m^{n, r})$ explicitly, % as $\del G(v^{n, r})= \oplus_{j=1}^4  H_j,$ where, 
% while  
Passing %, for the future references, 
  from  the parameter $\xi= r\varphi_0$, or $r$, to the parameter  $ \mu :=  g^2 \xi^2/2$ and  using that $\nu\big|_{a=a^n/e, z =0}=\frac 1e a^n g\sin\theta=  a^n $, we find % ($e:= g \sin \theta$): %to the new parameter $\mu$ (defined below), the operators $ H_j$ are given by %below. Passing, for the future references, from  the parameter $\phi_r$, or $r$, to the new parameter $\mu:=  g^2 \phi_r^2/2$, we find % (change from $\phi_r$ to $\mu:=  g^2 \phi_r/2$)  (should it be $\mu:=  g^2 \phi_r^2/2$?) %The latter is given by %We write out the linearization $L_n:= d F(u^n)$: %for the map $F(u)$ given by the l.h.s. of \eqref{WS-eq1} - \eqref{WS-eq4} %($=$ the Hessian for the energy \eqref{WS-energy}) 
%at $(W(x), A(x), Z(x), \vphi(x))=(0, A^b(x), 0, \vphi_0)=:U^b$:
 %\begin{align}\label{Ln} & L_{n, \mu} =  \oplus_{j=1}^4 H_j,\end{align}where the operators $H_i$ on the left-hand side are defined in \eqref{Hw} - \eqref{Hvphi}.
 \begin{align} \label{Ln}& L_{n, \mu} =  \oplus_{j=1}^4 H_j,\\ 
 \label{Hw}& H_1(\mu) :=   %H_1(\mu)' \oplus \overline{H}_1',\  H_1(\mu)':= 
\curl_{ a^n}^* \curl_{ a^n} + \mu %\frac{g^2}{2}\phi_r^2
 - n iJ, \\ %\ H_2(\mu):=  \overline{H}_1, \\
%&H_2(\mu) :=  \overline{H}_1,  \\
\label{Ha} & H_2(\mu) :=  \curl^* \curl,   \\
\label{Hz} & H_3(\mu) :=   \curl^* \curl + \frac{\mu}{\cos^2\theta},\\ %\frac{g^2}{2\cos^2\theta}\phi_r^2, \\
\label{Hvphi}& H_4(\mu) :=   -\Delta + \frac{4\lambda \mu}{g^2}, %\\ %2\lambda \phi_r^2, 
%\label{mu}& \mu :=  g^2 \xi^2/2,
\end{align}
where, recall, %we used $\nu\big|_{a=a^n/e, z =0}=a^n$ and %where, recall, $\xi = r\varphi_0$ and 
 $\curl_{q} w = (\n_q)_1 w_2 - (\n_q)_2 w_1$, $(\n_q)_i :=  \partial_i - iq_i$,  $\p_i \equiv \p_{x^i}$.   (Note that the matrix  $ iJ$ %operator $ iJ$ acting on $L^2$-complex one-forms $w$
  is self-adjoint.)

% We are interested in bifurcation of  gauge - periodic solutions from this 'trivial' branch. These solutions are required to satisfy We specify the space on which the operator $L_b:= d F(U^b)$ is defined (which will be the space of the emerging solutions) as the space of $\cL$-equivariant functions in $L^2_{\rm loc}(\R^2)$, i.e. functions satisfying the gauge - periodicity condition \eqref{gauge-per'} with respect to the lattice $\cL$. More specifically, we consider \eqref{WS-lin-op} on the space 
The gauge invariance of Eq. \eqref{m-eq} and the partial symmetry breaking  of vacuum solution \eqref{gs-resc}  %$m^{n, r} :=  (0,\frac1e a^n,0,\xi)$,
 imply that $L_{n,\mu=n}$ has the gauge zero mode: %Note that
\begin{align}\label{WS-lin-op}
L_{n,\mu=n} \G_f= 0,\ \G_f :=  (0,  \n f, 0, 0).%\ \forall f\in \cH^3_{loc}(\Omega').
\end{align}

% Given that we use the standard Euclidean metric in $\R^2$, in what follows, we often identify one-forms with vector-fields.

For a null vector $\G_f$ defined in \eqref{WS-lin-op} to be in $\cX$, $f$ must satisfy $\div(\n f) = -\Delta f = 0$. This implies that $f$ is a linear function, $f(x) = c\cdot x + d$ for some $c\in\R^2$ and $d\in\R$, and so 
\begin{align}\label{Gf0}
 \G_f\in \cX \Longrightarrow  \G_f= (0, c,0,0).
\end{align}
%(where we make the identification $c\equiv c_1 dx_1 + c_2 dx_2$.)

 In this section we shall prove the following result implying Theorem \ref{thm:normal-instab}:
 \begin{theorem}\label{thm:WS-lin-op-lowestEV} The operator $L_{n, \mu}$ %restricted to $N_0^\perp$ 
 on the space $\cX$ has purely discrete spectrum. For $\mu \neq n$, $L_{\mu,n}$ has the multiplicity $2$ eigenvalue $0$ with the eigenfuctions $(0, e_i,0,0)$, $i=1, 2, e_1=(1, 0), e_2=(0, 1)$ (see \eqref{Gf0}).
 
 Furthermore, the smallest non-zero eigenvalue given by $\mu - n$, %$\mu:=  g^2 \phi_r^2/2$, 
 having multiplicity $n$. %:= eb|\Om|/2\pi $. 
 For $\mu=n$, the eigenvalue $0$ has the multiplicity $n+2$.\end{theorem}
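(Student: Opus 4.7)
The plan is to exploit the block-diagonal structure $L_{n,\mu} = H_1 \oplus H_2 \oplus H_3 \oplus H_4$, which splits the spectral problem into four separate ones on the torus $\R^2/\cL'$. The three ``scalar'' blocks are disposed of by Fourier analysis: $H_2 = \curl^*\curl$ on the divergence-free periodic vector fields $\cH^2_0$ coincides with $-\Delta$ on divergence-free fields, so its kernel is precisely the two-dimensional space spanned by the constants $(1,0), (0,1)$---yielding the null vectors $\G_f = (0, e_i, 0, 0)$. The shifted operators $H_3 = H_2 + \mu/\cos^2\theta$ and $H_4 = -\Delta + 4\lambda\mu/g^2$ have strictly positive spectra for $\mu > 0$ and thus do not affect the bottom of the spectrum near the threshold.

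The main obstacle is the analysis of $H_1 = \curl_{a^n}^*\curl_{a^n} + \mu - niJ$ on $\cH^2_n$, which involves the magnetic covariant derivative with non-trivial flux $n$. The strategy is to simultaneously diagonalize the algebraic $iJ$-piece and the magnetic Laplacian. First, decompose $w = w^+ \alpha + w^- \beta$ in the basis $\alpha = (1,i)/\sqrt 2$, $\beta = (1,-i)/\sqrt 2$, which are the eigenvectors of $iJ$ with eigenvalues $\pm 1$. Second, introduce the magnetic ladder operators
\begin{equation*}
P := (\n_{a^n})_1 + i(\n_{a^n})_2, \qquad P^* = -(\n_{a^n})_1 + i(\n_{a^n})_2,
\end{equation*}
satisfying $[P, P^*] = 2n$, $P^*P = -\Delta_{a^n} - n$, and $PP^* = -\Delta_{a^n} + n$. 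By the standard Landau-level theory on the degree-$n$ line bundle over the torus, $-\Delta_{a^n}$ has pure point spectrum $\{(2k+1)n : k \ge 0\}$ with each eigenspace $LL_k$ of dimension $n$ (by flux quantization), and $P, P^*$ realize the lowering and raising between adjacent Landau levels as isomorphisms. A short computation gives $\curl_{a^n} w = \tfrac{i}{\sqrt 2}(P w^+ + P^* w^-)$, whence
\begin{equation*}
H_1 = \begin{pmatrix}\tfrac 12 P^*P + \mu - n & \tfrac 12 (P^*)^2 \\ \tfrac 12 P^2 & \tfrac 12 PP^* + \mu + n\end{pmatrix}.
\end{equation*}

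Since $P^2$ and $(P^*)^2$ shift the Landau-level index by $\mp 2$, the natural invariant subspaces of $H_1$ are $LL^+_0$, $LL^+_1$, and $LL_k^+ \oplus LL^-_{k-2}$ for $k \ge 2$. On the first two, $H_1$ acts as the scalars $\mu - n$ and $\mu$, respectively. On $LL_k^+ \oplus LL_{k-2}^-$, using the identity $(P^*)^2 P^2|_{LL_k} = 4k(k-1)n^2$ to choose paired orthonormal bases, $H_1$ reduces to $n$ copies of the $2 \times 2$ matrix
\begin{equation*}
M_k = \begin{pmatrix}(k-1)n + \mu & n\sqrt{k(k-1)} \\ n\sqrt{k(k-1)} & kn + \mu\end{pmatrix},
\end{equation*}
whose characteristic polynomial factors as $(\lambda - \mu)(\lambda - ((2k-1)n + \mu))$. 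Hence the bottom of the spectrum of $H_1$ is $\mu - n$ with multiplicity exactly $n$, coming from $LL^+_0$, and it crosses zero precisely at $\mu = n$. Assembling with $H_2, H_3, H_4$ yields the theorem: for $\mu \neq n$ the null space of $L_{n,\mu}$ is the two-dimensional one of $H_2$ and $\mu - n$ is the separated ``gap'' eigenvalue of multiplicity $n$; at $\mu = n$ the $n$-dimensional $LL^+_0$ kernel of $H_1$ joins the $H_2$ kernel, giving total multiplicity $n + 2$.
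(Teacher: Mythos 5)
Your proof is correct; the treatment of $H_2,H_3,H_4$ and the final assembly coincide with the paper's (Proposition \ref{non-neg-prop}), but your analysis of the magnetic block $H_1(\mu)$ takes a genuinely different route. The paper first performs the magnetic Hodge decomposition $\cH^2_n=\cY\oplus\cZ$ into $\div_{a^n}$-free and gradient parts (using the invertibility of $\Delta_{a^n}$), notes that $H_1(\mu)$ equals $\mu$ on $\cZ$ and $-\Delta_{a^n}-2niJ+\mu$ on the invariant subspace $\cY$, and only then diagonalizes $iJ$ to reach two copies of the scalar Landau Hamiltonian. You reverse the order: diagonalize $iJ$ first, then decompose into Landau levels, which exhibits $H_1(\mu)$ as the scalars $\mu-n$ on $LL_0^+$ and $\mu$ on $LL_1^+$ together with $2\times2$ blocks coupling $LL_k^+$ to $LL_{k-2}^-$. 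I checked your block computation, including the matrix form of $H_1$, the identity $(P^*)^2P^2|_{LL_k}=4k(k-1)n^2$ and the factorization of the characteristic polynomial of $M_k$; it is correct. What your route buys: it avoids the Hodge decomposition entirely (the gradient subspace reappears a posteriori as the span of all the $\mu$-eigenvectors), and it produces the exact spectrum $\{\mu-n,\mu,\mu+3n,\mu+5n,\dots\}$ on all of $\cH^2_n$. It is in fact slightly more accurate than the paper's intermediate statements: since $[\bar\p_{a^n},\bar\p_{a^n}^{\,*}]=2n$ (not $n$ as in \eqref{commut-rel}), the Landau levels are $(2k+1)n$ rather than $(m+1)n$, and the list \eqref{spec-h_w} contains points such as $\mu+n$ that are not actually eigenvalues of $H_1(\mu)$; none of this affects the two lowest eigenvalues $\mu-n$ (multiplicity exactly $n$, eigenspace $LL_0^+=\{(\beta,i\beta):\bar\p_{a^n}\beta=0\}$, matching \eqref{chi-def}) and $\mu$, which is all the theorem uses. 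What the paper's route buys is the explicit identification \eqref{H1-grad} of the $\mu$-eigenspace as $\{\nabla_{a^n}f\}$, which is needed later in the bifurcation analysis; if you want that as well, observe that your $\mu$-eigenvectors of $M_k$ are precisely the images of $LL_{k-1}$ under $f\mapsto\nabla_{a^n}f$.
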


%In the remainder of this paper, we shall suppress the $H_{ w'}$ component of $L_n$ and the $\overrightarrow{\cH}_{-n}^2$ component of $\cX$. Since $H_{ w'} = \overline{H}_w$, this will not affect our analysis below. (what does this mean? the derivations are sufficiently complicated without introducing artificial confusion in the notation, let us keep the consistent notation throughout)

Theorem \ref{thm:WS-lin-op-lowestEV}  follows from Propositions \ref{non-neg-prop} and \ref{H_w-prop} given below. \qquad  \qquad  \quad $\Box$

%\subsection{Spectral analysis of the operators $H_a$, $H_z$, $H_{\phi}$ and $H_w$} \label{spec-anal-H}
\DETAILS{
Define
\begin{align}
& h_w :=   \curl_{a^n}^* \curl_{a^n}  - 2n iJ, \label{h-op-w} \\ %\curl_{a^n}^*\curl_{a^n} - inJ, \
&  h_a :=  \curl^* \curl, \label{h-op-a} \\
&  h_z :=  \curl^* \curl+\frac{g^2}{2\cos^2\theta}\xi^2, \label{h-op-z} \\ %\mu/\cos\theta, \
&  h_{\phi} :=  -\Delta. \label{h-op-phi}
\end{align}
}

\begin{proposition} \label{non-neg-prop}
	The operators $H_2(\mu)$, $H_3(\mu)$ and $H_4(\mu)$ %are non-negative on their respective domains with
	have purely discrete spectra. Furthermore, $H_3(\mu)$ and $H_4(\mu)$ are strictly positive and $H_2(\mu)$ is non-negative and has the null space $\{(0, c, 0, 0): c\in \R^2\}$ of dimension $2$. % consisting of the constant functions.
\end{proposition}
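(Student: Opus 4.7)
My plan is to exploit the fact that all three operators are elliptic with constant (or trivial) coefficients acting on spaces of $\cL'$-periodic functions/vector fields, so that Fourier analysis on the torus $\R^2/\cL'$ gives everything explicitly.

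First, I would establish discreteness uniformly. Because the fundamental cell of $\cL'$ is compact, any self-adjoint elliptic operator with smooth coefficients acting on $\cL'$-periodic sections has compact resolvent and hence purely discrete spectrum. For $H_4(\mu) = -\Delta + 4\lambda\mu/g^2$ this is immediate via the standard Fourier expansion on the torus: the eigenvalues are $|k|^2 + 4\lambda\mu/g^2$ with $k$ ranging over the dual lattice $(\cL')^*$. Since $\lambda,\mu>0$, every eigenvalue is at least $4\lambda\mu/g^2>0$, giving strict positivity.

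The operators $H_2$ and $H_3$ require only slightly more care because they act on the divergence-free subspace $\cH_0^2$. Here I would use the identity
\begin{equation}
\curl^*\curl \,a \;=\; -\Delta a + \nabla(\div a),
\end{equation}
which holds on smooth vector fields in $\R^2$ (here $\curl^*f=J\nabla f$). Restricted to $\{a:\div a=0\}$ this reduces to $\curl^*\curl = -\Delta$ componentwise. I would also check that $\curl^*\curl$ preserves the divergence-free constraint: $\div(\curl^*\curl a) = \div(J\nabla\curl a) = 0$ by a direct computation. Thus $H_2(\mu)$, restricted to $\cH_0^2$, is unitarily equivalent (via componentwise Fourier series) to multiplication by $|k|^2$ on the divergence-free Fourier modes, which immediately yields purely discrete non-negative spectrum.

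Next, for the null space of $H_2(\mu)$: by the quadratic form $\langle H_2 a,a\rangle = \|\curl a\|_{L^2(\Omega')}^2$, a null vector satisfies $\curl a=0$. Combined with $\div a=0$ and $\cL'$-periodicity, each component of $a$ is a $\cL'$-periodic harmonic function, hence constant. Therefore $\ker H_2(\mu) = \{(c_1,c_2): c_1,c_2\in\R\}$, which is two-dimensional, as claimed. Finally, for $H_3(\mu)=\curl^*\curl + \mu/\cos^2\theta$ on $\cH_0^2$, the same Fourier diagonalization gives eigenvalues $|k|^2+\mu/\cos^2\theta \geq \mu/\cos^2\theta > 0$, proving strict positivity. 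The only mildly delicate point is verifying that $\curl^*\curl$ preserves the divergence-free subspace and that Fourier analysis on $\cL'$-periodic vector fields respects this decomposition; beyond that the argument is essentially bookkeeping.
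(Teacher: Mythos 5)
Your proof is correct and follows essentially the same route as the paper: both reduce $\curl^*\curl$ to $-\Delta$ on the divergence-free subspace, invoke discreteness of the periodic Laplacian's spectrum (your Fourier-series diagonalization is just the explicit version of this), and identify the kernel of $H_2(\mu)$ with constant vector fields via periodic harmonicity. The only quibble is the sign convention $\curl^* f = J\nabla f$ (the paper uses $\curl^* = -J\nabla$), which is immaterial to the argument.
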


\begin{proof}
	The strict positivity of $H_3(\mu)$ and $H_4(\mu)$ and the non-negativity of $H_2(\mu)$ are obvious.
	The discreteness of the spectra and the form of the null space of $H_2(\mu)$ follow from the discreteness of the spectrum of the Laplacian on compact domains and the identity $\curl^*\curl v = -\Delta v$ when $\div(v)=0$. To compute the null space of $H_2(\mu)$,  we observe that the solutions of the equations $\Delta v=0$ and $\div(v)=0$ %satisfying \eqref
	 are constant vectors in $\R^2$.
\end{proof}

%We now consider the $w$-component of the linearized operator \begin{equation} \label{W-hess} H_1(\mu) = h_1 + \frac{1}{2}g^2\xi(b)^2,\ h_w := \curl_{a^n}^* \curl_{a^n} - 2in J, %\  H_W = \curl^*_{A^b}\curl_{A^b} - iebJ + \mu, \, \mu:=  \frac{g^2\xi(b)^2}{2}. \end{equation}\on the domain $\overrightarrow{\cH}^2_n$. 

 Let $\n_{q} := \n - iq = ((\nabla_{ q})_1, (\nabla_{ q})_2)$, $(\nabla_{ q})_j :=  \partial_j - iq_j$, and $\Delta_q:=\n_{q}^2=-\n_{q}^*\n_{q}$. We also introduce the complexified covariant derivative %(harmonic oscillator annihilation and creation  operators), $\bar\p_{ a^n} $ and $\bar\p_{ a^n}^*=- \p_{ a^n}$, with %$L^n_\pm$ to be
	%\begin{equation}
	$\bar\p_{ q}  :=  (\nabla_{ q})_1 + i(\nabla_{ q})_2.$ % =\partial_{x_1} + i\partial_{x_2} + \frac{1}{2} n (x_1 + i  x_2).
	%\end{equation}
 We have
\begin{proposition} \label{H_w-prop}
	(i) $H_1(\mu)$ is a self-adjoint operator on $\cH^2_n$ and its spectrum is given by
	\begin{equation} \label{spec-h_w}
	\sigma(H_1(\mu)) = \{(m-1)n + \mu: m\in\Z_{\geq 0}\} \cup \{\mu\},
	\end{equation}
	where $n := eb|\cL |/2\pi$.
	
	(ii) The eigenspace of the eigenvalue $-n+ \mu$ is $n$-dimensional and is spanned by functions of the form\footnote{$\beta$ can be expressed in terms of the Jacobi theta function, see Proposition \ref{prop:Landau-ham-spec} and Appendix \ref{saol}}
	\begin{equation} \label{chi-def} %{w-para-def}
	\chi = (\beta, i\beta),\qquad
	%\end{equation}satisfying\begin{equation}
	\curl_{ a^n} \chi  =i\bar\p_{ a^n} \beta= 0, %\ \quad ,
	\end{equation}
	%i.e. $\Null{(H_1(\mu)-\mu+n)}=\{\chi=(\eta, i\eta): \curl_{ a^n} \chi= 0\},$
		and the eigenspace of the eigenvalue $\mu$ is of the form %spanned by functions of the form
	\begin{equation} \label{H1-grad}
	\Null{(H_1(\mu)-\mu)}=\{\nabla_{ a^n}f: f\in\cH^3_n\}.
	\end{equation}
\end{proposition}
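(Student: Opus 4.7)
The plan is to rewrite $H_1(\mu)$ by means of the identity
\[
\curl_{a^n}^*\curl_{a^n} = \nabla_{a^n}\div_{a^n} - \Delta_{a^n} - niJ,
\]
which follows from a component-wise expansion combined with the curvature relation $[(\nabla_{a^n})_1,(\nabla_{a^n})_2] = -i\,\curl a^n = -in$. Rewriting gives
\[
H_1(\mu) = \nabla_{a^n}\div_{a^n} + \bigl(-\Delta_{a^n} + \mu - 2n\,iJ\bigr).
\]
Since the line bundle of Chern class $n\ge 1$ over the torus $\R^2/\cL'$ admits no flat sections, $-\Delta_{a^n}\ge n>0$ on each $\cH^s_n$, so $\Delta_{a^n}$ is invertible and we obtain the orthogonal Hodge decomposition $\cH^2_n = \nabla_{a^n}(\cH^3_n) \oplus \Null(\div_{a^n})$, with both summands $H_1(\mu)$-invariant. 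Self-adjointness of $H_1(\mu)$ on its natural domain and discreteness of its spectrum then follow from standard elliptic theory on the compact torus.

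On the gradient summand, the commutator identity $[\Delta_{a^n},\nabla_{a^n}] = -2n\,iJ\nabla_{a^n}$ (again a consequence of $[\nabla_1,\nabla_2]=-in$) yields, for every $f\in\cH^3_n$,
\[
H_1(\mu)\nabla_{a^n}f = \nabla_{a^n}\Delta_{a^n}f - \Delta_{a^n}\nabla_{a^n}f + \mu\nabla_{a^n}f - 2n\,iJ\nabla_{a^n}f = \mu\nabla_{a^n}f.
\]
Combined with the injectivity of $\nabla_{a^n}$ (absence of flat sections again), this establishes \eqref{H1-grad}.

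On $\Null(\div_{a^n})$ the operator reduces to the Pauli-type operator $P := -\Delta_{a^n}+\mu-2n\,iJ$. Diagonalizing $iJ$ by writing $w = (\beta,i\beta)+(\gamma,-i\gamma)$ makes $P$ block-diagonal, with $-\Delta_{a^n}+\mu\mp 2n$ on the two blocks, while the divergence-free condition becomes $\bar\partial_{a^n}\beta+\partial_{a^n}\gamma=0$. Introducing the ladder operators $a := \bar\partial_{a^n}/\sqrt{2n}$, $a^\dagger := -\partial_{a^n}/\sqrt{2n}$ (so that $[a,a^\dagger]=1$ and $-\Delta_{a^n} = 2n\,a^\dagger a + n$) yields the Landau decomposition with each $V_k := \Null(-\Delta_{a^n}-(2k+1)n)$ of dimension $n$. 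The constraint leaves the lowest-level piece $\beta^{(0)}\in V_0 = \Null(\bar\partial_{a^n})$ entirely free, producing the eigenvalue $n+\mu-2n = \mu-n$ on eigenvectors $(\beta,i\beta)$ with $\bar\partial_{a^n}\beta=0$; the $n$-dimensionality of this space is the Jacobi-theta-function count cited in the footnote, and this is exactly \eqref{chi-def}. The higher Landau levels couple through the constraint into pairs $(\beta^{(k+2)},\gamma^{(k)})$ on which $P$ acts as a scalar, contributing the remaining eigenvalues in \eqref{spec-h_w}.

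The main obstacle is the coupled Landau-block analysis on $\Null(\div_{a^n})$: the constraint $\bar\partial_{a^n}\beta+\partial_{a^n}\gamma = 0$ interleaves the two $iJ$-eigenspaces at staggered Landau levels, so one must track the ladder-operator action on the constrained pairs carefully to verify that each coupled block produces a single eigenvalue with the correct multiplicity $n$.
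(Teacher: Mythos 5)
Your route is essentially the paper's: the same identity $\curl_{a^n}^*\curl_{a^n}=\nabla_{a^n}\div_{a^n}-\Delta_{a^n}-n\,iJ$, the same splitting $\cH^2_n=\nabla_{a^n}(\cH^3_n)\oplus\Null(\div_{a^n})$ justified by the invertibility of $\Delta_{a^n}$, and the same diagonalization of $iJ$; your commutator identity $[\Delta_{a^n},\nabla_{a^n}]=-2n\,iJ\nabla_{a^n}$ is just a repackaging of the paper's direct evaluation of $\curl_{a^n}^*\curl_{a^n}\nabla_{a^n}f$ via $\curl_{a^n}\nabla_{a^n}=-in$. Everything the rest of the paper actually uses you do establish: self-adjointness, \eqref{H1-grad}, and the eigenvalue $\mu-n$ with its $n$-dimensional eigenspace \eqref{chi-def} coming from the unconstrained lowest Landau level $\beta\in\Null\bar\p_{a^n}$ (note that $(\beta,i\beta)$ with $\bar\p_{a^n}\beta=0$ is automatically divergence-free, and the $\gamma$-block is bounded below by $3n+\mu$, so nothing else can contribute at $\mu-n$).

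The one place you depart from the paper is the one place you do not close, and it is more than bookkeeping. The paper never imposes the divergence-free constraint when computing the higher spectrum: it diagonalizes $-\Delta_{a^n}-2n\,iJ$ on all of $\cH^2_n\otimes\C^2$ and identifies that spectrum with the spectrum on $\cY=\Null(\div_{a^n})$, which a priori only yields an inclusion. You correctly keep the constraint $\bar\p_{a^n}\beta+\p_{a^n}\gamma=0$, but your claim that the coupled pairs then produce ``the remaining eigenvalues in \eqref{spec-h_w}'' does not survive the count. With your (correct) ladder algebra one has $[\bar\p_{a^n},\bar\p_{a^n}^*]=2n$, hence $\sigma(-\Delta_{a^n})=\{(2k+1)n\}$ rather than the $\{(k+1)n\}$ asserted in Proposition \ref{prop:Landau-ham-spec}; since $\bar\p_{a^n}$ lowers the Landau index by one and $\p_{a^n}=-\bar\p_{a^n}^*$ raises it by one, the constraint pairs the $k$-th $\beta$-level with the $(k-2)$-nd $\gamma$-level, exactly matching their $P$-eigenvalues $(2k-1)n+\mu=(2(k-2)+3)n+\mu$, while an uncoupled $\beta$ at level $k=1$ is annihilated by the constraint because $\bar\p_{a^n}$ is injective there. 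The spectrum on $\cY$ therefore comes out as $\{\mu-n\}\cup\{(2j+3)n+\mu:j\ge0\}$, and the values $n+\mu$, $2n+\mu$, $4n+\mu,\dots$ appearing in \eqref{spec-h_w} are not attained. In other words, your extra care exposes an imprecision in \eqref{spec-h_w} itself (traceable to the factor of two in the cited commutation relation) rather than reproducing it. This is harmless downstream --- only the $\mu$-eigenspace, the bottom eigenvalue $\mu-n$ of multiplicity $n$, and the gap above it are ever used --- but you should either restrict your statement to those facts or carry the coupled-block count to its actual conclusion instead of leaving it as an ``obstacle.''
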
 
%Before proceeding to the proof, we present a
In the proof of this proposition, we use the following standard result whose proof, for reader's convenience, is given in Appendix \ref{saol}:
\begin{proposition}\label{prop:Landau-ham-spec}  The operator $-\Delta_{ a^n}$ is self-adjoint on its natural domain and its spectrum is given by
	\begin{equation}\label{spec-Landau-app}	\sigma(-\Delta_{ a^n}) = \{\, (m + 1) n : m\in\Z_{\geq 0} \,\},
	\end{equation}
	with each eigenvalue is of the multiplicity $n$. Moreover,  
	\begin{equation} \label{nullL'}
	\Null (-\Delta_{ a^n} - n) = \Null \bar\p_{ a^n}.
	\end{equation}	
In more detail, with $z= (x^1+i x^2)/ \sqrt{\frac{2\pi}{\im\tau} }$ and $\tau$ coming from $\cL'=\Z+\tau\Z$, we have	
	\begin{align} \label{Vn-space} \Null (-\Delta_{ a^n} - n) =e^{\frac{in}{2}x^2(x^1 + ix^2) } V_n,\end{align} %is of the dimension $n$ and 
	where $V_n$ is spanned by functions of the form 
	\begin{align} %\label{nullL} %\Null (-\Delta_{ea^n} - n) =\{\, 
	% & \psi (x)=e^{\frac{in}{2}x_2(x_1 + ix_2) } \theta (x_1 + ix_2, \tau),\\
	\label{theta-repr} &\theta (z, \tau) :=  \sum_{m=-\infty}^{\infty} c_m e^{i2\pi m z},\  c_{m + n} = e^{-in\pi z} e^{i2m\pi\tau} c_m.  \end{align}
	%  and therefore, in particular,   $\dim_\C \Null (-\Delta_{ea^n} -k n)=  n$.
	%(Elements of $V_n$ will be called $n-$theta functions.) 
 \end{proposition}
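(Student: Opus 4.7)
The plan is to reduce $-\Delta_{a^n}$ to a Heisenberg creation/annihilation algebra via the complexified covariant derivatives $\bar\p_{a^n}$ and $\p_{a^n}:=(\n_{a^n})_1-i(\n_{a^n})_2$, and then to solve the lowest-level equation $\bar\p_{a^n}\beta=0$ explicitly as a space of theta functions on the elliptic curve $\C/\cL'$. Self-adjointness and discreteness come essentially for free: on $\cH_n^1$, the non-negative closed quadratic form $q(\psi):=\|\n_{a^n}\psi\|^2_{L^2(\Omega')}$ has a unique self-adjoint Friedrichs extension, and since $a^n$ is smooth and $\R^2/\cL'$ is compact, Rellich-Kondrachov gives compact resolvent and hence purely discrete spectrum.

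The ladder structure follows from $(\n_{a^n})_j^*=-(\n_{a^n})_j$ (so $\bar\p_{a^n}^*=-\p_{a^n}$) together with $[(\n_{a^n})_1,(\n_{a^n})_2]=-i\,\curl a^n=-in$, which yield by direct computation
\[-\Delta_{a^n}=\bar\p_{a^n}^*\bar\p_{a^n}+n,\qquad[\bar\p_{a^n},\bar\p_{a^n}^*]=2n.\]
From the first identity $-\Delta_{a^n}\ge n$ and $\Null(-\Delta_{a^n}-n)=\Null\bar\p_{a^n}$, establishing \eqref{nullL'}. The commutation relation shows that $\bar\p_{a^n}^*$ raises eigenvalues and $\bar\p_{a^n}$ lowers them along a Landau tower; since $\bar\p_{a^n}\bar\p_{a^n}^*$ acts as a positive scalar on each level, all levels have the same multiplicity $\dim\Null\bar\p_{a^n}$, and the ladder exhausts the spectrum by the standard descent argument (any eigenfunction reaches the kernel in finitely many applications of $\bar\p_{a^n}$, else an eigenvalue below $n$ would be produced), yielding \eqref{spec-Landau-app}.

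For the explicit description \eqref{Vn-space}-\eqref{theta-repr}, I would substitute the ansatz $\beta(x)=e^{\frac{in}{2}x^2(x^1+ix^2)}f(z)$ with $z=(x^1+ix^2)/\sqrt{2\pi/\Im\tau}$. A direct computation using $a^n=\tfrac{n}{2}(-x^2,x^1)$ shows that $\bar\p_{a^n}\beta=0\iff\p_{\bar z}f=0$, so $f$ is entire in $z$. Applying the equivariance \eqref{resc-gauge-1} to the two generators of $\cL'=\Z+\tau\Z$ and cancelling the Gaussian prefactor (for a suitable choice of $c_s$ meeting the cocycle condition \eqref{cocycle-cond}) converts it into a theta-like quasi-periodicity on $f$; expanding $f(z)=\sum_m c_m e^{i2\pi m z}$ turns this into the recursion of \eqref{theta-repr}, which determines $c_m$ from $c_0,\ldots,c_{n-1}$ with convergence ensured by $\Im\tau>0$. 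The resulting space is exactly $n$-dimensional, consistent with the Riemann-Roch count for a degree-$n$ line bundle on $\C/\cL'$. The main technical obstacle is precisely this equivariance bookkeeping: matching the Gaussian prefactor against the automorphy factor with a consistent choice of $c_s$, and exhibiting $n$ independent entire solutions from the recursion. The preceding ladder argument is a standard Heisenberg-algebra computation, requiring only mild care with operator domains on the compact fundamental cell.
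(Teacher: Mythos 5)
Your route is essentially the paper's: build the ladder from $\bar\p_{a^n}$ and its adjoint, identify the ground level with $\Null\bar\p_{a^n}$ via $-\Delta_{a^n}-n=\bar\p_{a^n}^*\bar\p_{a^n}$, and describe that kernel explicitly by stripping off a Gaussian factor to reduce $\bar\p_{a^n}\beta=0$ to holomorphy plus a theta-type quasi-periodicity. Your additions (Friedrichs extension, Rellich--Kondrachov for discreteness, the injectivity of $\bar\p_{a^n}^*$ between levels to equalize multiplicities) are fine and, if anything, more careful than the paper's appeal to ``as for the harmonic oscillator.''

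There is, however, one genuine inconsistency in your write-up that you must confront. You correctly compute $[\bar\p_{a^n},\bar\p_{a^n}^*]=2n$ (a direct calculation gives $[\bar\p_{a^n},\bar\p_{a^n}^*]=2i[(\nabla_{a^n})_1,(\nabla_{a^n})_2]=2\curl a^n=2n$, whereas \eqref{commut-rel} records this commutator as $n$). But a ladder with step $2n$ starting from the ground energy $n$ produces the eigenvalues $n,3n,5n,\dots$, i.e.\ $(2m+1)n$, \emph{not} the set $\{(m+1)n\}$ asserted in \eqref{spec-Landau-app}; a Weyl-law count ($N(\lambda)\sim\frac{|\Omega'|}{4\pi}\lambda=\lambda/2$, versus multiplicity $n$ per level) confirms that the spacing must be $2n$. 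So the sentence ``the ladder exhausts the spectrum \dots yielding \eqref{spec-Landau-app}'' does not follow from your own (correct) commutator; you cannot simply assert the displayed eigenvalue list. You should state the spectrum your argument actually produces and note the discrepancy with \eqref{spec-Landau-app} explicitly. Fortunately, everything used downstream — strict positivity $-\Delta_{a^n}\ge n$, the identification \eqref{nullL'}, the multiplicity $n$ of the bottom level, and the existence of a spectral gap above it — is unaffected. Separately, your kernel dimension count leans on ``consistent with Riemann--Roch''; to match the paper you should actually exhibit the $n$ independent convergent series determined by $c_0,\dots,c_{n-1}$, which is the content of the remark following Proposition \ref{prop:Landau-ham-spec}.
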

\begin{remark} Functions of the form \eqref{theta-repr} are determined entirely by the values of $c_0,\ldots,c_{n-1}$ and therefore form an $n$-dimensional vector space..
\end{remark}

\begin{proof}[Proof of Proposition \ref{H_w-prop}]
	First, we will show that $\cH^2_n = \cY \oplus \cZ$ (the Hodge decomposition), where
	\begin{align}
	\cY  :=  & \{w\in\cH^2_n: \div_{ a^n}w=0\}, \\
	\cZ  :=  & \{w\in\cH^2_n: w=\nabla_{ a^n}f \text{ for some } f\in\cH^3_n\},
	\end{align}
	with $\div_{ a^n}w  :=   
	(\nabla_{ a^n})_1 w_1 + (\nabla_{ a^n})_2 w_2=-\nabla_{ a^n}^*$. We write any $w\in\cH^2_n$ as $w = w_0 + \nabla_{ a^n}f$, where $f$ solves the equation $\Delta_{ a^n} f = \div_{ a^n} w$ and  $w_0$ is defined by this relation.  By Proposition  \ref{prop:Landau-ham-spec}, $0$ is not in the spectrum of $\Delta_{ a^n}$ and therefore the equation $\Delta_{ a^n} f = \div_{ a^n} w$ has the unique solution $f\in\cH^3_n$. %Let $f\in\cH^3_n$ be its unique solution. 
	Then, since %$\div_{ a^n} :=-\nabla_{ a^n}^* $, we have 
		$\Delta_{ a^n}:=\div_{ a^n} \nabla_{ a^n}$, %$=-\nabla_{ a^n}^* \nabla_{ a^n}$. for this $w\in\cH^2_n$, we may write , where
		we have $\div_{ a^n}w_0=0$. % and $f\in\cH^3_n$ solves 
This proves $\cH^2_n = \cY \oplus \cZ$.

 Now, recall that the operator $H_1(\mu)$ acts on complex vectors $w=(w_1, w_2)$. %one-forms $w=\sum_1^2 w_i dx^i$. %, which we identify with complex vector-functions $w=(w_1, w_2)$.
%if $w = w_0 + \nabla_{ a^n}f$, with $\div_{ a^n}w_0=0$, satisfies $(H_1(\mu)-\mu)w=0$, then 
The definition $H_1(\mu)  :=  \curl_{ a^n}^* \curl_{ a^n} - n iJ  + \mu$ and  the relations $\curl_{ a^n}^* = -J\nabla_{ a^n}$ and
  \[\curl_{ a^n}\nabla_{ a^n} = [(\nabla_{a^n})_1, (\nabla_{a^n})_2] = -in\] yield that $(H_1(\mu)-\mu)\nabla_{ a^n}f = 0$, %In particular, this 
	which proves that the $\mu$-eigenspace of $H_1(\mu)$ is of the form \eqref{H1-grad} giving  the second part of (ii).  
	
%If $w = w_0 + \nabla_{ a^n}f$, with $\div_{ a^n}w_0=0$, satisfies $(H_1(\mu)+n-\mu)w=0$, then the relation	

	By the above the subspace $\cY$ is invariant under  $H_1(\mu)$. To compute the spectrum of the operator $H_1(\mu)$ on the subspace $\cY$, we 
	\DETAILS{write it as %{\bf check $2n$}
	\begin{equation} \label{H1-deco}
	H_1(\mu)  :=   h_1 + \mu,\ h_1  :=  \curl_{ a^n}^* \curl_{ a^n} - n iJ. %\  H_W = \curl^*_{A^b}\curl_{A^b} - iebJ + \mu, \, \mu:=  \frac{g^2\xi(b)^2}{2}.
	\end{equation}
Now, using}
use the definitions of $\curl_{ a^n}$ and $\curl_{ a^n}^*$ and recall the relation $ [(\nabla_{a^n})_1, (\nabla_{a^n})_2] = -in$ to compute  
\[\curl_{ a^n}^* \curl_{ a^n}=-\Delta_{ a^n} - n iJ+ \nabla_{a^n}\div_{ a^n}.\] 
%Let $w_0$ satisfy $\div_{ a^n}w_0=0$. Then, 
By above, we have $H_1(\mu) w_0 =(-\Delta_{ a^n} - 2n iJ -\mu)w_0$, for any $w_0\in\cY$.
% and %that $h_1\nabla_{ a^n}f = 0$, so
	%therefore
(We check using  $\div_{a^n} (-\Delta_{ a^n} - 2n iJ) w_0 = (-\Delta_{ a^n}) \div_{ a^n} w_0 =0$,	  that  %$h_{10} :=   -\Delta_{ a^n} - 2n iJ$
$H_1(\mu)$ sends $\cY$ to $\cY$ and hence, $\cY$ is invariant under $H_1(\mu)$.) % $h_1$ (and $H_1(\mu)$).) 
Thus, we
 conclude that %write $h_1 : \cY \oplus \cZ \to \cY \oplus \cZ$ as
	\begin{align}
	%&h_1 : \cY \oplus \cZ \to \cY \oplus \cZ, \\
	H_1(\mu) %(w_0 \oplus \nabla_{ a^n}f)
	&(w_0 \oplus 0) = (h_{1} -\mu) w_0 \oplus 0, \\
	&h_{1}  :=   -\Delta_{ a^n} - 2n iJ.
	\end{align}
	%(Here $h_{10}$ sends $\cY$ to $\cY$ because $\div_{a^n} h_{10} w_0 = (-\Delta_{ a^n}) \div_{ a^n} w_0 =0$.) 
	
	Identifying one-forms with vector-fields, we compute
	\begin{equation}
	U^*(iJ)U = \begin{pmatrix}
	-1 & 0 \\
	0 & 1
	\end{pmatrix}, \;\;
	U:= \frac{1}{\sqrt{2}}\begin{pmatrix}
	1 & 1 \\
	-i & i
	\end{pmatrix},
	\end{equation}
	which gives
	\begin{equation} \label{h_w-diag}
	U^* h_{1} U = \left(\begin{array}{cc}
	-\Delta_{ a^n} + 2n & 0 \\ 0 & -\Delta_{ a^n} - 2n \end{array} \right).
	\end{equation}
	By Proposition \ref{prop:Landau-ham-spec}, we know that
	\begin{equation}\label{spec-Landau}	
	\sigma(-\Delta_{ a^n}) = \{\, (m + 1)n : m\in\Z_{\geq 
		0}\}
	\end{equation}
	and so the spectrum of $H_1(\mu)$ on $\cY$ is given by the first set on the r.h.s. of \eqref{spec-h_w}. Hence, by  $\cH^2_n = \cY \oplus \cZ$, \eqref{spec-h_w} follows, giving  (i). 
	
	Furthermore, by \eqref{h_w-diag} and \eqref{spec-Landau}, any eigenvector $\chi$ of  $h_{1}$ corresponding to the eigenvalue $-n$ must be of the form
	\begin{equation}
	\chi=U (0, \beta)  
	= \frac{1}{\sqrt{2}} (\beta,i\beta) , 
	\end{equation}
	where $\beta$ satisfies 
	\begin{equation} \label{eta-eqn}
	-\Delta_{ a^n}\beta = n\om.
	\end{equation}
This relation, together with the equation $\Null (-\Delta_{ a^n} - n) = \Null \bar\p_{ a^n}$ (see  \eqref{nullL'}), % of Appendix \ref{saol}, 
implies $\overline{\partial}_{ a^n}\beta = 0$. 	Since $\curl_{ a^n}\chi = i\overline{\partial}_{ a^n}\beta$,  this gives 	%which is equivalent to 
	\begin{equation}\curl_{ a^n}\chi = i\overline{\partial}_{ a^n}\om = 0.\end{equation} 
	Furthermore, by Proposition \ref{prop:Landau-ham-spec}, the space of such functions is $n$-dimensional. Thus (after rescaling $\om$ by a factor of $\sqrt{2}$) $\chi$ is of the form \eqref{chi-def}. This %together with \eqref{H1-deco} 
	gives also the first part of (ii) completing the proof of the proposition. %the desired result.
\end{proof}
%This proves Theorem \ref{thm:WS-lin-op-lowestEV}.  $\Box$ 

We see that the operator $H_1(\mu)$ is non-negative for the magnetic fields satisfying $b < b_* :=  g^2 \varphi_0^2/2e = M_W^2/e$ and acquires a negative eigenvalue $ \mu-n = (b_*/b  - 1)\,n$ of multiplicity $n$ as the magnetic field increases to $b > b_*$. Theorem \ref{thm:normal-instab} follows by undoing the rescaling \eqref{rescaling} - \eqref{r}.% of $L_{n, \mu}$.

\section{Setup of the bifurcation problem}\label{sbp}

%In this section we will reformulate the system \eqref{WS-eq1-resc} - \eqref{WS-eq4-resc} as the equation $F(\mu,u)=0$, where $F(\mu,u)$ is the left-hand side of \eqref{WS-eq1-resc} - \eqref{WS-eq4-resc} written in terms of the variables $\mu$ and $u$ defined below. To do this, 
We substitute $a=\frac1e a^n+\alpha$ (with $\div(\alpha)=0$), $\phi=\xi+\psi$, $\nu= a^n+\tilde\nu$ and %$\mu = \frac{1}{2} g^2\xi(b)$
$\xi = \sqrt{2\mu}/g$
  into \eqref{WS-eq1-resc} - \eqref{WS-eq4-resc} %, with the complex conjugate of \eqref{WS-eq1-resc} added,  
  and  relabel the unknowns $w, \alpha,z,\psi$ as $u_1, u_2, u_3, u_4$ to obtain the system
\begin{align}\label{mu-u-syst}
 H_iu_i &= -J_i(\mu, u),\ i=1, \dots, 4, %\\ H_a\alpha &= -J_a(\mu,w,\alpha,z,\psi), \\ H_z z &= -J_z(\mu,w,\alpha,z,\psi), \\  H_{\phi} \psi &= -J_{\phi}(\mu,w,\alpha,z,\psi).
\end{align}
where $u= (u_1, u_2, u_3, u_4)\equiv  (w, \alpha,z,\psi)$, the operators $H_i$ on the left-hand side are defined in \eqref{Hw} - \eqref{Hvphi}, 
 and
% {(discrepancy with the notation of Section \ref{sec:linear}!)}we  relabeled the unknowns $w,\alpha,z,\psi$ as $u_1, \dots, u_4$, and  denoted
\begin{align}
%J(\mu,u)&:=  \oplus_{i=1}^4 J_i(\mu,u). \label{Jmu}\\
\label{J1-def} J_1(\mu,u) & :=   M w + \frac{g^2}{2}\psi^2 w + g\sqrt{2\mu}\psi w -i(\curl\tilde\nu)Jw  %\nonumber \\  &  \qquad  \qquad  \qquad   \qquad  \qquad \qquad   \qquad 
 + g^2(\overline{w} \times w)Jw,  \\
% J_2(\mu,u) &:=   \overline{J_1(\mu,u)} ,\\
  J_2(\mu, u) & :=   2e\Im[%(\curl_{a^n} w - ig\nu^3 \times w)
(\curl_{\nu} w) J\overline{w} - \curl^*(\overline{w}_1 w_2)], \label{J2-def}\\
 J_3(\mu, u) & :=   2g\cos\theta\Im[%(\curl_{a^n} w - ig\nu^3 \times w)
(\curl_{\nu} w)  J\overline{w} - \curl^*(\overline{w}_1 w_2)] %\nonumber \\& \qquad  \qquad  \qquad  \qquad  \qquad   \qquad 
 + %\frac{g}{\cos^2\theta}
\kappa \frac{2\sqrt{2\mu}}{g}      \psi z + \kappa\psi^2 z, \label{Jz-def} \\
  J_4(\mu, u) & :=   3\lambda \frac{\sqrt{2\mu}}{g}\psi^2 + \lambda\psi^3 + \frac{g^2}{2}|w|^2(\frac{\sqrt{2\mu}}{g}+\psi) %\nonumber \\& \qquad  \qquad  \qquad  \qquad  \qquad  \qquad
   + %\frac{g^2}{4\cos\theta}
       \frac12\kappa |z|^2(\frac{\sqrt{2\mu}}{g}+\psi), \label{Jphi-def}
 \end{align}
 with $ \tilde\nu  :=   g(\alpha\sin\theta+z\cos\theta)$, $\xi\times \eta := \xi_1 \eta_2 - \xi_2 \eta_1$, recall,  $\curl_{q} w = \n_1 w_2 - \n_2 w_1$, $\n_i :=  \partial_i - iq_i$ and, recalling that $w:\R^2\ra\C^2$, %$M:=\curl_{\nu}^* \curl_{\nu}-\curl_{a^n}^* \curl_{a^n}$. Recalling that $w:\R^2\ra\C^2$ and , we compute %a vector-function
 \begin{align} \label{M}
% \nu^3  :=   &\alpha\sin\theta+z\cos\theta, \\
 M  :=\curl_{\nu}^* \curl_{\nu}-\curl_{ a^n}^* \curl_{ a^n}=   &\begin{pmatrix}
  M_{22} & -M_{21} \\
  -M_{12} & M_{11}
 \end{pmatrix}, 
\end{align}
with $M_{ij}  :=   i \tilde\nu_i (\nabla_{a^n})_j + i \tilde\nu_j (\nabla_{ a^n})_i + i \partial_i\tilde\nu_j + \tilde\nu_i\nu_j$. %Note that  $ \tilde\nu  :=   g(\alpha\sin\theta+z\cos\theta)$.
\DETAILS{Thus we may rewrite \eqref{mu-u-syst} as %{WS-eq1-resc} - \eqref{WS-eq4-resc} as %, with the complex conjugate of \eqref{WS-eq1-resc} added, as
\begin{align}\label{F-def}
 F(\mu,u)=0,\ \text{ where }\  F(\mu,u):=  L_{n, \mu}u + J(\mu,u),
\end{align}
with $L_{n, \mu}=\oplus H_i$ given in \eqref{Ln} and $J(\mu,u)$ given by 
\begin{align}
J(\mu,u)&\equiv  (J_1(\mu,u), \dots, J_4(\mu,u)). %\oplus_{i=1}^4 J_i(\mu,u). 
\label{Jmu}
\end{align}

 We consider $F(\mu,u)$ as a map from  the space $\R_{>0} \times\cX$, where  $\cX :=  \cH_n^2 \oplus \cH^2_0 \oplus \cH^2_0 \oplus \cH^2$, to the space %$\R_{>0} \times \cY$, where 
 $\cY:=  L_n^2 \oplus L_0^2 \oplus L_0^2 \oplus L^2$,  and let $F = (F_1, \dots, F_4)$, where
 \begin{equation} \label{F-comps}
  F_i(\mu, u) = H_i u + J_i (\mu, u), \quad i=1, \dots, 4. %\in \{w,\alpha,z,\psi\}.
 \end{equation}}

Note that system \eqref{mu-u-syst} can be also written as $G(m^{n, r}+u)|_{\xi = \sqrt{2\mu}/g}=0$, where $G$ is defined in \eqref{G-expl} and $m^{n, r} :=  (0,\frac1e a^n,0,\xi)$.

Applying $\divv$ to the second equation in \eqref{mu-u-syst}, we find that a solution $(\mu, u)$ should satisfy $\divv J_2 (\mu, u)=0$. To prove that a solution $(\mu, u)$ satisfies this constraint, we consider the following auxiliary problem %
%Let $P'=\one\otimes P_0\otimes \one\otimes \one$, where $P_0$ is the orthogonal projection onto the divergence-free  vector fields ($P_0=\frac{1}{-\Delta}\curl^*\curl$). We introduce the new system
\begin{align}\label{F-eq'}
 F(\mu,u)=0,\ \text{ where }\  F(\mu,u):=  L_{n, \mu}u + P' J(\mu,u),
\end{align}
where  $P'=\one\otimes P_0\otimes \one\otimes \one$, with $P_0$ the orthogonal projection onto the divergence-free  vector fields ($P_0=\frac{1}{-\Delta}\curl^*\curl$), and, recall, $L_{n, \mu}=\oplus H_i$ and $J(\mu,u)$ given in \eqref{Ln} and %\eqref{Jmu}.
\begin{align}
J(\mu,u)& :=   (J_1(\mu,u), \dots, J_4(\mu,u)). %\oplus_{i=1}^4 J_i(\mu,u). 
\label{Jmu}
\end{align}

We consider $F(\mu,u)$ as a map from  the space $\R_{>0} \times\cX$, where  $\cX :=  \cH_n^2 \oplus \cH^2_0 \oplus \cH^2_0 \oplus \cH^2$, to the space %$\R_{>0} \times \cY$, where 
 $\cY:=  L_n^2 \oplus L_0^2 \oplus L_0^2 \oplus L^2$,  and let $F = (F_1, \dots, F_4)$, where 
 \begin{equation} \label{Fi-def} F_i(\mu, u) = H_i u + \del_{i, 2}P_0J_i (\mu, u), \quad i=1, \dots, 4. \end{equation}
In what follows, we denote %the total G\^ateaux derivative by $\del$, and
 the partial (real) G\^ateaux derivatives with respect to $\#$ by $\del_\#$.
\begin{proposition}\label{prop:reconstr}   Assume $(\mu,u)$ is a solution of the system \eqref{F-eq'} satisfying  the gauge - periodicity conditions \eqref{resc-gauge-1} - \eqref{resc-gauge-4}. 
Then $\divv J (\mu,u)=0$  and therefore   $(\mu,u)$ solves the original system \eqref{mu-u-syst}.    \end{proposition}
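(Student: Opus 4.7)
The strategy is to exploit the residual electromagnetic $U(1)$ gauge invariance \eqref{gauge-transf'} of the rescaled energy $\E_{\Omega'}$ to derive a Noether-type identity relating $\divv G_2$ to $G_1$, and then feed the hypothesis $F(\mu,u)=0$ through that identity. The key observation is that in $F$ only the second block has been modified: by the definitions \eqref{G-expl} and \eqref{F-eq'}, one has $F_i=G_i$ for $i=1,3,4$ and $F_2=H_2u_2+P_0J_2=G_2-(I-P_0)J_2$, where $(I-P_0)$ is the orthogonal projection onto pure gradients. So it suffices to show $\divv J_2(\mu,u)=0$: then $P_0J_2=J_2$ and automatically $G_2(\mu,u)=F_2(\mu,u)=0$.

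First I would check that, for any $\gamma\in C^\infty(\R^2/\cL',\R)$, the one-parameter family $(w_t,\alpha_t,z_t,\psi_t):=(e^{it\gamma}w,\alpha-\tfrac{t}{e}d\gamma,z,\psi)$ preserves the gauge-periodicity conditions \eqref{resc-gauge-1}--\eqref{resc-gauge-4} (straightforward: the automorphy factor $e^{i(\frac{n}{2}s\times x+c_s)}$ commutes with the periodic phase $e^{it\gamma}$, and $d\gamma$ is $\cL'$-periodic) and that $\E_{\Omega'}$ is constant along this family, since it is a restriction of the original unbroken $U(1)$ symmetry \eqref{gauge-transf'}. Differentiating the identity $\E_{\Omega'}(w_t,\alpha_t,z_t,\psi_t)=\E_{\Omega'}(w,\alpha,z,\psi)$ at $t=0$ and using $G=\grad_{L^2}\E_{\Omega'}$ gives
\begin{equation*}
0=\Re\!\int_{\Omega'}\overline{G_1(\mu,u)}\cdot(i\gamma w)\;-\;\tfrac{1}{e}\int_{\Omega'}G_2(\mu,u)\cdot d\gamma.
\end{equation*}
Integration by parts on $\Omega'$ contributes no boundary term, thanks to the joint $\cL'$-periodicity of $\gamma$ and of the fields, so that for every periodic $\gamma$
\begin{equation*}
\int_{\Omega'}\gamma\Bigl[\tfrac{1}{e}\divv G_2(\mu,u)-\Im\bigl(\overline{G_1(\mu,u)}\cdot w\bigr)\Bigr]=0,
\end{equation*}
whence the pointwise Noether identity $\divv G_2(\mu,u)=e\,\Im\bigl(\overline{G_1(\mu,u)}\cdot w\bigr)$.

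Now suppose $F(\mu,u)=0$. Since $F_1=G_1$, we have $G_1(\mu,u)=0$, and the Noether identity reduces to $\divv G_2(\mu,u)=0$. By \eqref{G2-expl}, $G_2=\curl^*\curl\,\alpha+J_2$, and a direct computation in $\R^2$ gives $\divv(\curl^*\curl\,\alpha)\equiv 0$. Hence $\divv J_2(\mu,u)=0$, so $P_0J_2=J_2$ and $F_2=G_2=0$. Together with $G_i=F_i=0$ for $i=1,3,4$ this yields $G(\mu,u)=0$, i.e. $(\mu,u)$ solves the original system \eqref{mu-u-syst}.

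The main obstacle is purely bookkeeping: one must check that the infinitesimal gauge orbit stays inside the spaces \eqref{L2n-space}--\eqref{L2-space} (so that the derivative $d\E_{\Omega'}(u)$ can be legitimately paired with $(i\gamma w,-\tfrac{1}{e}d\gamma,0,0)$) and that all boundary contributions in the integration by parts on $\Omega'$ vanish; both facts follow from $\cL'$-periodicity of $\gamma$ and from the regularity $u\in\cX=\cH_n^2\oplus\cH_0^2\oplus\cH_0^2\oplus\cH^2$, which is more than enough to justify the manipulations with the polynomial nonlinearities in $J$.
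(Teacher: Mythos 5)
Your proof is correct and follows essentially the same route as the paper: both exploit the residual $U(1)$ gauge invariance of $\E_{\Omega'}$ along the orbit $(e^{it\gamma}w,\alpha-\tfrac{t}{e}\nabla\gamma,z,\psi)$, use that the $w$-equation $F_1=G_1=0$ kills the first term of the resulting Noether identity, and conclude $\divv J_2=0$ from the arbitrariness of the periodic test function $\gamma$ (the paper pairs $J_2$ with $\nabla\gamma$ and uses $\curl\nabla\gamma=0$, which is the same computation as your $\divv\curl^*\curl\equiv 0$). Your explicit bookkeeping of the relation $F_2=G_2-(I-P_0)J_2$ and the final step $P_0J_2=J_2\Rightarrow F_2=G_2$ is a slightly more careful rendering of what the paper leaves implicit, but it is not a different argument.
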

\begin{proof}  %Let $u=(w, a, z, \phi)$.
  We follow \cite{TS1}. Assume 
  $\chi\in H^1_{\rm loc}$ and is $\cL-$periodic (we say, $\chi \in H^1_{\rm per} $). The gauge invariance implies that \begin{equation}E_{\Omega'}(e^{i s\chi}w, a+s\n\chi, z, \phi) =E_{\Omega'}(w, a, z, \phi),\end{equation} %- \eqref{e-dns-resc}, 
 where  $E_{\Omega'}(w, a, z, \phi)$ is given in \eqref{WS-energy-resc}. Differentiating this equation with respect to $s$ at $s=0$ gives $\del_w E_{\Omega'}(w, a, z, \phi)(i\chi w)+\del_a E_{\Omega'}(w, a, z, \phi)(\n\chi) =0$. Now, we use the fact that the partial G\^ateaux derivative with respect to $w$ vanishes, $\del_w E_{\Omega'}(w, a, z, \phi)=0$, and that $\curl\n \chi=0$, and integrate by parts, to obtain
\DETAILS{\begin{align}  
 \Re\lan [\curl_{\nu}^* \curl_{\nu} + \frac{g^2}{2}\phi^2 - i (\curl \nu)J&+ g^2(\overline{w} \times w)J]w,  i \chi\psi\ran \notag \\ %(e^{is\chi}-1)\psi\ran\notag \\ 
\label{E-deriv} &+
 \lan J (\mu,u), \n \chi\ran=0.\end{align} 
 (Due to conditions \eqref{resc-gauge-1} - \eqref{resc-gauge-4} and the $\cL-$periodicity of $\chi$,  there are no boundary terms.) 
This, together with equation \eqref{WS-eq1-resc}, implies}
 \begin{align}\label{J-relat} \lan J (\mu,u), \n\chi\ran=0.\end{align} 
 (Due to conditions \eqref{resc-gauge-1} - \eqref{resc-gauge-4} and the $\cL-$periodicity of $\chi$,  there are no boundary terms.)  Since the last equation holds for any  $\chi \in H^1_{\rm per} $, we conclude that $\divv J (\mu,u)=0$.
\end{proof} 
In Sections \ref{sec:rf-dp} - \ref{sec:bif-res} %{sec:bifurcation-n=1} %{sec:pf-thm-main1} 
 we solve equation \eqref{F-eq'}, subject to conditions \eqref{resc-gauge-1} - \eqref{resc-gauge-4}. 
%The following lemma shows that $J(\mu,\cdot)$ maps $\cX \subset \cY$ into $\cY$.

%\begin{lemma}
 %Let $u\in\cX$. Then $\div J_a(\mu,u)=0$ and $\int_{\Omega'} J_a(\mu,u) = 0$.
%\end{lemma}

In conclusion of this section, we investigate properties of the map $F(\mu,u)$. For $f=(f_1, f_2, f_3, f_4)$ and $\delta\in\R$, define the global transformation 
\begin{equation} \label{T-delta-def}
T_{\delta}f=(e^{i\delta}f_1, f_2, f_3, f_4).
\end{equation}

\begin{proposition}\label{F-prop}
 The map $F(\mu,u)$ defined in \eqref{F-eq'} has the following properties:
 \begin{enumerate}[(i)]
  \item $F:\R_{>0}\times\cX \to \cY$ is continuously G\^ateau differentiable of all orders;
  \item $F(\mu,0)=0$ for all $\mu\in\R_{>0}$;
  \item $\del_u F(\mu,0)=L_{n, \mu}$ for all $\mu\in\R_{>0}$;
  \item $F(\mu,T_{\delta}u)=T_{\delta} F(\mu,u)$ for all $\delta\in\R$;
  \item $\langle u,F(\mu,u)\rangle_{\cY} \in\R$ (respectively $\langle w,F_1(\mu,u)\rangle_{L_n^2}\in\R$) for all $u\in\cX$ (respectively $w\in\cH_n^2$).
 \end{enumerate}
\end{proposition}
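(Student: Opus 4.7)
My plan is to verify each property directly from the explicit formulas for $L_{n, \mu}$ and the $J_i$ in \eqref{Ln}--\eqref{Hvphi} and \eqref{J1-def}--\eqref{Jphi-def}. Items (ii) and (iii) I expect to dispatch first: by inspection, every summand in each $J_i$ contains at least one factor drawn from the components of $u$ (in the $M$ and $\curl\tilde\nu$ terms, via $\tilde\nu = g(\alpha\sin\theta + z\cos\theta)$), so $J_i(\mu, 0) = 0$, giving (ii); moreover each term is of total degree $\geq 2$ in the components of $u$, so $\del_u J(\mu, 0) = 0$, which gives (iii). For (i), the decisive input is that each $J_i$ is polynomial in $\mu^{1/2}$, $u$ and $\nabla u$; combining this with the two-dimensional Sobolev embeddings $\cH^2 \hookrightarrow C^0$ and $\cH^1 \hookrightarrow L^p$ for $p < \infty$ makes pointwise products continuous into $\cY$, so $F$ is $C^\infty$ in the G\^ateaux sense (the projection $P'$ and the linear term $L_{n,\mu}u$ cause no difficulty).

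For (iv) I will argue term by term, using that $\alpha$, $z$, $\psi$, and hence $\nu$ and $\tilde\nu$, are $T_\delta$-invariant while $w \mapsto e^{i\delta} w$. Each summand of $J_1$ carries exactly one uncompensated factor of $w$ — in the cubic term $g^2(\overline w \times w) Jw$ the prefactor $\overline w \times w$ is itself $T_\delta$-invariant — and together with the linearity of $H_1$ this will give $F_1(\mu, T_\delta u) = e^{i\delta} F_1(\mu, u)$. For the other three components, the $w$-dependent building blocks $(\curl_\nu w) J \overline w$, $\overline w_1 w_2$ and $|w|^2$ are all $T_\delta$-invariant and the remaining terms are $w$-independent, so $J_2$, $J_3$, $J_4$ are invariant, and hence $F(\mu, T_\delta u) = T_\delta F(\mu, u)$.

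Item (v) I expect to be the main obstacle, even though the final argument is short. For the real-valued components, $H_2$, $H_3$, $H_4$ are real symmetric and $J_2$, $J_3$, $J_4$ take real values (the first two begin with $\Im[\cdots]$ and $J_4$ is manifestly a real polynomial), so the pairings $\langle \alpha, F_2 \rangle$, $\langle z, F_3 \rangle$ and $\langle \psi, F_4 \rangle$ are automatically real; $P_0$ preserves reality. The delicate assertion is $\langle w, F_1(\mu, u) \rangle_{L_n^2} \in \R$. The most conceptual route I would take invokes (iv): $F_1$ coincides with the first component of the $L^2$-gradient of the real-valued functional $u \mapsto \E_{\Omega'}(m^{n, r} + u)$, which is $T_\delta$-invariant since $T_\delta m^{n, r} = m^{n, r}$; differentiating this invariance at $\delta = 0$ yields $2\Im \langle w, F_1 \rangle_{L_n^2} = 0$. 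As a check I would also verify this termwise: $H_1$ is self-adjoint because $iJ$ is Hermitian, $\langle w, Mw \rangle$ is real because $M = \curl_\nu^* \curl_\nu - \curl_{a^n}^* \curl_{a^n}$ is self-adjoint, the $\psi$-dependent terms give manifestly real integrals, and the remaining two contributions are real because the factors $\overline w \cdot Jw$, $-i(\curl \tilde\nu)$ and $\overline w \times w$ are each purely imaginary.
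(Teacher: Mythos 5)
Your proof is correct on all five items. For (i)--(iv) it follows the same route as the paper (which compresses these to ``$F$ is a polynomial in $u$ and its covariant derivatives'' and ``an easy calculation''); your degree-counting for (ii)--(iii), the Sobolev/Banach-algebra justification for (i), and the term-by-term phase bookkeeping for (iv) are exactly the details being elided there. Where you genuinely diverge is in (v): the paper reduces to $\langle w,F_1(\mu,u)\rangle_{L_n^2}\in\R$ and then simply writes out the quadratic form in the unshifted coordinates, $\langle w,F_1\rangle = \frac{1}{|\Omega'|}\int_{\Omega'}\big(|\curl_{\nu}w|^2 + \tfrac{g^2}{2}\phi^2|w|^2 + i(\curl\nu)\,\overline w\times w + g^2|\overline w\times w|^2\big)$, checking each term is real (the third because $\overline w\times w$ is purely imaginary and $\nu$ is real). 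Your primary argument instead derives realness from the $T_\delta$-invariance of the real functional $u\mapsto\E_{\Omega'}(m^{n,r}+u)$ by differentiating at $\delta=0$ --- a Noether-type argument of the same flavour the paper itself uses later in Proposition \ref{prop:reconstr} to get $\divv J=0$. This is valid (since $F_1=G_1(m^{n,r}+u)$ is indeed the $w$-component of the $L^2$-gradient and $T_\delta m^{n,r}=m^{n,r}$), and it buys a structural explanation of why (v) holds rather than a coincidence of signs; the paper's direct computation, which you also reproduce as a cross-check, is shorter and self-contained because it does not lean on the gradient identification. Either route is acceptable.
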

\begin{proof}
 $(i)$ follows because $F$ is a polynomial in the components of $u$ and their first- and second-order (covariant) derivatives. $(ii)$, $(iii)$ and $(iv)$ follow from an easy calculation (in fact, $u$ and $L_{n, \mu}$ were defined so that $(ii)$ and $(iii)$ hold). For $(v)$, it suffices to show that $\langle w,F_1(\mu,u)\rangle_{L_n^2}\in\R$. To simplify notation we return to the coordinates $(w,a,z,\phi) = (w,\frac1e a^n+\alpha,z,\frac{\sqrt{2\mu}}{g} +\psi)$. %and refer to equations \eqref{WS-resc}. 
 Then
 \begin{align}
  \langle w,F_1(\mu,u)\rangle_{L_n^2} &= \frac{1}{|\Omega'|}\int_{\Omega'} |\curl_{\nu}w|^2 + \frac{1}{|\Omega'|}\int_{\Omega'} \frac{g^2}{2}\phi^2|w^2|  \nonumber\\ 
  + &  \frac{1}{|\Omega'|}\int_{\Omega'} i (\curl \nu) (\overline{w}\times w) + \frac{1}{|\Omega'|}\int_{\Omega'} g^2 |\overline{w} \times w|^2.
 \end{align}
 The first, second and fourth terms are clearly real, while the third term is real because $\nu$ is real and $\overline{w}\times w$ is imaginary.
\end{proof}

\section{Reduction to a Finite-Dimensional Problem}\label{sec:rf-dp}

In this section we shall reduce solving equation \eqref{F-eq'}, i.e. $F(\mu, u)=0$, with  $u= (u_1, u_2, u_3, u_4)$ $ \equiv (w, \al, z, \psi)$ %$ u\equiv (w,\alpha,z,\psi)$ (see \eqref{F-def-3})
and $F: \R_{>0} \times\cX\ra \cY$ defined in \eqref{F-eq'} - \eqref{Jmu}), to a finite-dimensional problem. %It is convenient now to relabel the components of $u\equiv (w,\alpha,z,\psi)$ as $u\equiv (u_1, \dots, u_4)$.
To this end, we use the Lyapunov-Schmidt reduction.

%Recall that $H(\mu)$ is the complexified linearization defined in \eqref{Hmu}. %$H(\mu)\equiv \text{Diag}( H_w, H_a, H_z, H_{\phi})$ is the  linearized operator of $F$ at $u=0$: $H(\mu) = dF(\mu,0)$. (it is convenient to use the projection  $P_{n, \mu} $ onto $\cK:= \Null(L_{n,\mu}-\mu+n)$ instead of  $P$ onto $\cK:= \Null(L_{n,\mu=n})$; also for $P$,  $H_1(\mu)$ is not the one defined in \eqref{H1-deco})
Recall that $L_{n,\mu}$ is defined in \eqref{Ln}.  Let $P$ be the orthogonal projection onto $\cK:= \Null(L_{n,\mu=n})$, which 
can be written explicitly as
\begin{align} \label{proj-def}
 &P = P_1 \oplus P_2 \oplus 0 \oplus 0, \\
 \label{P1-def} &P_1w  :=   -\frac{1}{2\pi i} \oint_{\g_n} (H_1(n)-z)^{-1}w\; dz, \\
 &P_2\alpha  :=   \langle \alpha \rangle,
\end{align}
 where $H_1(n)$ is defined in \eqref{Hw}, $\g_n$ is any simple closed curve in $\C$ containing the eigenvalue $0$ and no other eigenvalues of $H_1(n)$ (see Proposition \ref{H_w-prop}), and $\langle\alpha\rangle$ is the mean value of $\alpha$ in $\Omega'$, $\langle \alpha \rangle :=  \frac{1}{|\Omega'|} \int_{\Omega'} \alpha$. $P_1$ is a projection onto $\Null(H_1(n))$ (spanned by vectors of the form \eqref{chi-def}). Since $H_1(n)$ is self-adjoint, $P_1$ is an orthogonal projection (relative to the inner product of $L^2_n$). %Because of the zero eigenvector \eqref{Gf0}, $\Ran P_1$ and therefore 
 By Theorem \ref{thm:WS-lin-op-lowestEV}, $\cK:= \Null(L_{n,\mu=n})$ is $(n+1)$-dimensional.

 Let $P^{\perp}=1-P$ be the projection onto the orthogonal complement of $\cK$. Applying $P$ and $P^{\perp}$ to the equation $F(\mu,u)=0$ (see \eqref{F-eq'}), we split it into two equations for two unknowns as
\begin{align}\label{projF1}
& P F(\mu, v+u') =0, \\ \label{projF2}
& P^{\perp}F(\mu,v+u') =0, %\\ &v:= P u, \ u':= P^{\perp}u.
\end{align}
where $v:= P u, \ u':= P^{\perp}u$.

Our next goal is to solve \eqref{projF2} for $u'$ in terms of $\mu$ and $v$.
For $n=1$, $\cK$ is two-dimensional and we write $v = (v_1,v_2,v_3,v_4)\equiv  (v_1,v_2, 0, 0)\in  \cK$. %and  denoted by $\del_{v_i}$ the partial (real) G\^ateaux derivatives with respect to $v_i$: $\del_{v_i}f(v)\xi:=\p_s\big|_{}f(v+s\xi_i)$, where $\xi=(\xi_1, \xi_2, \xi_3, \xi_4)$. 
 Let $\cX^{\perp}:= P^{\perp}\cX=\cX\ominus\cK$ and $\cY^{\perp}:= P^{\perp}\cY=\cY\ominus\cK$, and let $\p_i \equiv  \p_{x_i}$.
\begin{proposition}\label{prop:sol-u'} %{Ou2}
	There is a neighbourhood $U \subset \R_{>0} \times \cK$ of $(n,0)$ such that for every $(\mu,v)\in U$, equation \eqref{projF2} for $u'$ has a unique solution $u' = u'(\mu,v)= (u'_1, u'_2, u'_3, u'_4)$. Furthermore, this %family of 
	solution %$u' = %( w', a', z', \phi')\equiv  
	% (u'_1, u'_2, u'_3, u'_4)$ %$ v=(v_1, v_2, v_3, v_4)$
	 has the following properties:
	\begin{align}\label{u2-anal}
	&u': \R_{>0} \times \cK \to \cX^\perp \text{ is continuously %G\^ateau
	  differentiable of all orders;} \\
 \label{Ow2-eq}	&\|(\nabla_{a^n})_j^m u_1'\|_{\cH^2_n} = \cO(\|v\|_{\cX}^2),\\ % \; j=1,2, \; m=0,1; \\
\label{Op2-eq}	&\|\partial_j^m u_k'\|_{\cH^2_k} = \cO(\|v\|^2_{\cX}),\\ % \; j=1,2, \; m=0,1, \; k=	2, 3, 4;\\	
 \label{dv-u1'-est}&||\p_{v_i} (\nabla_{a^n})_j^m u'_1(\mu, v^{i})\|_{\cH^2_n} \lesssim \|v^{i}\|_{\cX},\\ %\ i=1,... 4,\; m=0,1,\; j=1,2;\\
 \label{dv-uk'-est}&\|\p_{v_i} \p_j^m u'_k(\mu, v^{i})\|_{\cH^2_0} \lesssim \|v^{i}\|_{\cX},\\ %\ i=1,... 4,\; m=0,1,\; j=1,2, \; k=2,3,4;\\
\label{dmu-u'-est}&\|\partial_{\mu} u'(\mu,v)||_{\cX} \lesssim \|v\|^2_{\cX};\end{align}
where $i=1,... 4,\; m=0,1,\; j=1,2, \; k=2,3,4$, $v = (v_1,v_2,v_3,v_4)$, $v^{i} \equiv %v_{\hat i} \equiv 
v|_{v_i=0}$ %, for $v = (v_1,v_2,v_3,v_4)$ and $i=1,...,4$, and %the norms on the left-hand side of the last line are taken in 
and	$\cH^2_k=\cH^2_0$, $\cH^2_0$, $\cH^2$ for $k=2, 3, 4$. %, respectively. 	%Write $u' = ( w', \alpha', z', \psi')$ and let $u'_q \equiv  u' |_{q' = 0}$ for $q\in\{w,\alpha,z,\psi\}$. Then $||\partial_{q'} (\nabla_{a^n})_i^m v(\mu,u')||_{\cX^*} \lesssim ||u'_q||_{\cX}$ ($i=1,2$, $m=0,1$) and $||\partial_{\mu} v(\mu,u')||_{\cX} \lesssim ||u'||^2_{\cX}$ for $\mu$ sufficiently small.
%\end{corollary}
\end{proposition}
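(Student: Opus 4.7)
The approach is the standard Lyapunov--Schmidt reduction: view $P^{\perp}F(\mu,v+u')=0$ as an equation for $u'\in\cX^{\perp}$ with parameters $(\mu,v)\in\R_{>0}\times\cK$, and solve it via the implicit function theorem. At the base point $(\mu,v,u')=(n,0,0)$, Proposition \ref{F-prop}(ii) gives $P^{\perp}F(n,0)=0$, and the Fr\'echet derivative in $u'$ at this point is $P^{\perp}L_{n,n}P^{\perp}$. By Theorem \ref{thm:WS-lin-op-lowestEV} the operator $L_{n,n}$ has purely discrete spectrum with $\Null L_{n,n}=\cK$ and $0$ an isolated eigenvalue, so $P^{\perp}L_{n,n}P^{\perp}:\cX^{\perp}\to\cY^{\perp}$ is a topological isomorphism. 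The IFT then produces a neighbourhood $U$ of $(n,0)$ and a unique $u':U\to\cX^{\perp}$ solving $P^{\perp}F(\mu,v+u'(\mu,v))=0$; the $C^{\infty}$ smoothness \eqref{u2-anal} is inherited from the smoothness of $F$ in Proposition \ref{F-prop}(i).

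\textbf{Vanishing to second order.} Proposition \ref{F-prop}(ii) also yields $F(\mu,0)=0$, so by uniqueness $u'(\mu,0)=0$. Differentiating the reduced equation in $v$ at $v=0$ produces
\[ P^{\perp}L_{n,\mu}\bigl(\eta+\partial_{v}u'(\mu,0)\,\eta\bigr)=0,\qquad \forall\,\eta\in\cK. \]
The decisive structural observation is that $L_{n,\mu}-L_{n,n}$ is diagonal multiplication by the constants $\mu-n,\,0,\,(\mu-n)/\cos^{2}\theta,\,4\lambda(\mu-n)/g^{2}$ on the four components. Hence for $\eta=(\eta_{1},\eta_{2},0,0)\in\cK$ it produces $((\mu-n)\eta_{1},0,0,0)\in\cK$, and combined with $L_{n,n}\eta=0$ this gives $P^{\perp}L_{n,\mu}\eta=0$ for every $\eta\in\cK$. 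Invertibility of $P^{\perp}L_{n,\mu}|_{\cX^{\perp}}$ near $\mu=n$ then forces $\partial_{v}u'(\mu,0)=0$, and a first-order Taylor expansion yields $\|u'(\mu,v)\|_{\cX}=\cO(\|v\|_{\cX}^{2})$. \emph{This is the main obstacle}: without this invariance one would only obtain $u'=\cO(\|v\|^{2}+|\mu-n|\,\|v\|)$, which is too weak for the quantitative bifurcation analysis performed in the sections that follow.

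\textbf{Remaining estimates.} The $m=0$ cases of \eqref{Ow2-eq}--\eqref{Op2-eq} are immediate from $u'=\cO(\|v\|_{\cX}^{2})$. For $m=1$, a Sobolev bootstrap applied to $L_{n,\mu}u'=-P^{\perp}J(\mu,v+u')$, using that $J$ is polynomial in $u$ and its first derivatives and that each $H_{i}$ is a second-order elliptic operator on the torus $\R^{2}/\cL'$, upgrades $u'$ to the required higher Sobolev class while preserving the quadratic bound. Estimates \eqref{dv-u1'-est}--\eqref{dv-uk'-est} follow from $\partial_{v_{i}}u'(\mu,0)=0$ (a corollary of $\partial_{v}u'(\mu,0)=0$) together with the fundamental theorem of calculus in the remaining components of $v^{i}$. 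Finally, for \eqref{dmu-u'-est}, differentiating the reduced equation in $\mu$ gives
\[ P^{\perp}L_{n,\mu}\,\partial_{\mu}u'=-P^{\perp}(\partial_{\mu}L_{n,\mu})(v+u')-P^{\perp}\partial_{\mu}J(\mu,v+u'); \]
the contribution $(\partial_{\mu}L_{n,\mu})v$ again lies in $\cK$ and is annihilated by $P^{\perp}$, while every remaining term is $\cO(\|u'\|_{\cX})=\cO(\|v\|^{2})$ since $J$ and $\partial_{\mu}J$ vanish to order at least $\|u\|^{2}$ at $u=0$. Inverting $P^{\perp}L_{n,\mu}|_{\cX^{\perp}}$ then yields $\|\partial_{\mu}u'\|_{\cX}\lesssim\|v\|^{2}$, completing the proof.
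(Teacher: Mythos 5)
Your proposal is correct and follows the same skeleton as the paper: Lyapunov--Schmidt via the implicit function theorem applied to $P^{\perp}F(\mu,v+u')=0$, using that $P^{\perp}L_{n,\mu=n}P^{\perp}|_{\cX^{\perp}}$ is invertible by Theorem \ref{thm:WS-lin-op-lowestEV}. Where you diverge is in how the quadratic smallness is extracted. The paper rewrites \eqref{projF2} as the fixed-point identity $L_{n,\mu}^{\perp}u'=-P^{\perp}P'J(\mu,u)$ and reads off $\|u'\|_{\cX}\lesssim\|J(\mu,u)\|_{\cY}\lesssim\|u\|_{\cX}^{2}$ directly (with $u=v+u'$ and absorption of the $\|u'\|^{2}$ term), then obtains \eqref{dv-u1'-est}--\eqref{dmu-u'-est} by Taylor expansions of $u'$ and $\partial_{\mu}u'$ in the components of $v$. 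You instead establish $u'(\mu,0)=0$ and $\partial_{v}u'(\mu,0)=0$ by differentiating the reduced equation, and for \eqref{dmu-u'-est} you differentiate the equation in $\mu$ rather than Taylor-expanding. Both mechanisms work, and both ultimately hinge on the same structural fact you rightly single out, namely that $L_{n,\mu}-L_{n,n}$ acts as scalars on the components so that $L_{n,\mu}\cK\subseteq\cK$ and $P^{\perp}L_{n,\mu}v=0$ for $v\in\cK$; the paper uses this implicitly when it passes from \eqref{projF2} to $L_{n,\mu}^{\perp}u'=-P^{\perp}P'J(\mu,u)$, so your making it explicit is a genuine clarification rather than a detour. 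One small caveat: for the $m=1$ cases of \eqref{Ow2-eq}--\eqref{Op2-eq} (which ask for a bound one derivative above the $\cH^{2}$ regularity delivered by the IFT), your appeal to elliptic bootstrap on $L_{n,\mu}^{\perp}u'=-P^{\perp}P'J$ is the right move and should be spelled out at the same level of detail as the $m=0$ case; the paper is equally terse here, so this is not a gap so much as a shared omission.
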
 \begin{proof}
	Define $F^{\perp}:\R_{>0}\times\cK\times \cX^{\perp} \to \cY^{\perp}$ by
	\begin{equation}
	F^{\perp}(\mu,v,u')  :=   P^{\perp} F(\mu,v+u').
	\end{equation}
	By Proposition \ref{F-prop} $(i)$ and $(ii)$, $F^{\perp}$ is continuously differentiable of all orders as a map between Banach spaces and $F^{\perp}(\mu,0,0)=0$ for all $\mu\in\R_{>0}$. Furthermore, 
\begin{align} \label{deluFperp}	\del_{u'} F^{\perp}(\mu,0,0) = P^{\perp} L_{n,\mu}P^{\perp}|_{\cX^\perp},\end{align} which is invertible for $\mu=n$ because $P^\perp$ is the projection onto the orthogonal complement of $\cK = \Null(L_{n,\mu=n})$. By the Implicit Function Theorem (see e.g. \cite{berger}), there exists a function $u'(\mu,v)$ with continuous derivatives of all orders such that for $(\mu,v)$ in a sufficiently small neighbourhood $U\subset\R_{>0}\times\cK$ of $(n,0)$, $(\mu,v,u')$ solves \eqref{projF2} if and only if $u' = u'(\mu,v)$. This proves the first statement and property \eqref{u2-anal}.

We define the operator %Proposition \ref{F-prop} $(i)$ and $(iii)$, 
\begin{align} \label{Ln-perp}L_{n, \mu}^{\perp}  :=   P^{\perp}L_{n, \mu}P^{\perp}|_{\cX^{\perp}}: \cX^{\perp}\ra \cY^{\perp}.\end{align}
Then by \eqref{F-eq'} and \eqref{deluFperp}, we can write equation \eqref{projF2} as $L_{n, \mu}^{\perp} u' = - P^{\perp} P' J(\mu,u)$.
By Theorem \ref{thm:WS-lin-op-lowestEV} and the relation 
$\cK:= \Null(L_{n,\mu=n})=\Null(L_{n,\mu}-\mu+n)$, for $\mu$ in a neighbourhood of $n$, the operator $L_{n, \mu}^{\perp}$ has a uniformly bounded inverse $(L_{n, \mu}^{\perp})^{-1}:\cY^\perp \to \cX^\perp$. Hence %by equation \eqref{F-eq'}, equation \eqref{projF2}
 equation $L_{n, \mu}^{\perp} u' = - P^{\perp} P' J(\mu,u)$, with $(\mu, v)\in U$ (replacing $U$ with a smaller neighbourhood if necessary), is equivalent to
	\begin{align} \label{u-perp-J}
	u' = -(L_{n, \mu}^{\perp})^{-1} P^{\perp} P' J(\mu,u);
	\end{align}
	hence
	\begin{align} \label{J-ineq}
	\|u'\|_{\cX} \lesssim \|J(\mu,u)\|_{\cY}, % \lesssim \|u\|_{\cX}^2,
	\end{align}
    uniformly in $\mu$. Recall that $\cX = \cH^2_n\oplus \cH^2_0\oplus \cH^2_0\oplus \cH^2$ and $\cY = L^2_n\oplus L^2_0\oplus L^2_0\oplus L^2$. $J(\mu,u)$ is a polynomial in the components of $u$ and their first-order (covariant) derivatives consisting of terms of degree at least $2$, so the left-hand side of \eqref{J-ineq} can be bounded above by a sum of products of one $\cL^2$-norm and at least one $\cL^{\infty}$-norm of these terms. $\cH^1$ is trivially continously embedded in $\cL^2$, and by the Sobolev Embedding Theorem, $\cH^1$ is continuously embedded in $\cL^{\infty}$. Therefore,
	\begin{align} \label{J-bds}
	%\|u^\perp\|_{\cX} \lesssim 
	\|J(\mu,u)\|_{\cY} \lesssim \|u\|_{\cX}^2.
	\end{align}
	Recalling that $u=v+u'$, this proves \eqref{Ow2-eq} and \eqref{Op2-eq} when $m=0$. The other case is proven similarly.
%\end{proof}

%\begin{proof}
 For $v = (v_1, \dots, v_4)$, we let $v_{\widehat i} \equiv  v |_{v_{i} = 0}$, $i=1, \dots, 4$. By the Taylor theorem for Banach spaces (see e.g. \cite{berger}), we have
    \begin{align}
   &  u'(\mu, v) = u'(\mu,v_{\widehat i}) + \del_{v_{ i}} u'(\mu,v_{\widehat i}) v_{i} + R_2(\mu,v_{\widehat i})(v_{i}), \\
    &  R_2(\mu,v_{\widehat i})(v_{i})  :=  \int_{0}^{1} (1-t) \del_{v_{ i}}^2 u'(\mu,v_{\widehat i} + tv_i)(v_{ i},v_{ i}) dt.
    \end{align}
    Let $(\mu,v) \in U$ with $\|v_{\widehat i}\|=\|v_{ i}\|=1$, and let $\epsilon>0$. Then
        \begin{align} \label{q-tay1}
     \|\del_{v_{ i}} u'(\mu,\epsilon v_{\widehat i})& \epsilon v_{ i} \|_{\cX} = \|u'(\mu, \epsilon v) - u'(\mu,\epsilon v_{\widehat i}) - R_2(\mu,\epsilon v_{\widehat i})(\epsilon v_{i})\|_{\cX} \nonumber \\
     &\leq \|u'(\mu, \epsilon v)\|_{\cX} + ||u'(\mu, \epsilon v_{\widehat i})\|_{\cX} \nonumber \\
      &+ \epsilon^2 \|v_{i}\|^2\sup\limits_{0\leq t\leq 1} (1-t) \|\del_{v_{i}}^2 u'(\mu,\epsilon v_{\widehat i} + t\epsilon v_{ i})\|^2_{\cX^* \otimes \cX^* \otimes \cX}  \nonumber \\
     &\lesssim \epsilon^2.
    \end{align}
    with the norm taken in the appropriate space for $v_{ i}$. Taking the supremum over all $v_{ i}$ with $||v_{ i}||=1$ gives
    \begin{align}
     \|\del_{v_{ i}} u'(\mu,\epsilon v_{\widehat i})\|_{\cX } \lesssim \epsilon, \ \|v_{\widehat i}\|_{\cX} = 1,
    \end{align}
    proving \eqref{dv-u1'-est} - \eqref{dv-uk'-est} for $m=0$. The other cases are proven in exactly the same way.
    
    Again by Taylor's Theorem,
	\begin{align}
	&\partial_\mu u'(\mu, v) = \partial_\mu u'(\mu,0) + \partial_\mu \del_{v} u'(\mu,0) v +\tilde R_2(\mu,0)(v), \\
	&\tilde R_2(\mu,0)(v) :=  \int_{0}^{1} (1-t) \partial_\mu \del_{v}^2 u'(\mu, tv)(v,v) dt.
	\end{align}
	By %Lemma \ref{Ou2} and the first inequality of this corollary, 
	Equations \eqref{Op2-eq} and \eqref{dv-u1'-est} - \eqref{dv-uk'-est} with $m=0$, we have $u'(\mu,0) = 0$ and $\del_{v} u'(\mu,0) = 0$, so
	\begin{align}
	 \|\partial_\mu u'(\mu, v)\|_{\cX} &= \|\tilde R_2(\mu,0)(v)\|_{\cX} \\
	 &\leq \|v\|_{\cX}^2\sup\limits_{0\leq t\leq 1} (1-t) \|\partial_{\mu}\del_{v}^2 u'(\mu,tv)\|^2_{\cX^* \otimes \cX^* \otimes \cX}  \\
	  &\lesssim \|v\|^2_{\cX}, 
	\end{align}
	proving \eqref{dmu-u'-est}.
\end{proof}

 We plug the solution $u' = u'(\mu,v)$ into equation \eqref{projF1} to get the \emph{bifurcation equation}
 \begin{equation}\label{bif-eq}
  \g(\mu,v):= P F(\mu,v+u'(\mu,v))=0.
 \end{equation}

\begin{corollary} \label{cor:red2v} %{fin-dim}
 In a neighbourhood of $(n,0)$ in $\R_{>0}\times\cX$, the pair $(\mu,u)$ %with $u=(w, a, z, \phi)$ 
 solves %\eqref{WS-eq1-resc} - \eqref{WS-eq4-resc} or 
 \eqref{F-eq'} if and only if $(\mu,v)$ %with $v=Pu$ 
 solves the finite-dimensional equation \eqref{bif-eq}. Moreover, a solution of \eqref{F-eq'} can be constructed from a solution $(\mu,v)$ of \eqref{bif-eq} by setting $u = v + u'(\mu,v)$, where $u'(\mu,v)$ is given by Proposition \ref{prop:sol-u'}.
\end{corollary}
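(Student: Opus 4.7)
The plan is to carry out the standard Lyapunov–Schmidt bookkeeping: everything has already been set up so the corollary becomes an essentially formal consequence of Proposition \ref{prop:sol-u'} plus the orthogonal decomposition $\cX = \cK \oplus \cX^\perp$, $\cY = \cK \oplus \cY^\perp$ induced by $P$ and $P^\perp$.

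First I would establish the forward direction. Suppose $(\mu,u)$ lies in a small enough neighbourhood of $(n,0)$ in $\R_{>0}\times\cX$ and satisfies $F(\mu,u)=0$. Decompose $u = v + u''$ with $v := Pu \in \cK$ and $u'' := P^\perp u \in \cX^\perp$. Since $P$ and $P^\perp$ commute with $F$ in the sense of simply extracting components in $\cY$, the single equation $F(\mu,u)=0$ is equivalent to the pair \eqref{projF1}–\eqref{projF2}. In particular, $(\mu,v,u'')$ satisfies $P^\perp F(\mu,v+u'')=0$. Shrinking the neighbourhood if necessary so that $(\mu,v,u'')$ lies in the domain of uniqueness granted by Proposition \ref{prop:sol-u'}, the uniqueness clause forces $u'' = u'(\mu,v)$. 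Substituting back into \eqref{projF1} yields $\gamma(\mu,v) = PF(\mu,v + u'(\mu,v)) = 0$.

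For the reverse direction, assume $(\mu,v) \in U \subset \R_{>0}\times \cK$ solves the bifurcation equation $\gamma(\mu,v)=0$. Set $u := v + u'(\mu,v)$; then $Pu = v$ and $P^\perp u = u'(\mu,v) \in \cX^\perp$, so $u \in \cX$. By definition of $u'(\mu,v)$ in Proposition \ref{prop:sol-u'}, equation \eqref{projF2} holds, i.e.\ $P^\perp F(\mu, u)=0$. By the definition of $\gamma$ and our hypothesis, $PF(\mu,u) = \gamma(\mu,v) = 0$. Adding the two projections gives $F(\mu,u) = PF(\mu,u) + P^\perp F(\mu,u) = 0$, which is \eqref{F-eq'}.

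There is essentially no obstacle here — all the analytic content (existence, uniqueness, smoothness and size estimates for $u'(\mu,v)$, together with the invertibility of $L_{n,\mu}^\perp$ near $\mu = n$) has been absorbed into Proposition \ref{prop:sol-u'}. The only care needed is to make sure the neighbourhood in $\R_{>0}\times\cX$ referred to in the corollary is small enough to lie inside the product of the neighbourhood $U\subset\R_{>0}\times\cK$ from Proposition \ref{prop:sol-u'} and a ball in $\cX^\perp$ on which the implicit function theorem gave uniqueness. Choosing the neighbourhood as $\{(\mu,u) : (\mu,Pu)\in U,\ \|P^\perp u\|_{\cX^\perp} < \eta\}$ for a sufficiently small $\eta>0$ secures this, and the last sentence of the corollary (the reconstruction formula $u = v + u'(\mu,v)$) is simply the formula used in the reverse direction.
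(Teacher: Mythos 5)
Your argument is correct and is exactly the standard Lyapunov--Schmidt reconstruction that the paper relies on implicitly (the corollary is stated without a separate proof, as an immediate consequence of Proposition \ref{prop:sol-u'} and the splitting \eqref{projF1}--\eqref{projF2}). Your extra care about shrinking the neighbourhood in $\R_{>0}\times\cX$ so that the uniqueness clause of Proposition \ref{prop:sol-u'} applies to $P^\perp u$ is the one point worth making explicit, and you make it.
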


Since $F:\R_{>0}\times\cX\to\cY$ and $u':\R_{>0}\times\cK\to\cY^\perp$ have been shown to be continuously differentiable of all orders, we conclude:
\begin{corollary} \label{g-anal}
	$\g:\R\times\cK\to\cK$ is continuously G\^ateau differentiable of all orders.
\end{corollary}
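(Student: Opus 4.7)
The plan is to read off the regularity of $\gamma$ from its definition as a composition of three maps whose regularity has already been established. Write
\[
\gamma(\mu,v) = P \circ F \circ \Phi(\mu,v), \qquad \Phi(\mu,v) := (\mu,\, v + u'(\mu,v)),
\]
where $P:\cY\to\cK$ is the bounded linear projection from \eqref{proj-def}, $F:\R_{>0}\times\cX\to\cY$ is the map analyzed in Proposition \ref{F-prop}, and $\Phi:\R_{>0}\times\cK\to\R_{>0}\times\cX$ is built from the solution $u'(\mu,v)$ of Proposition \ref{prop:sol-u'}.

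First I would observe that each factor is continuously G\^ateaux differentiable of all orders: Proposition \ref{F-prop}(i) handles $F$; Proposition \ref{prop:sol-u'}, property \eqref{u2-anal}, handles $u'$, and hence $\Phi$ (since addition and the identity in $\R_{>0}$ are smooth); and $P$ is linear and bounded, hence smooth trivially, with $\del P = P$. Then I would invoke the chain rule for G\^ateaux derivatives between Banach spaces (applied iteratively to obtain derivatives of all orders), which yields, for instance,
\[
\del_v \gamma(\mu,v)\,h = P\,\del_u F\bigl(\mu, v + u'(\mu,v)\bigr)\bigl(h + \del_v u'(\mu,v)\,h\bigr)
\]
and analogously for $\del_\mu \gamma$; continuity of these derivatives in $(\mu,v)$ follows from the continuity of $\del_u F$, $\del_\mu F$, $\del_v u'$, $\del_\mu u'$ together with continuity of composition and the boundedness of $P$. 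Higher-order derivatives are handled by induction using Faà di Bruno-type expansions.

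Finally, since $\cK$ is finite-dimensional (of dimension $n+1$ by Theorem \ref{thm:WS-lin-op-lowestEV}), $P$ has finite-dimensional range, so $\gamma$ indeed takes values in the finite-dimensional space $\cK$, justifying the statement as written. There is no real obstacle here; the only point worth checking carefully is that the chain rule is valid in the G\^ateaux (as opposed to Fr\'echet) setting at the regularity available to us, but this is standard when the intermediate derivatives are continuous maps into the relevant operator spaces, which is exactly what Propositions \ref{F-prop}(i) and \ref{prop:sol-u'} supply.
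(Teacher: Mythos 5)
Your proposal is correct and follows essentially the same route as the paper, which simply deduces the corollary from the already-established smoothness of $F$ (Proposition \ref{F-prop}(i)) and $u'$ (Proposition \ref{prop:sol-u'}) together with the linearity of $P$; you have merely made the composition and the chain-rule step explicit.
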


Furthermore, $\g(\mu,v)$ inherits the following symmetry of $F(\mu,u)$, which we will use to find a solution of \eqref{bif-eq}:

\begin{lemma} \label{g-prop}
 Let $T_{\delta}$ be given by \eqref{T-delta-def}. For every $\delta\in\R$ and $(\mu, v)$ in a neighbourhood of $(n,0)$, we have 
 \begin{align}\label{u2-sym}
  u'(\mu,T_{\delta}v) &= T_{\delta} u'(\mu,v), \\ \label{gamma-sym}
  \g(\mu,T_{\delta}v) &= T_{\delta}\g(\mu,v).
 \end{align}
\end{lemma}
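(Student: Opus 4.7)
The plan is to deduce both identities from the $T_\delta$-equivariance of $F$ (Proposition \ref{F-prop}(iv)) together with the uniqueness assertion of Proposition \ref{prop:sol-u'}. The one preliminary fact to check is that $T_\delta$ commutes with the projection $P$ (and hence with $P^\perp = 1-P$). Since $T_\delta$ acts as the identity on the second, third and fourth components, and as multiplication by the scalar $e^{i\delta}$ on the first component, while $P = P_1\oplus P_2\oplus 0\oplus 0$ and $P_1$ is a bounded linear operator on $L^2_n$, the scalar $e^{i\delta}$ commutes with $P_1$; likewise $T_\delta$ trivially commutes with $P_2$ and with the zero blocks. Thus $T_\delta P = P T_\delta$ and $T_\delta P^\perp = P^\perp T_\delta$ on $\cX$ and $\cY$. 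Note also that $T_\delta$ preserves $\cK$ and $\cX^\perp$, $\cY^\perp$ individually.

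For \eqref{u2-sym}, I would start from the defining equation $P^\perp F(\mu, v + u'(\mu,v)) = 0$. Applying $T_\delta$ and using Proposition \ref{F-prop}(iv) together with the commutation $T_\delta P^\perp = P^\perp T_\delta$ gives
\begin{equation*}
0 = T_\delta P^\perp F(\mu, v + u'(\mu,v)) = P^\perp T_\delta F(\mu, v + u'(\mu,v)) = P^\perp F(\mu, T_\delta v + T_\delta u'(\mu,v)).
\end{equation*}
Hence $T_\delta u'(\mu,v) \in \cX^\perp$ is a solution of the equation $P^\perp F(\mu, T_\delta v + \cdot) = 0$, which by Proposition \ref{prop:sol-u'} has a unique solution in a neighbourhood of $(n,0)$, namely $u'(\mu, T_\delta v)$. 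Provided $(\mu,v)$ lies in a sufficiently small neighbourhood of $(n,0)$ that $(\mu, T_\delta v)$ also lies in the neighbourhood $U$ (which holds since $T_\delta$ is an isometry on $\cK$), uniqueness yields $u'(\mu, T_\delta v) = T_\delta u'(\mu, v)$.

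For \eqref{gamma-sym}, I would substitute \eqref{u2-sym} into the definition $\g(\mu, T_\delta v) = PF(\mu, T_\delta v + u'(\mu, T_\delta v))$ and again use Proposition \ref{F-prop}(iv) and the commutation $T_\delta P = P T_\delta$:
\begin{align*}
\g(\mu, T_\delta v) &= P F(\mu, T_\delta v + T_\delta u'(\mu, v)) = P T_\delta F(\mu, v + u'(\mu,v)) \\
&= T_\delta P F(\mu, v + u'(\mu, v)) = T_\delta \g(\mu, v).
\end{align*}

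The proof has no real obstacle; the only subtlety worth being explicit about is the commutation of $T_\delta$ with the projections, which relies on the block-diagonal structure of $P$ and the fact that $T_\delta$ acts on the first block as a scalar. Everything else is a bookkeeping consequence of equivariance and the uniqueness statement in the Lyapunov--Schmidt reduction.
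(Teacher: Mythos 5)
Your proof is correct and follows essentially the same route as the paper's: both arguments rest on the $T_\delta$-equivariance of $F$ (Proposition \ref{F-prop}(iv)), the commutation $T_\delta P = P T_\delta$ coming from the block structure of $P$ and the scalar action of $T_\delta$ on the first component, and the uniqueness clause of Proposition \ref{prop:sol-u'}. Your added remarks on why $T_\delta$ commutes with the projections and on keeping $(\mu, T_\delta v)$ inside the neighbourhood $U$ are slightly more explicit than the paper's, but the argument is the same.
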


\begin{proof}
 For equation \eqref{u2-sym}, we note that by Proposition \ref{F-prop} $(iv)$
 \begin{align}
  P^{\perp}F(\mu,T_{\delta}v + T_{\delta} u'(\mu,v)) &= P^{\perp}T_{\delta}F(\mu,v + u'(\mu,v))\\ \notag & = T_{\delta}P^{\perp}F(\mu,v + u'(\mu,v)) = 0.
 \end{align}
 (Here we used $P^{\perp}T_{\delta} = T_{\delta}P^{\perp}$, which follows because $T_{\delta} = e^{i\delta}\oplus 1\oplus 1\oplus 1$ and $P^{\perp} = 1-P$ where $P$ is defined in \eqref{proj-def}.) %of the form $P=P_w\oplus P_a\oplus 0\oplus 0$.
 Since $u' = u'(\mu,T_{\delta}v)$ is the unique 
 solution to $P^{\perp}F(\mu, T_{\delta}v + u')=0$ 
 for $(\mu,v)$ in a neighbourhood $U\subset \R\times\cK$ 
 of $(n,0)$, we conlcude that $u'(\mu,T_{\delta}v) = 
 T_{\delta} u'(\mu,v)$.
 
 For equation \eqref{gamma-sym}, we note that by \eqref{u2-sym} and Proposition \ref{F-prop} $(iv)$,
 \begin{align}\nonumber
  \g(\mu,T_{\delta}v) &= PF(\mu,T_{\delta}v + u'(\mu,T_{\delta}v)) = PF(\mu, T_{\delta}(v + u'(\mu,v))) \\ &= T_{\delta}PF(\mu, v + u'(\mu,v)) = T_{\delta}\g(\mu,v)
 \end{align}
 (where again we used $PT_{\delta}=T_{\delta}P$).
\end{proof}

\section{The bifurcation result when $n=1$}\label{sec:bif-res}

\begin{theorem}\label{bifurcation-result}
 Assume that $n=1$ and $|1-b_*/b|\ll 1$, $b_* :=   M_W^2/e$. %and $\text{\emph{dim}} \Null (H_w+n) =1$. 
 Then there exists $\epsilon>0$ and a branch $(\mu_s,u_s)  :=  (\mu_s, w_s, \al_s,z_s, \psi_s)$, %\xi_s,  with $\phi_{r,s} = \sqrt{\frac{n}{eb(s)}}\vphi_0$ $\phi_r= \sqrt{2\mu}/g$ and
 with $s\in [0,\sqrt{\epsilon})$, of non-trivial solutions of equation \eqref{mu-u-syst}, %{F-eq'}, %or system \eqref{WS-eq1-resc} - \eqref{WS-eq4-resc}, with $\xi = \sqrt{2\mu_s}/g$, 
  unique modulo a gauge symmetry %(apart from the trivial solution $(\mu_s,u_s)=(0,0)$) 
 in a sufficiently small neighbourhood of the rescaled vacuum solution \eqref{gs-resc} in $\R_{>0}\times\cX$, such that
\begin{align}\label{bif-soln}
 \begin{cases}
  w_s = s \chi + sg_1(s^2), \\
  \al_s = %a^n + 
  g_2(s^2), \\
  z_s = g_3(s^2), \\
  \alpha\alpha_s = %\sqrt{2n}/g + 
  g_4(s^2), \\
\mu_{s} = n + g_5(s^2),  %\phi_{r,s} = \sqrt{2n}/g + g_5(s^2),
 \end{cases}
\end{align}
 where $\chi$ solves the eigenvalue problem $H_1(n) \chi=0$ (it is defined in \eqref{chi-def}, see Proposition \ref{H_w-prop}), $\mu:=  g^2 \xi^2/2= g^2 r^2\vphi_0^2/2$, 
 $g_1:[0,\epsilon)\to\cH_n^2$ and is orthogonal to $\Null(H_1(n))$, $g_2:[0,\epsilon)\to\cH^2_0$, $g_3:[0,\epsilon)\to\cH^2_0$, $g_4:[0,\epsilon)\to\cH^2$, $g_5:[0,\epsilon)\to\R_{>0}$, and $g_j$ for $j=1,\cdots, 5$ are functions, continuously differentiable of all orders in $s$, such that $g_j(0)=0$.
 
\end{theorem}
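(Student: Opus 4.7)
The approach is a Lyapunov--Schmidt reduction followed by the implicit function theorem (IFT), exploiting both the $U(1)$ equivariance of $\g$ (Lemma \ref{g-prop}) and the translation invariance of the rescaled WS system in order to handle the $3$-dimensional kernel $\cK = \Null(L_{n,\mu=n})$. By Corollary \ref{cor:red2v} it suffices to solve the bifurcation equation $\g(\mu,v)=0$ for $v=(v_1,v_2,0,0)\in\cK$ near $(n,0)$, where $v_1\in\C\chi$ (with $\chi$ from Proposition \ref{H_w-prop}) and $v_2\in\R^2$ is the constant part of $\al$.

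Two symmetry reductions simplify this equation. First, $T_\delta$-equivariance (Lemma \ref{g-prop}) lets me set $v_1=s\chi$ with $s\ge 0$ real, and property (v) of Proposition \ref{F-prop} forces $\langle\chi,\g_1\rangle\in\R$, so the $\g_1$-equation becomes a single real equation. Second, a direct check on \eqref{WS-eq1-resc}--\eqref{WS-eq4-resc} shows that $(0,c,0,0)$ is a solution for every constant $c\in\R^2$ (all nonlinearities contain $w$, $z$, or $\psi$), hence by the uniqueness in Proposition \ref{prop:sol-u'} one has $u'(\mu,(0,v_2,0,0))=0$ and $\g(\mu,(0,v_2,0,0))=0$. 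Combined with the parity under $T_\pi$ ($\g_1$ odd and $\g_2$ even in $s$), this yields the factorization
\[
\g_1(\mu, s\chi, v_2) = s\, h_1(\mu, s^2, v_2), \qquad \g_2(\mu, s\chi, v_2) = s^2\, h_2(\mu, s^2, v_2),
\]
with $h_1\in\R$, $h_2\in\R^2$ smooth in their arguments by Corollary \ref{g-anal}.

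The reduced system $(h_1,h_2)=0$ consists of $3$ real equations in the $3$ real unknowns $(\mu,v_2)$, with parameter $s^2$. At $(s,\mu,v_2)=(0,n,0)$ both equations vanish, and using $u'(\mu,0)=0$ together with estimates \eqref{dv-u1'-est}--\eqref{dv-uk'-est} one computes $h_1(\mu,0,0)=(\mu-n)\|\chi\|^2$, so $\partial_\mu h_1(n,0,0)\neq 0$. The $v_2$-block of the Jacobian is handled via translation equivariance: a translation of $u$ by $t\in\R^2$ shifts the constant part of $\al$ by $\tfrac{n}{2e}Jt$, and this is a continuous symmetry of $\g$. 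It can be used either to fix the slice $v_2=0$ (so that $h_2=0$ holds automatically along the slice), or to establish invertibility of $\partial_{v_2}h_2(n,0,0)$ by a leading-order computation. IFT then delivers functions $\mu(s^2)=n+g_5(s^2)$ and $v_2(s^2)$, continuously differentiable of all orders and both vanishing at $s=0$. Reconstructing $u_s = v(s)+u'(\mu_s,v(s))$ with $v(s)=(s\chi,v_2(s^2),0,0)$ and invoking the parity under $s\to -s$ (so $u_1'$ is odd and $u_2',u_3',u_4'$ are even in $s$) yields the forms claimed in \eqref{bif-soln}, with $g_1$ orthogonal to $\Null(H_1(n))$ because $u_1'\in\cH^2_n\ominus\C\chi$ by construction. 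Uniqueness modulo gauge follows from IFT uniqueness combined with Lemma \ref{g-prop}.

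The main obstacle is the $2$-dimensional portion of $\cK$ coming from constants in $\al$: this is not a genuine bifurcation direction but a continuous family of trivial (translationally equivalent) solutions, and it must be correctly identified with the translation symmetry and removed before the problem reduces to a scalar Crandall--Rabinowitz-type bifurcation in the single direction $\chi$, to which IFT can then be applied.
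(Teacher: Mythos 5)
Your skeleton is essentially the paper's: Lyapunov--Schmidt reduction, use of the $T_\delta$-equivariance and of Proposition \ref{F-prop}(v) to restrict to real $s\ge 0$, factorization $\g_1=s\,h_1$, $\g_2=s^2h_2$, and an implicit-function-theorem argument for the three real unknowns $(\mu,v_2)$ with parameter $s^2$ (the paper solves the two blocks sequentially, $h_2=0$ for $c=c(\mu,s^2)$ in Lemma \ref{gam-a-lem} and then $h_1=0$ for $\mu(s^2)$ in Lemma \ref{gam-w-lem}). Your observation that $(0,c,0,0)$ is an exact solution for every constant $c$ is correct and gives a clean route to $u'(\mu,(0,c,0,0))=0$ and hence to the factor $s^2$ in $\g_2$; your computation $\p_\mu h_1(n,0,0)=\|\chi\|^2$ also matches the paper.

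The gap is in your treatment of the $v_2$-block. The constant-$\al$ direction is \emph{not} a translation-orbit direction inside the fixed space $\cX$: a bare translation by $t$ changes the automorphy constants $c_s$ in \eqref{resc-gauge-1} by $\tfrac n2 s\times t$ and so leaves $\cX$, and the compensating linear gauge transformation $\gamma=\tfrac n2 Jt\cdot x$ needed to return to $\cX$ shifts $a$ by $-\tfrac{n}{2e}Jt$, exactly cancelling the shift $+\tfrac{n}{2e}Jt$ produced by $\tfrac1e a^n(x+t)=\tfrac1e a^n(x)+\tfrac{n}{2e}Jt$. The magnetic translation that does act on $\cX$ therefore leaves $\langle\al\rangle$ fixed (only discrete shifts by $\tfrac{2\pi}{e}(\cL')^*$ are available from large gauge transformations), so it can neither fix the slice $v_2=0$ nor force $h_2$ to vanish along a slice. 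The paper's computation confirms that $h_2=0$ is a genuine constraint: $\p_c\bar\g_2(n,0,0)=e\,\id\neq 0$ and the unique solution $c(\mu,s^2)=\cO(|s|^2)$ is in general nonzero, whereas a symmetry-generated direction would make $h_2$ vanish identically along it. Your fallback, invertibility of $\p_{v_2}h_2(n,0,0)$ ``by a leading-order computation,'' is also incomplete as stated: the explicit $c$-dependence of $\langle J_2\rangle$ contributes only $\tfrac12 e\,\id$, and the Lyapunov--Schmidt correction $\p_c u_1'$ contributes a second term of exactly the same order (Lemma \ref{lem:dw'dc}), which must be computed to exclude cancellation; here it happens to add, giving $e\,\id$, but a computation ignoring $\p_c u'$ does not establish invertibility. (Minor: $\cK$ has real dimension $2n+2=4$, not $3$; three real unknowns appear only after the $U(1)$ reduction of $s$.)
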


\DETAILS{
We shall need the following fact about continuously differentiable of all orders functions:

\begin{proposition}\label{Oprop}
	Let $X_1, ..., X_k$ and $Y$ be Banach spaces, and let $f: \R\times X_1 \times ... \times X_k \to Y$ be a function, continuously G\^ateau differentiable of all orders in $s$ and $x_i$ for $i\in\{1,...,k\}$, such that $f(s,x_1,...,x_k) = \cO(|s|^n)$ (i.e. $||f(s,x_1,...,x_k)|| \leq c(x_1,...,x_k) |s|^n$ with $s$ sufficiently close to $0$ for some $c(x_1,...,x_k) > 0$ not depending on $s$). Then $\partial_{x_i} f = \cO(|s|^n)$ for $i\in\{1,...,k\}$.
\end{proposition}

\begin{proof}
 This follows by differentiating each locally convergent power series of $f$ term-by-term within its radius of convergence.
\end{proof}
}

\begin{proof}[Proof of Theorem \ref{bifurcation-result}]
	For the proof below, recall that we denote the partial (real) G\^ateaux derivatives with respect to $\#$ by $\del_\#$, and let $\p_i \equiv \p_{x_i}$.
	
By Proposition \ref{prop:reconstr},   solving  equation \eqref{mu-u-syst} is equivalent to solving  \eqref{F-eq'}. By Corollary \ref{cor:red2v}, solving  \eqref{F-eq'} is equivalent to solving the bifurcation equation \eqref{bif-eq}. Hence, we address the latter equation. % with $a = a^n + \alpha$, $\phi = \xi + \psi$ and $\mu = \frac{g^2}{2} \xi$.
 
Recall that $P$ is the projection onto $\cK = \Null L_{n,\mu=n}=\Null(H_1(n)) %\times\Null(\bar h_1+n)
  \times\{\text{constants}\}\times \{0\}\times \{0\}$. The projection onto constant vector fields in $\cH_0^2$ can be written as the mean value $\langle\alpha\rangle:= \frac{1}{|\Omega'|}\int_{\Omega'}\alpha$. Since $\text{dim} \Null (H_1(n)) =1$ for $n=1$, we may choose $\chi\in\Null (H_1(n))$ such that
  \begin{align} \label{Pu}
  &  P (w,  \alpha, z, \psi) =(s\chi, c,0,0),\\
  & s :=  \langle \chi,w\rangle_{L_n^2}\in \C,\   c:=\langle\alpha\rangle\in \R^2, %\frac{\langle \chi,w\rangle_{L_n^2}}{\|\chi\|_{L_n^2}^2},\; \langle |\chi|^2 \rangle =1,
 \end{align}
  and $\chi$ satisfies $\|\chi\|_{L_n^2}^2=\langle |\chi|^2 \rangle =1$ (see \eqref{Hns-ip}), where, recall, $\chi$ is described in \eqref{chi-def}. Hence we may write the $\g$ from the bifurcation equation \eqref{bif-eq} as $\g = (\tilde{\g}_1  \chi, \tilde{\g}_2,0,0)$, %\frac{\tilde{\g}_1}{||\chi||^2_{L^2_n}} \chi, \tilde{\g}_2,0,0)$ 
 where $\tilde{\g}_1,\tilde{\g}_2:\R_{>0} \times \C \times \R^2 \to \C$ are given by
 \begin{align}
 \tilde{\g}_1(\mu,s, c)& :=  \langle \chi,F_1(\mu,v(s,c)+u'(\mu,v(s,c))
 \rangle_{L_n^2}, \\
 \tilde{\g}_2(\mu,s,c)& :=  \langle F_2(\mu,v(s,c)+u'(\mu,v(s,c))\rangle, %\\ v(s,c)&:= (s\chi, s\bar\chi, c,0,0), \label{u-para-s}
 \end{align}
 where, recall, $F_j$, $j=1, \dots, 4$ are defined by \eqref{Fi-def}, $s\in \C,\ c\in \R^2$ and (see \eqref{Pu}) %$v :=   P u$ is written as
 \begin{align} \label{v-expr} %{u-para-s} 
 v(s,c)& :=  (s\chi, c,0,0). \end{align} 
 Note that $\tilde\g_1$ and $\tilde\g_2$ are continuously differentiable of all orders in $\mu$, $s$ and $c$ by Corollary \ref{g-anal}. ($\tilde\g_2$ is independent of $\mu$.) The bifurcation equation \eqref{bif-eq} is then equivalent to the equations
 \begin{align}
  \tilde{\g}_1(\mu,s,c)&=0, \label{gam-w-eq}\\
  \tilde{\g}_2(\mu,s,c)&=0. \label{gam-a-eq}
 \end{align}
 
\begin{lemma} \label{gam-a-lem}
	There exists a neighbourhood $U\subset\R_{>0}\times\R_{>0}$ of $(n,0)$ and a unique function $c:U\to\R^2$ with continuous derivatives of all orders such that
  \begin{equation}\label{gam-a-soln}
 \tilde{\g}_2(\mu,s,c(\mu, s^2)) = 0
 \end{equation}
 and
   \begin{equation}\label{OC}
 \|\p^l_\mu c(\mu, s^2)\|_{\R^2} = \cO(|s|^2),\ l=0,1.
 \end{equation}
\end{lemma}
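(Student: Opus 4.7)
The plan is to apply the Implicit Function Theorem after exploiting the $U(1)$ gauge symmetry and the triviality of the equation at $s = 0$.

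First, the $T_\delta$-equivariance of $\gamma$ from Lemma \ref{g-prop}, restricted to the second component (which $T_\delta$ fixes), gives $\tilde\gamma_2(\mu, e^{i\delta}s, c) = \tilde\gamma_2(\mu, s, c)$ for all $\delta \in \R$. Hence $\tilde\gamma_2$ depends on $s$ only through $|s|^2$, and I write $\tilde\gamma_2(\mu, s, c) = \bar\gamma_2(\mu, |s|^2, c)$ with $\bar\gamma_2$ smooth. Second, $\bar\gamma_2(\mu, 0, c) \equiv 0$: at $s = 0$ the ansatz $v(0, c) = (0, c, 0, 0)$ satisfies $F(\mu, v(0, c)) = 0$ directly, since $L_{n, \mu} v(0, c) = 0$ (note $\curl^*\curl c = 0$ for constant $c$) and every nonlinear term in $J$ carries a factor of $w$ or $\psi$, both of which vanish here. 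Uniqueness in Proposition \ref{prop:sol-u'} then forces $u'(\mu, v(0, c)) = 0$, so $\tilde\gamma_2(\mu, 0, c) = \langle F_2(\mu, v(0, c))\rangle = 0$. Hadamard's lemma then yields the smooth factorization $\bar\gamma_2(\mu, t, c) = t\,G(\mu, t, c)$, so for $s \neq 0$ the equation \eqref{gam-a-eq} is equivalent to $G(\mu, |s|^2, c) = 0$.

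I would then apply the IFT to $G$ at $(\mu, t, c) = (n, 0, 0)$, which requires two checks. For $G(n, 0, 0) = 0$: since $\langle H_2 \alpha\rangle = 0$ by periodicity and $P_0$ preserves averages, $\langle F_2\rangle = \langle J_2\rangle$; the leading-order substitution $w = s\chi$, $\nu = a^n + \cO(\|v\|^2)$ gives $\curl_\nu w = s\,\curl_{a^n}\chi + \cO(|s|^3)$, and this vanishes to leading order by the defining identity $\curl_{a^n}\chi = 0$ from \eqref{chi-def}. Hence $\langle J_2(\mu, u(s, 0))\rangle = \cO(|s|^4)$, giving $\bar\gamma_2(\mu, t, 0) = \cO(t^2)$ and $G(\mu, 0, 0) = 0$ for all $\mu$ near $n$. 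For the invertibility $\partial_c G(n, 0, 0) \in GL(\R^2)$, I would extract the $|s|^2 c$-coefficient of $\tilde\gamma_2$ directly: the constant shift $\nu = a^n + gc\sin\theta + \cO(\|v\|^2)$ produces $\curl_\nu(s\chi) = -ig\sin\theta\,s\,(c\times\chi)$, and inserting $\chi = (\beta, i\beta)$ into $2e\,\Im\langle(\curl_\nu w)J\bar w\rangle$ collapses to the linear map $c \mapsto 2eg\sin\theta\,\langle|\beta|^2\rangle\,c$, a positive multiple of the identity on $\R^2$ since $\beta \not\equiv 0$.

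The IFT then produces a unique smooth $c(\mu, t)$ near $(n, 0)$ with $c(n, 0) = 0$, and by $G(\mu, 0, 0) \equiv 0$ combined with uniqueness, $c(\mu, 0) \equiv 0$ for $\mu$ near $n$. Writing $c(\mu, t) = t\,h(\mu, t)$ by Hadamard delivers the bound $\|\partial_\mu^l c(\mu, s^2)\|_{\R^2} = \cO(|s|^2)$ for $l = 0, 1$, the $l = 1$ case using $\partial_\mu c(\mu, 0) = 0$. The main obstacle is the invertibility step: it hinges on the clean collapse of the quadratic form in $c$ to a scalar multiple of $\one_{\R^2}$, for which both the kinematic identity $\curl_{a^n}\chi = 0$ and the special complex structure $\chi = (\beta, i\beta)$ of the null eigenfunction from Proposition \ref{H_w-prop} are essential---without these, the relevant $\R^2$-linear map could a priori be degenerate or indefinite.
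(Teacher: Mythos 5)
Your overall architecture (quotient out the $U(1)$ symmetry, factor out $|s|^2$, apply the IFT to the quotient) matches the paper's, and several of your steps are correct and in places cleaner: the observation that $u'(\mu,(0,c,0,0))=0$ by uniqueness (hence $\tilde\g_2(\mu,0,c)\equiv 0$), the Hadamard factorization replacing the paper's explicit division by $|s|^2$ with remainder estimates, and the derivation of \eqref{OC} from $c(\mu,0)\equiv 0$ rather than from the evenness/Taylor argument. The estimate $\langle J_2(\mu,u(s,0))\rangle=\cO(|s|^4)$ is also right, granted the equivariance refinement $w'(\mu,s,0)=\cO(|s|^3)$.

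The gap is in the invertibility of $\p_c G(n,0,0)$, which is the heart of the lemma. You compute the $|s|^2c$-coefficient of $\tilde\g_2$ by substituting $w=s\chi$ and $\nu=a^n+ec$ into $2e\Im\lan(\curl_\nu w)J\overline w\ran$, obtaining $e^2|s|^2\lan|\chi|^2\ran\,c$ from the shift $\curl_{a^n+ec}(s\chi)=-ies(c\times\chi)$. But $w=s\chi+w'$ with $w'=w'(\mu,v(s,c))$ depending on $c$ through the Lyapunov--Schmidt correction, and $\p_c w'|_{(n,s,0)}=\cO(|s|)$ --- \emph{not} $\cO(|s|^2)$. (Explicitly, the paper's Lemma \ref{lem:dw'dc} gives $(\p_c w')(n,s,0)c'=sen^{-1}\curl_{a^n}^*\beta\,(c_1'+ic_2')+\cO(|s|^2)$, obtained by inverting $H_1^\perp(n)$ on $\curl^*_{a^n}\beta$ via $H_1(n)\n_{a^n}\beta=n\n_{a^n}\beta$.) Consequently the term $2e\Im\lan(\curl_{a^n}w')J\overline{s\chi}\ran$ contributes to $\p_c\tilde\g_2$ at exactly the order $|s|^2$ you are computing; your "leading-order substitution" silently drops a same-order term. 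In the paper this second contribution equals the direct one, so the true derivative is $\p_c\bar\g_2(n,0,0)=e\,\id$ (i.e.\ $\p_c\tilde\g_2\approx 2e^2|s|^2\id$), twice your value. The conclusion survives only because the omitted term happens to reinforce rather than cancel the direct one --- but nothing in your argument rules out cancellation, so as written the invertibility is not established. To close the gap you must compute $\p_c w'$ from $w'=-(H_1^\perp(n))^{-1}P_1^\perp J_1(n,u)$ and verify its contribution, as the paper does in Lemma \ref{lem:dw'dc} and equation \eqref{C-overline-g-a-lim}.
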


	\begin{proof}
		Recall that $F_2(\mu,u) = H_2(\mu) \alpha + P_0 J_2(\mu, u)$ (see Equation \eqref{F-eq'}), with $P_0$ the projection onto the divergence-free vector fields and 
		\begin{align} \label{u-deco}u=(w, \al, z, \psi)=v+u',\end{align}
		where $v=v(s,c)$ and $u'=u'(\mu, v)$ solves \eqref{projF2}.  By definition, $(\one - P_0)f=\Delta^{-1}\n\div f$ and therefore $\langle(\one - P_0)f \rangle=0$. Hence $\langle P_0 f \rangle=\langle f \rangle$. This and the relation $\langle H_2(\mu)\alpha \rangle = \frac{1}{|\Omega'|} \int_{\Omega'}\curl^*\curl\alpha = 0$ give %, we have \begin{equation}  \tilde{\g}_2(\mu,s,c) = \langle P_0 J_2(\mu,v(s,c) + u'(\mu,v(s,c))) \rangle. \end{equation} Since  and $P_0$ is the projection onto the constant vector fields %-- a closed subspace of the divergence-free vector fields --  we have and therefore $\langle P_0 f \rangle=\langle f \rangle$, we have
		\begin{equation} \label{tildegam2'}
		\tilde{\g}_2(\mu,s,c) = \langle J_2(\mu, v(s,c) + u'(\mu, v(s,c))) \rangle.
		\end{equation}
		Using \eqref{J2-def}, $\nu=a^n+\tilde\nu$, %:=  g(\alpha\sin\theta + z\cos\theta)$
		$\curl_{a^n}w=\curl_{a^n}w- i\tilde\nu \times w$ and that the final term in \eqref{J2-def} vanishes after taking the mean, we find %can write $J_2(\mu,u)$ from \eqref{J2-def} as 
		\begin{align} \label{avJ2-transf}
		& \langle J_2(\mu,u)\rangle = 2e\Im\langle (\curl_{a^n} w - i \tilde\nu \times w)J\overline{w}\rangle. %[ - \curl^*(\overline{w}_1w_2)]. % \\
		%u&:= (w, \bar w, \alpha,z,\psi), \;\; 
		% & \nu :=  g(\alpha\sin\theta + z\cos\theta).
		\end{align}
		% The final term in \eqref{Ja-recall} vanishes after taking the mean. Define. 
		Recall $u'=( w', \alpha',z',\psi')$. Then \eqref{v-expr} and \eqref{u-deco} give $w=s\chi+w'$ and (using that $e=g \sin\theta$) $ \tilde\nu  =e c+\nu'$. Using these relations and $\curl_{ a^n} \chi = 0$ (by \eqref{chi-def}) and \eqref{tildegam2'} and \eqref{avJ2-transf}, we find  for $\overline{\g}_2(\mu,s,c) := (2e)^{-1} |s|^{-2}\tilde{\g}_2(\mu,s,c)$ %Note that $v_1$ is of the form \eqref{chi-def}. and the definition   imply %and $e:=g\sin\theta$ imply
		\begin{align}
		\label{J_2-mean}  \overline{\g}_2(\mu,s,c) &:=  %\langle\Im [(\curl_{ a^n}\chi)J\overline{\chi}] \rangle
		- e \langle\Re [(c\times \chi) J\overline{\chi}] \rangle %\nonumber \\ &+ \Im \bar s^{-1}\langle(\curl_{ a^n} \chi) J\overline{  w'}\rangle
		+ \Im s^{-1}\langle(\curl_{ a^n}  w')J \overline {\chi}\rangle\\  %\nonumber \\ \label{R2} 
		&\qquad  \qquad  \qquad  \qquad \qquad  \qquad  \qquad  +   \Im \langle\bar R_2(\mu,s,c)\rangle, \label{R_2} \\
		\bar R_2(\mu, s, c) & :=  |s|^{-2} [ -i  (ec\times s\chi)J\overline{ w'}  -i  (ec\times  w')J\overline{ w'}  \\
		& -i   (ec\times  w')J\overline{ s\chi}-i  (\nu'\times  w')J\overline{s\chi} -i  (\nu'\times s\chi)J\overline{ w'} \\ 
		& -i(\nu'\times s\chi)J\overline{s\chi}-i (\nu'\times  w')J\overline{ w'} + (\curl_{ a^n}  w') J\overline{  w'}].   \label{R_a-def} %\\ (\nu^3)' &:=  \alpha'\sin\theta + z'\cos\theta.
		\end{align}
		%where $\nu'  :=\nu-  a^n=  g( \alpha'\sin\theta + z'\cos\theta)$. % $\tilde R_a(\mu,s,c)$ consists of all monomial terms of $J_a(u^\parallel(s,c) + u^\perp(\mu,u^\parallel(s,c))$ that are products of $w^\perp, \alpha^\perp, z^\perp, \psi^\perp$ and their (covariant) derivatives.
		Note that we expect \eqref{J_2-mean} $=\cO(|s|^2)$ and \eqref{R_2} $=\cO(|s|^4)$.  We now simplify \eqref{J_2-mean}. For the first term on the right-hand side, %by \eqref{chi-def}, $\curl_{ a^n} \chi = 0$, the first and third terms of \eqref{J_a-mean} vanish. Furthermore, 
		we  use \eqref{chi-def} %the form of $\chi$ 
		and the condition $\langle |\chi|^2 \rangle=1$ to compute %(or equivalently $\langle |\eta|^2 \rangle = \frac 12$) give %(IMS: I checked this and looks correct, but needs the details of the computations)
		\begin{equation}
		\langle\Re [(c\times \chi) J\overline{\chi}] \rangle = -\frac{1}{2}c.
		\end{equation}
		%We collect the conclusions above into the relation \begin{align}  \overline{\g}_2(\mu,s,c) &=  \frac{1}{2}c e + \Im s^{-1}\langle(\curl_{ a^n}  w')J \overline {\chi}\rangle + \Im \langle\bar R_2(\mu,s,c)\rangle. \label{J_a-mean} \end{align}

		For the second term on the right-hand side of \eqref{J_2-mean}, we use  $\langle f J \overline {\chi}\rangle =\langle f (i\overline {\eta}, \overline {\eta})\rangle= \langle f\bar\eta\rangle(i, 1)=\langle\eta, f\rangle (i,1)$ and integrate by parts to compute
		\begin{align}
		\langle(\curl_{ a^n}  w')J \overline {\chi}\rangle&=\langle\eta, \curl_{ a^n}  w'\rangle (i,1)=\langle \curl_{ a^n}^*\eta,  w'\rangle (i,1). %\\ & = -\langle [ (\curl_{a^n}^* \chi) J]^{\dagger}  w' \rangle,
		\end{align}
		%$***$ 

		Abusing notation, we write in what follows $w(\mu,s,c)\equiv w(\mu, v(s,c))$.  Then \eqref{J_2-mean} becomes
		\begin{align}
		\overline{\g}_2(\mu,s,c) = &\frac12 e  c +   \Im  s^{-1}\langle \curl_{ a^n}^*\eta,  w'(\mu,s,c) \rangle\rangle (i,1)   %\nonumber \\ & \qquad  \qquad  \qquad  \qquad \qquad  \qquad  \qquad 
		+ \Im \langle\tilde R_2(\mu,s,c)\rangle. \label{J_a-mean'}
		\end{align}
		Now, Equation \eqref{Ow2-eq}, with $m=0$, implies that %Furthermore, the same argument implies that
		\begin{align} \label{im-est}  | \Im \langle \curl_{ a^n}^*\eta,  w'(\mu,s,c) \rangle\rangle | = \cO(|s|^2).
		\end{align}
		Furthermore, we show below the following estimate on the remainder:
		\begin{align}\label{R2-est}  ||%2e|s|^{-2}
		\Im \langle \p_c^l \tilde R_2(\mu,s,c)\rangle||_{\R^2} = \cO(|s|^{2-l}), \ l=0,1.
		\end{align}
		Hence $\overline{\g}_2(\mu, 0, 0)=0$.  %Define $\overline{\g}_2(\mu,s,c) :=  |s|^{-2}\tilde{\g}_2(\mu,s,c)$ so that \begin{align} \label{overline-g-a}  \overline{\g}_2(\mu,s,c) = &e^2 c - s^{-1} 2e\Im \langle [ (\curl_{a^n}^* \chi) J]^{\dagger}  w' \rangle\nonumber \\  &+ 2e|s|^{-2}\Im \langle\tilde R_2(\mu,s,c)\rangle.
		%\overline R_a(\mu,s,c) &:=  |s|^{-2}\tilde R_a(\mu,s,c).\end{align} We will now
		% Abusing notation, we write in what follows $w(\mu,s,c):= w(\mu, v(s,c))$.
		To apply the Implicit Function Theorem to solve for $c$ as a function of $\mu$ and $s$, % with continuous derivatives of all orders, 
		we have to estimate the derivative 
		\begin{align} \label{C-overline-g-a}
		\p_c\overline{\g}_2(\mu,s,c) = &\frac12 e \id +  \Im s^{-1}\langle \curl_{ a^n}^*\eta,   \p_c  w'(\mu,s,c) \rangle (i,1) \nonumber \\
		& \qquad  \qquad  \qquad  \qquad \qquad  \qquad  \quad + \Im \langle \p_c \tilde R_2(\mu,s,c)\rangle.
		\end{align} 
		at $(n,s,0)$. At the first step, we use the following %To calculate the limit in \eqref{C-overline-g-a-lim}, we prove the following: 
		
		\begin{lemma} \label{lem:dw'dc} %Define\begin{align} \end{align}
			%The operator $H_1^\perp(n) :=  P_1^\perp H_1(n) P_1^\perp|_{P_1^\perp \cH^2_n}$, where $ P^\perp_1:= \one - P_1$ and $P_1$ is defined in \eqref{P1-def}, is invertible, %for $\mu$ close to $n$, and we have
			Using Dirac's bra-ket notation, we have
			%\newpage
			\begin{align} \label{dcw'} %{w-perp-C-deriv}
			(\p_c  w')(n,s,0) &= -n^{-1}e s |\curl^*_{ a^n} \eta \rangle\langle(1,   i)| + \cO(|s|^2).
			\end{align}
		\end{lemma}
		\begin{proof}[Proof of Lemma \ref{lem:dw'dc}] By definition \eqref{P1-def},  $ P^\perp_1$ projects onto the orthogonal complement of the eigenspace of $H_1(n)$ corresponding to the eigenvalue $0$ and therefore  the operator $H_1^\perp(n)$ is invertible  on $\Ran P^\perp_1$. % by Proposition \ref{H_w-prop}, %Theorem \ref{thm:WS-lin-op-lowestEV} and the definition \eqref{Ln-perp}, $L_{n, \mu}^\perp:=  P^\perp L_{n, \mu} P^\perp|_{\cX^\perp}$, where $ P^\perp$ projects on the orthogonal complement of the null space of $L_{n,\mu}-\mu+n$, the operator $L_{n, \mu=n}^\perp$ is invertible and its inverse $(L_{n, \mu=n}^{\perp})^{-1}:\cY \to \cX$ is uniformly bounded for $\mu$ in a neighbourhood of $n$. Since $L_{n, \mu=n}$ is diagonal, its every diagonal element is invertible, including $H_1^\perp(n)$.
			%\eqref{H1-grad},% and its inverse $(H_1^{\perp}(n))^{-1}:\cH_n^2 \to L_n^2$ is uniformly bounded for $\mu$ in a neighbourhood of $n$. 
			Hence \eqref{mu-u-syst} with $i=1$ can be rewritten as $  w' = - (H_1^\perp(n))^{-1}P^\perp_1 J_1(n, u)$ (which is the first component of \eqref{u-perp-J}), which gives
			\begin{align} \label{dcw'1} %{w-perp-J}
			\p_c  w' = - (H_1^\perp(n))^{-1}P^\perp_1\p_c J_1(n, u),
			\end{align}
			where $u\equiv u(s,c):=v(s,c) + u'(\mu,v(s,c))$.	  
			\DETAILS{ Since $F_1(\mu,u) = H_1(\mu)w + J_1(\mu,u)$ and the first term on the right-hand side is independent of $\alpha$, we have
				\begin{align} \label{dcw'2}
				\p_c  w' = - (H_1^\perp(n))^{-1}P^\perp_1\p_c  F_1(\mu,u). 
				\end{align}
				To simplify the computation, we compute the right-hand side only for the leading term $ F'(\mu,s,c):=F_1(\mu,v(s,c))$.
				
				Using that  $F_1(\mu,u)=G_1(m^{n, r}+u)|_{\xi = \sqrt{2\mu}/g}=0$, where $G_1$ is defined in \eqref{G1-expl} and $m^{n, r} :=  (0,\frac1e a^n,0,\xi)$  (explicitly, $w=s\chi+w', a=\frac1e a^n+\alpha$, $\phi=\xi+\psi$, $\nu= a^n+\tilde\nu$ and %$\mu = \frac{1}{2} g^2\xi(b)$
				$\xi = \sqrt{2\mu}/g$), and recalling that $ v(s,c) =  (s\chi, c, 0, 0)$ (see \eqref{v-expr}) so that, in particular, $\nu|_{u=v(s,c)}=a^n +e c$ and $\phi|_{u=v(s,c)}=\xi= \sqrt{2\mu}/g$, we find  %$ m:= (w, a, z,\phi)$ and denote by $G(b, m) \equiv  G(m)$ the map given by the left-hand side of \eqref{WS-eq1-resc} - \eqref{WS-eq4-resc}, given explicitly as
				\begin{align} \label{F'-expl} & F'(\mu,s,c) = s [\curl_{a^n +e c}^* \curl_{a^n +e c} + \frac{1}{2}2\mu - i n J + g^2s^2(\overline{\chi} \times \chi)J]\chi. 
				\end{align}}	
			By \eqref{J1-def}  and \eqref{M}, we have 
			\begin{align} \label{dcJ1} & \p_c J_1(n, u) = \p_c [\curl_{\nu}^* \curl_{\nu}w]. %s [\curl_{a^n +e c}^* \curl_{a^n +e c} + \frac{1}{2}2\mu - i n J + g^2s^2(\overline{\chi} \times \chi)J]\chi. 
			\end{align}
			Using $w=s\chi+w',$ %$ a=\frac1e a^n+\alpha$, 
			$\nu= a^n+ec+\nu'$ and  $\curl_{\nu}=\curl_{a^n} + i J (ec+\nu') \cdot$, $\curl_{\nu}^*=\curl_{a^n }^* - i J (ec+\nu')$ and that $\nu'=\cO(|s|^2)$, we compute %$ec+\nu'=\tilde\nu$ is of a higher order, we compute
			\begin{align} \label{dcF'1}  %\p_c F'(\mu,s,c)
			\p_c J_1(n, u) c' &= s\p_c  [\curl_{\nu}^* \curl_{\nu}]\chi c' + \cO(|s|^2)\notag\\
			&= s ie [-Jc' \curl_{\nu}+\curl_{\nu}^*  Jc' \cdot]\chi + \cO(|s|^2)\\
			\label{dcF'2} &= s ie [-Jc' \curl_{a^n+ec}+\curl_{a^n+ec}^*  Jc' \cdot]\chi + \cO(|s|^2). 
			\end{align}
			Since $\curl_{a^n}\chi=\n_1i\beta - \n_2\beta=i\bar\p_{a^n}\beta = 0$ and $ Jc' \cdot \chi= (-c_2', c_1') \cdot (\beta, i \beta)=-c_2'\beta +  c_1' i \beta=i (c'_1+ic_2')\beta$ and therefore $\curl_{a^n}^*  Jc' \cdot \chi=i\curl_{a^n}^* \beta (c'_1+ic_2')$, this yields
			\begin{align} \label{dcF'3}  \p_c  J_1(n, u) c'\big|_{c=0} &= - s e \curl_{a^n}^* \beta (c'_1+ic_2') + \cO(|s|^2). 
			\end{align}
			By Proposition \ref{H_w-prop}(ii), $\Null{(H_1(\mu)-\mu+n)}=\{\chi=(\beta, i\beta): \curl_{ a^n} \chi =i\bar\p_{ a^n} \eta= 0\}$. The relation  $\curl_{a^n}\chi= 0$ implies also $\langle\chi,   \curl_{ a^n}^* \chi \rangle=\langle \curl_{ a^n} \chi,   \chi \rangle=0$, which, for $n=1$, gives that $P^\perp_1   \p_c J_1(n, u) c' = \p_c J_1(n, u) c'  $ and therefore 
			\begin{align} \label{dcF'} P^\perp_1  \p_c J_1(n, u) c' &=  -s e \curl_{a^n}^* \beta (c'_1+ic_2') + \cO(|s|^2). 
			\end{align}
			By \eqref{chi-def}, we have $\curl_{a^n}^*\eta = i\n_{a^n}\beta$, and by \eqref{H1-grad}, we have %Proposition \ref{H_w-prop}, 
			$H_1(n)\nabla_{ a^n}\beta = n\nabla_{ a^n}\beta$; hence $(H_1^\perp(n))^{-1}\curl_{ a^n}^* \beta=n^{-1}\curl_{ a^n}^* \beta$. This relation, together with \eqref{dcw'1} and \eqref{dcF'}, yields
			\begin{align} \label{dcw'3}
			\p_c  w' c'&=  s e n^{-1} \curl_{a^n}^* \beta (c'_1+ic_2') + \cO(|s|^2), 
			\end{align}  
			which gives \eqref{dcw'}.   	%Together, \eqref{Fw-C-deriv'} and \eqref{pperp-C-deriv} imply that as claimed. holds 	
		\end{proof}
		\DETAILS{ Note that we can rewrite \eqref{dcw'} as 
			\begin{align} \label{dcw'4} %{w-perp-C-deriv}
			(\p_c  w')(n,s,0) &= -n^{-1}s\left(\begin{array}{cc}
			(\nabla_{a^n})_2 \eta & i(\nabla_{ a^n})_2\eta \\ -(\nabla_{ a^n})_1 \eta & -i(\nabla_{ a^n})_1 \eta \end{array} \right) + \cO(|s|^2)\\
			\label{dcw'5}   &:= -n^{-1}s (\curl^*_{ a^n} \eta,  - i \curl^*_{ a^n}\eta) + \cO(|s|^2).
			\end{align} 
			
			Recall from Equation \eqref{chi-def} that $\chi$ is of the form $\chi = (\eta,i\eta)^T$, where $\chi$ satisfies $\curl_{a^n}\chi = 0$. Hence, in particular, %(what is the relation? notation below?)
			\begin{equation}
			\curl^*_{a^n}\chi = \left(\begin{array}{cc}
			(\nabla_{a^n})_1 \eta & i(\nabla_{ a^n})_1\eta \\ (\nabla_{a^n})_2 \eta & i(\nabla_{ a^n})_2 \eta \end{array} \right)= (\nabla_{ a^n}) \eta,  i \nabla_{ a^n}\eta).
			\end{equation}
		}
		
		%  Combining Equations \eqref{w-perp-J}, \eqref{C-F}, \eqref{Os-second-term} and \eqref{Os-first-term}  gives \eqref{w-perp-C-deriv}. Thus it remains to prove \eqref{Os-first-term}. This completes the proof of Lemma \ref{lem:dw'dc}.	

		Using Equation \eqref{dcw'}, % and the definition  $\chi = (\eta,i\eta)$,
		we calculate the second term on the right-hand side of \eqref{C-overline-g-a} at $(n,s,0)$: 
		\begin{align}
		&\Im s^{-1}\langle \curl_{ a^n}^*\beta,   \p_c  w'(\mu,s,c)c' \rangle (i,1) \nonumber \\
		&  =  en^{-1}\Im \langle \curl_{ a^n}^*\beta,  \curl_{a^n}^* \beta \rangle (c'_1+ic_2') (i,1).
		%\end{align} Recall that $\chi = (\eta,i\eta)^T$, so \begin{align}\lim\limits_{s\to 0} \; s^{-1}& \Im \langle [ (\curl_{a^n}^* \chi) J]^{\dagger} (\p_c  w')(n,s,0) \rangle \nonumber \\
		%&= en^{-1}\Re[ \langle \curl^*_{ a^n}\eta, \curl^*_{ a^n}\eta \rangle_{L^2_n}\bigg(\begin{array}{cc} 1 & i \\ -i & 1 \end{array} \bigg)].
		\end{align}
		The  inner product term is real. Integrating it by parts and using that, by  Equation \eqref{eta-eqn}, $\beta$ satisfies $\curl_{ a^n} \curl_{a^n}^* \beta=-\Delta_{ a^n}\beta = n\beta$ and using that $\|\beta\|_{L^2_n}^2 = \frac12 \|\chi\|_{L^2_n}^2 = \frac12$, gives %the off-diagonal terms vanish gives
		\begin{align}
		\langle \curl_{ a^n}^*\beta,  \curl_{a^n}^* \beta \rangle %\nonumber \\
		=& \langle \beta,-\Delta_{a^n}\beta \rangle_{L^2_n}=\frac12 n. %\id.
		\end{align}
		%Recall from Equation \eqref{eta-eqn} that $\eta$ satisfies $-\Delta_{ a^n}\eta = n\eta$, and from $\chi = (\eta,i\eta)^T$, that $\|\eta\|_{L^2_n}^2 = \frac12 \|\chi\|_{L^2_n}^2 = \frac12$.  %(Is not this $\|\eta\|_{L^2_n} = \frac1{\sqrt2} \|\chi\|_{L^2_n} = \frac1{\sqrt2}$ so that right-hand side below should be $\frac1{\sqrt2} e\id$?)} Therefore
		The last two equations and the relation $ \im(c'_1+ic_2') (i,1)=\im\bigg(\begin{array}{cc} i & - 1 \\ 1 & i \end{array} \bigg)c'=\one c'$ imply
		\begin{align}\label{C-overline-g-a-lim}  &\Im s^{-1}\langle \curl_{ a^n}^*\beta,   \p_c  w'(\mu,s,c) \rangle (i,1)   =  \frac12 e \id.
		\end{align}
		This, together with \eqref{C-overline-g-a}, gives
		\begin{align} \label{C-overline-g-a'}
		\p_c\overline{\g}_2(n,s,0) = & \frac12 e\id +  \frac12 e\id+ \Im \langle \p_c \tilde R_2(n,s,0)\rangle.
		\end{align}

		Therefore,  %\eqref{C-overline-g-a}, \eqref{C-overline-g-a-lim} 
		\eqref{C-overline-g-a'} and \eqref{R2-est} (with $l=1$) imply %now becomes %\footnotetext{Recall that $\chi = (\eta,i\eta)^T$.}
		\begin{align}
		\p_c\overline{\g}_2(n,0,0) = e\id,
		\end{align}
		\DETAILS{
			and so 
			\begin{align}
			\det\p_c\overline{\g}_2(n,0,0) = e^4+\frac{e^4}{\pi^2}\langle \curl^*_{a^n}\eta, (H_1^\perp(n))^{-1}\curl^*_{a^n}\eta \rangle_{L^2_n}^2 > 0,
			\end{align}
		}
		proving that $\p_c\overline{\g}_2(n,0,0)$ is invertible, as required.
		
		Recall that, by \eqref{J_a-mean'}, \eqref{im-est} and \eqref{R2-est} (with $l=0$), we have
		\begin{align}
		\overline{\g}_2(n,0,0) = 0.
		\end{align}
		%Moreover,
		
		%$$***$$
		Since $\p_c\overline{\g}_2(n,0,0)$ is invertible, by the Implicit Function Theorem there exists a unique function $\tilde c:\R_{>0}\times\C\to\R^2$ with continuous derivatives of all orders such that $\overline{\g}_2(\mu,s,\tilde c(\mu,s))=0$ for $(\mu,s)$ in a sufficiently small neighbourhood of $(n,0)$. %(here $c=\tilde c(s)$ is independent of $\mu$ because $\overline{\g}_a$ does not depend on $\mu$). 
		Furthermore, the symmetry \eqref{gamma-sym} implies that $\overline{\g}_2(\mu,|s|,\tilde c(\mu, s)) = \overline{\g}_2(\mu,e^{i\arg s}|s|,\tilde c(\mu, s)) = \overline{\g}_2(\mu,s,\tilde c(\mu, s))=0$, so by the uniqueness of the branch $\tilde c(\mu, s)$ we have 
		\begin{equation}\label{C-sym}
		\tilde c(\mu, s) = \tilde c(\mu, |s|).
		\end{equation}
		In particular, $\partial_{\mu}^l\tilde c(\mu, s)$, $l=0,1$, restricted to $s\in\R$ are even functions with continuous derivatives of all orders; thus $\partial_s \partial_{\mu}^l\tilde c(\mu, 0) = 0$ and hence $\partial_{\mu}^l \tilde c(\mu, s) = \cO(|s|^2)$, since the first two terms of the Taylor expansion are $0$. We define $c:\R_{>0}\times \R_{>0}\to\R^2$ by $c(\mu, s):= \tilde c(\mu, \sqrt{s})$, which is a function with continuous derivatives of all orders satisfying $||\partial_{\mu}^l c(\mu, s^2)||_{\R^2} = \cO(|s|^2)$, $l=0,1$, and $\tilde{\g}_2(\mu,s,c(\mu, s^2)) = |s|^2\overline{\g}_2(\mu, s, c(\mu, s^2)) = 0$, as required.
	\end{proof}

 \begin{lemma} \label{gam-w-lem}
 	For $\epsilon>0$ sufficiently small, there exists a unique function $\mu:[0,\epsilon)\to\R_{>0}$ with continuous derivatives of all orders such that
 	\begin{equation}\label{gam-w-soln}
 	\tilde{\g}_1(\mu(s^2),s,c(\mu(s^2), s^2)) = 0.
 	\end{equation}
 \end{lemma}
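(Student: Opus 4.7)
The plan is to apply the Implicit Function Theorem to \eqref{gam-w-soln} after using the residual $U(1)$-gauge symmetry \eqref{gamma-sym} together with the reality property of Proposition \ref{F-prop}(v) to reduce it to a scalar real equation in which a single power of $s$ factors out. First I would exploit the symmetry: applying \eqref{gamma-sym} with $T_\delta$ to $v(s,c)=(s\chi,c,0,0)$ gives $\tilde\g_1(\mu,e^{i\delta}s,c)=e^{i\delta}\tilde\g_1(\mu,s,c)$, while Proposition \ref{F-prop}(v) applied with $w=s\chi$ forces $\bar s\,\tilde\g_1(\mu,s,c)\in\R$. Expanding $\tilde\g_1$ as a Taylor series in $(s,\bar s)$, only monomials $s^{k+1}\bar s^{k}$ survive, so there exists a function $\bar\g_1(\mu,t,c)$, continuously differentiable of all orders in $t\ge 0$, with
\[
\tilde\g_1(\mu,s,c)=s\,\bar\g_1(\mu,|s|^2,c).
\]
Substituting the branch $c(\mu,t)$ from Lemma \ref{gam-a-lem} then reduces \eqref{gam-w-soln} to the scalar real equation
\[
\hat\g_1(\mu,t):=\bar\g_1(\mu,t,c(\mu,t))=0.
\]

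Next, I would evaluate $\hat\g_1$ and $\partial_\mu\hat\g_1$ at $(n,0)$. Writing $F_1(\mu,u)=H_1(\mu)w+J_1(\mu,u)$ with $u=v+u'$ and $w=s\chi+w'$, the decomposition $H_1(\mu)=H_1(n)+(\mu-n)\id$ together with $H_1(n)\chi=0$ gives $H_1(\mu)(s\chi)=s(\mu-n)\chi$. Self-adjointness of $H_1(\mu)$ and the fact that $w'=(u')_1\in\Ran P_1^\perp\perp\chi$ yield $\langle\chi,H_1(\mu)w'\rangle=(\mu-n)\langle\chi,w'\rangle=0$. The nonlinearity $J_1(\mu,u)$ contains only terms of degree $\ge 2$ in $u$, and by Proposition \ref{prop:sol-u'} together with \eqref{OC} we have $u'=O(|s|^2)$ and $c=O(|s|^2)$, so $\langle\chi,J_1(\mu,u)\rangle=O(|s|^3)$. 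Hence
\[
\tilde\g_1(\mu,s,c(\mu,s^2))=s(\mu-n)+O(|s|^3),
\]
so that $\hat\g_1(\mu,0)=\mu-n$ and in particular $\hat\g_1(n,0)=0$ and $\partial_\mu\hat\g_1(n,0)=1\neq 0$.

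The Implicit Function Theorem now produces a unique function $\mu:[0,\epsilon)\to\R_{>0}$, continuously differentiable of all orders, solving $\hat\g_1(\mu(t),t)=0$; the assignment $t=s^2$ gives the required $\mu(s^2)$. Uniqueness modulo the gauge action $T_\delta$ in a neighbourhood of $(n,0)$ follows from the uniqueness of both $c(\mu,t)$ and $\mu(t)$ provided by the two implicit function arguments, combined with the symmetry of $\tilde\g_1$ that was used to reduce the complex equation to a real one (this accounts precisely for the $S^1$ of gauge-equivalent solutions parametrised by $\arg s$). The main technical point, and the only mildly delicate step, is verifying that $\bar\g_1$ is genuinely smooth at $t=0$; this is automatic from the fact that only the monomials $s^{k+1}\bar s^k$ appear in the $(s,\bar s)$-expansion of $\tilde\g_1$, so every coefficient of $t^k$ in $\bar\g_1$ inherits smoothness from $\tilde\g_1$. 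Everything else is bookkeeping.
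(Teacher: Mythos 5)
Your proposal is correct and follows essentially the same route as the paper: use the $S^1$-equivariance \eqref{gamma-sym} together with the reality property of Proposition \ref{F-prop}(v) to reduce \eqref{gam-w-soln} to a real scalar equation with one factor of $s$ removed, and then apply the Implicit Function Theorem at the bifurcation point using the non-degeneracy of the $\mu$-derivative. The only differences are cosmetic: you perform the substitution $t=|s|^2$ before invoking the IFT rather than after (the paper solves in real $s$, deduces evenness of $\tilde{\mu}$, and then sets $\mu(t)=\tilde{\mu}(\sqrt{t})$, which requires the same Whitney-type smoothness step you flag), and you obtain $\partial_\mu\hat\g_1(n,0)=1$ from the exact identity $\hat\g_1(\mu,0)=\mu-n$ rather than by differentiating $F_1$ in $\mu$ directly, a slightly cleaner computation reaching the same conclusion (note that for this the crude bound $\langle\chi,J_1\rangle=\cO(|s|^2)$ from \eqref{J-bds} already suffices, so your sharper $\cO(|s|^3)$ claim need not be belaboured).
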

 
 \begin{proof}
	
	To simplify notation for this lemma, we set $u=v_s + u'_s$, with $v\equiv  v_s \equiv  (s\chi, c(\mu,s^2),0,0), \ u'\equiv  u'_s \equiv  u'(\mu, v_s),\ c \equiv  c(\mu,s^2)$.
	
We first show that $\tilde{\g}_1(\mu,s,c)\in\R$ for $s\in\R$. Since $u'$ by definition solves $P_1^\perp F_1(\mu, v+ u') = 0$, where $P_1^\perp  w' =  w'$ and $P_1^\perp$ is self-adjoint, we have
\begin{align}
&\langle  w', F_1(\mu,v+u')\rangle_{L^2_n} = \langle  w', P_1^\perp F_1(\mu,v+u')\rangle_{L^2_n} = 0.
\end{align}
Therefore, for $s\neq 0$, we find
\begin{align}
\tilde{\g}_1(\mu,s,c) &= s^{-1}\langle s\chi,F_1(\mu,v+u'\rangle_{L^2_n} \nonumber \\
&= s^{-1}\langle s\chi +  w', F_1(\mu,v + u')\rangle_{L^2_n}, 
\end{align}
which is real by Proposition \ref{F-prop} $(v)$. Furthermore, by equations 
 \eqref{gamma-sym} and \eqref{C-sym}, we have 
 $\tilde{\g}_1(\mu,s,c(\mu, s^2)) = e^{i\text{arg}(s)}\tilde{\g}_1(\mu,|s|,c(\mu,|s|^2))$, so we may restrict $s$ to be real.
 
 Next, we show that 
 \begin{align} \label{gam-w-ineq}
 \tilde{\g}_1(n,s,c(n,s^2)) = \cO(|s|^2)
 \end{align}  
  Indeed,
 \begin{align}
  |\tilde{\g}_1(n,s,c(n,s^2))| &\leq \|\chi\|_{L^2_n} \|F_1(n,v+u')\|_{L^2_n} \nonumber \\
  &\leq \|\chi|\|_{L^2_n} \big[\|H_1(n)(s\chi +  w')\|_{L^2_n} \nonumber \\ 
  &\quad + \|J_1(n, v+u') \|_{L^2_n})\big]. %\nonumber \\
  \end{align}
  Recall that $H_1(n)\chi =0$, so that
  \begin{align}
  |\tilde{\g}_1(n,s,c(n,s^2))| &\leq \|\chi\|_{L^2_n} \big[\|H_1(n)\|_{L^2_n \otimes (L^2_n)^*}\| w'\|_{L^2_n} \nonumber \\
  &\quad + \|J_1(n, v+u')\|
 \end{align}
By the definition $v\equiv  v_s \equiv  (s\chi, c(\mu,s^2),0,0)$ and equation \eqref{OC}, $\|v\|_{\cX} = \cO(|s|)$; hence by Proposition \ref{prop:sol-u'}, 
\begin{align}
\|w'\|_{L^2_n} \leq \|w'\|_{\cH^2_n} = \cO(|s|^2).
\end{align} 
Furthermore, by equation \eqref{J-bds} and recalling that $H_1(n) \chi = 0$, %\footnotetext{Recall that $H_1(n) \chi = 0$.}
\begin{align}
\|J_1(n, v+u')\|_{L^2_n}
\leq \|J_1(n, v+u')\|_{\cH^2_n}\lesssim \|v+u' \|_{\cX}^2 = \cO(|s|^2).
\end{align}
This proves that $ \tilde{\g}_1(n,s,c(n,s^2))$ %\eqref{gam-w-ineq'}
 is $\cO(|s|^2)$, as required.

In light of equation \eqref{gam-w-ineq}, we can define
a function $\overline{\g}_1 : \R_{>0}\times\R_{>0} \to \R$ with continuous derivatives of all orders by
\begin{equation}
 \overline{\g}_1(\mu,s)\equiv 
 		\begin{cases}
 			s^{-1}\tilde{\g}_1(\mu,s,c(\mu,s^2)), \quad s\neq 0, \\
 			0, \quad s=0.
 		\end{cases}
\end{equation}
We now find a non-trivial branch of solutions $(\mu,s) = (\tilde{\mu}(s),s)$ by applying the Implicit Function Theorem to $\overline{\g}_1$. 
 First, we prove the following proposition to bound the polynomials of functions appearing below:

\begin{lemma} \label{lem:banach-poly}
	Let $X$ be one of the spaces $\cH^2_n$, $\cH_0$ or $\cH^2$ defined before equation \eqref{L2n-space}. Let
	$p(x_1,...,x_n)$ be a polynomial with positive coefficients and let $f_1,...,f_n\in X$. Then %$||p(f_1,...,f_n)||_X \leq p(||f_1||_{\cH^2},...,||f_n||_{\cH^2})$. Furthermore, 
	$\|p(f_1,...,f_n)\|_X \lesssim p(\|f_1\|_X,...,\|f_n\|_X)$.
\end{lemma}
\begin{proof}
 Write $p(x_1,...,x_n) = \sum_{|\alpha|\leq N} p_{\alpha} x^{\alpha}$, where $\alpha = (\alpha_1,...,\alpha_n)$ is a multi-index, $x^\alpha = \prod_{i=1}^n x_i^{\alpha_i}$ and $p_{\alpha} \ge 0$. Since by the Sobolev Embedding Theorem (see e.g. \cite{AF}), $X$ is a Banach algebra, we have
 \begin{align}\notag
  \|p(f_1,...,f_n)\|_X &\leq \sum_{|\alpha|\leq N} p_{\alpha} \|f^{\alpha}\|_X \\
\notag&\lesssim \sum_{|\alpha|\leq N} p_{\alpha} \prod_{i=1}^n \|f_i\|_X^{\alpha_i} \\
\notag&= p(\|f_1\|_X,...,\|f_n\|_X),
 \end{align}
which implies the desired result. % The inequality for $\cH^2(\Omega)$ follows because $\cH^2(\Omega)$, equipped with the rescaled norm $K^*||\cdot||_{\cH^2(\Omega)}$ for an appropriate choice of $K^*>0$, is a Banach algebra by Theorem 4.39 of \cite{AF}.
\end{proof}

\begin{lemma} \label{tilde-mu} There exists $\epsilon>0$ and a unique function $\tilde{\mu}:(-\sqrt{\epsilon},\sqrt{\epsilon})\to\R_{>0}$ with continuous derivatives of all orders such that $\tilde{\mu}(0)=n$ and $\mu=\tilde{\mu}(s)$ solves $\overline{\g}_1(\mu,s)=0$ for $s\in(-\sqrt{\epsilon},\sqrt{\epsilon})$.
Moreover,  $\tilde{\mu}$ is an even function: $\tilde{\mu}(s) = \tilde{\mu}(-s)$.
\end{lemma}
\begin{proof}
Recall that $F_1(\mu, u) = H_1(\mu) w + J_1(\mu, u)$ (where $H_1(\mu)$ and $J_1(\mu,u)$ are defined in \eqref{Hw} and \eqref{J1-def}). Using that $\partial_{\mu}F_1(\mu,u) = (1+\frac{g}{2\sqrt{2\mu}}\psi)w$ and setting $u=v_s + u'_s$, with $v\equiv  v_s  \equiv   (s\chi, c(\mu,s^2),0,0), \ u'\equiv  u'_s  \equiv   u'(\mu, v_s),\ c =  c(\mu,s^2)$, we compute
\begin{align}
 \partial_{\mu}&[s^{-1} F_1(\mu, v + u')] = s^{-1}(1+\frac{g}{2\sqrt{2\mu}}\psi') (s\chi +  w') \nonumber\\ 
 &+s^{-1}\sum\limits_{i=1}^4 \del_{u_i} F_1w(\mu, v + u') (\partial_{\mu}v_i + \partial_{\mu}u'_i) \nonumber \\
 \label{mu-F}&= s^{-1}(1+\frac{g}{2\sqrt{2\mu}}\psi') (s\chi +  w') + s^{-1}\del_{\alpha}F_1(\mu, v + u')\partial_{\mu}c \nonumber\\ 
 &+s^{-1}\sum\limits_{i=1}^5 \del_{u_i} F_1(\mu, v + u') \partial_{\mu}u'_i. %, \\ &c \equiv  c(\mu,s^2), \\ &v \equiv  v(s,c)= (s\chi, c(\mu,s^2),0,0), \\ &u' \equiv  u'(\mu,v(s,c(\mu, s^2))).
\end{align}
By Lemma \ref{gam-a-lem}, $\|\partial_{\mu}^l c\|_{\R^2} = \cO(|s|^2)$, $l=0,1$. Since $||v||_{\cX}$ is $\cO(|s|)$, by Proposition \ref{prop:sol-u'} %{lem:sol-u'}, % Corollary \ref{u-perp-deriv}, 
 the terms $\|\partial_\mu^l u'_i\|$ ($l=0,1$, $i=1, \dots, 4$, with the norms taken in the appropriate spaces), are $\cO(|s|^2)$.  By Lemma \ref{lem:banach-poly}, this implies that all terms in \eqref{mu-F} containing $c, w',\alpha',z',\psi'$ or their $\mu$-derivatives vanish at $(\mu,s) = (n,0)$. Therefore
\begin{equation}
 \partial_{\mu}[s^{-1}F_1(\mu,v + u')]|_{(\mu,s)=(n,0)} = \chi
\end{equation}
and hence
\begin{equation}
 \partial_{\mu}\overline{\g}_1(n,0) = \langle \chi, \partial_{\mu}[s^{-1}F_1(\mu,s)]\,|_{(\mu,s)=(n,0)}\rangle_{L^2_n} = \|\chi\|_{L^2_n}^2 \neq 0.
\end{equation}
Since $\overline{\g}_1(\mu,s)$ is continuously differentiable of all orders in $\mu$ and $s$, by the Implicit Function Theorem, we obtain the first statement of the lemma. 

By the symmetry $\overline{\g}_1(\mu,-s)=-\overline{\g}_1(\mu,s)$ of $\overline{\g}_1$ and the uniqueness of the branch $\tilde{\mu}(s)$, we have $\tilde{\mu}(s) = \tilde{\mu}(-s)$, which gives the second statement. 
\end{proof}
We define $\mu(s) \equiv  \tilde{\mu}(\sqrt{s})$, which is a function with continuous derivatives of all orders for $s\in [0,\epsilon)$ for the same reasons that $c(\mu,s)  :=   \tilde c(\mu, \sqrt s)$ was shown to be continuously differentiable of all orders in Lemma \ref{gam-a-lem}. Furthermore, $\mu$ satisfies $\tilde{\g}_1(\mu(s^2),s,c(\mu(s^2), s^2)) = s\overline{\g}_1(\mu(s^2),s,c(\mu(s^2), s^2)) = 0$, as required.
\end{proof}

We will now use the branch of solutions to \eqref{gam-w-eq} - \eqref{gam-a-eq}, provided by Lemmas \ref{gam-a-lem} and \ref{gam-w-lem}, and Corollary \ref{cor:red2v}  to obtain the corresponding unique branch, $(\mu_s, u_s)$, of solutions to \eqref{F-eq'}, with %\eqref{WS-eq1-resc} - \eqref{WS-eq4-resc}, with %. Define $\mu_s$, $c_s$, $v_s$ and $u'_s$ by
\begin{align}
\mu_s & \equiv   \mu(s^2), \quad u_s \equiv   v_s+u'_s, \label{mu-fin} \\
%c_s &\equiv  c(\mu_s,s^2), \label{C-fin} \\
v_s & \equiv   (s\chi,c_s,0,0), \quad c_s  \equiv   c(\mu_s,s^2),  \label{vs'}\\ %{u-para-fin} \\
u'_s & \equiv   u'(\mu,v_s). \label{u-perp-fin}
\end{align}
\DETAILS{
	Let $u^\perp = (w^\perp, \alpha^\perp, z^\perp, \psi^\perp)$ and define functions $\tilde g_w:[0,\epsilon)\to\overrightarrow{\cH}_n^2$, $\tilde g_a:[0,\epsilon)\to\overrightarrow{\cH}^2_0$, $\tilde g_z:[0,\epsilon)\to\overrightarrow{\cH}^2_0$, $g_{\phi}:[0,\epsilon)\to\cH^2$ and $\tilde g_{\xi}:[0,\epsilon)\to\R_{>0}$ by
	\begin{align}
	\tilde g_w(s) &:=  w^\perp, \label{tilde-gw} \\
	\tilde{g}_a(s) &:=  \alpha^\perp, \label{tilde-ga} \\
	\tilde g_z(s) &:=  z^\perp, \label{tilde-gz} \\
	\tilde{g}_{\phi}(s) &:=  \psi^\perp. \label{tilde-gphi}
	%// u^\parallel(s, c(\mu(s^2), s^2))& = (s\chi,c(\mu(s^2), s^2),0,0).
	\end{align}
}
\eqref{mu-fin} - \eqref{u-perp-fin} have continuous $s$-derivatives of all orders because each component function has continuous derivatives of all orders. Symmetry \eqref{u2-sym} with $\delta = \pi$ and the relation $T_{\pi}(f_1,f_2,f_3,f_4) = (-f_1,f_2,f_3,f_4)$ imply that $(u_s')_1$ is an odd function of $s$ and $(u_s')_2$, $(u_s')_3$ and $(u_s')_4$ are even functions of $s$. %Since $\mu_s$ is continuous and $\mu_0 = n > 0$, we may assume that $\mu_s > 0$ for $s\in [0,\epsilon)$ by making $\epsilon$ smaller if necessary. 
Arguing as in the case of Lemma \ref{gam-a-lem} above shows that the %following
 functions:
\begin{align}
g_1(s) & :=   \begin{cases}
\frac{1}{\sqrt{s}}(u'_{\sqrt s})_1, \quad &s\neq 0,\\
0, \quad &s=0, 
\end{cases}\quad  
g_2(s)  :=   c_{\sqrt s} + (u'_{\sqrt s})_2, \\
 g_3(s)& :=   (u'_{\sqrt s})_3,\quad  g_4(s)  :=   (u'_{\sqrt s})_4,\quad  g_5(s)  :=   \mu_{\sqrt s}-n, 
%g_3(s) &:=  (u'_{\sqrt s})_3, \\
%g_4(s) &:=  (u'_{\sqrt s})_4, \\
%g_5(s) &:=  \mu_{\sqrt s}-n, %\frac{1}{g}(\sqrt{2\mu_{\sqrt s}} - \sqrt{2n}), %\footnotemark
\end{align}
 are well-defined for $s\geq 0$ and have continuous derivatives of all orders. By  Proposition \ref{prop:sol-u'}, these functions have the properties listed in Theorem \ref{bifurcation-result}. The above definitions and equations \eqref{mu-fin} - \eqref{u-perp-fin} imply $u_s = (s\chi,\frac1e a^n,0,0)+(g_1(s), \dots, g_4(s))$. Hence, this solution is of the form \eqref{bif-soln}. Now, by Proposition \ref{prop:reconstr}, this also solves system \eqref{WS-eq1-resc} - \eqref{WS-eq4-resc}, % By definition, $u'(\mu(s^2),v(s, c(\mu(s^2), s^2))) = (sg_1(s^2),\tilde{g}_2(s^2),g_3(s^2),\tilde{g}_4(s^2))$.% and $\mu(s^2) = \frac{g^2}{2} (g_5(s^2) + \sqrt{2n}/g)^2$. 
%\footnotetext{}
%(what are $\mu_{\sqrt s}$ above and $\mu_s$ below?) By Corollary \ref{cor:red2v} (making $\epsilon$ smaller if  necessary), this gives a unique branch of solutions of  \eqref{WS-eq1-resc} - \eqref{WS-eq4-resc} with $\xi_s :=  \sqrt{2\mu_s}/g$ of the form $(w_s,a_s,z_s,\phi_s) = (0, \frac1e a^n,z_s, \xi_s) + u'_s$, with $u'_s$. 
\DETAILS{More explicitly,
\begin{align} \label{s-asymp}
 \begin{cases}
  w_s = s\chi + sg_1(s^2) \\
  a_s = \frac1e a^n + g_2(s^2) \\
  z_s = g_3(s^2) \\
  \phi_s = \sqrt{2n}/g + g_4(s^2) \\
  \xi_s = \sqrt{2n}/g + g_5(s^2)
 \end{cases}
\end{align}}
%where $\chi$ solves the eigenvalue problem $h_1 \chi=-n \chi$ (see Theorem \ref{bifurcation-result}), $g_a(s) :=  c(\mu(s^2),s) + \tilde{g}_a(s)$ and $g_{\phi}(s) :=  g_{\xi}(s) + \tilde{g}_{\phi}(s)$.
 %This is of the form \eqref{bif-soln},
  completing the proof.
\end{proof}

%Rescaling the fields, Theorem \ref{bifurcation-result} proves parts (a) and (b) of Theorem \ref{thm:AL-exist'}.

\section{Proof %of statements (a), (b) and (c)
 of  Theorem \ref{thm:AL-exist'}(a), (b)} \label{sec:main-thm-abc}

%In this section we shall prove Theorem \ref{thm:AL-exist'}(a), (b), (c).
Recall that $M_W$, $M_Z$, $M_H$ are the masses of the $W$, $Z$ and Higgs bosons, respectively, and that $\tau$ is the shape parameter of the lattice $\cL$ (see the paragraph before Theorem \ref{thm:lattice-shape} of Section \ref{sec:probl-res}).
We introduce the notation
\begin{align} \label{average}
 %\langle f \rangle:= \frac{1 }{|\Om|}  \int_\Om f\ \qquad \text{ and }\ \qquad 
 \langle f \rangle:= \frac{1 }{|\Om'|}  \int_{\Om'} f,  
\end{align}
the average of $f$ over fundamental domain %s $\Om$ and 
 $\Om'=\sqrt{\frac{2\pi}{|\Omega|}}\Omega$. 
 \DETAILS{Furthermore, we will use the space (cf. \eqref{L2-space})
\begin{align}
\label{L2-space'}& L^2_{\cL}  :=   \{\psi\in L^2_{loc}(\R^2,\R) : \psi(x+s)=\psi(x)\ \forall s\in\cL\}.
\end{align}}
Furthermore, we introduce the function (cf. \cite{MT}) 
\begin{equation}\label{eta'}
\eta\equiv \eta_{m_z,m_h}(\tau):=[m_w^2 \alpha_{m_z,m_h}(\tau)+\sin^2\theta]^{-1},
\end{equation} with, recall, $m_w :=  \sqrt n$,  $m_z :=  \frac{\sqrt n }{\cos\theta}$ and $m_h :=  \frac{\sqrt{4\lambda n}}{g}$ the masses of the rescaled W, Z and Higgs boson fields, $w$, $z$ and $\phi$, respectively, and 
\begin{align} 
\label{alpha'}
	&\alpha_{m_z,m_h}(\tau)  :=   \langle |\chi|^2 G_{m_z,m_h}(|\chi|^2)\rangle / (\langle|\chi|^2\rangle)^2.
	\end{align}	
Here  $\chi$ is defined in \eqref{chi-def}  and  $G_{m,m'}$ is the operator-family on the space \eqref{L2-space} %$L^2_\cL$??? 
 given by %, depending parametrically on $M_Z$ and $M_H$, given by
\begin{align} \label{U-def}
G_{m, m'}:= G_{m'}-G_{m},\ \quad \text{ where }\ \quad G_{m}:= (-\Delta+ m^2)^{-1}.   %(f)(\rho)  := (-\Delta+ M^2)^{-1}  % \frac{1 }{2\pi} %  \frac{1 }{|\Om|} \int_{M_1}^{M_2} \int_\Om |\rho-\rho'| K_1(M|\rho-\rho'|) f(\rho') d^2\rho'\ dM,
\end{align}
%acting on the space $L^2_\cL(\R^2)$. %for $f\in\cL^2_{loc}$, where $\Om$ is fundamental cell of $\cL$ and $K_1$ a modified Bessel function of the third kind.
\DETAILS{, and 
\begin{align}
X_r(x)  :=   r^{-1} \chi(r^{-1}x),
\end{align}
with $\chi$ given in \eqref{chi-def}.}
Note that $G_{m, m'}>0$ for $m'<m$. Recall  $M_W :=   \frac1{\sqrt{2}} g\varphi_0$, $M_Z :=   \frac1{\sqrt{2}\cos\theta} g\varphi_0$ and $M_H :=  \sqrt{2}\lambda\varphi_0$. 

\begin{proposition} \label{prop:b-asym}
	If $M_Z < M_H$, the parameter $s$ of the branch \eqref{bif-soln} is related to the magnetic field strength by
	\begin{align}\label{s2-asym}
	s^2  &= %\frac{eb}{g^2 \langle|\chi|^2\rangle'}
	\frac{n}{g^2 \langle|\chi|^2\rangle}\eta_{m_z,m_h}(\tau)\om + R_s(\om),\ \quad \om:=1-\frac{M_W^2}{eb},
	\end{align}
	where $R_s(\om)$ is a real, smooth function of $\om$ %with continuous derivatives of all orders 
	 satisfying %$R_s(\om) = \cO(|\om|^2)$, so that
	\begin{align}\label{s2-asym-rem}
	R_s(\om) = \cO(|\om|^2). %R_s([1-\frac{M_W^2}{eb}]) = \cO(|1-\frac{M_W^2}{eb}|^2).
	\end{align}
\end{proposition}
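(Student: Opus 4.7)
The plan is to convert the implicit relation between $\mu_s$ and $s$ from Theorem~\ref{bifurcation-result} into the asymptotic expansion of $s^2$ in $\omega$. Since $\mu = g^2\xi^2/2 = g^2 r^2\varphi_0^2/2 = nM_W^2/(eb)$ by the rescaling $\xi = r\varphi_0$ and $r^2 = n/(eb)$, we have
\[
\mu_s - n = n\Big(\frac{M_W^2}{eb} - 1\Big) = -n\,\omega.
\]
By Lemma~\ref{tilde-mu} the function $\mu_s$ is smooth and even in $s$, so $g_5(s^2) := \mu_s - n$ is smooth in $s^2$ and vanishes at the origin. Thus $\omega$ is a smooth function of $s^2$, and provided $\partial_{s^2} g_5(0) \neq 0$ the inverse function theorem gives a smooth inverse $s^2 = f(\omega)$ with $f(0)=0$; then $R_s(\omega) := f(\omega) - f'(0)\,\omega$ automatically satisfies $R_s(\omega) = \cO(\omega^2)$ by Taylor's theorem.

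The leading coefficient $\partial_{s^2} g_5(0)$ is extracted from the bifurcation equation $\overline\gamma_1(\mu_s, s) = 0$ of Section~\ref{sec:bif-res}. Using $H_1(\mu)\chi = (\mu - n)\chi$ together with the orthogonality $\langle\chi, u_1'\rangle_{L^2_n} = 0$ built into the Lyapunov--Schmidt decomposition, I expand
\[
\overline\gamma_1(\mu, s) = (\mu - n)\,\langle|\chi|^2\rangle + c_3(\tau)\, s^2 + \cO\bigl(|\mu-n|^2 + |\mu-n|\,s^2 + s^4\bigr),
\]
where
\[
c_3(\tau) := \lim_{s\to 0}\, s^{-3}\,\langle\chi,\, J_1(n,\, v_s + u_s')\rangle_{L^2_n}.
\]
Solving for $\mu - n$ and inserting $\mu_s - n = -n\,\omega$ gives $s^2 = (n\langle|\chi|^2\rangle/c_3(\tau))\,\omega + \cO(\omega^2)$, so matching against the stated formula reduces the proof to the identity
\[
c_3(\tau) = g^2\,\langle|\chi|^2\rangle^2\,\bigl(m_w^2\,\alpha_{m_z,m_h}(\tau) + \sin^2\theta\bigr).
\]

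To establish this identity I compute the leading $s^2$ back-reactions by solving the projected equations $F_k(n, u) = 0$ for $k = 2, 3, 4$, using $v_s = (s\chi, c_s, 0, 0)$ with $c_s = \cO(s^2)$ supplied by Lemma~\ref{gam-a-lem}. The Higgs equation $H_4\psi + J_4 = 0$ yields $\psi_s = -\tfrac{g\sqrt{2n}}{2}\,s^2\, G_{m_h}|\chi|^2 + \cO(s^4)$; the $Z$-boson equation yields $z_s = g\cos\theta\, s^2\, \curl^* G_{m_z}|\chi|^2 + \cO(s^4)$; and the $A$-field equation $H_2\alpha + P_0 J_2 = 0$ determines $\alpha_s'$ via the Green's function of $-\Delta$ applied to the mean-zero part of $|\chi|^2$. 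Substituting these back into the nonlinearity $J_1$ of \eqref{J1-def} and projecting onto $\chi$ produces three cubic contributions: the direct quartic self-interaction $g^2(\overline w \times w)Jw$, the Higgs back-reaction $g\sqrt{2\mu}\,\psi w$ (producing the $G_{m_h}$ piece of $\alpha_{m_z,m_h}$), and the gauge back-reaction $-i(\curl\tilde\nu)Jw$ where the decomposition $\tilde\nu = g(\alpha\sin\theta + z\cos\theta)$ together with the identity $e = g\sin\theta$ yields both the $\sin^2\theta$ summand (from the photon contribution through $\alpha_s'$) and the $-G_{m_z}$ piece (from the $Z$-boson contribution).

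The principal obstacle is the combinatorial bookkeeping of these three cubic contributions together with the $\cO(s^2)$ constant $c_s$ from Lemma~\ref{gam-a-lem}: one must verify that the $G_0$ (photon) and $\langle|\chi|^4\rangle$ terms combine to produce exactly $\sin^2\theta$ while leaving the Higgs--$Z$ difference $G_{m_z,m_h} = G_{m_h} - G_{m_z}$ intact in the final expression for $c_3(\tau)$. This computation is the electroweak analogue of the Abrikosov $\beta$-function computation in superconductivity and closely parallels \cite{MT}; the nontrivial cancellations reflect the electroweak unification encoded in $Q = gV + g'X$. Once $c_3(\tau)$ is identified, the smoothness of $R_s(\omega)$ as a function of $\omega$ follows from the smoothness of the branch in Theorem~\ref{bifurcation-result} via the inverse function theorem.
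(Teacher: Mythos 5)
Your proposal is correct and follows essentially the same route as the paper: the relation $\mu_s-n=-n\om$ is the paper's equation $\xi_s=\sqrt{n/(eb)}\,\varphi_0$ in the variable $\mu$, the inversion via smoothness in $s^2$ and nonvanishing of the leading coefficient is the paper's Lemma \ref{tildephi0neq0} combined with the Implicit Function Theorem step \eqref{s2-b-rel}--\eqref{s2-asym'}, and your identity for $c_3(\tau)$ is exactly $-g\sqrt{2n}\,\xi'\langle|\chi|^2\rangle=g^2\langle|\chi|^2\rangle^2\eta^{-1}$, i.e.\ the paper's \eqref{phi0-eqn-s}--\eqref{xi'}. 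The cubic-coefficient computation you leave as ``bookkeeping'' is carried out in the paper via \eqref{os4-eqns}, \eqref{s-soln}, \eqref{phi0-eqn'} and \eqref{v3-simp'}, and your description of the three contributions (Higgs giving $G_{m_h}$, the $Z$ back-reaction giving $-G_{m_z}$, and the photon mean-subtraction combining with the quartic term to yield $\sin^2\theta$) matches that computation precisely.
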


Before proving Proposition \ref{prop:b-asym}, we shall see how it implies statements (a) and (b) of  Theorem \ref{thm:AL-exist'}. 
\begin{proof}[Proof of  Theorem \ref{thm:AL-exist'}(a), (b)] Since the operator $G_{m_z,m_h}$ is positivity preserving,  the function  $G_{m_z,m_h}(|\chi|^2)$ is positive for $M_Z < M_H$, and hence  $\alpha_{m_z,m_h}(\tau)$ and $\eta_{m_z,m_h}(\tau)$ are positive. %Therefore, when $M_Z < M_H$, equation \eqref{phi0-ineq} is trivially satisfied.
Furthermore, when the right-hand side of \eqref{s2-asym} is positive, we %may take the square root, 
solve \eqref{s2-asym} for $s$ as a function of $b$, $s=s(b)$, having continuous derivatives of all orders. When $|1-\frac{M_W^2}{eb}| \ll 1$, the right-hand side of \eqref{s2-asym} is positive if and only if $1-\frac{M_W^2}{eb} > 0$.\footnote{%In terms of the lattice size $|\cL |$, 
The condition $0 < 1-\frac{M_W^2}{eb} \ll 1$ is equivalent to the condition $0 < 1-\frac{M_W^2}{2\pi}|\cL | \ll 1$ of Theorem \ref{thm:AL-exist'}.} Plugging $s = s(b)$ into \eqref{bif-soln} (i.e. passing from the bifurcation parameter $s$ to the physical parameter $b$), undoing the rescaling \eqref{rescaling}, and recalling that $b_* = \frac{M_W^2}{e}$, we arrive at the branch, $U_{\cL} \equiv (W_b, A_b, Z_b, \varphi_b)$, of solutions of \eqref{WS-eq1'} - \eqref{WS-eq4'}, which has the properties listed in statements (a) and (b) of Theorem \ref{thm:AL-exist'}. \end{proof} %\qquad $\Box$

% Let $(W_b, A_b, Z_b, \varphi_b)$ denote the branch of solutions of \eqref{WS-eq1'} - \eqref{WS-eq4'} obtained from \eqref{bif-soln} by undoing the rescaling, \eqref{rescaling}, and passing from the bifurcation parameter $s$ to the physical parameter $b$.

%we see that Theorem \ref{bifurcation-result} implies (a), (b) and (c) of Theorem \ref{thm:AL-exist'}. 
The following statement follows from the proof above:

\begin{lemma} \label{lem:U-differ} $U_{\cL}$ is continuously %G\^ateau
  differentiable of all orders in $b$ for $b$ in an open right half-interval of $b_*$ 
\end{lemma}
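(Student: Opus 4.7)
The plan is to combine two ingredients which are already in hand: the $C^\infty$-dependence of the bifurcation branch on the internal parameter $s$, established in Theorem \ref{bifurcation-result}, and the $C^\infty$-dependence of $s$ on the physical parameter $b$ on the open right half-interval of $b_*$. The main (and essentially only) step is the second ingredient; everything else is composition of smooth maps.

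First, I would recall that by Theorem \ref{bifurcation-result} the rescaled branch $s \mapsto (w_s,\al_s,z_s,\psi_s,\mu_s) \in \cX \times \R_{>0}$ is continuously differentiable of all orders in $s \in [0,\sqrt{\epsilon})$, since each $g_j$ is. Thus it suffices to show that the solution $s = s(b)$ of the scalar equation in Proposition \ref{prop:b-asym} is $C^\infty$ on $(b_*, b_* + \delta)$.

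The key step is then to extract from Proposition \ref{prop:b-asym} the factorization
\begin{equation*}
s^2 = c_0\,\om + R_s(\om) = \om\, H(\om),
\end{equation*}
where $\om = 1 - M_W^2/(eb)$, $c_0 = \frac{n}{g^2\langle|\chi|^2\rangle}\eta_{m_z,m_h}(\tau) > 0$ (the positivity uses $M_Z < M_H$, which makes $G_{m_z,m_h}$ positivity-preserving and hence $\eta_{m_z,m_h}(\tau) > 0$), and $H(\om) = c_0 + \om\,\tilde R(\om)$ is $C^\infty$ with $H(0) = c_0 > 0$. The function $\tilde R$ exists and is smooth by Taylor's theorem applied to $R_s$, using that $R_s$ is smooth and $R_s(\om) = \cO(|\om|^2)$. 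By continuity of $H$, there is $\delta > 0$ such that $H > 0$ on the relevant neighborhood of $\om = 0$, and for $b > b_*$ we have $\om(b) > 0$, so we may take positive square roots to obtain
\begin{equation*}
s(b) = \sqrt{\om(b)}\,\sqrt{H(\om(b))}.
\end{equation*}
Since $b \mapsto \om(b)$ is $C^\infty$ with values in $(0,\infty)$ for $b > b_*$, $\om \mapsto H(\om)$ is $C^\infty$ near $0$ with positive values, and $\sqrt{\,\cdot\,}$ is $C^\infty$ on $(0,\infty)$, I conclude that $s(\cdot)$ is $C^\infty$ on $(b_*, b_* + \delta)$.

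Finally, composing the smooth map $b \mapsto s(b)$ with the smooth branch $s \mapsto (w_s,\al_s,z_s,\psi_s)$ yields $C^\infty$-dependence of the rescaled fields on $b$. Undoing the rescaling \eqref{rescaling}--\eqref{r} with scale $r(b) = \sqrt{n/(eb)}$, which is manifestly $C^\infty$ on $b > 0$, then produces $U_\cL = (W_b, A_b, Z_b, \varphi_b)$ with the claimed smoothness in $b$. The only potential obstacle is at the endpoint $b = b_*$ itself, where $\om(b) = 0$ and the square root $\sqrt{\om}$ fails to be smooth; the lemma sidesteps this by restricting to an \emph{open} right half-interval of $b_*$, so no further analysis is needed.
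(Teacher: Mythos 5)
Your proposal is correct and follows essentially the same route as the paper: the paper derives Lemma \ref{lem:U-differ} directly from the proof of Theorem \ref{thm:AL-exist'}(a),(b), where $s=s(b)$ is obtained as a smooth function of $b$ from \eqref{s2-asym} on the region where its right-hand side is positive (i.e. $b>b_*$), and $U_{\cL}$ is then the composition of this with the smooth branch \eqref{bif-soln} and the rescaling. Your explicit factorization $s^2=\om H(\om)$ with $H(0)>0$ merely makes precise the square-root step that the paper leaves implicit.
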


\begin{proof}[Proof of Proposition \ref{prop:b-asym}]
	Consider the solution branch $(\mu_s,w_s,a_s, z_s)$ given in equation \eqref{bif-soln}  and described in Theorem \ref{bifurcation-result}. Using Taylor's theorem for Banach space-valued finctions (see e.g. \cite{berger}) and recalling the relation  $\xi = \sqrt{2\mu}/g$, we may expand this branch in $s$ as follows:
	\begin{align}\label{s-asym}
	&\begin{cases}
	w_s = s\chi + s^3  w' + %R_w(s), \quad R_w(s) =
	 \cO(|s|^5), \\
	a_s = \frac1e a^n + s^2 a' + s^4a'' + %R_a(s), \quad R_a(s) = 
	\cO(|s|^6), \\
	z_s = s^2 z' + %R_z(s), \quad R_z(s) = 
	\cO(|s|^4), \\
	\psi_s :=  \phi_s-\xi_{s} = s^2 \psi' + %R_\psi(s), \quad R_\psi(s) = 
	\cO(|s|^4), \\
	\xi_{s} :=   \sqrt{2\mu_s}/g= \sqrt{2n}/g +s^2 \xi' + %R_\xi(s), \quad R_\xi(s) = 
	\cO(|s|^4), 
	\end{cases} %\\&\quad \psi_s:= \phi_s-\xi_s, \nonumber
	\end{align}
	where $w',a',z',\psi',\xi'$ and $a''$ %$p'$ and $p''$, $p\in\{w,a,z,\psi,\xi\}$,
	 are the coefficients of $s^2$ and $s^4$, respectively, in the Taylor expansion of $g_j(s^2)$, $j=0,...,5$, in \eqref{bif-soln}, and   $\chi$ is defined in \eqref{chi-def}. %$\phi_{s} :=  \sqrt{2\mu_s}/g$.
	 Here $\cO(|s|^p)$ stand for various error terms %$R_p$, $p\in\{w,a,z,\psi,\xi\}$, represent functions on $\R^2$
	  which, together with their (covariant) derivatives, have norms of order $\cO(|s|^p)$ %($p=w$) or $\cO(|s|^4)$ ($p\neq w$)
	   when taken in the appropriate spaces. 
	
%In this section we shall prove Theorem \ref{thm:AL-exist'}(a)	
	
	%%(EDIT HERE)
	
	To rewrite the asymptotics in terms of the parameter  $b$, we analyze how $s$ depends on $b$. %Solving the equation $\phi_{s} = %r_{r,s}\varphi_0
% \sqrt{2\mu_s}/g	= \sqrt{\frac{n}{e b}}\varphi_0$ for $b$ gives $b=\frac{n\varphi_0^2}{e\phi_{s}^2}$. 
For this, we use the definitions $\xi_{s} =  \sqrt{2\mu_s}/g$ and $\mu:= \frac12 (g \xi)^2=\frac12 (g r \vphi_0)^2$, with $r := \sqrt{\frac{n}{eb}}$ (see \eqref{r}) to find the following equation  for $s^2$:
\begin{align}\label{s-eq}\xi_{s} = \sqrt{\frac{n}{e b}}\varphi_0.\end{align} 
To solve this equation for $s^2$, we use the Implicit Function Theorem. By \eqref{s-asym}, we can write $\xi_{s}=\sqrt{2n}/g+ g_{\xi}(s^2)$, where recall, $g_\xi(0) = 0$ and $g'_{\xi}(0) = \xi'$. Hence, we have to show that  $\xi' \neq 0$. 
\begin{lemma} \label{tildephi0neq0} We have $\xi' \neq 0$.\end{lemma}\begin{proof}
%To show the latter, we use %	To simplify \eqref{WS-energy-s-1},  Next, 
We find relations between $\psi'$, $a'$ and $z'$ entering \eqref{s-asym}. Plugging \eqref{s-asym} into Equations \eqref{WS-eq2-resc} - \eqref{WS-eq4-resc}, we obtain at order $s^4$
	\begin{align}\label{os4-eqns}
	\begin{cases}
	-\Delta a' - e\curl^* |\chi|^2 = 0\\
	(-\Delta+ \frac{n}{\cos^2\theta}) z' - g\cos\theta\curl^* |\chi|^2 = 0\\
	(-\Delta + \frac{4\lambda n}{g^2})\psi' + \frac g2 \sqrt{2n} |\chi|^2 = 0.  \\
	\end{cases}
	\end{align}
We solve these equations, using that $\curl^* |\chi|^2=\curl^* (|\chi|^2-\langle |\chi|^2\rangle')$ and $|\chi|^2-\langle |\chi|^2\rangle'\in \Ran (\Delta)$, for the first one, to find % and requiring that $\langle \curl a'\rangle'=0$, we find
\footnote{To check the %first relation below, 
solutions, one may use  that $\curl\curl^* =-\Delta $.} % and $\curl x^{\perp}=2$. % is decaying at infinity, we find % and using that $\chi = (\eta, i\eta)$ (with $\eta$ satisfying $\bar\p_{ a^n} \eta= 0$) and therefore $|\chi|^2=2 |\eta|^2$, we find %	which have the solutions
	\begin{align}\label{s-soln}
	\begin{cases}
	%a' = e \curl^* U_0(|\chi|^2) -\frac12 e\langle |\chi|^2\rangle' x^{\perp},\ \text{ or? }\ 
	a' = e \curl^* G_0(|\chi|^2-\langle |\chi|^2\rangle')\\
	z' = g\cos\theta\curl^*  G_{m_z}(|\chi|^2) \\
	\psi' = -\frac g2\sqrt{2n}\; G_{m_h}(|\chi|^2),  \\
	\end{cases}
	\end{align}
	where %, recall, $x^\perp := -x_2 dx_1 + x_1 dx_2$, 
 $G_m:=(-\Delta+ m^2)^{-1}$ acting on the space \eqref{L2-space} %$L^2_{\cL'}$
  (cf. \eqref{U-def}), and	$m_z :=  \frac{\sqrt n }{\cos\theta}$ and $m_h :=  \frac{\sqrt{4\lambda n}}{g}$ are the masses of the rescaled Z and Higgs boson ($\Phi$) fields, $z$ and $\phi$, respectively. 
	\DETAILS{, so that, for $f\in\cL^2_{loc}$,
	\begin{align}\label{Um}
	(U_m f)(x)  :=   \begin{cases}
	% \frac{1 }{2\pi}
  \frac{1 }{|\Om'|} \int_{\Om'} K_0(m|x-x'|) f(x') d^2x', \:\:\: m>0\\ % \nonumber\\
	% \frac{1 }{2\pi}
  \frac{1 }{|\Om'|} \int_{\Om'} -\text{ln}(|x-x'|) f(x') d^2x', \quad\quad m=0,
	\end{cases}
	\end{align}
	with {\bf $K_0$ a modified Bessel function of the third kind}. (Recall that $|\Om'|=2\pi$.) Note that $U_m$ satisfies
	\begin{align}
	(-\Delta + m^2) U_m f = f\ \text{ and }\ 
	%\end{align} and\begin{align}
	m^2\langle U_m f \rangle' = \langle f \rangle' \quad (m\neq 0). \label{U-prop}
	\end{align}}
Next, we use the following relation proven in Appendix \ref{Sec:Lemma-tilde-phi0}:
	\begin{align} \label{phi0-eqn'}
	%\int_{\Omega'} \mkern-9mu
	\lan g\sqrt{2n} \xi'  |\chi|^2\ran  =\lan -g\sqrt{2n} \psi' |\chi|^2 + \curl\nu'|\chi|^2 - g^2 |\chi|^4\ran ,
	\end{align}
where, recall, $\nu' := g(a'\sin\theta + z'\cos\theta)$. First, we evaluate  $\curl\nu'$. The   relations $(-\Delta + m^2) G_m  = \one$ and   $\curl\curl^* =-\Delta $ %and $\curl x^{\perp}=2$
  imply  $\curl a'= e (|\chi|^2-\langle |\chi|^2\rangle)$. Next, the second relation in   \eqref{s-soln} and the relation  $\curl\curl^* =-\Delta $ yield $\curl z'=g\cos\theta (-\Delta)(-\Delta+ m_z^2)^{-1} |\chi|^2$, which, together with $m_z :=  \frac{\sqrt n }{\cos\theta}$, gives $\curl z' = g \cos\theta |\chi|^2- g\frac{ n }{\cos\theta}G_{m_z}|\chi|^2$. % and therefore \[\curl\nu'=g\sin\theta e (|\chi|^2-\langle |\chi|^2\rangle')+(g\cos\theta)^2 |\chi|^2- g^2nU_{m_z}|\chi|^2.\] Using furthermore
  Finally, using that  $e:=g\sin \theta$, we conclude that
 \begin{align} \label{v3-simp'}
 	\curl \nu' = g^2|\chi|^2 - e^2\langle|\chi|^2\rangle - %g^2 m_z^2\cos^2\theta
	g^2n G_{m_z}(|\chi|^2).
	\end{align}
	 Plugging the last relation and equation \eqref{s-soln} into %\eqref{phi0-eqn'}
 the relation \eqref{phi0-eqn'}, gives %\eqref{phi0-eqn-s}.
 %\begin{align} \label{phi0-eqn''} & g\sqrt{2n} \xi'  \langle |\chi|^2 \rangle' =  -g\sqrt{2n} ( -\frac g2\sqrt{2n})\; \langle  \big(U_{m_h}(|\chi|^2) |\chi|^2\rangle'\\ & + \langle ((g\sin\theta)^2 (|\chi|^2-\langle |\chi|^2\rangle')+(g\cos\theta)^2 |\chi|^2- g^2nU_{m_z}|\chi|^2)|\chi|^2\rangle' \notag\\ & - g^2 \langle |\chi|^4\rangle'\big), \end{align}
 		\begin{align}\label{phi0-eqn-s}
		g\sqrt{2n} \xi' \langle & |\chi|^2 \rangle  = -g^2[m_w^2\langle G_{m_z,m_h}(|\chi|^2)|\chi|^2\rangle+\sin^2\theta (\langle |\chi|^2\rangle)^2],
		\end{align}
		where $m_w :=  \sqrt n$ is the mass of the rescaled $W$ boson field $w$ and the operator-family $G_{m,m'}$ is defined by \eqref{U-def}. % but on the space $L^2_{\cL'}(\R^2)$, rather than on $L^2_\cL(\R^2)$.  
		\DETAILS{, for $f\in\cL^2_{loc}$,
		\begin{align}
		(U_{m_1, m_2}f)(x)  :=  &  % \frac{1 }{2\pi}
  \frac{1 }{|\Om'|} \int_{m_2}^{m_1} \int_{\Om'} |x-x'| K_1(m|x-x'|) f(x') d^2x' dm \nonumber \\
		=& (U_{m_1}f)(x)-U_{m_2}(f)(x),
 \end{align}
		with {\bf $K_1$ a modified Bessel function of the third kind}.  {\bf(relation between $K_1$ and $K_0$, see \eqref{Um}? what happened to $\curl^*$?)}}
		We solve for $\xi'$ and write the solution as  
	\begin{equation} \label{xi'} %{phi0-ineq}
	\xi' = -\frac{g}{\sqrt{2n}} \langle |\chi|^2\rangle\eta^{-1},
	\end{equation}
where  $\eta\equiv \eta_{m_z,m_h}(\tau)$ is defined in \eqref{eta'}-\eqref{alpha'}.
The operator $G_{m_z,m_h}$ in \eqref{alpha'} is positivity preserving and therefore the function $\alpha_{m_z,m_h}(\tau)$ (and hence $\eta_{m_z,m_h}(\tau)$) is positive, if and only if $m_z < m_h$ (equivalently, $M_Z < M_H$), in which case $\xi' <0$. \end{proof}

%We return to equation \eqref{s-eq}. Since $\xi_{s}=\sqrt{2n}/g+ g_{\xi}(s^2)$, $g'_{\xi}(0) = \xi'\neq 0$ and $g_{\xi}(s^2)$ is continuously differentiable of all orders by Theorem \ref{bifurcation-result}, by the Implicit Function Theorem, we may solve \eqref{s-eq} for $s^2$, with the solution, $s^2 = \sigma(1-\frac{M_W^2}{eb}),\ \sigma:\R\to\R$, having continuous derivatives %as $s^2 = \sigma(1-\frac{M_W^2}{eb})$, where $\sigma:\R\to\R$ is a function with continuous derivatives
%of all orders.%, when $1-\frac{M_W^2}{eb}$ is sufficiently small. % and $\xi' \neq 0$. 
 
We now derive the estimate \eqref{s2-asym} - \eqref{s2-asym-rem} for $s^2$. Equations \eqref{s-asym} and \eqref{s-eq} give $\xi_s$ as a function of $s$ and $b$ respectively, yielding
\begin{align}
 \xi_s^2 = [\frac{\sqrt{2n}}{g} +g_\xi(s^2)]^2 = \frac{n}{eb}\varphi_0^2,
\end{align}
which can be rearranged to give
\begin{align}\label{s2-b-rel}
 \frac{2\sqrt{2n}}{g} g_\xi(s^2) + g_\xi(s^2)^2 = \frac{2n}{g^2}\om,
\end{align}
where, recall, $\om=1-\frac{M_W^2}{eb}$, with $M_W = \frac{1}{\sqrt 2} g\vphi_0$. Recall that $g_\xi(0) = 0$ and $g_\xi'(0) = \xi'$. We have
\begin{align}\label{s2-b-rel-deriv}
\frac{d}{ds^2}|_{s^2=0}[\frac{2\sqrt{2n}}{g} g_\xi(s^2) + g_\xi(s^2)^2] = \frac{2\sqrt{2n}}{g}\xi'.
\end{align}
 Since $\xi'\neq 0$ and $g_{\xi}(s^2)$ is continuously differentiable of all orders (see Theorem \ref{bifurcation-result}), by the Implicit Function Theorem, we may solve \eqref{s2-b-rel} for $s^2$, with the solution, $s^2 = s^2(\om)$, %\sigma(1-\frac{M_W^2}{eb}),\ \sigma:\R\to\R$, having 
 with continuous derivatives %as $s^2 = \sigma(1-\frac{M_W^2}{eb})$, where $\sigma:\R\to\R$ is a function with continuous derivatives
of all orders in $\om$. Explicitly, \eqref{s2-b-rel} - \eqref{s2-b-rel-deriv} give %, when $1-\frac{M_W^2}{eb}$ is sufficiently small. % and $\xi' \neq 0$. 
	\begin{align}\label{s2-asym'}
	s^2  = \frac {g}{2\sqrt{2n}}&\xi'^{-1}\frac{2n}{g^2}\om+ \cO(|\om|^2). %\nonumber \\
	\end{align}
Plugging \eqref{xi'} into \eqref{s2-asym'} gives
	\begin{align}
	&s^2=\frac{n}{g^2}\frac{\om}{ %[m_w^2\alpha'(m_z,m_h;\tau,r)+\sin^2\theta]
	\langle|\chi|^2\rangle}\eta  + %\tilde
	 R_s(\om), \label{s2-soln} %\\
	%&\tilde R_s(1-\frac{M_W^2}{eb}) = \cO(|1-\frac{M_W^2}{eb}|^2). \label{tildeR_s}
	\end{align}
which is \eqref{s2-asym}, with   $R_s(\om)$ satisfying $R_s(\om) = \cO(|\om|^2)$. Furthermore, since the solution $s^2 = s^2(\om)$ is continuously differentiable of all orders in $\om$, so is the remainder term $R_s(\om)$.
	%Since $K_1$ is positive when $m_z < m_h$ (equivalently, $M_Z < M_H$), the function $U_{m_z,m_h}(|X_r|^2)(\rho)$ is a positive function. In particular, $\alpha'(m_z,m_h;\tau,r)$ is positive. Therefore, when $M_Z < M_H$, \eqref{phi0-ineq} is trivially satisfied. Furthermore, when the right-hand side of \eqref{s2-asym'} is positive, we may solve for $s$ as a function of $b$, $s=s(b)$, which has continuous derivatives of all orders. When $|1-\frac{M_W^2}{eb}| << 1$, the right-hand side of \eqref{s2-asym'} is positive if and only if $1-\frac{M_W^2}{eb} > 0$. (In terms of the lattice size $|\Omega|$: $0 < 1-\frac{M_W^2}{2\pi}|\Omega| << 1$.) Plugging $s = s(b)$ into \eqref{bif-soln} and recalling that $b_* = \frac{M_W^2}{e}$, we see that Theorem \ref{bifurcation-result} (a), (b) and (c) of Theorem \ref{thm:AL-exist'}.
%	
%Next, we derive  \eqref{s2-asym} from \eqref{s2-soln}. For this, we
\end{proof}

\section{Asymptotics of the Weinberg-Salam energy near $b=M_W^2/e$} \label{sec:asymp}
%Denote $\cE_{\Omega}^{WS}^{WS} (W_b, A_b, Z_b, \varphi_b):= \frac{1}{|\Omega|}E_{\Omega}^{WS} (W_b, A_b, Z_b, \varphi_b)$. 
 Recall $\om=1-\frac{M_W^2}{eb}$, with $M_W = \frac{1}{\sqrt 2} g\vphi_0$, and $\eta_{m_z, m_h}( \tau)$ is defined in \eqref{eta'}. The main result of this section is the following:
	
\begin{theorem} \label{thm:Eb-asym}
	If $M_Z < M_H$, then the WS energy \eqref{WS-energy'} of the branch of solutions \eqref{bif-soln} has the following expansion:%is related to $\alpha(M_Z,M_H;\tau)$ by
	\begin{align} \label{WS-energy-s-4}
	\frac{1}{|\Omega|}E_{\Omega} (W_b, A_b, Z_b, \varphi_b) %; r)
	= \frac12 b^2 -\frac12 b^2\sin^2\theta\ &\eta_{m_z, m_h}( \tau) \om^2 %\nonumber \\  %\frac{[1-\frac{M_W^2}{eb}]^2}{ [M_W^2\alpha(M_Z,M_H;\tau)+\sin^2\theta]\langle|X_r|^2\rangle} \nonumber \\ 
	 + R_E(\om),
	\end{align}
	where $R_E(\om)$ is a real function with continuous derivatives of all orders satisfying %$R_E(\lambda) = \cO(|\lambda|^3)$ so that
	\begin{align}
	R_E(\om) = \cO(|\om|^3). %R_E(1-\frac{M_W^2}{eb}) = \cO(|1-\frac{M_W^2}{eb}|^3).
	\end{align}
\end{theorem}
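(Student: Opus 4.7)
The plan is to avoid substituting the expansion \eqref{s-asym} term-by-term into \eqref{WS-energy-resc}---which would require a careful accounting of all cubic and quartic self-interactions---and instead exploit the fact that, by Proposition~\ref{prop:reconstr}, $u_s$ solves the full (unprojected) rescaled WS system \eqref{WS-eq1-resc}--\eqref{WS-eq4-resc}, so that $\nabla_{L^2}\mathcal{E}_{\Omega'}(\,\cdot\,;r_s)\bigr|_{u=u_s}=0$. Writing $\mathcal{E}(s):=\mathcal{E}_{\Omega'}(u_s;r_s)$, an envelope-theorem computation then gives
\[
\mathcal{E}'(s)=\partial_r\mathcal{E}_{\Omega'}(u_s;r_s)\cdot\partial_s r_s,
\]
the $\partial_u\mathcal{E}$-contribution vanishing because $u_s$ is a critical point at the parameter value $r_s$. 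From the explicit form \eqref{WS-energy-resc}, the only $r$-dependence enters through $\xi=r\varphi_0$ in the Higgs potential, so $\partial_r\mathcal{E}(u;r)=-2\lambda\varphi_0\,\xi\int(\phi^2-\xi^2)$.

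I would then insert \eqref{s-asym}, giving $\phi_s^2-\xi_s^2=2\xi s^2\psi'+\mathcal{O}(s^4)$ and $\partial_s r_s=2s\xi'/\varphi_0+\mathcal{O}(s^3)$ (from $r_s=\xi_s/\varphi_0$ and $\xi_s=\xi+s^2\xi'+\mathcal{O}(s^4)$), to obtain
\[
\mathcal{E}'(s)=-8\lambda\xi^2\xi'|\Omega'|\langle\psi'\rangle\, s^3+\mathcal{O}(s^5).
\]
Integration in $s$, combined with $\xi'$ from \eqref{xi'}, $\psi'=-\tfrac{g\sqrt{2n}}{2}G_{m_h}|\chi|^2$ from \eqref{s-soln}, the mean-value identity $\langle G_m f\rangle=\langle f\rangle/m^2$, and $m_h^2=4\lambda n/g^2$, $\xi^2=2n/g^2$, then collapses to the clean identity
\[
\mathcal{E}(s)-\mathcal{E}(0)=-\frac{g^2|\Omega'|\langle|\chi|^2\rangle^2}{2\eta}\, s^4+\mathcal{O}(s^6).
\]
The $s^2$-coefficient vanishes automatically, bypassing any direct cancellation of the quadratic-form, cubic, and quartic terms in \eqref{WS-energy-resc}.

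Finally, substituting $s^2=\tfrac{n\eta}{g^2\langle|\chi|^2\rangle}\omega+\mathcal{O}(\omega^2)$ from Proposition~\ref{prop:b-asym} converts the $s^4$-term to $-\tfrac{n^2\eta|\Omega'|}{2g^2}\omega^2+\mathcal{O}(\omega^3)$, and undoing the rescaling via $\tfrac{1}{|\Omega|}E_\Omega=\tfrac{\mathcal{E}_{\Omega'}}{r^4|\Omega'|}$ (which follows from $\mathcal{E}_{\Omega'}=r^2 E_\Omega$ and $|\Omega|=r^2|\Omega'|$) together with $n^2/r^4=e^2b^2$ and $e^2/g^2=\sin^2\theta$ produces the claimed correction $-\tfrac{1}{2}b^2\sin^2\theta\,\eta\,\omega^2$, while the vacuum contribution $\tfrac{\mathcal{E}(0)}{r^4|\Omega'|}=\tfrac{n^2}{2e^2r^4}$ simplifies to $\tfrac{1}{2}b^2$, yielding \eqref{WS-energy-s-4}. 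Smoothness of $R_E(\omega)$ follows from the smooth $s$-dependence of the branch (Theorem~\ref{bifurcation-result}) together with the smooth map $s^2\leftrightarrow\omega$ (Proposition~\ref{prop:b-asym}). The main conceptual obstacle is the envelope-theorem step itself: one must know that $u_s$ solves the full system, and not merely the projected one used in the Lyapunov--Schmidt reduction of Section~\ref{sec:rf-dp}---this is precisely the content of Proposition~\ref{prop:reconstr}. Once this is granted, the remaining work is a short algebraic manipulation of quantities already computed in Section~\ref{sec:main-thm-abc}.
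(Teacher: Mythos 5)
Your proposal is correct, and it reaches \eqref{WS-energy-s-3} (hence \eqref{WS-energy-s-4}) by a genuinely different route from the paper. The paper expands every term of the rescaled energy \eqref{WS-energy-resc} along the branch to order $s^4$ (Appendix \ref{Sec:Lemma-Eos}), then laboriously simplifies the resulting integral by pairing terms against the order-$s^4$ equations \eqref{os4-eqns} and the identity \eqref{phi0-eqn'}; in particular the cancellation of the two $O(s^2)$ contributions (from $\frac12 g^2\phi^2|w|^2$ and $i(\curl\nu)\,\overline w\times w$) has to be verified by hand. You instead observe that $u_s$ is a critical point of $\cE_{\Omega'}(\,\cdot\,;r)$ at $r=r_s$ over the admissible variations --- which does require the full, unprojected system, i.e.\ Proposition \ref{prop:reconstr}, as you correctly flag --- so that $\frac{d}{ds}\cE_{\Omega'}(u_s;r_s)$ collapses to the explicit parameter dependence, which sits only in the Higgs potential through $\xi=r\vphi_0$. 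This Feynman--Hellmann-type identity makes the vanishing of the $O(s^2)$ term automatic and reduces the $s^4$-coefficient to $\frac{g\sqrt{2n}}{2}\xi'\lan|\chi|^2\ran$, which \eqref{xi'} converts to $-\frac{g^2}{2}\lan|\chi|^2\ran^2\eta^{-1}$; I have checked the constants (the mean-value identity $\lan G_m f\ran=m^{-2}\lan f\ran$, $m_h^2=4\lambda n/g^2$, $\xi^2=2n/g^2$) and the rescaling bookkeeping ($n^2/r^4=e^2b^2$, $e=g\sin\theta$), and they all come out right. What each approach buys: yours is shorter and conceptually cleaner, but it inherits the full weight of Proposition \ref{prop:reconstr} and of the formulas \eqref{s-soln}, \eqref{xi'} (themselves resting on \eqref{phi0-eqn'}) --- dependencies that are in any case already incurred by Proposition \ref{prop:b-asym}, so nothing new is assumed; the paper's direct expansion is more redundant but serves as an independent consistency check of those same identities, since the cancellations must be exhibited rather than guaranteed by criticality. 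The one point worth making explicit if you write this up is that $\partial_s u_s$ is an admissible variation (the automorphy exponents in \eqref{resc-gauge-1}--\eqref{resc-gauge-4} are $s$-independent because $n$ is fixed along the branch), so the pairing $\lan \grad_{L^2}\cE_{\Omega'}(u_s;r_s),\partial_s u_s\ran$ is legitimately zero.
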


Before proving Theorem \ref{thm:Eb-asym}, we derive from it Theorem \ref{thm:AL-exist'} (c). \begin{proof}[Proof of  Theorem \ref{thm:AL-exist'} (c)] Since $\eta_{m_z, m_h}(\tau)$ is positive,\footnote{See the discussion following Proposition \ref{prop:b-asym} for details.} the second term in Equation \eqref{WS-energy-s-4} is negative, and so for $ 0< 1-\frac{M_W^2}{eb} \ll 1$, $E_{\Omega}^{WS}$ is less than the vacuum energy $\frac{1}{2}b^2|\Omega|$. This proves Theorem \ref{thm:AL-exist'} (c). \end{proof}

\begin{proof}[Proof of  Theorem \ref{thm:Eb-asym}]	
	
Let $\cE' (w_s, a_s, z_s, \psi_s+\xi_{s}; r):= \frac{1}{|\Omega'|}\E_{\Omega'} (w_s, a_s, z_s, \psi_s+\xi_{s}; r)$, where $\E_{\Omega'}$ is  the rescaled WS energy given in \eqref{WS-energy-resc}.	In Appendix \ref{Sec:Lemma-Eos}, we %use %the above expressions in \eqref{s-asym} to
	 derive the following expansion (to order $s^4$) of $\cE'$ evaluated at family \eqref{s-asym} of solutions:
	
%	\begin{lemma} \label{Lemma-Eos}
		\begin{align} \label{WS-energy-s-1}
		\cE' (w_s, a_s, z_s, \psi_s+\xi_{s}; r) &= \frac12\frac{n^2}{e^2} + s^4\lan    \frac12 |\curl z'|^2 + \frac12 |\curl a'|^2\nonumber \\
		 +& \ g\sqrt{2n}(\psi' + \xi') |\chi|^2 + \frac{n}{2\cos^2\theta}|z'|^2  + |\nabla\psi'|^2\nonumber \\
		+&  \frac{4\lambda n}{g^2} \psi'^2 - |\chi|^2\curl \nu'
		+ \frac{g^2}{2}|\chi|^4\ran  \nonumber \\
		+& R_\varepsilon(s),
		\end{align}
		where $R_\varepsilon(s) = \cO(|s|^6)$ and has continuous derivatives of all orders, %and satisfies , 
		 $\nu' := g(a'\sin\theta + z'\cos\theta)$ and, recall,  $\xi_{ s}= \sqrt{2\mu_s}/g$. 
%	\end{lemma}
	
	To simplify notation, in what follows, we shall suppress the argument $(w_s, a_s, z_s, \psi_s+\xi_{ s}; r)$ of $\cE'$.	
%	To simplify \eqref{WS-energy-s-1}, we find relations between $\tilde\varphi$, $a'$ and $z'$. 	
%	To get rid of the constant $\xi'$ appearing in \eqref{WS-energy-s-1}, we use \eqref{phi0-eqn-s}. (how?)
	%To simplify notation below, we define
		%In Appendix \ref{Sec:Lemma-E-alpha}, 
	%Next, we use equations \eqref{os4-eqns} - \eqref{alpha-def-resc} to simplify \eqref{WS-energy-s-1} to
	We claim the following relation:
	%\begin{align} \label{WS-energy-s-2}
	%E_{WS} = \frac{|\Omega|}{2}\frac{n^2}{e^2} + s^4\int (\frac 12 \curl \nu' |\chi|^2 - \frac g2 \sqrt{2n}\psi'|\chi|^2 - g^2|\chi|^4) + \cO(|s|^6)
	%\end{align}
	%which, using Equations \eqref{s-soln}, we can further simplify to
	%\begin{lemma} \label{Lemma-E-alpha}
		\begin{align} \label{WS-energy-s-3}
		\cE' &= \frac12 \frac{n^2}{e^2} - s^4 \frac{g^2}{2}  \langle |\chi|^2\rangle^2\eta^{-1} %\nonumber \\ &\quad
		 + R_\varepsilon(s). %\\
		% R_\varepsilon(s) &= \cO(|s|^6),
		\end{align}
\DETAILS{where, recall, $\eta=\eta_{m_z,m_h}(\tau, r)$ %:=\big [m_w^2 \alpha_{m_z,m_h}(\tau)+\sin^2\theta %+\frac{\langle|\chi|^4\rangle}{\langle|\chi|^2\rangle^2}
		%\big]^{-1}$
		 and $\alpha_{m_z,m_h}(\tau,r)$ are given in \eqref{eta'} and \eqref{alpha'}.} 
%\end{lemma}%\section{Proof of \eqref{WS-energy-s-3}} \label{Sec:Lemma-E-alpha}
\begin{proof}[Proof of \eqref{WS-energy-s-3}]
	We simplify the integral at order $s^4$ in \eqref{WS-energy-s-1} by applying equations \eqref{os4-eqns} for $a'$, $z'$ and $\psi'$ to convenient groupings of terms. % in \eqref{s-soln} for $a'$, $z'$ and $\psi'$. %relations \eqref{s-soln} between $\tilde\varphi$, $a'$ and $z'$.
	%The integral in the $s^4$ term of \eqref{WS-energy-s-3} is the sum of
	%equation \eqref{phi0-eqn'} above and equations \eqref{int-s3} - \eqref{int-s5} below:
	
	First, we address the $z'$ terms in \eqref{WS-energy-s-1}. Integrating by parts and factoring out $z'$ gives
	\begin{align}
	 \frac12 \lan  |\curl z'|^2 + \frac{n}{\cos^2\theta}|z'|^2\ran   &= \frac12 \lan  z' \cdot (-\Delta + \frac{n}{\cos^2\theta}) z'\ran .
	\end{align}
	Applying \eqref{os4-eqns} for $z'$ gives
	\begin{align}
	 \frac12 \lan  |\curl z'|^2 + \frac{n}{\cos^2\theta}|z'|^2\ran   &= \frac12 \lan   z'\cdot g\cos\theta\curl^*|\chi|^2\ran  .
	\end{align}
	Integrating by parts again gives
	\begin{align}
	 \frac12\lan   |\curl z'|^2 + \frac{n}{\cos^2\theta}|z'|^2\ran   &=\frac12 \lan  g\cos\theta(\curl z') |\chi|^2\ran. \label{int-s3}
	\end{align}
	
	Next, we address the $a'$ term in \eqref{WS-energy-s-1}. Integrating by parts gives
	\begin{align}
	\lan \frac12 |\curl a'|^2\ran  &=\lan \frac12 a' \cdot (-\Delta)a'\ran .
	\end{align}
	Inserting into this expression \eqref{os4-eqns} for $a'$ gives
	\begin{align}
	\lan \frac12 |\curl a'|^2\ran  &= \lan  \frac12a' \cdot e\curl^* |\chi|^2\ran.
	\end{align}
	Integrating by parts again gives
	\begin{align}
	\lan \frac12 |\curl a'|^2\ran  &=\lan \frac12 g \sin\theta(\curl a') |\chi|^2\ran. \label{int-s4}
	\end{align}
	
	Next, we address the $\psi'$ terms. Integrating by parts and factoring out $\psi'$ gives
	\begin{align}
	\lan  |\nabla\psi'|^2 + \frac{4\lambda n}{g^2} &\psi'^2 + g\sqrt{2n}\psi' |\chi|^2\ran  \notag \\ &= \lan \psi' (-\Delta + \frac{4\lambda n}{g^2} + g\sqrt{2n}|\chi|^2) \psi'\ran .
	\end{align}
	Inserting into this expression  \eqref{os4-eqns} for $\psi'$ gives
	\begin{align}
	\lan  |\nabla\psi'|^2 + \frac{4\lambda n}{g^2} &\psi'^2 + g\sqrt{2n}\psi' |\chi|^2\ran  = \lan   \frac{g}{2}\sqrt{2n}\psi' |\chi|^2\ran . \label{int-s5}
	\end{align}
	
	For the $\xi'$ term in \eqref{WS-energy-s-1}, we have by \eqref{phi0-eqn'} and \eqref{xi'},
	\begin{align} \label{xi'-term}
		\lan g\sqrt{2n} \xi'  |\chi|^2\ran & =\frac12\lan g\sqrt{2n} \xi'  |\chi|^2\ran  + \frac12\lan g\sqrt{2n} \xi'  |\chi|^2\ran\notag\\
		&  =\frac12\lan -g\sqrt{2n} \psi' |\chi|^2 + \curl\nu'|\chi|^2 - g^2 |\chi|^4\ran\notag\\
		&   - \frac12g^2 \langle |\chi|^2\rangle^2\eta^{-1},
	\end{align}
where, recall, $\nu' := g(a'\sin\theta + z'\cos\theta)$. 	
	Finally, there are two remaining terms of the integral at order $s^4$ in \eqref{WS-energy-s-1},
	\begin{align}
	\lan  - |\chi|^2\curl\nu' + %\label{int-s1}\end{align}\begin{align}
	%	\int_{\Omega'}
	\frac{g^2}{2}|\chi|^4\ran , \label{int-s}	%\label{int-s2}
	\end{align}
 which we will not presently simplify.
 	
	Adding equations %\eqref{phi0-eqn'}, 
	\eqref{int-s3}, \eqref{int-s4}, \eqref{int-s5}, \eqref{xi'-term} and \eqref{int-s} and remembering  \eqref{WS-energy-s-1} gives 
\DETAILS{	\begin{align} \label{int-soln'}
	\cE' = \frac12 \frac{n^2}{e^2} + s^4 \lan \frac 12 g\sqrt{2n} \psi' |\chi|^2 - \frac12 \curl\nu'|\chi|^2 + \frac 12 g^2 |\chi|^4\ran ,
	\end{align}
\DETAILS{	Equations %\eqref{os4-eqns} 
	\eqref{s-soln} and \eqref{U-prop} imply that
	\begin{align}
	\curl \nu' = g^2|\chi|^2 - e g\sin\theta\langle|\chi|^2\rangle - g^2 m_z^2\cos^2\theta\ U_{m_z}(|\chi|^2). \label{v3-simp}
	\end{align}}
	where, recall $e = g\sin\theta$. % and $m_w^2 = m_z^2\cos^2\theta = n$. 
	 Plugging Equation \eqref{v3-simp'} for $\curl\nu'$ and \eqref{s-soln} for $\psi'$ into \eqref{int-soln'} gives} %the $s^4$ term of 
	Eq. \eqref{WS-energy-s-3}, as required.
\end{proof}

 	Plugging \eqref{s2-soln} into \eqref{WS-energy-s-3} gives
	\begin{align} \label{WS-energy-s-3'}
	\cE' = \frac12 \frac{n^2}{e^2} -& \frac12 \frac{n^2}{g^2} \om^2\eta_{m_z,m_h}(\tau, r)
	%\frac{[1-\frac{M_W^2}{eb}]^2}{m_w^2 \alpha'(m_z,m_h;\tau,r) + \sin^2\theta}
	+ \tilde R_\varepsilon(\om),%\\
%	\tilde R_\varepsilon(1-\frac{M_W^2}{eb}) &= \cO(|1-\frac{M_W^2}{eb}|^3),
	\end{align}
	where $\tilde R_\varepsilon(\om)$ has continuous derivatives of all orders and satisfies $\tilde  R_\varepsilon(\om) = \cO(|\om|^3)$.
	
 For the WS energy \eqref{WS-energy'}, evaluated at $(W_b, A_b, Z_b, \varphi_b)$, %depends on the \emph{$\alpha$ function}	
we recall that $E_{\Omega}^{WS} = \frac1{r^2} \E_{\Omega'}= \frac{1}{r^2}|\Omega'| \cE'$, which implies 
	\begin{align} \label{E-resc}
	\frac{1}{|\Omega|}E_{\Omega} = \frac{|\Omega'|}{r^2|\Omega|} \cE' = \frac{e^2 b^2}{n^2} \cE',  \quad r=\sqrt{\frac{|\Omega|}{|\Omega'|}}=\sqrt{\frac{n}{eb}}.
	\end{align}
	Eq. \eqref{WS-energy-s-4} follows by plugging \eqref{WS-energy-s-3'} %and \eqref{old-lem}
	 into \eqref{E-resc}. Since the remainder term %$ R_s$, $R_\alpha$ and 
	 $\tilde R_\varepsilon$ of %\eqref{s2-soln}, \eqref{old-lem} and \eqref{WS-energy-s-3'}, respectively) have 
	   \eqref{WS-energy-s-3'} has continuous derivatives of all orders, so does the remainder term $R_E$ of \eqref{WS-energy-s-4}. \end{proof}

\DETAILS{
	where $\beta$ is the \emph{Abrikosov function} \cite{Abr} (``constant" in the physics literature)
	\begin{align} \label{beta-def}
	\beta(\tau) :=  \langle|X_r|^4\rangle/\langle|X_r|^2\rangle^2.
	\end{align}
	We note that this agrees with the results of \cite{MT} except for the extra $\beta(\tau)$ term.}

%Since $K_1$ is positive when $M_Z < M_H$, the function $U_{M_Z,M_H}(|X_r|^2)(\rho)$ is also positive. In particular, $\alpha(M_Z,M_H;\tau)$ is positive and so the second term in Equation \eqref{WS-energy-s-4} is negative, so for $ 1-\frac{M_W^2}{eb}$ sufficiently small, $E_{WS}$ is less than the vacuum energy $\frac{1}{2}b^2|\Omega|$. Recalling that $|\Omega| = \frac{2\pi}{eb} n$, $n=1$ and $b_* = \frac{M_W^2}{e}$, this proves Theorem \ref{thm:AL-exist'} (d).

\section{Shape of lattice solutions}\label{sec:shape}
In this section we shall prove Theorem \ref{thm:lattice-shape}. Recall the shape parameter $\tau$ described in the paragraph preceding \eqref{fund-domSL2Z}. % (see Appendix \ref{sec:ls} for more details). 
We return briefly to working with the rescaled fields to prove that $\E_{\Omega'}(u; r),\ u=(w, \al, z, \psi),$ %see \eqref{u-deco} 
given in \eqref{WS-energy-resc}, (and hence $E_{\Om} (U)$) is continuously G\^ateau differentiable of all orders in the shape parameter $\tau$ (restricted to domain \eqref{fund-domSL2Z}), which enters through $\Omega'$ (and $\Om$), as well as the spaces containing $u$  (and $U$). Below, we write 
\begin{align}
& u_{\tau,b}(x) \equiv (w_{\tau,b}(x), a_{\tau,b}(x), z_{\tau,b}(x), \phi_{\tau,b}(x)), \\
&\label{WS-energy-resc'} \E(\tau, b,u) \equiv  \E_{\Omega'}(u; r)=:\int_{\Omega'} e (u; r),\\
 %r& \equiv \sqrt{n/eb}, \\
& U_{\tau,b}(x) \equiv (W_{\tau,b}(x), A_{\tau,b}(x), Z_{\tau,b}(x), \varphi_{\tau,b}(x)), \\
& E(\tau,b,U)\equiv   E_{\Om} (U), \\
& \cX_{\tau} \equiv \cX,
\end{align}
 to emphasize the dependence of the family of solutions \eqref{s-asym}, the corresponding energy $\eqref{WS-energy-resc}$ (respectively %\eqref{asym-nonresc} and
\eqref{WS-energy'}) and the space \eqref{space} containing these solutions
 on the shape parameter $\tau$, the magnetic field strength $b$ and the position in space $x\in\mathbb{R}^2$. 
Also, recall the notation $r:=\sqrt{n/eb}$.

To get rid of the dependency of the space $\cX_{\tau}$ containing $u_{\tau,b}$, on the shape parameter $\tau$, we make the change of coordinates %$y = m_{\tau}^{-1}x$ with $m_{\tau}=$\\$\frac1{\sqrt{\Im(\tau)}}\left(\begin{array}{cc}1 & \text{Re}(\tau) \\ 0 & \text{Im}(\tau) \end{array} \right)$. This defines a function
\begin{align}
M_{\tau}:\cX_{\tau}&\to\cX_1, %\nonumber\\
(M_{\tau}u)(x) = u(m_{\tau}x),\ m_{\tau}=\frac1{\sqrt{\Im(\tau)}}\left(\begin{array}{cc}1 & \text{Re}(\tau) \\ 0 & \text{Im}(\tau) \end{array} \right),\end{align}
%that is linear in $u$, this change of coordinates transforms
mapping $\Omega'$ into a square of area $2\pi$. This allows us to define the  functions $G':\C\times\R\times\cX_1\to \C\times\R\times\cY_1$ and $\varepsilon' : \C\times\R\times\cX_1\to \C\times\R\times\cY_1$ on the fixed space $\cX_1$:
\begin{align}
%G':\C\times\R\times\cX_1&\to \C\times\R\times\cY_1 \nonumber \\
G'(\tau,b,v) &= M_{\tau}G(b, M_{\tau}^{-1}v),\\
%\end{align} and  \begin{align}
%\varSigma : \C\times\R\times\cX_1&\to \C\times\R\times\cY_1 \nonumber \\
\varepsilon'(\tau,b,v) &= M_{\tau}\varepsilon (b,M_{\tau}^{-1}v),
\end{align}
where, recall, $G(b, v)$ is the map given by the left-hand side of \eqref{WS-eq1-resc} - \eqref{WS-eq4-resc}, given explicitly in \eqref{G-expl}, %\eqref{F-def}, 
and $\varepsilon_{WS} (b,u):=  \varepsilon (u; r)$ is the rescaled energy density given by the integrand in \eqref{WS-energy-resc}, %\eqref{e-dns-resc} (we modified notation above to explicitly show that $F$ and 
see  \eqref{WS-energy-resc'} ($\varepsilon $ depends on the magnetic field strength $b$ but does not directly depend on the shape parameter $\tau$).

\begin{lemma} \label{G-lem}
	$G'(\tau,b,v)$ and $\varepsilon'(\tau,b,v)$ are continuously G\^ateau differentiable of all orders in $\Re(\tau)$, $\Im(\tau)$, $b$ and $v$. %mapping $\C\times\R\times\cX_{\tau}$ to $\C\times\R\times\cY_{\tau}$.
\end{lemma}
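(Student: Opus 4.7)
The plan is to show that after the coordinate change $M_\tau$, both $G'$ and $\varepsilon'$ become polynomial expressions in $v$ and its first- and second-order derivatives on the fixed space $\cX_1$, with coefficients that depend smoothly on $(\Re\tau,\Im\tau,b)$ on the open set $\R\times\R_{>0}\times\R_{>0}$. Smoothness in $v$ is then immediate from the polynomial structure, and smoothness in the scalar parameters reduces to smoothness of the scalar coefficients.

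First I would unpack the explicit forms \eqref{G1-expl}--\eqref{G4-expl} and \eqref{WS-energy-resc} and confirm that $G(b,\cdot)$ and $\varepsilon(b,\cdot)$ are polynomial in $v=(w,\alpha,z,\phi)$ and its first- and second-order (covariant) derivatives, with coefficients that are either (i) fixed constants built from the model parameters $g,g',\lambda,\theta$; (ii) the scalar $\xi=\sqrt{n/eb}\,\varphi_0$ (which also enters $\mu=g^2\xi^2/2$) that is $C^\infty$ in $b$ on $\R_{>0}$; or (iii) the background connection $a^n(x)=(n/2)Jx$, which is linear in $x$ but independent of $\tau$ and $b$.

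Next I would compute how $M_\tau$ transports the differential operators. The chain rule gives $M_\tau^{-1}\partial_{x_i}M_\tau=(m_\tau)_{ji}\partial_{y_j}$, and inductively any $\partial^\alpha$ is conjugated into a linear combination of $\partial^\beta$ of the same order with coefficients that are polynomials in the entries of $m_\tau$. Multiplication operators pull back as $M_\tau^{-1}[f(x)\cdot]M_\tau=f(m_\tau^{-1}\cdot)$; in particular $a^n(m_\tau^{-1}y)=(n/2)J\,m_\tau^{-1}y$ is linear in $y$ with coefficients polynomial in the entries of $m_\tau^{-1}$. Since the entries of $m_\tau$ and $m_\tau^{-1}$ are rational in $\sqrt{\Im\tau}^{\,\pm1}$ and $\Re\tau$, they are $C^\infty$ functions of $(\Re\tau,\Im\tau)$ on the half-plane $\Im\tau>0$, with $\det m_\tau=1$.

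Assembling these facts, one sees that $G'(\tau,b,v)$ and $\varepsilon'(\tau,b,v)$ are polynomial in the components of $v$ and its first- and second-order $y$-derivatives, with coefficients smooth in $(\Re\tau,\Im\tau,b)$. Continuous G\^ateau differentiability of all orders in $v$ then follows from the polynomial dependence, and in $(\Re\tau,\Im\tau,b)$ from the smoothness of the scalar coefficients, since differentiation in these parameters only affects those coefficients and not the underlying Banach-space structure. The main (and only mild) obstacle is a bookkeeping one: one must check that $M_\tau$ maps $\cX_\tau$ boundedly and bijectively onto $\cX_1$ and correctly intertwines the gauge-periodicity conditions \eqref{resc-gauge-1}--\eqref{resc-gauge-4} (since $m_\tau$ sends the rescaled lattice for shape $\tau$ to the reference lattice for $\tau=i$), so that all maps involved act between the advertised Banach spaces. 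This follows directly from the definitions of $M_\tau$ and of the spaces $\cX_\tau$, $\cY_\tau$.
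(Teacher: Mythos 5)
Your proposal is correct and follows essentially the same route as the paper: conjugate the partial derivatives by $M_\tau$ (the paper records exactly your chain-rule identities as \eqref{M-tau-partial-1}--\eqref{M-tau-partial-2}), observe that $G$ and $\varepsilon$ are polynomial in $v$ and its (covariant) derivatives so that the change of variables only multiplies coefficients by smooth functions of $\Re\tau$ and $\Im\tau$, and deduce smoothness in $b$ and $v$ from that of $G(b,\cdot)$ and $\varepsilon(b,\cdot)$ together with the linearity of $M_\tau$. Your additional checks (the pull-back of the multiplication operator $a^n$ and the intertwining of the gauge-periodicity conditions by $M_\tau$) are sound bookkeeping that the paper leaves implicit.
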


\begin{proof}
	Since $G(b,u)$ and $\varepsilon (b,u)$ have continuous $b$ and $u$ derivatives of all orders, and $M_{\tau}$ is a linear map independent of $b$ and $v$, it follows that $G'(\tau,b,v)$ and $\varepsilon'(\tau,b,v)$ have continuous $b$- and $v$-derivatives of all orders.
	
	For the $\tau$-derivatives, note that 
	\begin{align}
&	M_{\tau}\circ\partial_{x_1}\circ M_{\tau}^{-1}(v_j)(x) = \frac1{\sqrt{\Im(\tau)}}\partial_{x_1}v_j(x),\quad j=1,...,4, \label{M-tau-partial-1}\\
	%\end{align}and\begin{align}
&	M_{\tau}\circ\partial_{x_2}\circ M_{\tau}^{-1}(v_j)(x) \nonumber \\ 
	&\qquad = \frac1{\sqrt{\Im(\tau)}}(\Re(\tau)\partial_{x_1}v_j(x) + \Im(\tau)\partial_{x_2}v_j(x)), \quad j=1,...,4, \label{M-tau-partial-2}
	\end{align}
	are continuously differentiable of all orders in $\Re(\tau)$ and $\Im(\tau)$. Since $G(b,u)$ and $\varepsilon_{WS}(b,u)$ are polynomials in the components of $u$ and their (covariant) derivatives, $G'$ and $\varSigma$ are simply $G$ and $\varepsilon_{WS}$ with the coefficients of the derivative-containing terms multiplied by smooth functions of $\Re(\tau)$ and $\Im(\tau)$. Therefore $G'(\tau,b,v)$ and $\varSigma(\tau,b,v)$ have continuous $\Re(\tau)$- and $\Im(\tau)$-derivatives of all orders.
\end{proof}

\begin{lemma} \label{v-tau-lem}
	$v_{\tau,b} :=  M_{\tau}u_{\tau,b}$ is continuously differentiable of all orders in $\text{Re}(\tau)$ and $\text{Im}(\tau)$.
\end{lemma}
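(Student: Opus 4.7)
The plan is to redo the Lyapunov--Schmidt reduction of Sections \ref{sbp}--\ref{sec:bif-res} in the $\tau$-independent space $\cX_1$, treating $\tau$ as an additional smooth parameter. Since $M_\tau$ is a linear isomorphism $\cX_\tau\to\cX_1$, a pair $(\tau,b,u_{\tau,b})$ solves $G(b,u)=0$ in $\cX_\tau$ if and only if $(\tau,b,v_{\tau,b})$, with $v_{\tau,b}:=M_\tau u_{\tau,b}$, solves $G'(\tau,b,v)=0$ in $\cX_1$. By Lemma \ref{G-lem}, $G'$ is continuously differentiable of all orders in $(\Re\tau,\Im\tau,b,v)$, so the whole bifurcation argument may be repeated with $\tau$ entering as a smooth external parameter.

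First I would identify the kernel $\cK'(\tau):=\Null\bigl(\del_v G'(\tau,n,0)\bigr)=M_\tau\cK$. The generator $\chi$ of $\Null H_1(n)$ is given explicitly by the theta-function formulas of Proposition \ref{prop:Landau-ham-spec}, which are analytic in $\tau$ on $\bH$; together with the smoothness of the change of coordinates \eqref{M-tau-partial-1}--\eqref{M-tau-partial-2}, this shows that $\chi'(\tau,\cdot):=M_\tau\chi$, and hence the full kernel $\cK'(\tau)$, depends continuously differentiably of all orders on $(\Re\tau,\Im\tau)$. The constant-vector component of $\cK'(\tau)$ is $\tau$-independent.
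Consequently the Riesz projection
\[
P'(\tau)=-\frac{1}{2\pi i}\oint_{\gamma_n}\bigl(\del_v G'(\tau,n,0)-z\bigr)^{-1}dz,
\]
taken along a small contour $\gamma_n$ enclosing $0$ and no other eigenvalue, inherits the smooth $\tau$-dependence (the spectral gap of $\del_v G'(\tau,n,0)$ established in Theorem \ref{thm:WS-lin-op-lowestEV} is locally uniform in $\tau$).

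With these $\tau$-smooth ingredients in place, I would rerun Proposition \ref{prop:sol-u'} and Lemmas \ref{gam-a-lem}, \ref{gam-w-lem} with $\tau$ added to the list of parameters. The auxiliary equation $(\one-P'(\tau))G'(\tau,\mu,v+u')=0$ is solved by the Implicit Function Theorem applied to a smooth map between Banach spaces whose $u'$-derivative at $(\tau,n,0,0)$ is the invertible operator $(L_{n,\mu=n}^\perp)'(\tau)$; this produces $u'=u'(\tau,\mu,v)$ continuously differentiable of all orders in $(\tau,\mu,v)$. The resulting finite-dimensional bifurcation equation $\gamma'(\tau,\mu,s,c)=0$ is smooth in $(\tau,\mu,s,c)$, and the Implicit Function Theorem arguments of Lemmas \ref{gam-a-lem}--\ref{gam-w-lem} go through verbatim with $\tau$ carried as a dummy parameter, yielding $c=c(\tau,\mu,s^2)$ and $\mu=\mu(\tau,s^2)$ with continuous derivatives of all orders in all arguments. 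Inverting the relation \eqref{s-eq} as in Proposition \ref{prop:b-asym} expresses $s^2$ as a smooth function of $(\tau,b)$. Assembling these pieces gives $v_{\tau,b}$ continuously differentiable of all orders in $(\Re\tau,\Im\tau,b)$; by the uniqueness clause of Theorem \ref{bifurcation-result} (applied in $\cX_1$) this branch must coincide with $M_\tau u_{\tau,b}$, proving the lemma.

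The main obstacle is the smooth $\tau$-dependence of the kernel element $\chi'(\tau)$, equivalently of the Riesz projection $P'(\tau)$. Away from the explicit theta-function construction one would invoke analytic perturbation theory for self-adjoint operators with isolated eigenvalues; here the theta-function formula \eqref{theta-repr} combined with the smoothness of $m_\tau$ in $(\Re\tau,\Im\tau)$ makes the dependence manifest, and everything downstream (the Riesz projection, the reduced equations, and the Implicit Function Theorem steps) is then a routine parametric extension of the arguments already carried out.
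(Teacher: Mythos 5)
Your proposal is correct in outline, but it takes a genuinely different and much longer route than the paper. The paper's proof is a two-line application of the Implicit Function Theorem directly to the full equation $G'(\tau,b,v)=0$ at the already-constructed nontrivial solution: it observes that $G'(\tau_0,b,v_{\tau_0,b})=0$, asserts that $\del_v G'(\tau_0,b,v_{\tau_0,b}) = M_{\tau_0}\circ\del_u G(b,u_{\tau_0,b})\circ M_{\tau_0}^{-1}$ is invertible, invokes Lemma \ref{G-lem} for smoothness of $G'$ in $(\Re\tau,\Im\tau,b,v)$, and concludes. You instead thread $\tau$ through the entire Lyapunov--Schmidt machinery of Sections \ref{sbp}--\ref{sec:bif-res}: smooth $\tau$-dependence of the kernel $\chi$ via the theta-function formula, of the Riesz projection, of the auxiliary solution $u'(\tau,\mu,v)$, and of the reduced bifurcation equations, finally matching the resulting branch to $M_\tau u_{\tau,b}$ by uniqueness. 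What your route buys is robustness: it never needs invertibility of the full Hessian at the bifurcated solution, which is the delicate point of the paper's argument (the linearization at $u_{\tau,b}$ still carries near-zero and exact zero modes coming from the residual gauge and translation symmetries, cf. Proposition \ref{non-neg-prop} and \eqref{Gf0}, so the paper's invertibility claim should really be read modulo those modes or on a suitable quotient). What it costs is length -- essentially a re-derivation of Proposition \ref{prop:sol-u'} and Lemmas \ref{gam-a-lem}--\ref{gam-w-lem} with one extra parameter -- all of which is routine once Lemma \ref{G-lem} and the $\tau$-smoothness of $\chi$ are in hand. Both arguments are legitimate; yours is the more conservative one, the paper's the more economical.
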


\begin{proof}
Let $\tau_0$ be an arbitrary shape parameter, and recall that $\del_\#$ denotes the partial (real) G\^ateaux derivative with respect to $\#$. Then $G'(\tau_0,b,v_{\tau_0,b}) = \\ M_{\tau_0}G(b,u_{\tau_0,b}) =0$, $\del_v G(\tau_0,b,v_{\tau_0,b}) = M_{\tau_0} \circ \del_u G(b,u_{\tau_0,b}) \circ M_{\tau_0}^{-1}$ is invertible, and by Lemma \ref{G-lem}, $G'$ is continuously G\^ateau differentiable of all orders in $\tau$, $b$ and $v$. Therefore, by the Implicit Function Theorem, the unique solution $v_{\tau,b}$ to the equation $G(\tau,b,v) = 0$ is continuously differentiable of all orders in $\text{Re}(\tau)$ and $\text{Im}(\tau)$ near $(\text{Re}(\tau), \text{Im}(\tau)) = (\text{Re}(\tau_0), \text{Im}(\tau_0))$. Since $\tau_0$ was arbitrary, this proves the result.
\end{proof}

\DETAILS{
	Recall that $\varepsilon_{WS}(b,u)$ is the rescaled energy density \eqref{e-dns-resc} (where we emphasize the dependence on the magnetic field strength $b$, and note that it is independent of the shape parameter $\tau$).
	
	\begin{lemma} \label{ep-lem}
		$\partial_{\tau}^k[\varepsilon_{WS}(\tau,b,v_{\tau,b})]\in\cY_{\tau}$ for all $k\in\Z_{\geq 0}$.
	\end{lemma}
}

\begin{proposition}
	$E (\tau,b,U_{\tau,b})$ is continuously differentiable of all orders in $\text{Re}(\tau)$ and $\text{Im}(\tau)$.
\end{proposition}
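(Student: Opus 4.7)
The plan is to reduce $E(\tau,b,U_{\tau,b})$ to an integral over a $\tau$-independent domain of a smooth integrand, and then differentiate under the integral sign. Concretely, by the rescaling identity \eqref{WS-energy-resc-def}, we have
\begin{equation*}
E(\tau,b,U_{\tau,b}) \;=\; \frac{1}{r^2}\,\E(\tau,b,u_{\tau,b}) \;=\; \frac{1}{r^2}\int_{\Omega'(\tau)} \varepsilon(u_{\tau,b}(x);r)\,dx,
\end{equation*}
where $r=\sqrt{n/eb}$ depends only on $b$ (not on $\tau$) and $\varepsilon$ is the rescaled energy density from \eqref{WS-energy-resc}.

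Next, perform the change of variables $x=m_\tau y$, sending the $\tau$-dependent fundamental cell $\Omega'(\tau)$ to the fixed unit cell $\Omega_1=m_\tau^{-1}\Omega'(\tau)$. The Jacobian $|\det m_\tau|=1$, so no extra $\tau$-dependent factor appears. Since $v_{\tau,b}(y)=(M_\tau u_{\tau,b})(y)=u_{\tau,b}(m_\tau y)$, the chain rule identities \eqref{M-tau-partial-1}--\eqref{M-tau-partial-2} express the pull-back of $\varepsilon(u_{\tau,b};r)$ under $x=m_\tau y$ precisely as $\varepsilon'(\tau,b,v_{\tau,b})(y)$, the rescaled density after transplantation to $\cX_1$. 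Hence
\begin{equation*}
E(\tau,b,U_{\tau,b}) \;=\; \frac{1}{r^2}\int_{\Omega_1} \varepsilon'(\tau,b,v_{\tau,b}(y))\,dy.
\end{equation*}

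Now the integration domain $\Omega_1$ and the prefactor $1/r^2$ are independent of $\tau$. By Lemma \ref{G-lem}, the map $(\tau,b,v)\mapsto\varepsilon'(\tau,b,v)$ is continuously G\^ateau differentiable of all orders in $\Re(\tau),\Im(\tau),b,v$; by Lemma \ref{v-tau-lem}, $\tau\mapsto v_{\tau,b}\in\cX_1$ is continuously differentiable of all orders in $\Re(\tau),\Im(\tau)$. Composing via the chain rule, the integrand $y\mapsto\varepsilon'(\tau,b,v_{\tau,b}(y))$ is continuously differentiable of all orders in $(\Re(\tau),\Im(\tau))$ with values in, say, $L^1(\Omega_1)$ (in fact with continuous dependence on $y$ as well, by Sobolev embedding since $\cX_1$ embeds into $C^0$ for first-order derivatives of $v$).

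Finally, because $\Omega_1$ is compact and each $\tau$-derivative of the integrand depends continuously on $(\tau,y)$ on a neighbourhood of any fixed $\tau_0$, each such derivative is uniformly bounded on $\Omega_1$ locally in $\tau$. This justifies differentiation under the integral sign to all orders, yielding that $\tau\mapsto E(\tau,b,U_{\tau,b})$ is continuously differentiable of all orders in $\Re(\tau)$ and $\Im(\tau)$. The main subtle step is the identification of the pulled-back density as $\varepsilon'(\tau,b,v_{\tau,b})$; this follows routinely from \eqref{M-tau-partial-1}--\eqref{M-tau-partial-2} and the fact that $\varepsilon$ is polynomial in $u$ and its first derivatives, so no further obstacle arises.
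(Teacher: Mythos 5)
Your proof is correct and follows essentially the same route as the paper's: the change of variables $y=m_\tau^{-1}x$ (with $\det m_\tau=1$) to fix the domain of integration, followed by Lemmas \ref{G-lem} and \ref{v-tau-lem} and differentiation under the integral sign. The extra details you supply (the prefactor $1/r^2$ depending only on $b$, and the local uniform bounds justifying differentiation under the integral) are sound refinements of the same argument.
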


\begin{proof}
	To get rid of the dependency of $\E (\tau,b,u_{\tau,b})$ on the domain of integration $\Omega'$, we again make the change of coordinates $y = m_{\tau}^{-1}x$. \DETAILS{with
		$m_{\tau}=\sqrt{\frac{2\pi}{\text{Im}(\tau)}}\left(\begin{array}{cc}
		1 & \text{Re}(\tau) \\ 0 & \text{Im}(\tau) \end{array} \right)$
		to turn the domain of integration $\Omega'$ into the unit square.} Then
	\begin{align}
	\E(\tau,b,u_{\tau,b}) = &\int_{\Omega'} \varepsilon (b,u_{\tau,b})(x)\; d^2x \nonumber\\
	= &\int_0^{\sqrt{2\pi}}\int_0^{\sqrt{2\pi}} \varepsilon'(\tau,b,v_{\tau,b})(y) d^2y. \label{EWS-integrand}
	\end{align}
	%where $\varepsilon_{WS}(b,u):=  \varepsilon_{WS}(u;r)$ is the integrand in . %\footnote{We change notation to explicitly show that $\varepsilon_{WS}$ depends on the magnetic field strength $b$ and does not depend on the shape parameter $\tau$.} is the rescaled energy density \eqref{e-dns-resc}. 
	%Since $\varepsilon_{WS}(b,u)$ is a polynomial in the components of $u=(w,a,z,\phi)\in\cX$ and their first-order (covariant) derivatives, it has continous $u$-derivatives of all orders.
	By Lemma \ref{v-tau-lem}, $v_{\tau,b}$ has continuous $\Re(\tau)$- and $\Im(\tau)$-derivatives of all orders, and by Lemma \ref{G-lem}, $\varSigma$ has continuous derivatives of all orders mapping $\C\times\R\times\cX_1$ to $\C\times\R\times\cY_1$. In particular, the $\Re(\tau)$- and $\Im(\tau)$-derivatives of %the integrand in \eqref{EWS-integrand} 
	$\varepsilon'(\tau,b,v_{\tau,b})$ remain integrable, so we conclude that $\E(\tau,b,u_{\tau,b})$ (and hence $E (\tau,b,U_{\tau,b})$) is continuously differentiable of all orders in $\text{Re}(\tau)$ and $\text{Im}(\tau)$.
\end{proof}

\DETAILS{
	Define 
	\begin{align} \label{delta-def1}
	\delta(M_Z,M_H,\tau) := & \frac{M_W^2\alpha(M_Z,M_H;\tau)+\sin^2\theta+\beta(\tau)}{[M_W^2\alpha(M_Z,M_H;\tau)+\sin^2\theta]^2} %\\ \label{delta-def2}
	%=& \eta(M_Z,M_H;\tau) [1+ \eta(M_Z,M_H;\tau)\beta(\tau)]
	.
	\end{align}
	(This is the coefficient of $(1-M_W^2/eb)^2$ in \eqref{WS-energy-s-4} up to a factor of $-\frac12 b^2\sin^2\theta$.)
}

\begin{theorem} \label{thm-tau*}
	When $M_Z < M_H$, the minimizers $\tau_b$ of $E (\tau, b, U_{\tau,b})$ are related to the maximizers $\tau_*$ of $\eta_{m_z, m_h}(\tau)$ as $\tau_b-\tau_* = \cO(|1-\frac{M_W^2}{eb}|^{\frac12})$. In particular, $\tau_b \to \tau_*$ as $b \to b_* = M_W^2/e$.
\end{theorem}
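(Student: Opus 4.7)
The plan is to combine the energy expansion of Theorem~\ref{thm:Eb-asym} with the joint smoothness in $\tau$ established just before the statement and with a quadratic lower bound for $\eta_{m_z,m_h}$ near its maximizers.

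First I would use Theorem~\ref{thm:Eb-asym}, making the (so far implicit) $\tau$-dependence of the remainder explicit, to write
\begin{equation}\label{eq:taustar-expand}
\tfrac{1}{|\Omega|} E_\Omega(U_{\tau,b}) \;=\; \tfrac{1}{2} b^2 \;-\; \tfrac{1}{2} b^2 \sin^2\theta\, \eta_{m_z,m_h}(\tau)\, \omega^2 \;+\; R_E(\omega,\tau),
\end{equation}
with $\omega = 1 - M_W^2/(eb)$ and $R_E(\omega,\tau) = \mathcal{O}(|\omega|^3)$ uniformly for $\tau$ in a fixed compact subset of the fundamental domain; joint smoothness of $R_E$ in $(\omega,\tau)$ follows from Lemma~\ref{lem:U-differ}, Lemma~\ref{v-tau-lem}, and the proposition preceding this theorem (all of which apply uniformly on compact sets of $\tau$). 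Dropping the $\tau$-independent constant $\tfrac12 b^2$ and dividing through by $\tfrac12 b^2 \sin^2\theta\, \omega^2 > 0$, the minimization in $\tau$ is equivalent to minimizing
\begin{equation}
F(\tau;\omega) := -\eta_{m_z,m_h}(\tau) + \widetilde{R}(\omega,\tau), \qquad \widetilde{R}(\omega,\tau) := \frac{2 R_E(\omega,\tau)}{b^2 \sin^2\theta\, \omega^2} = \mathcal{O}(|\omega|).
\end{equation}

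Let $\tau_*$ be a maximizer of $\eta_{m_z,m_h}$. Comparing $F(\tau_b;\omega) \le F(\tau_*;\omega)$ and rearranging gives the one-sided bound
\begin{equation}\label{eq:taustar-onesided}
\eta_{m_z,m_h}(\tau_*) - \eta_{m_z,m_h}(\tau_b) \;\le\; \widetilde{R}(\omega,\tau_*) - \widetilde{R}(\omega,\tau_b) \;=\; \mathcal{O}(|\omega|).
\end{equation}
A compactness/continuity argument, together with \eqref{eq:taustar-onesided} and the fact that $\alpha_{m_z,m_h}(\tau)$ diverges as $\tau$ approaches the cusp of the fundamental domain (so that $\eta_{m_z,m_h}$ attains its supremum on a compact set), then shows that any sequence $\tau_{b_k}$ with $b_k \to b_*$ accumulates on the set of maximizers of $\eta_{m_z,m_h}$; selecting a nearby such maximizer $\tau_*$, we may assume $\tau_b$ lies in a fixed small neighborhood of $\tau_*$ for all $b$ sufficiently close to $b_*$.

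The second ingredient is a quadratic lower bound at $\tau_*$: assuming the Hessian of $\eta_{m_z,m_h}$ at $\tau_*$ (in the real coordinates $(\Re\tau,\Im\tau)$) is strictly negative definite, Taylor's theorem gives constants $c,r>0$ with
\begin{equation}\label{eq:taustar-quad}
\eta_{m_z,m_h}(\tau_*) - \eta_{m_z,m_h}(\tau) \;\ge\; c\,|\tau - \tau_*|^2, \qquad |\tau-\tau_*| < r.
\end{equation}
Combining \eqref{eq:taustar-onesided} with \eqref{eq:taustar-quad} yields $c |\tau_b - \tau_*|^2 \le C|\omega|$, which is the desired estimate $|\tau_b - \tau_*| = \mathcal{O}(|\omega|^{1/2})$, and in particular $\tau_b \to \tau_*$ as $b \to b_*$.

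The principal obstacle in this program is verifying the non-degeneracy of the Hessian of $\eta_{m_z,m_h}$ at its maximizers (equivalently, of $\alpha_{m_z,m_h}$ at its minimizers). This calls for a careful analytic study of $\alpha_{m_z,m_h}(\tau) = \langle|\chi|^2 G_{m_z,m_h}(|\chi|^2)\rangle/\langle|\chi|^2\rangle^2$ via theta-function identities for $\chi(\tau)$ together with the spectral/modular structure of the Green's operators $G_{m_z}$ and $G_{m_h}$, in the spirit of the analogous non-degeneracy result for the classical Abrikosov function $\beta(\tau) = \langle|\chi|^4\rangle/\langle|\chi|^2\rangle^2$. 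The explicit identification $\tau_* = \tau_{\rm hex}$ needed in Theorem~\ref{thm:lattice-shape} would be extracted from the same analysis.
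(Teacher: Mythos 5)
Your argument is essentially the paper's: after subtracting the $\tau$-independent constant $\tfrac12 b^2$ and dividing by $\omega^2$, you compare the energy values at $\tau_b$ and $\tau_*$ and extract $|\tau_b-\tau_*|=\cO(|\omega|^{1/2})$ from a quadratic Taylor expansion, which is exactly the paper's proof of Theorem~\ref{thm-tau*} (the paper centers the expansion at $\tau_b$ using $\partial_\tau\tilde E(\tau_b)=0$ rather than at $\tau_*$ using $\nabla\eta(\tau_*)=0$, an immaterial difference). The strict negativity of the Hessian of $\eta_{m_z,m_h}$ at the maximizer, which you rightly flag as the remaining obstacle, is likewise assumed implicitly (not proved) in the paper's argument, and your added localization and uniformity-in-$\tau$ remarks only make the same proof more careful.
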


\begin{proof}
	The minimizers ($\tau_b$) of $E (\tau,b,U_{\tau,b})$ are equivalent to the minimizers of the energy functional $\tilde{E} (\tau,U_{\tau,b}) :=  \om^{-2}(E (\tau,b,U_{\tau,b})-\frac12 b^2)$. By Theorem \ref{thm:Eb-asym}, we have
	\begin{align}
\tilde{E} (\tau,U_{\tau,b}) %&:=  \om^{-2}(E (\tau,b,U_{\tau,b})-\frac12 b^2) \nonumber	\\ & 
 \nonumber= - \frac12 b^2 \sin^2\theta\ \eta_{m_z, m_h}(\tau) + \cO(|\om|),
		\end{align} 
where, recall, $\om=1-\frac{M_W^2}{eb}$.		 Since $\partial_{\tau}\tilde{E} (\tau,U_{\tau,b})|_{\tau=\tau_b}=0$,	 we have the expansion 	
	\begin{align}
\tilde{E} (\tau_*,U_{\tau_*,b}) - \tilde{E} (\tau_b,U_{\tau_b,b})&= \frac12 \partial_{\tau}^2\tilde{E} (\tau_b,U_{\tau_b, b}) %|_{\tau=\tau_b}
 [\tau_*-\tau_b]^2 + \cO([\tau_*-\tau_b]^3)\nonumber\\
 \nonumber	= &  - \frac14 b^2 \sin^2\theta\ \partial^2_{\tau}\eta_{m_z, m_h}(\tau_b)  [\tau_*-\tau_b]^2\nonumber\\& + \cO([\tau_*-\tau_b]^3) + \cO(|\om|).
		\end{align} 
		For both expansions to hold, we must have $\tau_b-\tau_* = \cO(|\om|^{\frac12})$, as required. %Since maximizers of $\eta(M_Z,M_H;\tau)$ are in one-to-one correspondence with minimizers of $\alpha(M_Z,M_H;\tau)$, this implies the desired result.
\end{proof}

The maximizer of $\eta_{m_z, m_h}(\tau)$, defined in \eqref{eta'}, was found numerically in \cite{MT} with some analytical results in \cite{MacDow}:

\begin{theorem}[\cite{MT}] \label{thm-eta-max}
	For $M_Z < M_H$, the function $\eta_{m_z, m_h}(\tau)$ has a maximum at $\tau_* =e^{i\pi/3}$.
\end{theorem}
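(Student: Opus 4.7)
Since $m_w^2>0$, maximizing $\eta_{m_z,m_h}(\tau)$ over the fundamental domain \eqref{fund-domSL2Z} is equivalent to minimizing $\alpha_{m_z,m_h}(\tau)$. For $n=1$, Proposition \ref{prop:Landau-ham-spec} gives $\chi=(\beta,i\beta)$ with $\beta$ expressible through the Jacobi theta function \eqref{theta-repr}, so $|\chi|^2=2|\beta|^2$ is smooth, positive and $\cL'$-periodic. I would Fourier-expand $|\chi(x)|^2 = \sum_{K \in (\cL')^*} c_K(\tau)\, e^{iK\cdot x}$ over the dual lattice, with coefficients $c_K(\tau)$ computable explicitly from $\theta(z,\tau)$. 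Since $G_m = (-\Delta+m^2)^{-1}$ is diagonal in this Fourier basis with eigenvalue $(|K|^2+m^2)^{-1}$, Parseval's identity yields
\begin{align*}
\alpha_{m_z,m_h}(\tau)
&= \frac{1}{\lan|\chi|^2\ran^2}\sum_{K \in (\cL')^*}|c_K(\tau)|^2\Bigl[\frac{1}{|K|^2+m_h^2}-\frac{1}{|K|^2+m_z^2}\Bigr] \\
&= -\frac{1}{\lan|\chi|^2\ran^2}\int_{m_z^2}^{m_h^2}\Theta_s(\tau)\,ds,
\end{align*}
where $\Theta_s(\tau) := \sum_{K\in(\cL')^*} |c_K(\tau)|^2 (|K|^2+s)^{-2}$ is a positive, $\tau$-dependent lattice sum. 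Minimizing $\alpha_{m_z,m_h}$ thus amounts to maximizing the integrated lattice sum on the right.

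Next, I would establish $SL(2,\Z)$-invariance of $\alpha_{m_z,m_h}$: the modular-transformation laws of the Jacobi theta function, combined with the area constraint $|\Omega'|=2\pi$, imply that under $T:\tau\mapsto\tau+1$ and $S:\tau\mapsto -1/\tau$ the dual lattice and the Fourier weights $|c_K(\tau)|^2$ are permuted while $\lan|\chi|^2\ran$ is preserved. Hence $\alpha_{m_z,m_h}$ descends to a smooth function on \eqref{fund-domSL2Z}. By invariance, its gradient vanishes automatically at $\tau = i$ (fixed by $S$) and at $\tau = e^{i\pi/3}$ (fixed by $ST$). A direct monotonicity check along the boundary arcs $|\tau|=1$, $0\le\Re\tau\le 1/2$ and $\Re\tau=1/2$, $\Im\tau\ge\sqrt{3}/2$, following \cite{MacDow}, rules out additional interior critical points, while the behaviour at the cusp $\Im\tau\to\infty$ — read off from the explicit theta-function expression for $c_K(\tau)$ — shows that the infimum of $\alpha_{m_z,m_h}$ is not attained there. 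Therefore the global minimum on \eqref{fund-domSL2Z} is realized at $\tau=i$ or $\tau=e^{i\pi/3}$.

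The last step, and \emph{the main obstacle}, is to show $\alpha_{m_z,m_h}(e^{i\pi/3}) < \alpha_{m_z,m_h}(i)$, i.e.~that the integrated lattice sum is larger at the hexagonal point. For each fixed $s>0$, the sum $\Theta_s(\tau)/\lan|\chi|^2\ran^2$ is structurally analogous to a weighted Epstein zeta function, and one expects it to be maximized on the fundamental domain at $\tau=e^{i\pi/3}$; integration in $s\in[m_z^2,m_h^2]$ then concludes the proof. However, the weights $|c_K(\tau)|^2$ depend nontrivially on $\tau$ through the Landau ground state, so Montgomery's classical theorem on the Epstein zeta function does not apply verbatim. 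A resolution, used in \cite{MT}, is to verify the inequality numerically using the rapidly convergent theta-function expansion. A fully analytical argument would require either an exact evaluation of the theta Fourier coefficients at the symmetric lattices $\cL'_i$ and $\cL'_{e^{i\pi/3}}$ combined with a positivity estimate of their difference (in the spirit of \cite{MacDow}), or an interpolation in $s$ connecting the heavy-mass regime, where to leading order the problem reduces to minimizing the Abrikosov constant $\lan|\chi|^4\ran/\lan|\chi|^2\ran^2$ (whose hexagonal optimality is classical), to the light-mass limit.
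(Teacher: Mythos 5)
The first thing to note is that the paper does not prove this theorem at all: it is imported verbatim from \cite{MT}, where the maximizer of $\eta_{m_z,m_h}$ was located \emph{numerically}, with only partial analytical results in \cite{MacDow}; the Remark at the end of Section \ref{sec:shape} even records as open the weaker statement that $e^{i\pi/3}$ and $e^{i\pi/2}$ are the only critical points. So your honest identification of the comparison of $\alpha_{m_z,m_h}$ at $e^{i\pi/3}$ versus $i$ as the unresolved obstacle, settled only numerically in \cite{MT}, accurately describes the actual state of affairs, and your Fourier/lattice-sum reformulation together with the $SL(2,\Z)$-fixed-point argument for criticality of $i$ and $e^{i\pi/3}$ is the natural framework (essentially the one behind \cite{MacDow}).

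Two concrete problems with the outline, though. First, the step ``a direct monotonicity check along the boundary arcs\ldots rules out additional interior critical points'' is not available: no such result exists in \cite{MacDow} or elsewhere (this is precisely what the paper's Remark leaves open), and in any case monotonicity of a real function of two variables along the boundary of a two-dimensional fundamental domain does not preclude interior critical points; even for the Abrikosov constant $\beta(\tau)=\lan|\chi|^4\ran/\lan|\chi|^2\ran^2$ this classification requires genuinely global arguments. Second, your signs reverse the direction of the whole optimization. Taking \eqref{U-def} literally gives $G_{m_z,m_h}=G_{m_h}-G_{m_z}<0$ for $m_z<m_h$, which is what you used; but this contradicts the paper's repeated assertion (needed for $\xi'<0$ in \eqref{xi'} and for positivity of $\eta$) that $G_{m_z,m_h}$ is positivity preserving exactly when $M_Z<M_H$ --- \eqref{U-def} carries a sign typo and the intended operator is $G_{m_z}-G_{m_h}$. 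With the correct sign, $\alpha_{m_z,m_h}>0$ and one must \emph{minimize} the integrated lattice sum $\int_{m_z^2}^{m_h^2}\Theta_s\,ds$, not maximize it. Your own heavy-mass consistency check exposes this: in that limit $\Theta_s/\lan|\chi|^2\ran^2$ tends to $s^{-2}\beta(\tau)$, and $\beta$ is \emph{minimized} at the hexagonal lattice, which is incompatible with asking the hexagonal point to maximize $\Theta_s$. Getting the direction right matters, since it determines whether Montgomery-type minimality results for lattice theta functions point toward or away from the conclusion you want.
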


%Since maximizers of $\eta(M_Z,M_H;\tau)$ are in one-to-one correspondence with minimizers of $\alpha(M_Z,M_H;\tau)$,
 Theorem \ref{thm:lattice-shape} follows from Theorems \ref{thm-tau*} and \ref{thm-eta-max}.  
 
\begin{remark} Using symmetries of $\eta_{m_z, m_h}(\tau)$, one might be able to prove that its only critical points %of $\eta_{m_z, m_h}(\tau)$
  are $\tau =e^{i\pi/3}$ and $\tau =e^{i\pi/2}$, i.e. the hexagonal and square lattices, cf. \cite{NV, ABN}.
\end{remark}

%\medskip
\appendix

\section{Covariant derivatives and curvature}\label{sec:Cov-Deriv-Curv} 

%%%%%%%%
%\paragraph{1.1. The Yang-Mills-Higgs equations.}
In this appendix, we briefly review some basic definitions from gauge theory. %static (or Euclidean) Yang-Mills-Higgs theory, including the derivation of the energy functional \eqref{EW-energy}. In what follows,
Recall that we use the Einstein convention of {\it summing over repeated indices}.

\medskip

%\paragraph{\bf Covariant derivatives and curvature.}

Let $V$ be an inner product vector space,  $G$  a Lie group acting transitively  on $V$ via a unitary representation $\rho: g \mapsto \rho_g$,  and let $\fg$ be the Lie algebra of $G$ acting on $V$ via the representation $\tilde\rho: A \mapsto \tilde\rho_A$ induced by $\rho$.

To simplify notation below, we take  $V=\C^m$ and  $G$  a matrix group,  acting  on $V$ by matrix rules (and similarly for $\fg$) and write $\rho_g\Psi = g \Psi$ and $\tilde\rho_A\Psi = A \Psi$. Moreover, we assume that  $G$ is either  $U(m)$ or a Lie subgroup of $U(m)$. 

 Let $M$ be an open subset in a finite-dimensional vector space, with a metric $h$ and local coordinates $\{x^i\}$, and let $\p_i \equiv \p_{x^i}$.

For a ${\fg}$-valued connection (one-form) %\footnote{Strictly speaking, a $\fg$-valued connection $A$ is not a $\fg$-valued one-form because the local action $\rho_g A$ is given by \eqref{local-act} rather than the adjoint action $A \mapsto gAg^{-1}$.}
 $A \equiv A_i dx^i$ on $M$, we define the covariant derivatives:

-  $\n_{A}$, mapping functions (sections), $\Psi: M\ra V$, into $\fg$-valued one-forms,   as \begin{equation}\n_{A}\Psi :=d\Psi + A\Psi \equiv (\p_i\Psi + A_i\Psi)dx^i;\end{equation} %and  %on one-forms (connections) as     

-  $d_{A}$, mapping  $\fg$-valued functions ($0$-forms) $f$ into $\fg$-valued one-forms 
\begin{equation}d_A f:=  d f +  [A, f] \equiv (\p_i f +  [A_i,f])dx^i;\end{equation}

-  $d_{A}$, mapping $\fg$-valued one-forms into $\fg$-valued two-forms %.  Given a $\fg$-valued $1-$form $B=B_{i} dx^{i}$, where $B_{i}$ are  $\fg$-valued functions and we use the convention of summing over repeated indices,   $d_{Q}$ is  defined as  %are the covariant derivatives for the corresponding $\fg-$valued forms, one-form $A$, % (see Appendix \ref{sec:YM}, for more details).    on  $\fg-$valued %$0-$forms (functions) and one-forms, respectively.
%\begin{equation}d_A B :=  (d_A B_{i})\times dx^{i};\end{equation}  where $B=B_{i} dx^{i}$, with $B_{i}$ being $\fg$-valued functions. 
 %\footnote{To derive this formula, for each $i$, we consider $B_{i} dx^{i}$ as a (tensor) product of the $\fg$-valued function $B_{i}$ and the one-form $dx^{i}$ and use the standard definition (see e.g. \cite{Jost}, equation 4.1.25) $d_V (f\otimes \om)= d_V f\wedge \om + f\otimes d\om$.} 
%The above definition is equivalent to 
\begin{equation}d_A B :=  d B +  [A,B],\end{equation} with $[A, B]$ defined in local coordinates $\{x^i\}$ as
\begin{align}\label{commut-def'} [A, B]:=  [A_i,  B_j] dx^i\wedge dx^j \equiv [B,A],\end{align}
for $A= A_i dx^i$ and $B=B_j dx^j$.\footnote{More generally, if $A$ is a $\fg-$valued $p-$form and $B$ is a $\fg-$valued $q-$form, written as  
	$A= A^a \otimes \g_a$ and $B= B^b \otimes \g_b$, where $A^a$ and $B^b$ are  $p-$ and $q-$forms and $\{\g_a\}$ is a %an orthonormal
	basis in $\fg$, then \begin{equation}[A, B]:= %\sum_{ij} [A_i,  B_j] dx^i\wedge dx^j,\sum_{a,b=1}^3 
	(A^a\wedge  B^b)  \otimes [\g_a, \g_b] =(-1)^{pq+1}[B, A].\end{equation}} 
%Note that this bracket is symmetric (rather than anti-symmetric): $[A,B] = [B,A]$.
%\footnote{ $[A, B]$ could be also given as  \begin{equation}\label{commut-def} [A, B]:=    A^a\wedge B^b [\g_a,  \g_b] =[B, A], \end{equation} where $A= A^a \g_a$, $B= B^a \g_a$ and  %$\tau_a$, $a=1,2,3$ are the Pauli matrices, see \eqref{pauli-m}.$\g_a$ is an orthonormal basis of $\mathfrak g$.}
%,  $d_A B_{i}:= d B_{i} - \frac{ig}2 [A, B_{i}] $. 

The curvature form of the connection $A$ is  the $\fg$-valued two-form given by the formula
\begin{align}\label{FA-expr} F_A=d A + \frac12 [A, A].\end{align}  
 It is related to the curvature operator (denoted by the same symbol) $F_{A} :=  d_A \circ d_A$. As a simple computation shows, this operator is a matrix-multiplication operator given by the matrix-valued 2-form \eqref{FA-expr}. %Formula \eqref{FA-expr} is often used as a definition of $F_A$.

Let $U$ be a vector space ($V$ or $\fg$ in our case) and let $\Omega^p_U\equiv  U\otimes\Omega^p$ denote the space of $U$-valued $p$-forms. On $\Omega^p_U$, one defined the inner product, $\lan \cdot,  \cdot \ran_{\Omega^p_U}\equiv \lan \cdot,  \cdot \ran_{\Omega^p_U}^h$ as %and $\lan \cdot,  \cdot \ran_{\fg\otimes\Omega^2}$, in the vector spaces, $V\otimes\Omega^1(M)$ and $\fg\otimes\Omega^2(M)$, of  $V-$ and  $\fg-$valued %functions and differential  forms are %$\lan \cdot,  \cdot \ran_{V\otimes\Omega^1}$ and $\lan \cdot,  \cdot \ran_{\fg\otimes\Omega^2}$ in \eqref{YMH-Lagr} are the inner products in the vector spaces $V\otimes\Omega^1(M)$ and $\fg\otimes\Omega^2(M)$ of the $V-$ and  $\fg-$valued  differential  forms, respectively, given by
\begin{equation} \label{inner-prod-ext}
\langle A,B \rangle_{\Omega^p_U} \equiv \langle A,B \rangle_{\Omega^p_U}^h:=  \langle A_{\alpha}, B^{\alpha} \rangle_U, %\quad U=V,\fg,
\end{equation}
where $A=A_{\alpha} dx^{\alpha}$ and $B=B_{\alpha} dx^{\alpha}$ are $U$%$V$- or $\fg$
-valued $p$-forms, $\alpha$ is a $p$-form index and $\langle \cdot, \cdot \rangle_U$ %, $\langle \cdot, \cdot \rangle_{\fg}$ are
is the inner product on $U$. % and $\fg$, respectively.
Here the indices are raised and lowered with help of the inner product $h$ on $M$.
%There are three basic examples of $V$: $V=\R^m, \C^m$ and   $V=\fg$. In the first two cases, we take $G$ to be a matrix group and let $\fg$ act on $\Psi$ by matrix rules.  In the third case, we let $A\in\fg$ act by $\Psi\ra [A, \Psi]$ (hence $d_B \Psi=d \Psi - [B, \Psi]$ for a $\fg$-valued one-form $B$ when $V = \fg$ (for $\fg=u(1)$ we get the wrong expression $d_B \Psi=d \Psi$)).

%\medskip

%Remark.
 Above, we %assumed $G$ to be simple and
  did not display the coupling constants. Doing so would change the covariant derivative to  $d_{A} \Psi=(d + g A) \Psi$, if $G$ is simple. If  $G$ is not simple, then each simple %and $U(1)$ 
   component of $G$ gets its own coupling constant, as was done in the main text for $G=SU(2)\times U(1)$ (see also \eqref{nabQ-spec}-\eqref{Xij} below).
%For the gauge group $G=U(2)=SU(2)\times U(1)$, there is
\DETAILS{This results in a slight discrepancy in the definition of the covariant derivative %due to the fact that $U(2)$ is not simple, but a (semi-)direct product of the simple group $SU(2)$ and $U(1)$, with each component having a coupling constant. 
for the gauge group $G=U(2)=SU(2)\times U(1)$.}

\section{The time-dependent Yang-Mills-Higgs system}\label{sec:YM}

%%%%%%%%
%\paragraph{1.1. The Yang-Mills-Higgs equations.}
In this appendix, we briefly review the %static (or Euclidean) 
Yang-Mills-Higgs theory, including the derivation of the energy functional \eqref{WS-energy0}. In what follows, %we use the Einstein summation convention of summing over repeated indices. Furthermore, 
we use the convention of raising or lowering an index by contracting a tensor $T$ with the metric tensor:
\begin{equation}
T^{\alpha}_{i,\beta} = \eta_{ij} T^{j,\alpha}_{\beta}
\end{equation}
where $\eta$ is the Minkowski metric of signature $(+,-,...,-)$ on $M\subset\R^{d+1}$ and $\alpha, \beta$ are multi-indices.
The same equations could be {\it reinterpreted as stationary} equations by taking the {\it Euclidean metric $\del_{i j}$, instead of $ \eta_{ij}$,  and letting the indices range over $1, \dots, d$, rather than $1, \dots, d+1$.}
In this case, $T^{\alpha}_{i,\beta} =  T^{i,\alpha}_{\beta}$.

%and we use the convention of summing over repeated indices
%\newpage
\medskip

\paragraph{\bf Lagrangian.}

%Let $V$ be an inner product space,  $G$  a Lie group  acting transitively  on $V$ via a unitary representation $\rho: g \mapsto \rho_g$,  and let $\fg$ be the Lie algebra of $G$ acting on $V$ via the representation $\tilde\rho: A \mapsto \tilde\rho_A$ induced by $\rho$. 
Let $\Omega$ be a bounded domain in $\R^{d}$ and $M=\Omega\times [0, T]\subset\R^{d+1}$ be spacetime equipped with the Minkowski metric $\eta$ of signature $(+,-,...,-)$ and $V$ and $G$ be as in Appendix \ref{sec:Cov-Deriv-Curv}.
The theory involves a Higgs field $\Psi: M\ra V$ interacting with the gauge field $A$, a connection (one-form) on $M$ with values in the algebra ${\fg}$. %$A: \R^{d+1}\ra {\cG}$.  Here
The dynamics are given by the Lagrangian %density
\begin{equation} \label{YMH-Lagr}
\cL (\Psi, A) :=  \int_{\Omega}  \big(  \lan \n_{A}\Psi,  \n_{A}\Psi \ran_{\Omega^1_V}^\eta - %\frac{1}{2} \lam (1 - |\Psi|^2)^2
U(\Psi) + \lan F_A,  F_A \ran_{\Omega^2_\fg}^\eta\big), 
%\frac{1}{2} \Tr V_{i j} V^{i j} +\frac{1}{2} \Tr X_{i j} X^{i j},
\end{equation}
with corresponding action $\cS := \int_0^T \cL(\Psi, A) dt$, $T>0$, given explicitly by
\begin{equation} \label{YMH-act}
\cS (\Psi, A) =  \int_{M} \big( \lan \n_{A}\Psi,  \n_{A}\Psi \ran_{\Omega^1_V}^\eta - %\frac{1}{2} \lam (1 - |\Psi|^2)^2
U(\Psi) + \lan F_A,  F_A \ran_{\Omega^2_\fg}^\eta\big), 
%\frac{1}{2} \Tr V_{i j} V^{i j} +\frac{1}{2} \Tr X_{i j} X^{i j},
\end{equation}
where  %$U: \R^d\ra \R^+$ is invariant under the action of $G$, 
$U: V\ra \R^+$ is a self-interaction potential, which is assumed to be gauge invariant: $ U(\rho_g \Psi)= U(\Psi)$. Typical examples of $G, V$ and $U(\Psi)$ are $U(m), \C^m$ and $U(\Psi)=\frac{1}{2} \lam (1 - \|\Psi \|^2_V)^2$. %For convenience, we assume that $\Psi$ and $A$ are $T$-periodic in $t$.
%Here we think about  the fields $\Psi$ and $A$ as a $V-$ valued zero-form and a $\cG-$ valued  one-form, respectively, %and $\lam$ is a positive constant. % (coupling constant). 
 \DETAILS{Furthermore,  $\n_{A} \Psi=(d - \tilde\rho_A) \Psi$ and  $F_A:= d A - \frac12 [A, A]$ %are the covariant derivatives for the corresponding $\fg-$valued forms, one-form $A$, 
%   (see Appendix \ref{sec:YM}, for more details).    on  $\fg-$valued %$0-$forms (functions) and one-forms, respectively.  
with $\tilde\rho_A$ and $[A, B]$ defined in local coordinates $\{x^i\}$ as %maps $\Psi:\R^{d+1}\to V$ and
\begin{equation}\label{1-form-action} \tilde\rho_A \Psi :=  \tilde\rho_{A_i}\Psi \ dx^i\end{equation}
and
\begin{equation}\label{commut-def'} [A, B]:=  [A_i,  B_j] dx^i\wedge dx^j,\end{equation}
for $\Psi:\R^{d+1}\to V$, $A= A_i dx^i$ and $B=B_j dx^j$.\footnote{More generally, if $A$ and $B$ are  $\fg-$valued $p-$forms, written locally as  
	$A= A^a \otimes \g_a$ and $B= B^b \otimes \g_b$, where $A^a$ and $B^b$ are $p-$forms and $\{\g_a\}$ is a basis in $\fg$, then $[A, B]:= \sum_{a b} (A^a\wedge  B^b)  \otimes [\g_a, \g_b]$.}  
Finally,}

\medskip

\paragraph{\bf Euler-Lagrange equations.}The Euler-Lagrange equations (called Yang-Mills-Higgs equations) for the fields $\Psi$ and $A$ are %, for $V=\fg$, are
\begin{align} \label{YMH-eqs-psi}
&\n_{A}^{*_\eta} \n_{A}\Psi  =%\lam (1 - |\Psi|^2)\Psi 
U'(\Psi)      , \\ \label{YMH-eqs-A}
&d_{A}^{*_\eta} F_A = J(\Psi, A), % [d_{A}\Psi, \Psi ]. % \im\lan \Psi,  d_{A}\Psi \ran_{}.
\end{align}
where %for a $\mathfrak g$-valued $p$-form $\omega$, $d_A \omega :=  d\omega - \frac12 [A,\omega] + \frac12 (-1)^p [\omega,A]$,\footnote{Note that $F_A \neq d_A A$.}  
$\n_{A}^{*_\eta}$ and $d_{A}^{*_\eta}$ are the adjoints of $\n_{A}$ and $d_{A}$ in the appropriate inner products {\it involving the metric $\eta$} and $J(\Psi, A)$ is the YMH current given by
\begin{align} \label{YMH-eqs-curr}
J(\Psi,A) &:= \Re\lan {\g_a} \Psi, \nabla_A\Psi \ran_V \g_a\\
&= \Re\lan {\g_a} \Psi, \nabla_i\Psi \ran_V \g_a\otimes dx^i,
\end{align}
%given by$J(\Psi, A):= 2  \im\lan \Psi,  d_{A}\Psi \ran_{\C^m}$\footnote{For a vector $\xi = \xi^a$ and a vector-valued one-form $y = y^b_i dx^i$, $\xi \otimes y$ denotes the matrix-valued one-form with coordinates given by $(\xi\otimes y)^{a,b}_i = \xi^a y^b_i$.} for $V=\C^m$ and $J(\Psi, A):=  [d_{A}\Psi, \Psi ]$ for $V=\fg$).
where $\g_a$ is an orthonormal basis of $\mathfrak g$ and $\nabla_i :=  \partial_i + {A_i}$, with $\p_i\equiv \p_{x^i}$, so that  $\n_{A} \Psi =\n_{i} \Psi dx^i $. \eqref{YMH-eqs-A} is the Yang-Mills equation.
%\medskip

\begin{proof}[Proof of \eqref{YMH-eqs-psi} - \eqref{YMH-eqs-A}]
For convenience, we {\it assume periodic or Dirichlet boundary conditions and that  $\Psi$ and $A$ are $T$-periodic in $t$} and  calculate the G\^ateaux derivatives {\it formally}.  %The G\^ateaux derivative calculations in the following subsection are formal. For the derivative calculations below, we shall use the G\^ateaux derivatives. (Note that this is equivalent to calculating the Fr\'echet derivative since the functionals below are continuously differentiable.)

Recall that $\del_\#$ denotes the partial (real) G\^ateaux derivative with respect to $\#$. First we calculate the (complex) G\^ateaux derivative of \eqref{YMH-act} in the $\Psi$-direction. Define $\partial_z \equiv   \frac12 (\partial_{\Re z} - i\partial_{\Im z})$ and $\del_{\Psi} \equiv  \frac12 (\del_{\Re\Psi} - i \del_{\Im\Psi})$. Then $\del_{\Psi}\cS(\Psi,A)\Psi' = \p_z\cS(\Psi_z,A)|_{z=0}$, where $\Psi_z = \Psi + z\Psi'$, $z\in\C$. Using this,  we find
	\begin{align}
	\del_{\Psi}\cS(\Psi,A)\Psi'  %&= \p_z\cS(\Psi_z,A)|_{z=0} \nonumber \\ 
	&= \int_{M}  \big(\lan \n_{A}\Psi, \n_{A}\Psi' \ran_{\Omega^1_V} - \lan U'(\Psi), \Psi' \ran_V\big).
	\end{align}
	Integrating the first term by parts and factoring out $\Psi'$ gives
	\begin{align}
	\del_{\Psi}\cS(\Psi,A)\Psi' &= \int_{M}  \lan \n^*_A \n_{A}\Psi - U'(\Psi), \Psi' \ran_V.
	\end{align}
	For this derivative to be zero for every variation $\Psi'$, \eqref{YMH-eqs-psi} must hold.
	
	Next we calculate the G\^ateaux derivative of \eqref{YMH-act} in the $A$-direction. Using the definition $\del_A f(A)B = \p_s f(A_s)|_{s=0}$, where $A_s = A + s A'$, $s\in\R$, we find % =: \del_s f(A_s)$, we find
	\begin{align}\label{A-var}
	\del_A \cS(\Psi,A)B %&= \del_s\cS(\Psi,A_s) \\
	&= \int_{M} \big(\lan B\Psi, \n_A\Psi \ran_{\Omega^1_V} + c.c. + 2\lan d_A B, F_A \ran_{\Omega^2_\fg}\big)\\
	&= I+II.  \end{align}
%	Expressing the first term in coordinates, $B = B^a\g_a = B_i^a\g_a\otimes dx^i$ and $d_A\Psi = \n_j\Psi dx^j$,
Writing  $B = B^a\g_a = B_i^a dx^i \otimes \g_a$ (with $B_i^a$ real) and $\n_{A} \Psi =\n_{i} \Psi dx^i $, so that 
\begin{equation}\lan B\Psi, \n_A\Psi \ran_{\Omega^1_V}=\lan B^a, \lan \g_a\Psi, \n_{i}\Psi \ran_{V} dx^i\ran_{\Omega^1},\end{equation} 
and using that $B^a C^a=-\Tr [(B^c  \g_c) (C^a \g_a)]$ (since $\Tr (\g_c^*\g_a)=-\Tr (\g_c\g_a)=\del_{c a}$), gives %{\bf(check the minus sign)}
	%\begin{align}\del_A \cS(\Psi,A)B = \int_{M} &\lan B^a_i (\g_a\Psi) dx^i, \nabla_j\Psi dx^j\ran_{\Omega^1_V} \nonumber \\ &+ \lan B, d_A^* F_A \ran_{\Omega^1_\fg} + c.c..\end{align}Expressing the first term in terms of the inner product on $\Omega^1_\fg$ gives
	\begin{align}
	%\del_A \cS(\Psi,A)B
	I = -\int_{M} &\lan B, \lan\g_a \Psi, \nabla_i\Psi \ran_V \g_a\otimes dx^i\ran_{\Omega^1_\fg} + c.c.. %\nonumber \\ &+ \lan B, d_A^* F_A \ran_{\Omega^1_\fg}
	\end{align}
	which gives $I= \int_{M} \lan B, J(\Psi,A) \ran_{\Omega^1_\fg}$. For the second term on the r.h.s. of \eqref{A-var}, integrating by parts yields $II= \int_{M} \lan B, d_A^* F_A \ran_{\Omega^1_\fg}$. Collecting the last two equations gives
	\begin{align}
	 \del_A \cS(\Psi,A)B = 2\int_{M} \lan B, - J(\Psi,A) + d_A^* F_A \ran_{\Omega^1_\fg}.
	 \end{align}
	For this derivative to be zero for every variation $B$, \eqref{YMH-eqs-A} must hold.
\end{proof}

\medskip

\paragraph{\bf Conserved energy.} Again, the G\^ateaux derivative calculations in the following subsection are formal. Recall that $M:=\Omega\times [0, T]\subset \R^{d+1}$.

To find the expression for the energy, we use, as in classical mechanics, the (partial, i.e. without passing to the momentum fields) Legendre transform of \eqref{YMH-Lagr} is given by
 \begin{align}
  E(\Psi, A) &= \p_{\n_0\Psi}\cL(\Psi,A)\n_0\Psi + \p_{\overline{\n_0\Psi}}\cL(\Psi,A)\overline{\n_0\Psi} \nonumber \\ \label{Leg-transf} &+ \sum\limits_{i=1}^d \p_{F_{0i}}\cL(\Psi,A) F_{0i} - \cL(\Psi,A).
 \end{align}
\begin{proposition}
The (partial) Legendre transform \eqref{Leg-transf} of Lagrangian \eqref{YMH-Lagr} %(see \eqref{Leg-transf} below)
 yields the conserved energy
\begin{align} \label{YMH-energy}
 E(\Psi, A) := \int_{\Omega} \big(\| \n_{A}\Psi \|_{\Omega^1_V}^2 + %\frac{1}{2} \lam (1 - |\Psi|^2)^2
 U(\Psi) + \| F_A \|_{\Omega^2_\fg}^2\big),
\end{align}
where the norms are taken using the Euclidean metric on $\R^{d+1}$ (rather than the Minkowski metric). 
\end{proposition}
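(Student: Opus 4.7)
The plan is to compute the Legendre transform \eqref{Leg-transf} pointwise by splitting the Minkowski inner products in $\cL$ into their temporal and spatial components, and then to deduce conservation from time-translation invariance of the action $\cS = \int_0^T \cL \, dt$ via Noether's theorem.

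First I would write $\cL(\Psi,A) = \int_\Omega \mathcal L_d \, dx$ and use signature $(+,-,\dots,-)$ to decompose
\[\lan \n_A\Psi,\n_A\Psi\ran^\eta_{\Omega^1_V} = \|\n_0\Psi\|_V^2 - \sum_{i=1}^d \|\n_i\Psi\|_V^2,\]
and analogously for $\lan F_A, F_A\ran^\eta_{\Omega^2_\fg}$, whose $F_{0i}$-contributions acquire a single minus sign from raising the one time-like index while the purely spatial $F_{ij}$ ($i,j\ge 1$) contributions keep their sign. Thus $\mathcal L_d$ takes the form (time-derivative squares) minus (spatial-derivative squares) minus $U(\Psi)$.

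Next I would read off the conjugate momenta. Treating $\Psi$ as complex via Wirtinger derivatives, $\p_{\n_0\Psi}\mathcal L_d = \overline{\n_0\Psi}$ and $\p_{\overline{\n_0\Psi}}\mathcal L_d = \n_0\Psi$, which together contribute $2\|\n_0\Psi\|_V^2$ to the Legendre transform. For the gauge sector only the spatial $A_i$, $i\ge 1$, are dynamical ($A_0$ is a Lagrange multiplier, which is why the sum in \eqref{Leg-transf} runs over $i=1,\dots,d$), and $\p_{F_{0i}}\mathcal L_d$ is proportional to $F^{0i}$ by the paper's normalization, contributing twice the time-kinetic gauge term. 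Subtracting $\mathcal L_d$ then leaves each time-derivative square with a positive sign, flips the sign of each spatial-derivative square (so $-\|\n_i\Psi\|_V^2 \mapsto +\|\n_i\Psi\|_V^2$, and similarly for $F_{ij}$), and converts $-U(\Psi)$ into $+U(\Psi)$. Collecting the terms produces precisely \eqref{YMH-energy}, with every inner product now taken in the Euclidean metric.

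Finally, for conservation, Noether's theorem applied to time-translation invariance of $\cS$ (the density $\mathcal L_d$ depends on $t$ only through $\Psi$ and $A$) produces a conserved charge which, by construction, coincides with the Legendre transform above. Equivalently one may differentiate $t\mapsto E(\Psi(t),A(t))$ directly, substitute the Euler--Lagrange equations \eqref{YMH-eqs-psi}--\eqref{YMH-eqs-A}, and see that every remaining term is a spatial divergence whose integral over $\Omega$ vanishes under the assumed periodic or Dirichlet boundary conditions on $\partial\Omega$. I expect the only real care needed to be the consistent bookkeeping of signs when passing between Minkowski and Euclidean inner products, together with the treatment of $A_0$ as non-dynamical; there is no genuine analytic obstacle.
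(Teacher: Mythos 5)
Your proposal follows essentially the same route as the paper: compute the conjugate momenta $\p_{\n_0\Psi}\cL$, $\p_{\overline{\n_0\Psi}}\cL$ and $\p_{F_{0i}}\cL$, subtract $\cL$ while tracking the Minkowski-to-Euclidean sign flips on the spatial terms and on $U(\Psi)$, and then verify conservation by differentiating $E$ in time and substituting the Euler--Lagrange equations (the paper organizes the final cancellation through the time component $J_0$ of the YMH current rather than as a single spatial divergence, but this is only a difference in bookkeeping). No gaps.
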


Note that for static (time-indepent) fields, $E(\Psi, A) = -\cL(\Psi, A)$.

\begin{proof}%[Proof of \eqref{YMH-energy}]
 Let $\p_\#$ denote the partial derivative with respect to the symbol $\#$, and recall that $\del_\#$ denotes the partial (real) G\^ateaux derivative with respect to $\#$.  We calculate
 \begin{align}
 \p_{\n_0\Psi}\cL(\Psi,A)\n_0\Psi = \int_{\Omega} \|\n_0\Psi\|_V^2 = \p_{\overline{\n_0\Psi}}\cL(\Psi,A)\overline{\n_0\Psi}
 \end{align}
 and
 \begin{align}
  \sum\limits_{i=1}^d \p_{F_{0i}}\cL(\Psi,A) F_{0i} = \int_{\Omega} 2\sum\limits_{i=1}^d |F_{0i}|^2.
 \end{align}
 %This has the effect of changing $-U(\Psi)$ to $U(\Psi)$, and of changing the signature of the inner products $\lan \n_{A}\Psi,  \n_{A}\Psi \ran_{\Omega^1_V}$ and $\lan F_A,  F_A \ran_{\Omega^2_\fg}$ from the Minkowski signature $(+,-,...,-)$ to the Euclidean signature $(+,+,...,+)$. 
 \eqref{YMH-energy} results.
 
 It remains to show that \eqref{YMH-energy} is conserved by the YMH equations \eqref{YMH-eqs-psi} - \eqref{YMH-eqs-A}. This can be done by using  the (partial) Legendre transform \eqref{Leg-transf} as in classical mechanics, or by a direct computation. We proceed in the second way. Applying the chain rule gives
 \begin{align}
  \frac{d}{dt} E(\Psi,A) = \del_\Psi E(\Psi,A)\p_0\Psi + \del_{\overline\Psi} E(\Psi,A)\p_0\overline\Psi + \del_A E(\Psi,A)\p_0 A, \label{E-t-deriv}
 \end{align}
 where, recall,  $\p_i\equiv \p_{x^i}$. We now calculate the first term using \eqref{YMH-eqs-psi}.
  \begin{align}
  \del_\Psi E(\Psi,A)\p_0\Psi = \int_{\Omega} \big(&\lan\n_0\Psi,\n_0\p_0\Psi\ran_V + \sum_{k=1}^d \lan\n_k\Psi,\n_k\p_0\Psi\ran_V \nonumber \\ + &\lan U'(\Psi), \p_0\Psi\ran_V\big).
  \end{align}
  Integrating the second term by parts gives
  \begin{align} \label{E-psi-2}
  \del_\Psi E(\Psi,A)\p_0\Psi = \int_{\Omega} \big(&\lan\n_0\Psi,\n_0\p_0\Psi\ran_V + \sum_{k=1}^d \lan\n_k^*\n_k\Psi,\p_0\Psi\ran_V \nonumber \\ + &\lan U'(\Psi), \p_0\Psi\ran_V\big).
  \end{align}
  By \eqref{YMH-eqs-psi}, we have
  \begin{align}
   \n_0^*\n_0\Psi - \sum_{k=1}^d \n_k^*\n_k\Psi = U'(\Psi),
  \end{align}
  so \eqref{E-psi-2} becomes
  \begin{align}
   \del_\Psi E(\Psi,A)\p_0\Psi = \int_{\Omega} \big(\lan\n_0\Psi,\n_0\p_0\Psi\ran_V + \lan\n_0^*\n_0\Psi,\p_0\Psi\ran_V\big).
  \end{align}
  Here $\n_0^* = -\p_0 + A_0^\dagger = -\p_0 - A_0$, where the second equality follows because the representation of $\fg$ is unitary. Therefore,
  \begin{align}
   \del_\Psi E(\Psi,A)\p_0\Psi &= \int_{\Omega} \big(\lan(\p_0 + A_0)\Psi,(\p_0 + A_0)\p_0\Psi\ran_V \nonumber \\ 
   & \quad + \lan(-\p_0 - A_0)(\p_0 + A_0)\Psi,\p_0\Psi\ran_V\big) \nonumber \\ 
   &= \int_{\Omega} \p_0\lan\Psi, A_0\p_0\Psi\ran_V.
  \end{align}
  Similarly,
  \begin{align}
   \del_{\overline\Psi} E(\Psi,A)\p_0\overline\Psi = \int_{\Omega} \p_0\lan A_0\p_0\Psi, \Psi\ran_V,
  \end{align}
  and so
  \begin{align}
   \del_\Psi E(\Psi,A)\p_0\Psi + \del_{\overline\Psi} E(\Psi,A)\p_0\overline\Psi = \int_{\Omega} \p_0 J_0(\Psi,A),
  \end{align}
  where $J_0(\Psi,A)$ is the time component of the YMH current \eqref{YMH-eqs-curr}.
  
  One may show using \eqref{YMH-eqs-A} that
  \begin{align}
   \del_A E(\Psi,A)\p_0 A = -\int_{\Omega} \p_0 J_0(\Psi,A).
  \end{align}
  Hence, by \eqref{E-t-deriv} we have $\frac{d}{dt} E(\Psi,A) = 0$, as required.
\end{proof}

\medskip

\paragraph{\bf Gauge symmetries.} 

 We define the local action, $ \rho_{g} A$\footnote{Compared with the notation of Appendix \ref{sec:Cov-Deriv-Curv}, to simplify the notation we omit the tilde over $\rho_{g}$ in action of the Lie algebra $\fg$  on $V$. % via the representation $\tilde\rho: A \mapsto \tilde\rho_A$
 },  of the group $G$ on $ A$,  by the equation $d_{ \rho_{g} A}=g  d_{A}g^{-1}$, for all $g \in C^1(N, G)$, where $N$ is either $M$ or $\Om$. We compute
\begin{align} \label{local-act}
 \rho_{g}   A = g  Ag^{-1}  + gdg^{-1}.
\end{align}

\begin{proposition}
The Lagrangian \eqref{YMH-Lagr} is invariant  under the Poincar\'e group and  the gauge transformations%following symmetry operations:
%\smallskip \noindent
%\smallskip
%\noindent 
\begin{align}\label{gauge-transf-YMH}
T^{gauge}_g &:   
(\Psi, A) \mapsto %( g \Psi g^{-1},  g  Ag^{-1}  -( \nabla g)g^{-1}) ,
( g \Psi, \rho_{g}   A) ,  \qquad %\nonumber\\
 \forall g \in C^1(M, G).
\end{align}
\end{proposition}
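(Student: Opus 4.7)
The plan is to verify the two symmetries separately and reduce each to an algebraic identity about covariant derivatives and curvatures. The Poincar\'e piece is essentially bookkeeping, while the gauge piece relies on the standard transformation rules $\n_{\rho_g A}(g\Psi)=g\,\n_A\Psi$ and $F_{\rho_g A}=gF_Ag^{-1}$.

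First I would dispose of Poincar\'e invariance. Under a Poincar\'e transformation $x\mapsto \Lambda x+a$, the scalar fields $\Psi(x)$ and the one--form $A(x)=A_i(x)dx^i$ pull back in the usual way (with the $dx^i$ transforming by $\Lambda$), and the Minkowski metric $\eta$ is preserved. The inner products $\lan\cdot,\cdot\ran_{\Omega^p_V}^\eta$ and $\lan\cdot,\cdot\ran_{\Omega^p_\fg}^\eta$ are built only from $\eta$ and the fibre inner products (which are inert under spacetime transformations), so the integrands in \eqref{YMH-Lagr} are scalar densities. Combined with the Jacobian~$=1$ of a Poincar\'e transformation, this gives invariance of $\cL(\Psi,A)$.

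Next I would establish the two key transformation identities for gauge invariance. For the covariant derivative, a direct computation gives
\begin{align*}
\n_{\rho_g A}(g\Psi) &= d(g\Psi)+(gAg^{-1}+g\,dg^{-1})(g\Psi)\\
&= (dg)\Psi+g\,d\Psi+gA\Psi+(g\,dg^{-1})g\Psi.
\end{align*}
Differentiating $gg^{-1}=\id$ yields $g\,dg^{-1}=-(dg)g^{-1}$, so the last term equals $-(dg)\Psi$ and the expression collapses to $g(d\Psi+A\Psi)=g\,\n_A\Psi$. For the curvature, starting from $F_A=dA+\tfrac12[A,A]$ and substituting $\rho_gA=gAg^{-1}+g\,dg^{-1}$, one obtains after cancellations $F_{\rho_gA}=gF_Ag^{-1}$; the cross terms $d(gAg^{-1})+[gAg^{-1},g\,dg^{-1}]$ combine with $d(g\,dg^{-1})+\tfrac12[g\,dg^{-1},g\,dg^{-1}]$ via the Maurer--Cartan identity $d(g\,dg^{-1})=-g\,dg^{-1}\wedge g\,dg^{-1}$ to leave only the conjugation of $dA+\tfrac12[A,A]$.

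Finally I would assemble invariance of each term of \eqref{YMH-Lagr}. Because $\rho$ is unitary, $\lan g\Psi_1,g\Psi_2\ran_V=\lan\Psi_1,\Psi_2\ran_V$ fibrewise, hence $\lan\n_{\rho_gA}(g\Psi),\n_{\rho_gA}(g\Psi)\ran^\eta_{\Omega^1_V}=\lan g\n_A\Psi,g\n_A\Psi\ran^\eta_{\Omega^1_V}=\lan\n_A\Psi,\n_A\Psi\ran^\eta_{\Omega^1_V}$. Gauge invariance of $U$ (hypothesis on the potential) gives $U(g\Psi)=U(\Psi)$. For the Yang--Mills term, the inner product on $\fg$ is $\mathrm{Ad}$--invariant (it is proportional to $-\Tr$ on matrix algebras, cf.\ \eqref{u2-inner-prod}), so $\lan gF_Ag^{-1},gF_Ag^{-1}\ran^\eta_{\Omega^2_\fg}=\lan F_A,F_A\ran^\eta_{\Omega^2_\fg}$. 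Adding the three invariances proves $\cL(g\Psi,\rho_gA)=\cL(\Psi,A)$. The only step where real care is required is the Maurer--Cartan cancellation in the curvature identity; the rest is essentially definitional once the unitarity of $\rho$ and $\mathrm{Ad}$--invariance of $\lan\cdot,\cdot\ran_\fg$ are noted.
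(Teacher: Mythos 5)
Your proposal is correct and follows essentially the same route as the paper: reduce gauge invariance to the two identities $\n_{\rho_gA}(g\Psi)=g\,\n_A\Psi$ and $F_{\rho_gA}=gF_Ag^{-1}$, then conclude from unitarity of $\rho$, $\mathrm{Ad}$-invariance of $\lan\cdot,\cdot\ran_\fg$ and gauge invariance of $U$. The only (cosmetic) difference is that you verify the curvature identity invariantly via the Maurer--Cartan relation $d(g\,dg^{-1})=-(g\,dg^{-1})\wedge(g\,dg^{-1})$, whereas the paper expands $(F_{\rho_gA})_{ij}$ in coordinates and cancels the terms symmetric in $i,j$; both are valid, and note that with the paper's convention $[A,B]=[A_i,B_j]\,dx^i\wedge dx^j$ one has $\tfrac12[A,A]=A\wedge A$, so the cancellation works exactly as you describe.
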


\begin{proof}
 The invariance under the Poincar\'e group follows from the definition of this group and the choice of the Minkowski metric on $M\subset \R^{d+1}$.
	
		For the gauge invariance, recall that $U(\Psi)$ is $\fg$-invariant, and that the representations $g \mapsto \rho_g$ (on $V$) and the adjoint representation $g\mapsto \text{ad}_g$ (on $\fg$) are unitary. Therefore, to prove invariance under the gauge transformation \eqref{gauge-transf-YMH}, it suffices to show that
	\begin{align}
 \label{action-1} \n_{ \rho_g A}  g \Psi &= g \n_A \Psi,\\	%\n_{ \rho_g A}  g\Psi &= g \n_A\Psi,
	F_{ \rho_g A} &= g F_A g^{-1}. \label{action-2}
	\end{align}
	We shall use the equation
	\begin{equation}
	h dh^{-1} = -dh h^{-1}, \quad \forall h\in G
	\end{equation}
	which follows from $d(hh^{-1}) = 0$.	
	For \eqref{action-1} we compute
	\begin{align}
	\n_{ \rho_g A} g\Psi &= d( g\Psi) +  ({gAg^{-1} + gdg^{-1}})( g\Psi) \\
	&= (d g) \Psi + g d\Psi +  g A \Psi +  g  {dg^{-1}g} \Psi. % \\&= g  {g^{-1}dg}\Psi + g (d - \rho_A)\Psi - g {g^{-1}dg} \Psi \\&=  g d_A \Psi.
	\end{align}
Since $ g  {dg^{-1}g}=- g {g^{-1}dg} = -dg$, this gives $\n_{ \rho_g A} g\Psi=  g \n_A \Psi$.	

	For \eqref{action-2}, computing in coordinates $\{x^i\}$ and writting $F_{\rho_g A} :=  (F_{\rho_g A})_{ij} dx^i\wedge dx^j$ and $F_A := (F_A)_{ij} dx^i\wedge dx^j$, we find
	
	\begin{align}
	(F_{ \rho_g A})_{ij} &= \frac12 [\partial_i(gA_j g^{-1} + g\partial_jg^{-1}) - \partial_j(gA_i g^{-1} + g\partial_i g^{-1})] \nonumber \\ 
	&+\frac12[gA_i g^{-1} + g\partial_i g^{-1}, gA_j g^{-1} + g\partial_jg^{-1}],
	\end{align}
	where, recall,  $\p_i\equiv \p_{x^i}$. 	Expanding the partial derivative and commutators gives
	\begin{align}
	(F_{ \rho_g A})_{ij} =&\frac12[\partial_i g A_j g^{-1} + g\partial_i Ag^{-1} + gA_j\partial_i g^{-1} + \partial_i g\partial_j g^{-1} + g\partial_i\partial_j g^{-1} \nonumber \\
	&+ (gA_i g^{-1} + \partial_i g g^{-1})(gA_j g^{-1} + g\partial_j g^{-1}) \nonumber \\
	&- (i\leftrightarrow j)].
	\end{align}
	Expanding the product on the second line gives
	\begin{align}
	(F_{ \rho_g A})_{ij}=& \frac12[\partial_i g A_j g^{-1} + g\partial_i Ag^{-1} + gA_j\partial_i g^{-1} + \partial_i g\partial_j g^{-1} + g\partial_i\partial_j g^{-1} \nonumber \\
	&+ gA_i A_j g^{-1} + \partial_i g A_j g^{-1} + gA_i\partial_j g^{-1} + \partial_i g\partial_j g^{-1} \nonumber \\
	&- (i\leftrightarrow j) ].
	\end{align}
	Cancelling terms symmetrical in $i$ and $j$ and simplifying gives
	\begin{align}
	(F_{ \rho_g A})_{ij}&= g(\frac12[\partial_i A_j - \partial_j A_i] + \frac12 [A_i A_j - A_j A_i])g^{-1} \\
	&= g (F_A)_{ij} g^{-1},
	\end{align}
	as required.
\end{proof}

Specifying \eqref{YMH-energy} to the WS model gives \eqref{WS-energy0}.

%for all $g \in C^1(\R^d, G)$.
\DETAILS{\noindent the translations
	\begin{equation}\label{translations}
	T^{transl}_s :    (\Psi(x), A(x)) \mapsto (\Psi(x + s), A(x + s)),\qquad \forall t \in \R^2;
	\end{equation}
	\noindent the rotations and reflections,
	\begin{align}\label{rotations-reflections}
	T^{rot}_R :    (\Psi(x), A(x)) \mapsto (\Psi(R^{-1}x),  RA(R^{-1}x)),\qquad \forall R \in  O(n) .
	\end{align}}

%\bigskip

%%%%%%%%%%%%%%%
\DETAILS{
	\paragraph{\emph{YM solutions.}} For every solution $A_{\rm YM}$ to the vacuum Yang-Mills equations \newline
	$d_{A}^* F_A = 0$ (the constant curvature connection), there corresponds the solution  $(0, A_{\rm YM})$ 
	%$(\Psi_N =0, A_N= A_{\rm YM})$ %, where  $A^b(x)$ is a gauge field with constant curvature $b$:  $F_{A}:=  d_A A=b$
	of the Yang-Mills-Higgs equations. % (in the coordinate form,  $F_{i j}:= \p_i A_j-  \p_j A_i +[A_i,  A_j]=b$). 
	%\bigskip
}

%the gauge transformations, which in the case

% If $V={\cG}$, then In both cases, we hav  % The Lagrangian above is invariant   under   the gauge transformations can be written as  \begin{equation}\label{gauge-transf} T^{gauge}_g :    (\Psi(x), A(x)) \mapsto (  \rho_{g(x)} \Psi(x),  g(x)  A(x)g^{-1}(x)  - \nabla g(x)g^{-1}(x)) . %\quad \forall g(x) \in C^1(\R^d, G); \end{equation}
%where.
%\medskip

%%%%

%\subsection
\paragraph{\bf The YMH equations in coordinate form.} %\label{sec:YMH-coord} 

\DETAILS{In what follows, we use the Einstein summation convention of summing over repeated indices. Furthermore, we use the convention of raising or lowering an index by contracting a tensor $T$ with the metric tensor:
\begin{equation}
T^{\alpha}_{i,\beta} = \eta_{ij} T^{j,\alpha}_{\beta}
\end{equation}
where $\eta$ is the Minkowski metric of signature $(+,-,...,-)$ on $\R^{d+1}$ and $\alpha, \beta$ are multi-indices. The same equations could be {\it reinterpreted as stationary} equations by taking the {\it Euclidean metric $\del_{i j}$, instead of $ \eta_{ij}$,  and letting the indices range over $1, \dots, d$, rather than $1, \dots, d+1$.}
In this case, $T^{\alpha}_{i,\beta} =  T^{i,\alpha}_{\beta}$.} 
%\medskip \paragraph{Coordinate form.} 

In coordinate form, the differential form (gauge field) entering the YMH Lagrangian \eqref{YMH-Lagr} is written as $A =A_{i} dx^i $. The local coordinate expression for the curvature is $F_A = F_{ij} dx^i \wedge dx^j$, where $F_{i j}:= \frac12(\p_i A_j-  \p_j A_i) +\frac12[A_i,  A_j]$. Furthermore, for the covariant derivatives $\n_A$ and $d_A$, we have $\n_{A} \Psi =\n_{i} \Psi dx^i $  and $d_{A}^* F_A=-\n^{i} F_{i j} dx^j $, where $\n_{i} \Psi  :=  (\p_i +   A_{i})  \Psi$ and $\n^{i} F_{i j} :=  \p^i   F_{i j} + [A^i,  F_{i j}]$.

For an arbitrary $\fg$-valued one-form $B = B_i dx^i$, we have $d_{A} B =\n_{i} B_j dx^i \wedge dx^j$ and $d_{A}^* B =-\n^{i} B_i$, where
\begin{equation} 
\n^{i} B_j :=  \p^i B_j + [A^i,  B_j].
\end{equation}
We write $F_{ij} = F_{ij}^a \g_a$ for an orthonormal basis $\g_a$ of $\fg$ and the lower case  roman indices run over the spatial components $1, 2, \dots, d$.  Note that $F_{i j}=[\n_{i}, \n_{j}]$, but $F_{ij} \neq \frac12 (\n_i A_j - \n_j A_i)$.
%,  so that   $d_A B=(\p_i B_j-  \p_j B_i + [A_i,  B_j] ) dx^i \wedge dx^j$. %, d A =(\p_i A_j-  \p_j A_i) dx^i dx^j$.

Let  $\Omega$ be either a bounded domain in  $\R^d$ or $\R^{d+1}$. In the former case, we assume either periodic or Dirichlet boundary conditions.

%%%%%%%%%%%%%%%%%%%%%%%%%  \paragraph{Energy.} 
\begin{proposition} \label{coord-prop-1}
	The Lagrangian and energy for the YMH model are given in coordinates by %we write the  in terms of these fields and in the gauge $\Psi_1=0$ and $\im \Psi_2=0$,
	\begin{align} \label{YMH-Lagr-coord}
	&\cL(\Psi, A)=   \int_{\Omega}  \lan\n_{k}\Psi,  \n^{k}\Psi \ran_{V} - U(\Psi) + %\frac{1}{2} \Tr G_{i j} G^{i j}, 
	\frac{1}{2} F^a_{i j} F^{a,i j},\\
 \label{YMH-energy-coord}	&E_{\Omega} (\Psi, A) =   \int_{\Omega}  \lan\n_{k}\Psi,  \n_{k}\Psi \ran_{V} + U(\Psi) + \frac{1}{2} F^a_{i j} F^{a}_{i j}
	\end{align}
(with different ranges of indices as mentioned above). 	The  YMH equations %for $V=\fg$ 
	are given in coordinates by
	\begin{align} \label{YMH-eqs-psi-coord}
	&-  \n^{i} \n_{i}\Psi  =  U'(\Psi),\\  \label{YMH-eqs-A-coord}
	&-\n^{i} F_{i j} = \Re\langle {\g_a}\Psi, \n_j\Psi\rangle_V \g_a.\ %[\n_{j}\Psi, \Psi ]. % \im\lan \Psi,   \n_{i}\Psi \ran, 
	\end{align} 
\end{proposition}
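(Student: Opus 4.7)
The plan is to verify each of the three statements by unpacking the coordinate-free expressions from \eqref{YMH-Lagr}, \eqref{YMH-energy}, \eqref{YMH-eqs-psi} and \eqref{YMH-eqs-A} using the definitions of $\n_A$, $d_A$, $F_A$ and the inner product \eqref{inner-prod-ext} in local coordinates.

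First I would handle the Lagrangian and energy. Writing $\n_A\Psi=\n_i\Psi\,dx^i$ and applying \eqref{inner-prod-ext} with $p=1$, $U=V$, gives directly $\langle\n_A\Psi,\n_A\Psi\rangle^\eta_{\Omega^1_V}=\langle\n_i\Psi,\n^i\Psi\rangle_V$, where the index is raised with the Minkowski metric $\eta$ (in the static case, with the Euclidean metric). For the curvature term, write $F_A=F_{ij}\,dx^i\wedge dx^j$ with $F_{ij}=F^a_{ij}\gamma_a$ in an orthonormal basis $\{\gamma_a\}$ of $\fg$ satisfying $\langle\gamma_a,\gamma_b\rangle_\fg=\delta_{ab}$. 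Using antisymmetry of the wedge in $i,j$, together with \eqref{inner-prod-ext} applied with $p=2$, I get $\langle F_A,F_A\rangle^\eta_{\Omega^2_\fg}=\tfrac12 F^a_{ij}F^{a,ij}$, where the factor $\tfrac12$ compensates for summing over unordered pairs. Collecting both pieces yields \eqref{YMH-Lagr-coord}; the energy formula \eqref{YMH-energy-coord} is identical except that the indices are raised with the Euclidean metric, which has already been arranged by the convention stated at the beginning of Appendix \ref{sec:YM}.

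Next I would derive the coordinate equations \eqref{YMH-eqs-psi-coord}--\eqref{YMH-eqs-A-coord} from \eqref{YMH-eqs-psi}--\eqref{YMH-eqs-A}. The main computation is to find $\n_A^{\ast_\eta}$ and $d_A^{\ast_\eta}$ in coordinates. For $\n_A^{\ast_\eta}$, take an arbitrary $V$-valued test one-form $B=B_i\,dx^i$ (periodic or Dirichlet on $\p\Omega$) and compute, by integration by parts,
\begin{align*}
\int_\Omega\langle\n_A\Psi,B\rangle^\eta_{\Omega^1_V}
=\int_\Omega\langle\n_i\Psi,B^i\rangle_V
=-\int_\Omega\langle\Psi,\n^i B_i\rangle_V,
\end{align*}
using that the unitarity of the representation gives $(A_i)^*=-A_i$. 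Hence $\n_A^{\ast_\eta}B=-\n^i B_i$, and \eqref{YMH-eqs-psi} becomes $-\n^i\n_i\Psi=U'(\Psi)$, which is \eqref{YMH-eqs-psi-coord}. Similarly, using $d_A B=\n_i B_j\,dx^i\wedge dx^j$ for a $\fg$-valued one-form $B$ and pairing it with an arbitrary $\fg$-valued two-form $C=C_{ij}\,dx^i\wedge dx^j$, integration by parts (again with the ad-invariance of $\langle\cdot,\cdot\rangle_\fg$ accounting for the $[A_i,\cdot]$ term) produces $d_A^{\ast_\eta}C=-\n^i C_{ij}\,dx^j$. Applying this with $C=F_A$ and evaluating the current \eqref{YMH-eqs-curr} in coordinates converts \eqref{YMH-eqs-A} into \eqref{YMH-eqs-A-coord}.

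The potentially tricky step is the combinatorial bookkeeping in the 2-form inner product (the factor $\tfrac12$ from unordered versus ordered index pairs) and the sign/ad-invariance calculation that shows $(d_A)^{\ast_\eta}$ acts as $-\n^i$ rather than merely $-\p^i$; both hinge on consistent wedge product and Einstein-summation conventions. Once these conventions are pinned down, each of the three statements follows by direct substitution, and no further analytical input is needed.
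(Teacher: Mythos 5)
Your proposal is correct and follows essentially the same route as the paper: the paper's own proof simply cites the coordinate expressions $\n_A\Psi=\n_k\Psi\,dx^k$, $F_A=F^a_{ij}\g_a\otimes dx^i\wedge dx^j$ and the orthonormality of the bases $dx^k$ and $\g_a\otimes dx^i\wedge dx^j$, and invokes the already-stated formulas $d_A^*B=-\n^iB_i$ and $d_A^*F_A=-\n^iF_{ij}\,dx^j$. You merely make explicit the integration-by-parts and unitarity/ad-invariance arguments behind those adjoint formulas and the combinatorial origin of the factor $\tfrac12$, which the paper leaves implicit.
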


\begin{proof}
	Equations \eqref{YMH-Lagr-coord} and \eqref{YMH-energy-coord} follow from the coordinate expressions $d_{A} \Psi =\n_k \Psi dx^k$ and $F_A = F_{ij}^a \g_a\otimes dx^i \wedge dx^j$, together with the fact that $dx^k$ and $\g_a\otimes dx^i\wedge dx^j$ form orthonormal bases for $\Omega^1$ and $\Omega^2_\fg$, respectively.
	
	Equations \eqref{YMH-eqs-psi-coord} - \eqref{YMH-eqs-A-coord} follow from equations \eqref{YMH-eqs-psi} - \eqref{YMH-eqs-curr} and the coordinate expressions for $d_A$ and $d_A^*$ above.
\end{proof}

%\paragraph{\bf The WS equations in coordinate form.}
\section{The WS equations in coordinate form} \label{sec:WS-coord}

%In the latter case, 
For the gauge group $G=U(2)=SU(2)\times U(1)$, we choose the standard inner product \begin{equation}\langle \g, \delta \rangle_{\mathfrak{u}(2)}  := 2\Tr\g^*\del =  -2 \Tr\g\delta\end{equation} on $\mathfrak{u}(2)$, for which $-\frac i2\tau_a$, $a=0,1,2,3$, (where $\tau_a$, $a=1,2,3$, are the the Pauli matrices together with $\tau_0  :=   \one$) form an orthonormal basis. It is customary to factor out the coefficient of $-\frac i2$. % working instead with the field $Q :=  2i A$ where $A$ is the $\mathfrak{u}(2)$-valued connection. 
In coordinates, we write %$\n_{Q} \Phi =\n_{i} \Phi dx^i $,  
 \begin{equation}\label{nabQ-spec}\n_{Q} \Phi =\n_{i} \Phi dx^i, \quad Q =-\frac{i}{2} Q_i dx^i\ \text{ and }\ F_Q =-\frac{i}{2} Q_{ij} dx^i\wedge dx^j,\end{equation} 
with $Q_i(x)$, $ Q_{ij}(x)\in i\mathfrak{u}(2)$. Using equation \eqref{FQ}, we compute $Q_{ij} = \frac12 (\partial_i Q_j - \partial_j Q_i) - \frac i{4} [Q_i,Q_j]$. %and $d_{Q}C =\n_{i} C_j dx^i \wedge dx^j $
Furthermore, we write $Q=V+X$ and 
\begin{equation}V = -\frac{i}{2}V_i dx^i\ \text{ and }\ X = -\frac{i}{2}X_i dx^i,\end{equation} with  $V_i(x)\in i\mathfrak{su}(2)$ and $X_i(x)\in i\mathfrak{u}(1)$. Then $Q_{i j}=V_{i j} + X_{i j} $ and %and $d_{Q}^* F_{Q} =\n_i F_{i j} dx^j $,  where  
\begin{align}\n_{i} \Phi  &:=  (\p_i -   \frac{ig}{2}  V_{i} -    \frac{ig'}{2} X_{i})  \Phi, \\
%\begin{equation}\n_{i} C_j=\p_i C_j-  \p_j C_i + [Q_i,  C_j],\end{equation}
\label{Vij}V_{i j} &:=  \frac12(\p_i V_j-  \p_j V_i) - \frac{ig}{4}[V_i,  V_j], \\
\label{Xij}X_{i j} &:=    \frac12(\p_i X_j-  \p_j X_i).\end{align} 
\DETAILS{For a $\mathfrak{u}(2)$-valued function $f$ and a $\mathfrak{u}(2)$-valued $1-$form $B=B_{i} dx^{i}$, where $B_{i}$ are  $\mathfrak{u}(2)$-valued functions, we define $d_Q f = d_V f:= d f - \frac{ig}2 [V,f] $ and $d_Q B = d_V B:=  (d_V B_{i})\wedge dx^{i}$.

  $F_{Q} :=  F_{V} + F_{X}$, where $F_V :=  dV - \frac{ig}{2} [V \wedge V]$ and $F_X :=  dX$. % (there is a discrepancy here with the standard definition $(F_Q)_{ij} = [\nabla_i, \nabla_j]$). 
 Here $[A\wedge B]$ is defined as %in coordinates as 
 \begin{equation}\label{commut-def} [A\wedge B]:= %\sum_{ij} [A_i,  B_j] dx^i\wedge dx^j,\sum_{a,b=1}^3 
  A^a\wedge B^b [\tau_a,  \tau_b] =[B\wedge A], \end{equation}
 where $A= A^a \tau_a$, $B= B^a \tau_a$ and  $\tau_a$, $a=1,2,3$ are the Pauli matrices
 \begin{align}
  \tau_1:= \left(\begin{array}{cc}
  0 & 1 \\ 1 & 0 \end{array} \right),\
  \tau_2:= \left(\begin{array}{cc}
  0 & -i \\ i & 0 \end{array} \right),\
  \tau_3:= \left(\begin{array}{cc}
  1 & 0 \\ 0 & -1 \end{array} \right).
 \end{align}}

We specify equation \eqref{YMH-Lagr-coord} - \eqref{YMH-eqs-A-coord} for to the Weinberg-Salam (WS) model, which has the gauge group $G=U(2)=SU(2)\times U(1)$. As was mentioned in Appendix \ref{sec:Cov-Deriv-Curv}, in this case, there is a slight discrepancy in the definition of the covariant derivative due to the fact that $U(2)$ is not simple, but a (semi-)direct product of the simple group $SU(2)$ and $U(1)$, with each component having a coupling constant, see \eqref{nabQ-spec}-\eqref{Xij}. 
\DETAILS{We choose the standard inner product \begin{equation}\langle \g, \delta \rangle_{\mathfrak{u}(2)}  := 2\Tr\g^*\del =  -2 \Tr\g\delta\end{equation} on $\mathfrak{u}(2)$, for which $-\frac i2\tau_a$, $a=0,1,2,3$, (where $\tau_a$, $a=1,2,3$, are the the Pauli matrices together with $\tau_0  :=   \one$) form an orthonormal basis. It is customary to factor out the coefficient of $-\frac i2$. % working instead with the field $Q :=  2i A$ where $A$ is the $\mathfrak{u}(2)$-valued connection. 
In coordinates, we write %$\n_{Q} \Phi =\n_{i} \Phi dx^i $,  
 \begin{equation}\n_{Q} \Phi =\n_{i} \Phi dx^i, \quad Q =-\frac{i}{2} Q_i dx^i\ \text{ and }\ F_Q =-\frac{i}{2} Q_{ij} dx^i\wedge dx^j,\end{equation} 
with $Q_i(x)$, $ Q_{ij}(x)\in i\mathfrak{u}(2)$. Using equation \eqref{FQ}, we compute $Q_{ij} = \frac12 (\partial_i Q_j - \partial_j Q_i) - \frac i{4} [Q_i,Q_j]$. %and $d_{Q}C =\n_{i} C_j dx^i \wedge dx^j $
Furthermore, we write $Q=V+X$ and 
\begin{equation}V = -\frac{i}{2}V_i dx^i\ \text{ and }\ X = -\frac{i}{2}X_i dx^i,\end{equation} with  $V_i(x)\in i\mathfrak{su}(2)$ and $X_i(x)\in i\mathfrak{u}(1)$. Then $Q_{i j}=V_{i j} + X_{i j} $ and %and $d_{Q}^* F_{Q} =\n_i F_{i j} dx^j $,  where  
\begin{align}\n_{i} \Phi  &:=  (\p_i -   \frac{ig}{2}  V_{i} -    \frac{ig'}{2} X_{i})  \Phi, \\
%\begin{equation}\n_{i} C_j=\p_i C_j-  \p_j C_i + [Q_i,  C_j],\end{equation}
V_{i j} &:=  \frac12(\p_i V_j-  \p_j V_i) - \frac{ig}{4}[V_i,  V_j], \\
X_{i j} &:=    \frac12(\p_i X_j-  \p_j X_i).\end{align}} %\begin{equation}C_{i j}:= \p_i C_j-  \p_j C_i - i [C_i,  C_j]\end{equation} %\begin{equation}\n_{i} F_{i j}:= \p_i   F_{i j} + [Q_i,  F_{i j}].\end{equation} % ) dx^i \wedge dx^j$. %, d A =(\p_i A_j-  \p_j A_i) dx^i dx^j$. 

%Denote the components of $F_{A}$ by $A_{i j}$.
%Then $Q_{i j}=V_{i j} + X_{i j} $,   where $V_{i j}:= \p_i V_j-  \p_j V_i - ig[V_i,  V_j]$,  $X_{i j}:=   \p_i X_j-  \p_j X_i$. %(There is a discrepancy here with the standard definition where $Q_{i j}=[\n_{i}, \n_{j}]$.) 
%the $SU(2)\times U(1)$ gauge field, $G$  is written in terms of the  $SU(2)$ and $ U(1)$ gauge fields, $W$ and $X$ as $G_i=\frac{1}{2}  \tau_a W_i^a +i\frac{1}{2} X_i$.
%(this is the reason for the factor of $\frac12$ in $A=-\frac i2 Q$). 
Using Eqs \eqref{nabQ-spec}-\eqref{Xij}, we express the Lagrangian and the energy %Euler-Lagrange equations are written
  in coordinates as
\begin{align} \label{EW-Lagr-coord}
&\cL (\Phi, Q) :=  \int_{\Omega}  \lan\n_{i}\Phi,  \n^{i}\Phi \ran_{\C^2} - U(\Phi) + \frac{1}{2} \Tr Q_{i j} Q^{i j},\\ 
%\frac{1}{2} \Tr V_{i j} V^{i j} +\frac{1}{2} \Tr X_{i j} X^{i j},
 \label{EW-energy-coord}&E (\Phi, Q) :=  \int_{\Omega}  \lan\n_{i}\Phi,  \n_{i}\Phi \ran_{\C^2} + U(\Phi) + \frac{1}{2} \Tr Q_{i j} Q_{i j},\end{align}
\DETAILS{where $\n_{i}$ is  the covariant derivative given by $\n_{i}:= \p_i -   g  i\frac{1}{2}V_{i} -    g'  i\frac{1}{2} X_{i}$,  %i \frac{1}{2} g \tau_a W_i^a-i\frac{1}{2} g' X_i$,  
	$G_{i j}:= %[\n_{i}, \n_{j}] =
	\p_i G_j-  \p_j G_i +[G_i,  G_j]$ and $g,\ g'$  and $\lam$ are positive constant (coupling constants).}
(with indices ranging from $0$ to $d$ and $1$ to $d$, respectively, as mentioned above), and the Euler-Lagrange equations are written in coordinates as
\begin{align} \label{WS-eqs-orig-1} 
-&\n^i\n_i\Phi = U'(\Phi), \\
&\n^i Q_{ij} = \frac12 g \Im \lan\tau_a\Phi, \n_j\Phi\ran_{\C^2} \tau_a + \frac12 g' \Im \lan\tau_0\Phi, \n_j\Phi\ran_{\C^2} \tau_0. \label{WS-eqs-orig-2} 
\end{align}

%In Section \ref{sec:WZ-bosons}, we expressed equations
Eqs. \eqref{EW-energy-coord} - \eqref{WS-eqs-orig-2} can be expressed in terms of the $W$, $Z$, Higgs and electromagnetic fields resulting  in 2D equations % in their standard form
 \eqref{WS-energy'} - \eqref{WS-eq4'}, see Appendix \ref{sec:WSeqs2D}. %, involving the $W$, $Z$, Higgs and electromagnetic fields defined therein.

%\textbf{1.1. The bosonic Weinberg-Salam equations.}  As we are interested in static solutions, we write the energy functional in terms of these fields and in the gauge $\Psi_1=0$ and $\im \Psi_2=0$, 

\section{The Weinberg-Salam energy %in 3D
  in terms of the fields $W$, $A$, $Z$ and $\vphi$}\label{sec:en-expl}

\subsection{Dimension $3$} %{Proof of \eqref{WS-energy}}
\label{sec:en-expl}

%We now write  the energy functional in terms of the fields  $W,A,Z$ and $\vphi$. In future computations, we will 
We work in a fixed coordinate system, $\{x^i\}_{i=1}^3$ and write the fields as $W = W_{i} dx^i,$
$ Z = -\frac i2 Z_{i} dx^i$  and $A = -\frac i2 A_{i} dx^i$. We show %in Appendix \ref{sec:YMH-coord} %{sec:en-expl} that
\begin{proposition}\label{prop:WS-energy} Energy \eqref{WS-energy0}, written in terms of the fields $W,A,Z$ and $\vphi$ and coordinates $\{x^i\}_{i=1}^3$, is given by (see also \cite{SY}): 
 \begin{align} \label{WS-energy}  E_{\Omega} &(W, A, Z, \vphi) :=  \int_{\Omega } \big[   \sum_{ij} (\frac{1}{2} |W_{i j}|^2 +  \frac{1}{4}  |Z_{i j}|^2 +\frac{1}{4} |A_{i j}|^2)  \notag \\ & \qquad  +\frac{1}{2} g^2 \vphi^2|W|^2+\frac{1}{4\cos^2\theta} g^2 \vphi^2|Z|^2
 +T(W,A,Z) \notag \\ & \qquad   \qquad  \qquad  \qquad  \qquad +  |\n \vphi|^2  + \frac{1}{2} \lam (\vphi^2-\vphi_0^2)^2\big], \end{align} 
 where %  $W=\sum W_i d x^i$, $A=\sum A_i d x^i$, $Z=\sum Z_i d x^i$, 
$W_{i j}:= \n_{i} W_{j} -\n_{j}W_{i}$, with $\n_{k}:= \p_k-i g V^3_k, \partial_k\equiv \partial_{x^k}$, $Z_{i j}:= \p_{i} Z_{j} -\p_{j}Z_{i}$, $A_{i j}:= \p_{i} A_{j} -\p_{j}A_{i}$ %\footnote{} 
and $T(W, A, Z)$ is the sum of super-quadratic terms,
\begin{align} \label{T-def}
 T(W,A,Z) :=  &\frac{g^2}{2} \sum_{ij} (|W_i W_j|^2 - W_i^2 \overline{W}_j^2)- ig \sum_{ij}V_{ij}^3 W_i \overline{W}_j,
\end{align}
where $V^3 :=  Z\cos\theta + A\sin\theta$ and $V^3_{ij} :=  \partial_i V_j - \partial_j V_i$,  with the important property that $T(W,A,Z)$ is invariant under the gauge transformation \eqref{gauge-transf'}. % and the lower case  roman indices run over the spatial components $1, 2$. 
 \end{proposition}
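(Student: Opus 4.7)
The plan is a direct term-by-term translation of the gauge-fixed energy \eqref{WS-energy0} into the $(W, A, Z, \vphi)$-variables, followed by collecting the pieces into the form \eqref{WS-energy}. I break this into three blocks corresponding to the Higgs kinetic term, the potential, and the curvature term.

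\emph{Potential and Higgs kinetic terms.} In the gauge $\Phi = (0,\vphi)$ with $\vphi$ real, the potential piece becomes $\frac12 \lam(\vphi^2 - \vphi_0^2)^2$ instantly. For the Higgs kinetic term I expand $\n_Q \Phi = d\Phi + (gV + g'X)\Phi$ using $V = -\frac{i}{2}\tau_a V^a$, $X = -\frac{i}{2}\tau_0 X^0$, and the identities $\tau_a(0,\vphi)$ ($a=1,2$) producing purely off-diagonal components, and $(\tau_3 - \tau_0)(0,\vphi)$ producing the mass-carrying diagonal combination that recombines via \eqref{Z,A-fields} into $Z$ only (the $A$ piece cancels since $\tau_3 + \tau_0$ annihilates $\Phi$). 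After grouping and using \eqref{W-field}, this produces $|\n \vphi|^2 + \frac12 g^2 \vphi^2 |W|^2 + \frac{g^2}{4\cos^2\theta}\vphi^2|Z|^2$; I will need the standard identity $g^2\sin^2\theta + g'^2\cos^2\theta = g^2\cos^{-2}\theta\,(g'/g)^2 \cdots$, which follows from $\tan\theta = g'/g$.

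\emph{Curvature term.} This is the main computational obstacle. Writing $F_Q = gF_V + g'F_X$ and using $\|F_Q\|^2_{\Omega^2_{\mathfrak{u}(2)}} = \|gF_V\|^2_{\Omega^2_{\mathfrak{su}(2)}} + \|g'F_X\|^2_{\Omega^2_{\mathfrak{u}(1)}}$ (orthogonality of $\mathfrak{su}(2)$ and $\mathfrak{u}(1)$ inside $\mathfrak{u}(2)$ in the chosen trace inner product), I expand $F_V = dV + \frac{g}{2}[V,V]$ in components. The key structural observation is that the abelian part $\p_i V^a_j - \p_j V^a_i$ ($a=1,2$) combines with the commutator cross-terms $[V^3, V^{1,2}]$ to produce a covariant curl with respect to $gV^3$ (this is where $\n_i = \p_i - igV^3_i$ enters), and writing the result in terms of $W = \frac{1}{\sqrt{2}}(V^1 - iV^2)$ gives $\frac12 |W_{ij}|^2$ precisely. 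The $a=3$ component together with $X$ reassembles under \eqref{Z,A-fields} into $\frac14|Z_{ij}|^2 + \frac14|A_{ij}|^2$: here I use that the orthogonal transformation $V^3 = Z\cos\theta + A\sin\theta$, $X^0 = -Z\sin\theta + A\cos\theta$ preserves the sum of squares of the abelian parts, while the $[V^1, V^2]$ commutator contribution to $F_V^3$ recombines with the remaining $V^3$-pieces to produce exactly the quartic and cubic terms in $T(W,A,Z)$.

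\emph{Assembling $T$ and conclusion.} Specifically, $[V^1, V^2]$ is proportional to $\tau_3$, so its contribution to $F_V^3$ is $-\frac{g}{2}\varepsilon^{ab}V^a_i V^b_j$ with $a,b \in \{1,2\}$; when squared and re-expressed via $W = (V^1 - iV^2)/\sqrt{2}$, a short algebraic manipulation (using $\overline{W}_i W_j - \overline{W}_j W_i = -i\,\varepsilon^{ab}V^a_i V^b_j$) yields the quartic $\frac{g^2}{2}\sum_{ij}(|W_iW_j|^2 - W_i^2\overline{W}_j^2)$. The mixed quadratic-in-$V^3$/quadratic-in-$W$ terms arising from the cross term $(\p V^3)\cdot[V^1,V^2]$-type contributions in $|F_V^3|^2$ (together with analogous contributions of mixed degree in the covariant curl) reassemble into $-ig\sum_{ij}V^3_{ij}W_i\overline{W}_j$, giving the second part of $T(W,A,Z)$ in \eqref{T-def}.

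The hardest bookkeeping step will be verifying that \emph{all} cubic and quartic cross-terms present in $\|F_Q\|^2$ assemble exactly into $T(W,A,Z)$ with the stated coefficients and with no leftover pieces; I will handle this by collecting terms by total degree in $W$ (i.e.\ quadratic, cubic, quartic in the new fields) and checking each degree separately, using gauge invariance of \eqref{T-def} under \eqref{gauge-transf'} as a cross-check that the remainder after extracting the manifestly gauge-covariant quadratic pieces $\frac12|W_{ij}|^2, \frac14|Z_{ij}|^2, \frac14|A_{ij}|^2$ is indeed a gauge-invariant function of $W, A, Z$. Combining the three blocks and reading off coefficients yields \eqref{WS-energy}.
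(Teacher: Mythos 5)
Your proposal is correct and follows essentially the same route as the paper's own proof: expand $\n_Q\Phi$ on $\Phi=(0,\vphi)$ to get the Higgs kinetic and mass terms (with the $A$-part cancelling because $\tau_3+\tau_0$ annihilates $\Phi$), split $\|F_Q\|^2$ into the $\mathfrak{su}(2)$ and $\mathfrak{u}(1)$ pieces, absorb the $[V^3,V^{1,2}]$ commutators into the covariant curl $W_{ij}$, rotate $(V^3,X^0)\to(Z,A)$ for the abelian quadratic terms, and collect the residual $[V^1,V^2]$ contributions into $T(W,A,Z)$. The paper organizes the curvature bookkeeping by writing $V_{ij}$ as an explicit $2\times 2$ matrix and summing squared entries rather than collecting by degree in $W$, but this is a presentational difference only; all of your structural claims (including $\overline W_iW_j-\overline W_jW_i=-i\varepsilon^{ab}V^a_iV^b_j$) check out.
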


%Finally, we derive expression \eqref{WS-energy} for the energy functional.
\begin{proof}[Proof of \eqref{WS-energy}]
	We proceed by rewriting the terms in the coordinate expression of the WS energy \eqref{EW-energy-coord},
	%\begin{equation}E_{\Omega}^{WS} (\Phi, Q) :=  \int_{\Omega }  \lan\n_{i}\Phi,  \n^{i}\Phi \ran_{\C^2} + U(\Phi) + \frac{1}{2} \Tr Q_{i j} Q^{i j}, \label{YMH-energy'}
	%\frac{1}{2} \Tr V_{i j} V^{i j} +\frac{1}{2} \Tr X_{i j} X^{i j},\end{equation} 
	in terms of the fields  $W = W_{i} dx^i$, $Z = -\frac i2 Z_{i} dx^i$, $A = -\frac i2 A_{i} dx^i$ and $\vphi$. 
	
	For the first term, first we calculate $\n_i\Phi$. Recall the definition $\n_i\Phi := (\partial_i - \frac{ig}{2}V_i - \frac{ig'}{2} X_i)\Phi$. We simplify the matrix representing the connection's action on $\Phi$:
	\begin{align}
	- \frac{ig}{2}V_i &- \frac{ig'}{2} X_i = -\frac{ig}2 V_i^a\tau_a -\frac{ig'}2 X_i\tau_0 \nonumber \\
	&= -\frac{ig}2\left(\begin{array}{cc}
	0 & V^1_i \\ 
	V^1_i & 0 
	\end{array} \right) 
	-\frac{ig}2\left(\begin{array}{cc}
	0 & -iV^2_i \\ 
	iV^2_i & 0 
	\end{array} \right) \nonumber \\
	&-\frac{ig}2\left(\begin{array}{cc}
	V^3_i & 0 \\ 
	0 & -V^3_i 
	\end{array} \right)
	-\frac{ig}2\tan\theta\left(\begin{array}{cc}
	X_i & 0\\ 
	0 & X_i 
	\end{array} \right) \nonumber \\
	&= -\frac{ig}{2\cos\theta}\left(\begin{array}{cc}
	V^3_i\cos\theta + X_i\sin\theta & V^1_i\cos\theta - iV^2_i\cos\theta \\
	V^1_i\cos\theta + iV^2_i\cos\theta  & -V^3_i\cos\theta + X_i\sin\theta 
	\end{array} \right). \label{conn-matr}
	\end{align}
	In terms of the fields $Z$, $A$ and $W$ (see equations \eqref{Z,A-fields} - \eqref{W-field} for the definitions of these fields), \eqref{conn-matr} becomes
	\begin{align}
	- \frac{ig}{2}V_i - \frac{ig'}{2} X_i &= -\frac{ig}{2\cos\theta}\left(\begin{array}{cc}
	Z_i\cos2\theta + A_i\sin2\theta & \sqrt{2}\ W_i\cos\theta \\ 
	\sqrt{2}\ \overline W_i\cos\theta & -Z_i
	\end{array} \right).
	\end{align}
	Hence, for $\Phi = (0, \vphi)$,
	\begin{align}
	\n_i\Phi = \left(\begin{array}{cc}
	-\frac{ig}{\sqrt 2}W_i\vphi \\ 
	\partial_i\vphi + \frac{ig}{2\cos\theta}Z_i\vphi
	\end{array} \right).
	\end{align}
	Therefore, the first term of \eqref{EW-energy-coord}, written in terms of the fields $W,A,Z$ and $\vphi$, becomes
	\begin{align}
	\langle\n_i\Phi, \n^i\Phi\rangle_{\C^2} &= \overline{\frac{ig}{\sqrt2}W_i} \frac{ig}{\sqrt2}W^i \nonumber \\
	&+ \overline{(\partial_i\vphi + \frac{ig}{2\cos\theta}Z_i\vphi)}(\partial^i\vphi + \frac{ig}{2\cos\theta}Z^i\vphi) \nonumber \\
	&= \frac{g^2}{2}\vphi^2|W|^2 + |\n\vphi|^2 + \frac{g^2}{4\cos^2\theta}\vphi^2|Z|^2. 
	\label{WS-energy-term1}
	\end{align}
	
	The second term of \eqref{EW-energy-coord} becomes
	\begin{align}
	U(\Phi) &= \frac12\lambda (\|\Phi\|^2 - \vphi_0^2)^2 %\nonumber \\
	= \frac12\lambda (\vphi^2 - \vphi_0^2)^2.
	\label{WS-energy-term2}
	\end{align}
	
	For the third term of \eqref{EW-energy-coord}, we will use the fact that $\Tr Q_{i j} Q^{i j} = \Tr V_{i j} V^{i j} + \Tr X_{i j} X^{i j}$, where $V_{i j}$ and $X_{i j}$ are defined in \eqref{Vij} and \eqref{Xij}. Furthermore, we have
%\begin{align} \label{Vij}V_{ij}  :=  \frac12(\partial_i U_j - \partial_j U_i) - \frac{ig}{4}[U_i, U_j]\end{align} with
	\begin{align}
	 V_i  :=  V_i^a\tau_a = \left(\begin{array}{cc}
	 V_i^3 & \sqrt{2}\ W_i \\ 
	 \sqrt{2}\ \overline W_i & -V_i^3 
	 \end{array} \right). \label{vi}
	\end{align}
	%and\begin{align}X_{ij}  :=  \frac12(\partial_i X_j - \partial_j X_i).\end{align}
	%To simplify \eqref{Vij}, we use \eqref{vi}, let $U_{ij}:=\partial_i U_j - \partial_j U_i$, and
	We recall $V_{ij}^3 = \p_i V_j^3 - \p_j V_i^3$ and $W_{ij}^0 = \partial_i W_j - \partial_j W_i$ and calculate	
	\begin{align}
	\frac12  (\partial_i V_j - \partial_j V_i) %V_{ij}
	 = \frac12\left(\begin{array}{cc}
	V_{ij}^3 & \sqrt{2}\ W_{ij} \\ 
	\sqrt{2}\ \overline W_{ji}^0 & -V_{ij}^3
	\end{array} \right), \label{vij-deriv}
	\end{align}
	and, with $K_{ij}:= V^3_i W_i - V^3_j  W_i$,
	\begin{align}
	-\frac{ig}{4}[V_i, V_j] &= -\frac{ig}{4}\left(\begin{array}{cc}
	V_i^3 & \sqrt{2}\ W_i \\ 
	\sqrt{2}\ \overline W_i & -V^3_i 
	\end{array} \right)
	\left(\begin{array}{cc}
	V_j^3 & \sqrt{2}\ W_j \\ 
	\sqrt{2}\ \overline W_j & -V^3_j 
	\end{array} \right) - (i\leftrightarrow j) \nonumber \\
	&=  -\frac{ig}{4}\left(\begin{array}{cc}
	V_i^3V_j^3 + 2W_i\overline W_j &\sqrt{2}\ K_{ij} \\ 
	\sqrt{2}\ \overline K_{ij}  & -V_i^3V_j^3 - 2W_i\overline W_j
	\end{array} \right) - (i\leftrightarrow j) \nonumber \\
	&=-\frac{ig}2\left(\begin{array}{cc}
	W_i\overline W_j - \overline W_i W_j & \sqrt{2}\ K_{ij} \\ 
	\sqrt{2}\ \overline K_{ji} & -W_i\overline W_j + \overline W_i W_j
	\end{array} \right). \label{vij-comm}
	\end{align}
%Recall the definitions of $W_{ij}$ and $V^3_{ij}$ proceeding \eqref{WS-energy} - \eqref{T-def}. For convenience, we abuse notation and exclude the contribution of the commutator term in \eqref{vij} to $V^3_{ij}$. (pls, restore the commutator term, in math, one does not drop terms from eqs for convenience, to simplify the formulae, one uses suitable notation)
 Adding \eqref{vij-deriv} and \eqref{vij-comm}, using that $W_{ij}=W_{ij}^0+K_{ij}$ and denoting $L_{ij}:=V^3_{ij} - ig (W_i\overline W_j - \overline W_i W_j)$ gives
	\begin{align}
	&V_{ij} = \frac12\left(\begin{array}{cc}
	L_{ij} & \sqrt{2}\ W_{ij} \\
	-\sqrt{2}\ \overline W_{ij}  & -L_{ij}
	\end{array} \right).
	\end{align}
	Since $V_{ij}$ and $X_{ij}$ are Hermitian, $\Tr V_{ij}V^{ij}$ and $\Tr X_{ij}X^{ij}$ are the sum of the squared absolute values of the matrix coefficients of $V_{ij}$ and $X_{ij}$, respectively. Thus
	\begin{align}
	\frac{1}{2}& \Tr Q_{i j} Q^{i j} = \frac{1}{2} \Tr V_{i j} V^{i j} + \frac{1}{2} \Tr X_{i j} X^{i j} \nonumber \\
	 &=\frac18 \sum\limits_{ij} 2|L_{ij}|^2
	+ 4|W_{ij}|^2 + 2|X_{ij}|^2.%\nonumber \\
	\end{align}
	Using $L_{ij} = V^3_{ij} - ig (W_i\overline W_j - \overline W_i W_j)$ and expanding the first term gives
	\begin{align}
	\frac{1}{2} &\Tr Q_{i j} Q^{i j} = \sum\limits_{ij} \frac12|W_{ij}|^2 + \frac14|V^3_{ij}|^2 + \frac14|X_{ij}|^2 \nonumber \\
		&+\frac{g^2}4\sum\limits_{ij} |W_i\overline W_j - \overline W_i W_j|^2  
	-\frac{ig}4\sum\limits_{ij} 2V^3_{ij}(W_i\overline W_j - \overline W_i W_j). \label{trQ-1}% \nonumber \\
	\end{align}
	Recall that $A_{ij} = V^3_{ij}\sin\theta + X_{ij}\cos\theta$ and $Z_{ij} = V^3_{ij}\cos\theta - X_{ij}\sin\theta$. Writing the first line of \eqref{trQ-1} in terms of these fields gives
	\begin{align}
	\frac{1}{2} &\Tr Q_{i j} Q^{i j} =\sum\limits_{ij} \frac12|W_{ij}|^2 + \frac14|Z_{ij}|^2 + \frac14|A_{ij}|^2 \nonumber \\
			&+\frac{g^2}4\sum\limits_{ij} |W_i\overline W_j - \overline W_i W_j|^2  
	-\frac{ig}2\sum\limits_{ij} V^3_{ij}(W_i\overline W_j - \overline W_i W_j). \label{trQ-2}% \nonumber \\
	\end{align}
	Expanding the first term of the second line, and using $V_{ij}^3 = -V^3_{ij}$ in the second term, \eqref{trQ-2} becomes
	\begin{align}
	\frac{1}{2} \Tr Q_{i j} Q^{i j} &=\sum\limits_{ij} \frac12|W_{ij}|^2 + \frac14|Z_{ij}|^2 + \frac14|A_{ij}|^2 \nonumber \\
	&+\frac{g^2}4 \sum\limits_{ij}(|W_i|^2 |\overline W_j|^2 - W_i^2 \overline W_j^2 + (i\leftrightarrow j)) \nonumber \\ &-\frac{ig}2\sum\limits_{ij}( V^3_{ij} W_i\overline W_j + (i\leftrightarrow j) ).%\nonumber \\
	\end{align}
	Recalling the definition \eqref{T-def} of $T(W,A,Z)$ gives
	\begin{align}
\frac{1}{2} \Tr Q_{i j} Q^{i j}	&=\sum\limits_{ij} \frac12|W_{ij}|^2 + \frac14|A_{ij}|^2 + \frac14|Z_{ij}|^2 + T(W,A,Z).
	\label{WS-energy-term3}
	\end{align}
	
	Adding \eqref{WS-energy-term1}, \eqref{WS-energy-term2} and \eqref{WS-energy-term3} gives \eqref{WS-energy}.% where $R_\varepsilon$ collects the $\cO(|s|^6)$ remainder terms. $R_\varepsilon$ has continuous derivatives of all orders because it is a sum of integrals of the form Proposition \ref{os-prop}, with $f_s$ and $g_s$ coming from the continously differentiable remainder terms $R_p$, $p\in\{w,a,z,\psi,\xi\}$, of \eqref{s-asym}.
\end{proof}

%\textbf{1.1. The bosonic Weinberg-Salam equations.}  As we are interested in static solutions, we write the energy functional in terms of these fields and in the gauge $\Psi_1=0$ and $\im \Psi_2=0$, 

\subsection{Dimension $2$: Proof of \eqref{WS-energy'}} %{The WS equations and energy in 2D  in terms of the fields $W$, $A$, $Z$ and $\vphi$}
\label{sec:WSeqs2D}

\begin{proof}[Proof of \eqref{WS-energy'}]
Now, we consider the Weinberg-Salam (WS) model in $\R^2$ with fields independent of the third dimension $x_3$, and correspondingly choose the gauge with $V_3 = X_3=0$ (and hence $W_3 = A_3=Z_3=0$). In this case the summation in \eqref{WS-energy} contains only two terms, $(ij)=(12)$ and $(ij)=(21)$, and we use this to simplify \eqref{WS-energy}.

	We proceed by simplifying the terms of \eqref{T-def} and the first line of \eqref{WS-energy}; the remaining terms are unchanged.
\begin{align}
 \sum_{ij} (\frac{1}{2} |W_{i j}|^2 +  \frac{1}{4}  |Z_{i j}|^2 +&\frac{1}{4} |A_{i j}|^2) =\sum_{i<j} (|W_{i j}|^2 +  \frac12 |Z_{i j}|^2 +\frac12 |A_{i j}|^2) \nonumber \\
 &= |\curl_{gV^3}W|^2 + \frac12|\curl Z|^2 + \frac12|\curl A|^2; \label{WS-energy'1}
\end{align}	
\begin{align}
 \sum_{ij} (|W_i W_j|^2 &- W_i^2 \overline W_j^2) \notag \\ 
 &=  W_1 W_2 \overline W_1\overline W_2 - W_1^2\overline W_2^2 + W_2 W_1 \overline W_2\overline W_1 - W_2^2\overline W_1^2 \nonumber \\
 &= \overline{(\overline W_1 W_2 - W_1\overline W_2)}(\overline W_1 W_2 - W_1\overline W_2) \nonumber \\
 &=  |\overline W\times W|^2; \label{WS-energy'2}
\end{align}
\begin{align}
 - \sum_{ij}V^3_{ij}W_i\overline W_j &= \sum_{i<j} V^3_{ij}(-W_i\overline W_j + W_j\overline W_i) \nonumber \\
% &= (\curl V^3)(\overline W_1 W_2 - \overline W_2 W_1) \nonumber \\
 &= (\curl V^3) \overline W\times W. \label{WS-energy'3}
\end{align}
Replacing corresponding terms in \eqref{WS-energy} - \eqref{T-def} with \eqref{WS-energy'1} - \eqref{WS-energy'3} proves \eqref{WS-energy'}.
\end{proof}

\begin{proof}[Proof of \eqref{WS-eq1'} - \eqref{WS-eq4'}]
	We proceed by calculating the (complex) G\^ateaux derivatives of \eqref{WS-energy'}. %(Note that this is equivalent to calculating the Fr\'echet derivative since the functionals below are continuously differentiable.)
	
 Let $\del_\#$ denote the partial (real) G\^ateaux derivative with respect to $\#$. Let $W_z = W + z W'$, $z\in\C$, and define $\p_{\overline z}\equiv \frac12(\partial_{\Re z} + i\partial_{\Im z})$ and $\del_{\overline W}\equiv \frac12(\del_{\Re W} + i\del_{\Im W})$. Then
	\begin{align}
	\del_{\overline W} E_{\Omega} (W, A, Z, \vphi)\overline{W'} &= \partial_{\overline z} E_{\Omega}^{WS}(W_z, A, Z, \vphi)|_{z=0} \nonumber \\
	= \int_{\Omega } &\curl_{gV^3}W \cdot \overline{\curl_{gV^3}W'} + \frac12 g^2\vphi^2 W\cdot\overline{W'} \nonumber \\ &- ig(\curl V^3)JW\cdot\overline{W'} + g^2(\overline W\times W) JW\cdot\overline{W'}.% \nonumber \\
	\end{align}
	Integrating the first term by parts and factoring out $W$ and $\overline W'$ gives
	\begin{align}
	\del_{\overline W} E_{\Omega} (W, A, Z, \vphi)\overline{W'} = \int_{\Omega }& [\curl_{gV^3}^* \curl_{gV^3} + \frac{g^2}{2}\varphi^2 - ig(\curl V^3)J \nonumber \\ 
	&+ g^2(\overline{W} \times W)J]W \cdot \overline{W'}.
	\end{align}
	For the derivative to be zero for every variation $W'$, \eqref{WS-eq1'} must hold.
	
	Let $A_s = A + s A'$, $s\in\R$. Then
	\begin{align}
	\del_A E_{\Omega} (W, A, Z, \vphi)A' &= \partial_s E_{\Omega}^{WS}(W, A_s, Z, \vphi)|_{s=0} \nonumber \\
	= \int_{\Omega } &\curl_{gV^3} W\overline{(-ieA'\times W)} + \overline{\curl_{gV^3} W}(-ieA'\times W)
	\nonumber \\ &+(\curl A) (\curl A') + ie(\curl A')\overline W\times W.% \nonumber \\
	\end{align}
	Using $A'\times W = -JW\cdot A'$ in the first two terms, and integrating the last two terms by parts, gives
	\begin{align}
	\del_A E_{\Omega} (W, A, Z, \vphi)A' &= \int_{\Omega } [-ie(\curl_{gV^3}W)J\overline W + ie\overline{(\curl_{gV^3} W)J\overline W} \nonumber \\
	& + \curl^*\curl A + ie\curl^*(\overline W\times W)] \cdot A',% \nonumber \\
	\end{align}
	which simplifies to
	\begin{align}
	\del_A E_{\Omega} (W, A, Z, \vphi)A' = \int_{\Omega } &[\curl^* \curl A + 2e\Im[(\curl_{gV^3}W)J\overline{W} \nonumber \\ &- \curl^*(\overline{W}_1 W_2)]]\cdot A'.
	\end{align}
	For the derivative to be zero for every variation $A'$, \eqref{WS-eq2'} must hold.
	
	The proof of \eqref{WS-eq3'} is essentially the same as the proof of \eqref{WS-eq2'}, so we omit it.
	
	Let $\vphi_s = \vphi + s \vphi'$, $s\in\R$. Then
	\begin{align}
	\del_{\vphi} E_{\Omega} (W, A, Z, \vphi)\vphi' &= \partial_s E_{\Omega}^{WS}(W, A, Z, \vphi_s)|_{s=0} \nonumber \\ \int_{\Omega} & g^2\vphi\vphi'|W|^2 + \frac{g^2}{2\cos^2\theta}\vphi\vphi'|Z|^2 \nonumber \\
	&+ 2\n\vphi'\cdot\n\vphi + 2\lambda(\vphi^2 - \vphi_0^2)\vphi\vphi' \nonumber \\
	\end{align}
	Integrating the third term by parts and factoring out $2\vphi'$ gives
	\begin{align}
	= \int_{\Omega} &[\frac{g^2}2|W|^2 + \frac12\kappa|Z|^2 \nonumber \\
	&-\Delta + \lambda (\vphi^2-\vphi_0^2)]\vphi \cdot 2\vphi'.
	\end{align}
	For the derivative to be zero for every variation $\vphi'$, \eqref{WS-eq4'} must hold.
\end{proof}

 \section{Proof of \eqref{phi0-eqn'}} \label{Sec:Lemma-tilde-phi0}

In the proof below, we will use the following result: 
\begin{lemma} \label{os-lem}
	Let $L^2_{per}$ denote any of the spaces \eqref{L2n-space} - \eqref{L2-space}, and let $\cH^2_{per}$ denote the corresponding Sobolev space. Suppose that $f_s,g_s:\R\to\cH^2_{per}$ satisfy  $||f_s||_{\cH^2_{per}} = \cO(|s|^k)$ and $||g_s||_{\cH^2_{per}} = \cO(|s|^l)$ for some $k,l\in\Z$. Then for $i,j=1,2$ and $p,q=0,1$,
	\begin{align}\label{os-rem}
	|\int_{\Omega'} \partial_i^p f_s \partial_j^q g_s| = \cO(|s|^{k+l}).
	\end{align}
	Furthermore, if $f_s$ and $g_s$ have continuous derivatives of all orders in $s$, then so does the above integral.	
\end{lemma}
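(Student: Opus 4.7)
The plan is to bound the integral by a product of Sobolev norms via the Cauchy--Schwarz inequality and then invoke the assumed smallness of $\|f_s\|_{\cH^2_{per}}$ and $\|g_s\|_{\cH^2_{per}}$. Specifically, by Cauchy--Schwarz on $L^2(\Omega')$,
\begin{equation*}
\Bigl|\int_{\Omega'} \partial_i^p f_s\, \partial_j^q g_s\Bigr|
\le \|\partial_i^p f_s\|_{L^2(\Omega')}\, \|\partial_j^q g_s\|_{L^2(\Omega')}.
\end{equation*}
Since $p, q \in \{0, 1\}$, each factor on the right is controlled by the corresponding $\cH^1_{per}$-norm, which is in turn bounded by the $\cH^2_{per}$-norm. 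Thus
\begin{equation*}
\Bigl|\int_{\Omega'} \partial_i^p f_s\, \partial_j^q g_s\Bigr|
\lesssim \|f_s\|_{\cH^2_{per}}\,\|g_s\|_{\cH^2_{per}} = \cO(|s|^k)\cdot \cO(|s|^l) = \cO(|s|^{k+l}),
\end{equation*}
giving \eqref{os-rem}. (For the space $L^2_n$, whose elements are not periodic but only gauge-periodic, one uses the covariant derivatives $\nabla_{a^n}$ rather than $\partial_j$; since $|\nabla_{a^n} w|$ and $|\partial w|$ differ by lower-order terms controlled by $\cH^1$-norms, the same bound is obtained. When $f_s$ lies in $L_n^2$, the integrand should be gauge-invariant, so the pairing makes sense.)

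For the smoothness statement, I would argue that $s \mapsto \int_{\Omega'} \partial_i^p f_s\,\partial_j^q g_s$ is a composition of two smooth maps: the product map $\cH^2_{per} \times \cH^2_{per} \to L^1(\Omega')$ defined by $(f,g) \mapsto \partial_i^p f\,\partial_j^q g$, which is bilinear and bounded by the same Cauchy--Schwarz estimate above, together with integration $L^1(\Omega') \to \R$. Bilinear bounded maps between Banach spaces are smooth with derivative given by the usual product rule, and composition of smooth Banach-space maps is smooth. Therefore, if $s \mapsto f_s$ and $s \mapsto g_s$ are continuously differentiable of all orders as maps $\R \to \cH^2_{per}$, so is the scalar function obtained by integration, with $\partial_s$-derivatives computable by the Leibniz rule. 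Applying \eqref{os-rem} inductively to each term produced by the Leibniz rule also gives order-of-magnitude bounds for all derivatives in $s$.

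The routine part is the Cauchy--Schwarz estimate; the only mildly subtle point, which I expect to state rather than belabour, is the justification of smoothness in $s$ via bilinearity and boundedness in the Banach-space sense, together with the observation that the same argument adapts to the gauge-periodic space $L_n^2$ by replacing ordinary derivatives with covariant ones inside the norm and relying on the fact that $|\partial_j^q f_s|$ is pointwise equal in the two settings up to factors $\cO(|a^n|)$ that are bounded on any fundamental domain.
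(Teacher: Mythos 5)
Your proof is correct and follows essentially the same route as the paper's: Cauchy--Schwarz followed by the bound of the $L^2$-norms of first derivatives by the $\cH^2_{per}$-norms, and smoothness in $s$ obtained from smoothness of $s\mapsto f_s, g_s$ into $\cH^2_{per}$ (the paper phrases this as differentiation under the integral sign, you as boundedness of the bilinear product map — the same content). Your extra remark on the gauge-periodic space $L^2_n$, where the Sobolev norm is built from covariant derivatives, is a legitimate point of care that the paper leaves implicit, and your resolution (the difference $\nabla_{a^n}-\partial$ is multiplication by $a^n$, bounded on a fundamental domain) is the right one.
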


\begin{proof}
	Equation \eqref{os-rem} follows from the following chain of inequalities:
   \begin{align}
	|\int_{\Omega'} \partial_i^p f_s \partial_j^q g_s| 
	&\lesssim || \partial_i^p f_s||_{\cL^2_{per}} ||\partial_j^q g_s||_{\cL^2_{per}} \notag \\
	&\lesssim ||f_s||_{\cH^2_{per}} ||g_s||_{\cH^2_{per}} = \cO(|s|^{k+l}).
	\end{align}
	
	If $f_s$ and $g_s$ have continuous derivatives of all orders in $s$, then their $s$-derivatives of all orders are in $\cH^2_{per}$. In particular, this means that $\p_s^k (f_s g_s)$, $k\in\Z_{\geq 0}$, remains integrable, so the $s-$derivatives of the above integral (obtained by differentiation under the integral sign) are well-defined. 
\end{proof}

\begin{proof}[Proof of \eqref{phi0-eqn'}]
To prove \eqref{phi0-eqn'}, %{phi0-eqn-s}, 
 we use the $w$-field Equation \eqref{WS-eq1-resc}, and $\nu_s := g(a_s\sin\theta + z_s\cos\theta)$, % and Proposition \ref{H_w-prop} 
 to get
	\begin{align}
	\int_{\Omega'} \overline{\chi} \cdot [\curl_{\nu_s}^* \curl_{\nu_s}& + \frac{g^2}{2}(\psi_s + \xi_{s})^2 \notag \\ 
	&- i(\curl \nu_s)J + g^2(\overline{w_s} \times w_s)J]w_s = 0. \label{phi0-preqn}
	\end{align}
	%In Appendix \ref{Sec:Lemma-tilde-phi0}, we look at order $s^3$ of the above integral, giving the following equation:
We shall calculate each term of the integral \eqref{phi0-preqn} up to order $s^3$ using Lemma \ref{os-lem} and the Taylor expansions \eqref{s-asym}.% plug the expansions from \eqref{s-asym} into the left-hand side and compute the order $s^3$ contributions of each of the four terms on the left-hand side to Equation \eqref{phi0-preqn}. % show that in order $s^3$ of the above equation gives	\eqref{phi0-eqn-s}. % compute the   $s^3$-order of\begin{lemma} \label{Lemma-tilde-phi0}
\DETAILS{		\begin{align}\label{phi0-eqn-s}
		g\sqrt{2n} \xi' \langle & |\chi|^2 \rangle  \notag \\ 
	&= -g^2[m_w^2\langle U_{m_z,m_h}(|\chi|^2)|\chi|^2\rangle+\sin^2\theta \langle |\chi|^2\rangle^2],
		\end{align}
		where $m_w :=  \sqrt n$ is the mass of the rescaled $W$ boson field $w$, and for $f\in\cL^2_{loc}$,
		\begin{align}
		U_{M_1,M_2}(f)(\rho) := &  \frac{1}{2\pi}\int_{M_2}^{M_1} \int |\rho-\rho'| K_1(M|\rho-\rho'|) f(\rho') d^2\rho' dM \nonumber \\
		=& U_{M_1}(f)(\rho)-U_{M_2}(f)(\rho),
		\end{align}
		with $K_1$ the modified Bessel function of the third kind.} 
	%\end{lemma} %We look at the $s^3$ term in the Taylor expansion of each term on the left hand side of \eqref{phi0-preqn}:
	
	Integrating the first term of \eqref{phi0-preqn} by parts gives
	\begin{align}
	\int_{\Omega'} \overline{\chi}\cdot \curl^*_{\nu_s}&\curl_{\nu_s}w_s =  \int_{\Omega'} \overline{\curl_{\nu_s}\chi}\cdot \curl_{\nu_s}w.% \notag \\
	\end{align}
	Plugging in the Taylor expansions \eqref{s-asym} gives
	\begin{align}
	\int_{\Omega'} \overline{\chi}\cdot \curl^*_{\nu_s}\curl_{\nu_s}w_s = \int_{\Omega'} &[\overline{\curl_{a^n} \chi} + \cO(|s|^2)] \nonumber \\ \cdot &[s\curl_{a^n} \chi - s^3 i\nu' w' + \cO(|s|^5)],% \notag \\
	\end{align}
	where, recall, $\nu' := g(a'\sin\theta + z'\cos\theta)$. Recall from Equation \eqref{chi-def} that $\curl_{a^n}\chi = 0$. Therefore, applying Lemma \ref{os-lem} gives
	\begin{align}
	\int_{\Omega'} \overline{\chi}\cdot \curl^*_{\nu_s}&\curl_{\nu_s}w_s %= \int_{\Omega'} [0 + \cO(|s|^2)] \cdot [0 + \cO(|s|^3)].% \notag \\
	= \cO(|s|^5). \label{phi0-eqn-s1}
	\end{align}
	
	Plugging the Taylor expansions \eqref{s-asym} into the second term of \eqref{phi0-preqn} gives% and expanding the square term gives
\begin{align}
	\int_{\Omega'} \overline{\chi}\cdot \frac{g^2}{2} (\psi_s + \xi_{ s})^2 w_s &= \int_{\Omega'} \overline{\chi}\cdot \frac{g^2}{2} (\frac{\sqrt{2n}}{g} + s^2 (\psi' + \xi') + \cO(|s|^4))^2 \notag \\ 
	&\quad\quad\quad\quad \times (s\chi + \cO(|s|^5)).% \notag \\
	\end{align}
	Expanding this product and applying Lemma \ref{os-lem} gives
	\begin{align}
	\int_{\Omega'} \overline{\chi}\cdot \frac{g^2}{2} (\psi_s + \xi_{ s})^2 w_s &= s \int_{\Omega'} n|\chi|^2 + s^3\int_{\Omega'} g\sqrt{2n}(\psi' + \xi') |\chi|^2 \notag \\ 
	&\quad\quad + s^3\int_{\Omega'} n\overline\chi\cdot w' + \cO(|s|^5).
	\end{align}
	Recall that $\chi\in\Null(H_1(n))$ and that $w'$ is orthogonal to $\text{Null}\ H_1(n)$. Therefore the third term vanishes:
	\begin{align}
	\int_{\Omega'} \overline{\chi}\cdot \frac{g^2}{2} (\psi_s + \xi_{ s})^2 w_s &= s \int_{\Omega'} n|\chi|^2 + s^3\int_{\Omega'} g\sqrt{2n}(\psi' + \xi') |\chi|^2 \notag \\ 
	&\quad\quad + \cO(|s|^5). \label{phi0-eqn-s2}%\footnotemark
	\end{align}
	%\footnotetext{Here $\int_{\Omega'} \chi \cdot  w' = \langle \chi,  w'\rangle_{\overrightarrow{L}^2_n} = 0$ because $ w'$ is orthogonal to $\text{Null}(H_w + n)$.}
	%\addtocounter{footnote}{-1}
	
	Plugging the Taylor expansions \eqref{s-asym} into the third term of \eqref{phi0-preqn} gives
	\begin{align}
	\int_{\Omega'} \overline{\chi}\cdot (- i(\curl\nu_s)Jw_s)
	&= \int_{\Omega'} \overline{\chi}\cdot(-in - s^2 i(\curl\nu') + \cO(|s|^4))\notag \\ &\quad\quad\quad\quad \times (sJ\chi + s^3J w' + \cO(|s|^5)). \label{3rd-term}% \notag \\
	\end{align}
		Recall from Equation \eqref{chi-def} that $\chi$ is of the form $\chi = (\om, i\om)^T$, so\\ $\overline\chi\cdot J\chi = -i|\chi|^2$ and $\overline\chi\cdot J w' = -i\overline\chi\cdot  w'$. Therefore \eqref{3rd-term} simplifies to
		\begin{align}
		\int_{\Omega'} \overline{\chi}\cdot (- i(\curl\nu_s)Jw_s)
		&= \int_{\Omega'} (-in - s^2 i(\curl\nu') + \cO(|s|^4))\notag \\ &\quad\quad\quad\quad \times (-si|\chi|^2 - s^3i\overline\chi\cdot w' + \cO(|s|^5)).% \notag \\
		\end{align}
		 Expanding this product and applying Lemma \ref{os-lem} gives
	\begin{align}
	\int_{\Omega'} \overline{\chi}\cdot (- i(\curl\nu_s)Jw_s)
	&= -s\int_{\Omega'} n|\chi|^2 - s^3\int_{\Omega'} (\curl\nu') |\chi|^2 \notag \\ &\quad\quad - s^3\int_{\Omega'} n\overline\chi\cdot w' + \cO(|s|^5).%\label{phi0-eqn-s3}%\footnotemark
	\end{align}
	Recall that $\chi\in\Null(H_1(n))$ and that $w'$ is orthogonal to $\text{Null}\ H_1(n)$. Therefore the third term vanishes:
	\begin{align}
	\int_{\Omega'} \overline{\chi}\cdot (- i(\curl\nu_s)Jw_s)
	&= -s\int_{\Omega'} n|\chi|^2 - s^3\int_{\Omega'} (\curl\nu') |\chi|^2 \notag \\ &\quad\quad + \cO(|s|^5).\label{phi0-eqn-s3}%\footnotemark
	\end{align}

	Using $\overline\chi\cdot Jw_s = -\overline\chi\times w_s$, the fourth term of \eqref{phi0-preqn} becomes
	\begin{align}
	\int_{\Omega'} \overline\chi \cdot (g^2 \overline w_s \times w_s) Jw_s& = \int_{\Omega'} -g^2 (\overline\chi \times w_s) \times (\overline w_s \times w_s).% \notag \\
	\end{align}
	Plugging in the Taylor expansions \eqref{s-asym} gives
	\begin{align}
	\int_{\Omega'} \overline\chi \cdot (g^2 \overline w_s \times w_s) Jw_s& =  \int_{\Omega'} -g^2 (s \overline\chi \times \chi + \cO(|s|^3)) \notag \\ 
		&\quad\quad\quad\quad\times (s^2 \overline\chi \times \chi + \cO(|s|^4)).% \notag \\
	\end{align}
	Recall from Equation \eqref{chi-def} that $\chi$ is of the form $\chi = (\om, i\om)$, so $\overline\chi\times\chi = i|\chi|^2$. This fact and Lemma \ref{os-lem} gives
	\begin{align}
	\int_{\Omega'} \overline\chi \cdot (g^2 \overline w_s \times w_s) Jw_s& = s^3\int_{\Omega'} g^2 |\chi|^4 + \cO(|s|^5). \label{phi0-eqn-s4}
	\end{align}
	
	The $s^3$ terms of \eqref{phi0-eqn-s1}, \eqref{phi0-eqn-s2}, \eqref{phi0-eqn-s3} and \eqref{phi0-eqn-s4} must sum to $0$, and so \eqref{phi0-eqn'} results.
\end{proof}

\section{Proof of \eqref{WS-energy-s-1}} \label{Sec:Lemma-Eos}
	\begin{proof}[Proof of \eqref{WS-energy-s-1}]
	We shall calculate each term in the integral \eqref{WS-energy-resc} up to order $s^6$ using Lemma \ref{os-lem} and the Taylor expansions \eqref{s-asym}.
	
	Plugging the Taylor expansions \eqref{s-asym} into the first term of \eqref{WS-energy-resc} gives
		\begin{align}
	\int_{\Omega'}  |\curl_{\nu} w_s|^2 &= \int_{\Omega'} |s\curl_{a^n} \chi + \cO(|s|^3)|^2.% \notag \\
	\end{align}
	Recall from Equation \eqref{chi-def} that $\curl_{a^n}\chi = 0$. Therefore, applying Lemma \ref{os-lem} gives
	\begin{align}
	\int_{\Omega'}  |\curl_{\nu} w_s|^2 &= \cO(|s|^6). \label{EWS-asymp-1}
	\end{align}
	
	Plugging the Taylor expansions \eqref{s-asym} into the second term of \eqref{WS-energy-resc} gives
	\begin{align}
	\int_{\Omega'} \frac12 |\curl z_s|^2& = \int_{\Omega'} \frac12 |s^2\curl z' + \cO(|s|^4)|^2.% \notag \\
	\end{align}
	Expanding the square and applying Lemma \ref{os-lem} gives%collecting terms of the same order gives
	\begin{align}
	\int_{\Omega'} \frac12 |\curl z_s|^2 &= s^4\int_{\Omega'}\frac12 |\curl z'|^2 + \cO(|s|^6). \label{EWS-asymp-2}
	\end{align}	
	
	Plugging the Taylor expansions \eqref{s-asym} into the third term of \eqref{WS-energy-resc} gives
	\begin{align}
	\int_{\Omega'}\frac12 |\curl a_s|^2 = \int_{\Omega'}&\frac12 |\curl\frac1e a^n + s^2\curl a' + s^4\curl a'' + \cO(|s|^6)|^2.% \notag \\
	\end{align}
	Recall that $\curl a^n = n$. Expanding the square gives 
	\begin{align}
	\int_{\Omega'}\frac12 |\curl a_s|^2 = \int_{\Omega'} [\frac12\frac{n^2}{e^2} + s^2\frac ne\curl a' &+ s^4\frac ne\curl a'' \nonumber \\ 
	&+ s^4\frac12 |\curl a'|^2 + \cO(|s|^6)].% \notag \\
	\end{align}
	The second and third terms vanish because $a'$ and $a''$ are $\cL'$-periodic. Therefore, applying Lemma \ref{os-lem} gives
	\begin{align}
	\int_{\Omega'}\frac12 |\curl a_s|^2 &= \frac12\frac{n^2}{e^2}|\Omega'| + s^4\int_{\Omega'} \frac12|\curl a'|^2 + \cO(|s|^6). \label{EWS-asymp-3}
	\end{align}
	
	Plugging the Taylor expansions \eqref{s-asym} into the fourth term of \eqref{WS-energy-resc} gives 
	\begin{align}
	\int_{\Omega'}\frac12 g^2\phi_s^2|w_s|^2 = \int_{\Omega'} &\frac12 g^2[\frac{\sqrt{2n}}{g}  + s^2(\xi' + \psi') + \cO(|s|^4)]^2 \notag \\ 
	&\qquad \qquad \qquad \times |s\chi + s^3 w' + \cO(|s|^6)|^2.%  \notag \\
	\end{align}
	Expanding the square terms gives 
	\begin{align}
	\int_{\Omega'}\frac12 g^2\phi_s^2|w_s|^2=  \int_{\Omega'}&\frac12 g^2 [\frac{2n}{g^2} + s^2 2\frac{\sqrt{2n}}{g}(\xi' + \psi') + \cO(|s|^4)] \notag \\
	&\qquad \qquad   \times [s^2 |\chi|^2 + s^4 2\Re(\overline{\chi}\cdot w') + \cO(|s|^6)].%  \notag \\
	\end{align}
	Expanding this product and applying Lemma \ref{os-lem} gives
	\begin{align}
	\int_{\Omega'}\frac12 &g^2\phi_s^2|w_s|^2= s^2\int_{\Omega'} n |\chi|^2 \nonumber \\
	&\qquad \qquad  + s^4\int_{\Omega'} [g\sqrt{2n}(\xi'+\psi')|\chi|^2 + 2n\Re(\overline{\chi}\cdot w')] +\cO(|s|^6). %\label{EWS-asymp-4}
	\end{align}
	Recall that $\chi\in\Null(H_1(n))$ and that $ w'$ is orthogonal to $\Null(H_1(n))$. Therefore the third term vanishes:
	\begin{align}
	\int_{\Omega'}\frac12 g^2\phi_s^2|w_s|^2= s^2&\int_{\Omega'} n |\chi|^2 + s^4\int_{\Omega'} g\sqrt{2n}(\xi'+\psi)|\chi|^2 +\cO(|s|^6). \label{EWS-asymp-4}
	\end{align}
	
	Plugging the Taylor expansions \eqref{s-asym} into the fifth term of \eqref{WS-energy-resc} and expanding the square terms gives 
	\begin{align}
	\int_{\Omega'} \frac{1}{4\cos^2\theta}& g^2 \phi_s^2|z_s|^2 = \int_{\Omega'} \frac{1}{4\cos^2\theta} g^2\notag \\ &\times [\frac{2n}{g^2}  + s^2 2\frac{\sqrt{2n}}{g} (\xi' + \psi') + \cO(|s|^4)]  [s^4 |z'|^2 + \cO(|s|^6)].%\notag \\
	\end{align}
	Expanding this product and applying Lemma \ref{os-lem} gives
	\begin{align}
	\int_{\Omega'} \frac{1}{4\cos^2\theta} g^2 \phi_s^2|z_s|^2 &= s^4 \int_{\Omega'}\frac{n}{2\cos^2\theta} |z'|^2 + \cO(|s|^6). \label{EWS-asymp-5}
	\end{align}
	
	Plugging the Taylor expansions \eqref{s-asym} into the sixth term of \eqref{WS-energy-resc} gives
	\begin{align}
	\int_{\Omega'}& |\overline w_s\times w_s|^2 = \int_{\Omega'} |s^2\overline{\chi}\times \chi + \cO(|s|^4)|^2,% \notag \\
	\end{align}
	Recall from Equation \eqref{chi-def} that $\chi$ is of the form $\chi = (\om,i\om)$, so $\overline\chi\times\chi = i|\chi|^2$. Therefore, applying Lemma \ref{os-lem} gives
	\begin{align}
	\int_{\Omega'}& |\overline w_s\times w_s|^2 = s^4\int_{\Omega'} |\chi|^4 + \cO(|s|^6). \label{EWS-asymp-6}
	\end{align}
	
		Plugging the Taylor expansions \eqref{s-asym} into the seventh term of \eqref{WS-energy-resc} gives
	\begin{align}
	\int_{\Omega'} i(\curl\nu_s) \overline w_s\times & w_s= \int_{\Omega'} i[g\sin\theta\curl\frac1e a^n + s^2\curl\nu' + \cO(|s|^4)] \notag \\ 
	&\times [s\overline{\chi} + s^3\overline{ w'} + \cO(|s|^5)] \times [s\chi + s^3 w' + \cO(|s|^5)]. %\notag \\
	\end{align}
	 where, recall, $\nu' := g(a'\sin\theta + z'\cos\theta)$. Recall that $\curl a^n = n$ and $e = g\sin\theta$. Expanding the wedge product of the second and third terms gives
	\begin{align}
	\int_{\Omega'} i(\curl\nu_s)\overline w_s\times w_s &= \int_{\Omega'} i[\frac ng^2 + s^2\curl\nu' + \cO(|s|^4)]\notag \\ 
	&\times [s^2 \overline{\chi}\times \chi + s^4(\overline{\chi}\times  w' + \overline{ w'}\times\chi) + \cO(|s|^6)].% \notag \\
	\end{align}
	Recall from Equation \eqref{chi-def} that $\chi$ is of the form $\chi = (\om,i\om)$, so $\overline\chi\times\chi = i|\chi|^2$ and $\overline\chi\times w' = i\overline\chi\cdot w'$. Therefore
	\begin{align}
	\int_{\Omega'} i(\curl\nu_s)\overline w_s\times w_s = \int_{\Omega'}& [in + s^2 i\curl\nu' + \cO(|s|^4)]\notag \\ 
	&\times [s^2 i |\chi|^2 + s^4 2\Re(i\overline{\chi} \cdot  w') + \cO(|s|^6)].% \notag \\
	\end{align}
	Expanding this product and using Lemma \ref{os-lem} gives
	\begin{align}
	\int_{\Omega'} i(\curl\nu_s)\overline w_s\times w_s = -s^2&\int_{\Omega'} n|\chi|^2 - s^4\int_{\Omega'} [2in\Im(\overline{\chi}\cdot w') \nonumber \\
	 &\qquad  - s^4\int_{\Omega'} (\curl\nu')|\chi|^2 + \cO(|s|^6). %\label{EWS-asymp-8}
	\end{align}
	Recall that $\chi\in\Null(H_1(n))$ and $ w'$ is orthogonal to $\Null(H_1(n))$. Therefore the second term vanishes:
	\begin{align}
	\int_{\Omega'} i(\curl\nu_s)\overline w_s\times w_s = &-s^2\int_{\Omega'} n|\chi|^2 %\nonumber \\
	- s^4\int_{\Omega'} (\curl\nu')|\chi|^2 \nonumber\\
	 &\qquad \qquad \qquad \qquad \qquad \qquad \qquad + \cO(|s|^6). \label{EWS-asymp-8}
	\end{align}
	
	Plugging the Taylor expansions \eqref{s-asym} into the eigth term of \eqref{WS-energy-resc} gives
	\begin{align}
	\int_{\Omega'}& |\nabla\phi_s|^2 = \int_{\Omega'} |s^2\nabla\psi' + \cO(|s|^4)|^2.% \notag \\
	\end{align}
		Expanding the square and using Lemma \ref{os-lem} gives
	\begin{align}
	\int_{\Omega'}|\nabla\phi_s|^2 = s^4\int_{\Omega'} |\nabla\psi'|^2 + \cO(|s|^6). \label{EWS-asymp-9}
	\end{align}
	
	Plugging the Taylor expansions \eqref{s-asym} into the ninth term of \eqref{WS-energy-resc} and expanding the inner squares gives
	\begin{align}
	\int_{\Omega'}&\frac12\lambda (\phi_s^2 - \xi_s^2) \nonumber \\
	&= \int_{\Omega'} \frac12\lambda [\frac{2n}{g^2}  + s^2 2\frac{\sqrt{2n}}{g} (\xi' + \psi')
	-\frac{2n}{g^2} - s^2 2\frac{\sqrt{2n}}{g} \xi' + \cO(|s|^4)]^2 \notag \\
	&= \int_{\Omega'} \frac12\lambda [ s^2 2\frac{\sqrt{2n}}{g} \psi' + \cO(|s|^4)]^2. %\notag \\
	\end{align}
	Expanding the outer square gives and using Lemma \ref{os-lem} gives
	\begin{align}
	\int_{\Omega'}&\frac12\lambda (\phi_s^2 - \xi_s^2) = s^4\int_{\Omega'} \frac{4\lambda n}{g^2} \psi'^2 + \cO(|s|^6). \label{EWS-asymp-10}
	\end{align}
	
	Adding \eqref{EWS-asymp-1} - \eqref{EWS-asymp-10} and dividing by $|\Omega'|$ gives \eqref{WS-energy-s-1}, where $R_\varepsilon$ collects the $\cO(|s|^6)$ remainder terms. $R_\varepsilon$ has continuous derivatives of all orders because it is a sum of integrals of the form \eqref{os-rem} with $f_s$ and $g_s$ coming from the continuously differentiable remainder terms $\cO(|s|^p)$ %$R_p$, $p\in\{w,a,z,\psi,\xi\}$, of \eqref{s-asym}.
 of \eqref{s-asym}. \end{proof}

\DETAILS{\section{The Parametrization of Lattice Shapes}\label{sec:ls}

Let $\cL\subset\R^2$ be a lattice. In order to define the shape of $\cL$, it is convenient to identify $\R^2$ with $\C$ via $(x_1,x_2) \mapsto x_1 + i x_2$, and view $\cL$ as a subset of $\C$. It is a well-known fact (see e.g. \cite{Ahlfors}) that any lattice $\cL\subset\C$ can be given a basis $r,r'$ such that the ratio $\tau = \frac{r'}{r}$ satisfies the inequalities:
\begin{enumerate}[(i)]
	\item $\Im\tau > 0$;
	\item $|\tau| \geq 1$;
	\item $-\frac 12<\Re\tau<\frac 12$ and $\Re\tau\geq 0$ if $|\tau|=1$.
\end{enumerate}
Although the basis is not unique, the value of $\tau$ is, and we will use that as a measure of the shape of the lattice.

Using a rotational symmetry, we may assume that $\cL$ has as a basis $\{\rho,\rho\tau\}$, where $\rho$ is a positive real number.}

\section{Spectral analysis of the operator $-\Delta_{a^n}$}\label{saol}

%In this appendix we prove Theorems \ref{thm:M-spec} and \ref{thm:L-spec}. The proofs below are standard.  In fact we find the spectrum of  $L^n$ explicitly.
%\newpage
%In this appendix we shall verify the properties of $-\Delta_{ a^n}$ used in the main text of this paper, following Section $5$ of \cite{CSS}.
Recall from the main text, but in vector notation, that 
$a^n:=  \frac{n}{2}x^\perp$, where $(x^1, x^2)^\perp = (-x^2, x^1)$,  $\n_{q} := \n - iq = (\n_1, \n_2)$, $\n_j :=  \partial_j - iq_j, \partial_j\equiv \partial_{x^j}$, and $\Delta_q:=\n_{q}^2=-\n_{q}^*\n_{q}$.
The next proof follows Section $5$ of \cite{CSS}.
%Before proceeding to the proof of this proposition, 
\begin{proof}[Proof of Proposition \ref{prop:Landau-ham-spec}] 
	The self-adjointness of the operator $-\Delta_{ a^n}$ is well-known. To find its spectrum, we  introduce  the complexified covariant derivatives (harmonic oscillator annihilation and creation  operators), $\bar\p_{ a^n} $ and $\bar\p_{ a^n}^*=- \p_{ a^n}$, with %$L^n_\pm$ to be
	\begin{equation}
	\bar\p_{ a^n}  :=  (\nabla_{ a^n})_1 + i(\nabla_{ a^n})_2 =\partial_{x^1} + i\partial_{x^2} + \frac{1}{2} n (x^1 + i  x^2).
	\end{equation}
	One can redily verify that these operators satisfy the following relations:
	% \begin{enumerate}
	\begin{align} \label{commut-rel}[\bar\p_{ a^n}, (\bar\p_{ a^n})^*] &= \Curl a^n =  n;\\
	\label{Landau-cr-annih}    -\Delta_{ a^n} -  n &= (\bar\p_{ a^n})^*\bar\p_{ a^n}.  \end{align}
	%\item $L^n_+=(L^n_-)^*$.
	%\item $L^n - n = -L^n_-L^n_+$.
	% \end{enumerate}
	As for the harmonic oscillator (see e.g. \cite{GS}), this gives explicit information  about the spectrum of $-\Delta_{ a^n}$, namely \eqref{spec-Landau-app}, with each eigenvalue is of the same multiplicity. 
	Furthermore, the above properties imply \eqref{nullL'}.

	We find $\Null \bar\p_{ a^n}$.
	A simple calculation gives the following operator equation
	\begin{equation*}
	e^{-\frac{ n}{2}(i x^1 x^2-(x^2)^2)}\bar\p_{ a^n} e^{\frac{ n}{2}(i x^1 x^2-(x^2)^2)} = \partial_{x^1} + i\partial_{x^2}.
	\end{equation*}
	(The transformation on the left-hand side is highly non-unique.)    
	This immediately proves that %$\psi \in \Null \bar\p_{ea^n}$
	\begin{equation}\label{lin-probl'} \bar\p_{ a^n} \psi = 0,\end{equation}
	if and only if $\theta = e^{-\frac{ n}{2}(i x^1 x^2-(x^2)^2)}\psi$ satisfies $(\partial_{x^1} + i\partial_{x^2})\theta = 0$.
	We now identify $x \in \R^2$ with $z = x^1 + ix^2 \in \C$ and see that this means that $\theta$ is analytic and
	\begin{equation}\label{psi-nullL}    \psi \left( x \right) =   e^{ -\frac{\pi n}{2\im\tau} (|z|^2-z^2) } \theta(z, \tau),\ z = (x^1+ i x^2)/ \sqrt{\frac{2\pi }{\im\tau} }.
	\end{equation}
	where we display the dependence of $\theta$ on $\tau$. %, recall, $r^\tau :=  \left( \frac{2\pi}{\Im\tau} \right)^{\frac{1}{2}}$.
	The quasiperiodicity of $\psi$ transfers to $\theta$ as follows:
	%\begin{subequations}
	\begin{equation*}
	\theta(z + 1, \tau) = \theta(z, \tau), \qquad
	%\end{equation*} \begin{equation*}
	\theta(z + \tau, \tau) =  e^{ -2\pi inz } e^{ -in\pi\tau  } \theta(z, \tau).
	\end{equation*}
	%\end{subequations}
	
	%    To complete the proof, we now need to show that the space of the continuously differentiable of all orders functions which satisfy these relations form a vector space    of dimension $n$. 
	The first relation ensures that $\theta$ have a absolutely convergent Fourier expansion of the form
	%\begin{align*}
	$    \theta(z, \tau) = \sum_{m=-\infty}^{\infty} c_m e^{2\pi m iz}.$ %  \end{align*}
	The second relation, on the other hand, leads to relation for the coefficients of the expansion:        %\begin{align*}
	$ c_{m + n} = e^{-in\pi z} e^{i2m\pi\tau} c_m$, %  \end{align*}
	which together with the previous statement implies \eqref{theta-repr}.              
	%   Moreover,  
\end{proof}

Next, we claim that the solution \eqref{psi-nullL} satisfies 
\begin{align}  \label{psi-parity} \psi (x) = \psi (- x). \end{align}
By \eqref{psi-nullL}, it suffices to show that $\theta (z) = \theta (- z)$.  We show this for $n=1$. Denote the corresponding $\theta$ by $\theta (z, \tau)$. Iterating the recursive relation for the coefficients in \eqref{theta-repr}, we obtain the following
standard representation %\eqref{theta-series} 
for the theta function    
\begin{align}  \label{theta-series} &\theta (z, \tau) =   \sum_{m=-\infty}^{\infty} e^{2\pi i  ( \frac12 m^2\tau + m z)}. \end{align} 
We observe that $\theta (- z, \tau)=\theta (z, \tau)$ and therefore $\psi_0 ( - x)=\psi_0 ( x)$. Indeed, using the expression \eqref{theta-series}, we find, after changing $m$ to $-m'$, we find
\begin{equation}\theta (- z, \tau)=   \sum_{m=-\infty}^{\infty} e^{2\pi i  ( \frac12 m^2\tau - m z)}
=   \sum_{m'=-\infty}^{\infty} e^{2\pi i  ( \frac12 m'^2\tau + m' z)} =\theta (z, \tau).\end{equation}

\bigskip

\noindent{\bf Declarations}

\noindent The research on this paper is supported in part by NSERC Grant No. 2017-06588.

\medskip
\noindent The authors have no relevant financial or non-financial interests to disclose.

%%%%%%%%%%%%%%%%%%%%%%%%%%%%%%%%%%%%%%%%%%%%%%%%%%%%%%%%%%%%%%%%%%%%%%%%%%%%%%%%%%%%%%%%%%%%%

%%%%%%%%%%%%%%%%%%%%%%%%%%%%%%%%%%%%%%%%%%%%%%%%%%%%%%%%%%%%%%%%%%%%%%%%%%%%%%%%%%%%%%%%%%%%%

\end{document}